\documentclass[final,5p,times,twocolumn]{elsarticle}

\usepackage{amssymb}
\usepackage{amsmath}

\usepackage{amsmath,amsfonts}
\usepackage{algorithmic}
\usepackage{algorithm}
\usepackage{array}
\usepackage[caption=false,font=normalsize,labelfont=sf,textfont=sf]{subfig}
\usepackage{textcomp}
\usepackage{stfloats}
\usepackage{verbatim}
\usepackage{graphicx}

\usepackage{orcidlink}
\usepackage{amsmath,amssymb,amsfonts}
\usepackage{algorithmic}
\usepackage{graphicx}
\usepackage{textcomp}
\usepackage{xcolor}
\usepackage[switch]{lineno}
\usepackage{algorithmic}
\usepackage{algorithm}

 \usepackage[english]{babel}
 \usepackage{pstricks}
 \usepackage{verbatim}
 \usepackage{graphicx}
 \usepackage{psfrag}
 \usepackage{fancybox}
 \usepackage{amsmath}
 \usepackage{amsfonts}
 \usepackage{amssymb}
 \usepackage{listings}
 \usepackage{enumerate}

\newcommand\notsotiny{\@setfontsize\notsotiny\@vipt\@viipt}
 \makeatother
 
\csname PSTplotLoaded\endcsname
\let\PSTplotLoaded 
\ifx\PSTricksLoaded \else\input pstricks.tex\fi
\ifx\PSTXKeyLoaded \else \input pst-xkey.tex\fi
\ifx\PSTtoolsloaded \else\input pst-tools.tex\fi
\ifx\PSTtoolsloaded \else\input pstricks-add.tex\fi
\ifx\PSTFPloaded \else   \input pst-fp.tex  \fi
\ifx\MultidoLoaded \else \input multido.tex \fi
\def\fileversion{1.92}
\def\filedate{2019/05/16}
\edef\TheAtCode{\the\catcode`\@}
\catcode`\@=11
\pst@addfams{pst-plot}
\def\pst@linetype{2}%
\SpecialCoor

\newdimen\pstRadUnit
\newdimen\pstRadUnitInv
\begingroup
\catcode`\{=13
\catcode`\}=13
\catcode`\(=13
\catcode`\)=13
\catcode`\,=13
\catcode`\!=1
\catcode`\*=2
\catcode`\ =13
\catcode`\_=13
\catcode`\^^M=13
\gdef\pst@datadelimiters!% Begin def
\catcode`\{=13%
\catcode`\}=13%
\catcode`\(=13%
\catcode`\)=13%
\catcode`\,=13%
\catcode`\ =13%
\catcode`\^^M=13%
\def,##1!%
\ifcat\noexpand,\noexpand##1%
\expandafter##1%
\else\space%
D\space##1%
\fi*%
\let(,\let),\let{,\let},\let ,\let^^M,\let_\@empty*% End def
\endgroup
\define@key[psset]{pst-plot}{ignoreLines}[0]{\def\psk@ignoreLines{#1}}
\psset[pst-plot]{ignoreLines=0}
\newcount\pst@linecnt
\begingroup
\catcode`\,=13
\catcode`\_=13
\gdef\savedata@#1[#2]{%
  \xdef\pst@tempg{#2_}%
  \endgroup
  \let#1\pst@tempg
  \global\let\pst@tempg\relax
  \ignorespaces}
\gdef\readdata@{%
  \read1 to \pst@tempA
  \ifnum\pst@linecnt=\psk@nStep
    \global\pst@linecnt=0
    \expandafter\readdata@@\pst@tempA_\@nil
  \fi
  \global\advance\pst@linecnt by 1
  \ifeof1\else\expandafter\readdata@\fi}
\gdef\pst@@readfile#1#2\@nil{\addto@pscode{,#1#2}}%
\gdef\readdata@@#1#2\@nil{\xdef\pst@tempg{\pst@tempg,#1#2}}%
\endgroup
\def\readdata{\@ifnextchar[{\readdata@i}{\readdata@i[]}}
\def\readdata@i[#1]#2#3{%
  \openin1=#3
  \begingroup
  \ifx#1\relax#1\else\psset{#1}\fi
  \def\pst@tempg{}%
  \ifeof1
    \@pstrickserr{Data file `#3' not found.}\@ehpa
  \else
    \pst@datadelimiters
    \catcode`\[=1
    \catcode`\]=2
    \pst@cnta=0
    \loop \ifnum\the\pst@cnta<\psk@ignoreLines
      \advance\pst@cnta by 1\relax
      \read1 to \pst@tempA
    \repeat
    \psDEBUG[pst-plot]{>>> ignored \the\pst@cnta\space data lines}%
    \global\pst@linecnt=\psk@nStep
    \readdata@
  \fi
  \endgroup
  \global\let#2\pst@tempg%
  \global\let\pst@tempg\relax%
\ignorespaces}
\def\pst@readfile#1{{\let\readdata@@\pst@@readfile\readdata\pst@tempg{#1}}}
\def\pst@altreadfile#1{%
  \openin1=#1
  \ifeof1
    \@pstrickserr{Data file `#1' not found.}\@ehpa
  \else
    \catcode`\{=10
    \catcode`\}=10
    \catcode`\(=10
    \catcode`\)=10
    \catcode`\,=10
    \catcode`\^^M=10
    \catcode`\[=1
    \catcode`\]=2
    \pst@@altreadfile
  \fi}
\def\pst@@altreadfile{%
  \read1 to \pst@tempg
  \expandafter\pst@@@altreadfile\pst@tempg\@empty\@nil
  \ifeof1\else\expandafter\pst@@@altreadfile\fi}
\def\pst@@@altreadfile#1#2\@nil{\addto@pscode{#1#2}}%
\def\savedata#1{\begingroup\pst@datadelimiters\savedata@{#1}}
\newread\RCD@file
\def\psreadDataColumn{\@ifnextchar[\psreadDataColumn@i{\psreadDataColumn@i[]}}
\def\psreadDataColumn@i[#1]{%
  \psset{#1}%
  \psreadDataColumn@ii
}
\def\psreadDataColumn@ii#1#2#3#4{%    #1:column #2:delimiter #3:\result #4:data file
  \immediate\openin\RCD@file=#4\relax
  \global\let#3=\@empty
  \pst@cnta=0
  \loop \ifnum\the\pst@cnta<\psk@ignoreLines
      \advance\pst@cnta by 1\relax
      \read\RCD@file to \@tempa
  \repeat
  \loop
    \read\RCD@file to \@tempa
    \ifeof\RCD@file\else
      \edef\@tempa{\@tempa#2}%
      \def\reserved@b{}%
      \@tempswafalse
      \@tempcnta=#1\relax
    \expandafter\@tfor\expandafter\reserved@a
      \expandafter:\expandafter=\@tempa\do{% loop over every token
      \if\reserved@a#2\relax%                delimiter?
        \advance\@tempcnta \m@ne
        \ifnum \@tempcnta=\z@
          \expandafter\g@addto@macro\expandafter#3\expandafter{\reserved@b\space}%
          \@tempswatrue
        \fi
        \def\reserved@b{}% ???
      \else
        \edef\reserved@b{\reserved@b\reserved@a}
      \fi
      \if@tempswa\@break@tfor\fi
    }%
  \repeat
  \immediate\closein\RCD@file
}

\def\beginplot@line{\begin@OpenObj}
\def\endplot@line{\psline@ii}
\def\beginplot@polygon{\begin@ClosedObj}
\def\endplot@polygon{\pspolygon@ii}
\def\beginplot@curve{\begin@OpenObj}
\def\endplot@curve{\pscurve@ii}
\def\beginplot@ecurve{\begin@OpenObj}
\def\endplot@ecurve{\psecurve@ii}
\def\beginplot@ccurve{\begin@ClosedObj}
\def\endplot@ccurve{\psccurve@ii}
\def\beginplot@dots{\begin@SpecialObj}
\def\endplot@dots{\psdots@ii}
\define@key[psset]{pst-plot}{Hue}[180]{\pst@getint{#1}\pst@HueAngle}
\psset[pst-plot]{Hue=180}
\def\beginplot@colordots{\begin@SpecialObj}
\def\endplot@colordots{%
  \addto@pscode{%
    \psk@dotsize
    \@nameuse{psds@\psk@dotstyle}
    newpath
    /MaxValue 0 def
    /m n 2 mul def
    n { 
      dup MaxValue gt { dup /MaxValue ED } if
      m 2 roll
    } repeat
    n { dup MaxValue div  % y y
      \pst@number\psyunit div abs % to orig y value
      \pst@HueAngle\space 360 div exch dup sethsbcolor % 180 Y Y hsb color
      transform floor .5 add exch floor
      .5 add exch itransform Dot stroke } repeat }%
  \end@SpecialObj%
}
\def\beginplot@bubble{\begin@SpecialObj}
\def\endplot@bubble{%
  \addto@pscode{%
    newpath
    n { dup % x y y
      \pst@number\psyunit div abs % to orig y value
      transform floor .5 add exch floor
      .5 add exch itransform 
      0 360 arc \psk@fill 
      stroke } repeat }%
  \end@SpecialObj%
}
\def\beginplot@bezier{\begin@OpenObj}
\def\endplot@bezier{\psbezier@ii}
\def\beginplot@cbezier{\begin@ClosedObj}
\def\endplot@cbezier{\pscbezier@ii}
\def\beginplot@cspline{\begin@OpenObj}%% Christoph Bersch
\def\endplot@cspline{\pscspline@ii}
\let\beginplot@LineToYAxis\beginplot@line  % all from pst-plot added 2007-06-26 (hv)
\def\endplot@LineToYAxis{\psLineToYAxis@ii}
\let\beginqp@LineToYAxis\beginqp@line
\let\doqp@LineToYAxis\doqp@line
\let\endqp@LineToYAxis\endqp@line
\let\testqp@LineToYAxis\testqp@line
\let\beginplot@LineToXAxis\beginplot@line
\def\endplot@LineToXAxis{\psLineToXAxis@ii}
\let\beginqp@LineToXAxis\beginqp@line
\let\doqp@LineToXAxis\doqp@line
\let\endqp@LineToXAxis\endqp@line
\let\testqp@LineToXAxis\testqp@line
\newif\ifPst@interrupt \Pst@interruptfalse
\define@key[psset]{pst-plot}{barwidth}[0.25cm]{\pst@getlength{#1}\Add@barwidth}
\psset[pst-plot]{barwidth=0.25cm}
\define@key[psset]{pst-plot}{interrupt}[]{\expandafter\pst@interrupt#1,,,\@nil}
\def\pst@interrupt#1,#2,#3,#4\@nil{%
  \ifx\relax#1\relax \Pst@interruptfalse
  \else
    \Pst@interrupttrue
    \def\pst@interrupt@YMax{#1 }%
    \def\pst@interrupt@YMaxSep{#2 }%
    \def\pst@interrupt@YMaxDiff{#3 }%
  \fi
}
\def\psbar@ii{\addto@pscode{false \tx@NArray \psbar@iii}}

\def\psbar@iii{%
  \ifPst@interrupt
    /YMax \pst@interrupt@YMax \strip@pt\psyunit\space mul def
    /YMaxSep \pst@interrupt@YMaxSep \strip@pt\psyunit\space mul def
    /YMaxDiff \pst@interrupt@YMaxDiff \strip@pt\psyunit\space mul def
    /Tilde { % on stack DX
      /Op ED % add or sub
      /DX ED
      currentpoint 2 copy
      /Y ED /X ED   % x y  
      X DX add Y YMaxSep 2 div Op   
      X DX dup add add Y           
      curveto
      currentpoint 2 copy pop /X ED 
      X DX add Y YMaxSep 2 div neg Op  
      X DX dup add add Y    
      curveto      
    } def  
    newpath
    n { 
      /Yval exch def /Xval exch def 
      Xval \number\Add@barwidth 0.5 mul sub 0 moveto 
      Yval YMax le {  
        0 Yval rlineto \number\Add@barwidth 0 rlineto 
        0 Yval neg rlineto \number\Add@barwidth neg 0 rlineto
      }{
        0 YMax rlineto 
        \number\Add@barwidth 4 div 
        { add } Tilde
        0 YMax neg rlineto 
        \number\Add@barwidth neg 0 rlineto
        closepath
        Xval \number\Add@barwidth 0.5 mul sub YMax YMaxSep add moveto 
        0 Yval YMax sub YMaxSep sub YMaxDiff sub rlineto 
        \number\Add@barwidth 0 rlineto 
        0 Yval YMax YMaxSep add sub YMaxDiff sub neg rlineto 
        \number\Add@barwidth 4 div neg 
        { sub } Tilde
      } ifelse
    } repeat
  \else
    newpath
    n { 
      /Yval exch def /Xval exch def 
      Xval \number\Add@barwidth 0.5 mul sub 0 moveto 
      0 Yval rlineto \number\Add@barwidth 0 rlineto 
      0 Yval neg rlineto \number\Add@barwidth neg 0 rlineto
    } repeat
  \fi
}%
\def\beginplot@bar{\begin@SpecialObj}
\def\endplot@bar{%
  \psbar@ii\psk@fillstyle\ifpsshadow\pst@closedshadow\fi%
  \pst@stroke
  \end@SpecialObj}
\def\psybar@ii{\addto@pscode{false \tx@NArray \psybar@iii}}
\def\psybar@iii{%
  newpath
  n { 
    /Yval exch def /Xval exch def 
    0 Yval \number\Add@barwidth 0.5 mul sub moveto 
    Xval 0 rlineto 0 \number\Add@barwidth rlineto 
    Xval neg 0 rlineto 0 \number\Add@barwidth neg rlineto
  } repeat
}%
\def\beginplot@ybar{\begin@SpecialObj}
\def\endplot@ybar{%
  \psybar@ii\psk@fillstyle\ifpsshadow\pst@closedshadow\fi%
  \pst@stroke
  \end@SpecialObj}
\def\psLSM@ii{\addto@pscode{ false \tx@NArray \psLSM@iii }}
\def\psLSM@iii{%
  /xiSquare 0 def				% xi*xi
  /xi 0 def					% xi
  /fi 0 def					% f(xi)
  /xifi 0 def					% xi*f(xi)
  exch dup dup /xEnd ED /xStart ED exch
  n { 						% number of data pairs
    /Yval ED /Xval ED 				% save x y values
    /xi xi Xval add def				% sum xi
    /xiSquare xiSquare Xval dup mul add def	% sum xi*xi
    /xifi xifi Xval Yval mul add def		% sum xi*yi, same as xi*f(xi)
    /fi fi Yval add def				% sum yi, same as f(xi)
    Xval xStart lt { /xStart Xval def } if	% find the lowest xi
    Xval xEnd gt { /xEnd Xval def } if		% find the largest xi
  } repeat
  /u xiSquare fi mul xi xifi mul sub n xiSquare mul xi dup mul sub div def
  /v n xifi mul xi fi mul sub n xiSquare mul xi dup mul sub div def
  \Pst@Debug\space 0 gt { 			% print the equation
    /NimbusSanL-Regu findfont 12 scalefont setfont	
    0 -50 moveto (y=) show 			% print y=
    v \pst@number\psyunit \pst@number\psxunit div div 20 string cvs show ( x+) show		% m*x+
    u \pst@number\psyunit div 20 string cvs show } if
  newpath
  (\psk@xStart) length 0 gt 			% special start value?
    { \psk@xStart\space \pst@number\psxunit mul }
    { xStart } ifelse 
  dup v mul u add 				% xStart f(xStart)  
  moveto		 			% goto first point x1 y(x1)
  (\psk@xEnd) length 0 gt 			% special end value?
    { \psk@xEnd\space \pst@number\psxunit mul }
    { xEnd } ifelse 
  dup v mul u add 				% xEnd f(xEnd)	
  lineto					% line to second point x2 y(x2)
}%
\def\beginplot@LSM{\begin@SpecialObj}
\def\endplot@LSM{%
  \psLSM@ii\psk@fillstyle\ifpsshadow\pst@closedshadow\fi%
  \pst@stroke
  \end@SpecialObj}
\define@key[psset]{pst-plot}{IQLfactor}{\pst@checknum{#1}\pst@IQLfactor}
\psset[pst-plot]{IQLfactor=1.5}
\define@key[psset]{pst-plot}{postAction}[]{\def\psk@postAction{%
  \ifx\relax#1\relax\else\pst@number\psyunit div #1 \pst@number\psyunit mul \fi }}
\psset[pst-plot]{postAction=}
\define@key[psset]{pst-plot}{mediancolor}[black]{\pst@getcolor{#1}\median@linecolor}
\psset[pst-plot]{mediancolor=black}
\define@boolkey[psset]{pst-plot}[Pst@]{markMedian}[true]{}
\psset[pst-plot]{markMedian=false}

\def\psBoxplot@ii{%
  \addto@pscode{
    /Barwidth \number\Add@barwidth 2 div def  
    /Endwidth Barwidth \psk@arrowlength\space mul def  
   NArray bubblesort
   /NArray ED 				% save sorted array
   [ NArray { yUnit mul } forall ] /NArray ED % multiply with y unit
   NArray 0 get /MinVal ED		% save minimum
   NArray m 1 sub get /MaxVal ED	% maximum
   m 2 div cvi /M ED 			% the middle
   NArray length 2 mod 0 eq {		% even numbers of entries
     M 1 sub NArray exch get 		% even number of values
     NArray M get          		% and the upper one
     add 2 div /Median ED  		% the median
   }{
     NArray M get /Median ED  		% odd numbers of values
   } ifelse
   m 4 mod 0 eq {	  		% get the lower Quartil even/odd
     m 4 div cvi dup 1 sub NArray exch get
     exch NArray exch get
     add 2 div floor /LowerQuartil ED
   }{ 
     NArray M 2 div cvi get /LowerQuartil ED 
   } ifelse				% end even/odd 
   m 0.75 mul dup dup cvi sub 0 eq {	% get the upper Quartil
     cvi dup 1 sub NArray exch get exch NArray exch get
     add 2 div floor /UpperQuartil ED
   }{					% upper quartil
     NArray m 0.75 mul floor cvi get /UpperQuartil ED
   } ifelse 
   /IQL UpperQuartil LowerQuartil sub \pst@IQLfactor\space mul def
   0 1 m 1 sub { % Index on stack
     dup /Index ED
     NArray exch get LowerQuartil sub abs IQL sub 0 gt { 
       \psk@dotsize
       \@nameuse{psds@\psk@dotstyle}
       0 NArray Index get \psk@postAction
       Dot
       NArray Index LowerQuartil UpperQuartil LowerQuartil sub \pst@IQLfactor\space mul sub 
       dup /MinVal ED put % replace with 1.5 IQL
       NArray Index 1 add get /MinVal ED 
    } { exit } ifelse
   } for
   m 1 sub -1 0 {	% Index on stack
     dup /Index ED
     NArray exch get UpperQuartil sub abs IQL sub 0 gt { 
       \psk@dotsize
       \@nameuse{psds@\psk@dotstyle}
       0 NArray Index get \psk@postAction\space
       Dot
       NArray Index UpperQuartil LowerQuartil sub \pst@IQLfactor\space mul UpperQuartil add 
       dup /MaxVal ED put % replace with 1.5 IQL
       NArray Index 1 sub get /MaxVal ED 
     }{ exit } ifelse
   } for
   Endwidth neg MaxVal \psk@postAction moveto			% we are on top / lower whisker
   Endwidth dup add 0 rlineto 
   0 MaxVal \psk@postAction moveto 
   0 UpperQuartil \psk@postAction lineto			% upper quartil
   MinVal \psk@postAction MaxVal \psk@postAction lt {
     0 LowerQuartil \psk@postAction moveto			% line to lower whisker
     0 MinVal \psk@postAction lineto 
     Endwidth neg MinVal \psk@postAction moveto 
     Endwidth dup add 0 rlineto 
   } if
   gsave
   \pst@number\pslinewidth SLW
   \pst@usecolor\pslinecolor
   \tx@setStrokeTransparency 
   \@nameuse{psls@\pslinestyle}
   stroke
   grestore
   newpath
   Barwidth neg LowerQuartil \psk@postAction moveto	% lower quartil
   Barwidth neg UpperQuartil \psk@postAction lineto
   Barwidth dup add 0 rlineto
   Barwidth LowerQuartil \psk@postAction lineto
   closepath
   \pst@usecolor\psfillcolor
   gsave \pst@usecolor\psfillcolor \tx@setTransparency fill grestore
   \@nameuse{psls@solid}
   \ifPst@markMedian
     \pst@number\pslabelsep neg Median moveto currentpoint 
     /YMedian ED /XMedian ED 
      Barwidth neg Median \psk@postAction lineto  % median
   \else
      Barwidth neg Median \psk@postAction moveto  % median
   \fi
   Barwidth dup add 0 rlineto 
   \pst@number\pslinewidth SLW
   \pst@usecolor\median@linecolor
   \tx@setStrokeTransparency
   stroke
  }
}% 
\def\beginplot@Boxplot{\init@pscode}
\def\endplot@Boxplot{%
  \psBoxplot@ii\psk@fillstyle\ifpsshadow\pst@closedshadow\fi%
  \pst@stroke
  \end@SpecialObj}
\def\psBoxplot{\def\pst@par{}\pst@object{psBoxplot}}
\def\psBoxplot@i#1{%
  \leavevmode
  \pst@killglue
  \begingroup
  \addbefore@par{barwidth=40pt,arrowlength=0.75}%
  \addto@par{plotstyle=Boxplot}%
  \use@par
  \@nameuse{beginplot@\psplotstyle}%
  \addto@pscode{
    /D {} def
    [ #1 ] /NArray ED 
    NArray aload length /m ED
    /xUnit \pst@number\psxunit def
    /yUnit \pst@number\psyunit def
  }%
  \@nameuse{endplot@\psplotstyle}%
  \ignorespaces%
}
\define@key[psset]{pst-plot}{plotstyle}[line]{%
  \@ifundefined{beginplot@#1}%
    {\@pstrickserr{Plot style `#1' not defined}\@eha}%
    {\def\psplotstyle{#1}}}
\psset[pst-plot]{plotstyle=line}
\define@key[psset]{pst-plot}{plotpoints}[50]{%
  \pst@cntg=#1\relax
  \ifnum\pst@cntg<2
    \@pstrickserr{plotpoints parameter must be at least 2}\@ehpa
  \else
    \advance\pst@cntg-1
    \edef\psk@plotpoints{\the\pst@cntg\space}%
  \fi}
\psset[pst-plot]{plotpoints=50}
\define@key[psset]{pst-plot}{yMaxValue}[1.e30]{\def\psk@yMaxValue{#1 }\def\psk@yMinValue{#1 neg }}
\psset{yMaxValue=1.e30}
\define@key[psset]{pst-plot}{yMinValue}[-1.e30]{\def\psk@yMinValue{#1 }}
\psset{yMinValue=-1.e30}
\def\beginqp@line{\pst@oplineto}
\def\doqp@line{ 
  dup
  \psk@yMaxValue \pst@number\psyunit mul gt 
    { moveto }
    { dup \psk@yMinValue \pst@number\psyunit mul lt 
      { moveto }
      { L } ifelse 
    } ifelse
}
\def\endqp@line{%
  \ifPst@variableLW \addto@pscode{ \pst@flattenpath }\fi%
  \end@OpenObj}%

\def\testqp@line{%
  \ifdim\pslinearc>\z@\else
    \ifshowpoints\else
      \ifx\psk@arrowA\@empty
        \ifx\psk@arrowB\@empty
          \@psttrue
        \fi
      \fi
    \fi
  \fi}
\def\beginqp@polygon{moveto }
\def\doqp@polygon{ 
      dup
      \psk@yMaxValue \pst@number\psyunit mul gt 
      { moveto }{ 
          dup
          \psk@yMinValue \pst@number\psyunit mul lt 
          { moveto }{ L } ifelse 
      } ifelse
}
\def\endqp@polygon{%
  \addto@pscode{closepath}%
  \end@ClosedObj}
\def\testqp@polygon{%
  \ifdim\pslinearc>\z@\else
    \ifshowpoints\else
      \@psttrue
    \fi
  \fi}
\def\beginqp@dots{%
  \psk@dotsize
  \@nameuse{psds@\psk@dotstyle}
  Dot }
\def\doqp@dots{Dot }
\def\endqp@dots{\end@SpecialObj}
\def\testqp@dots{\@psttrue}
\def\beginqp@bezier{/n 0 def \pst@oplineto}
\def\doqp@bezier{/n n 1 add def n 3 mod 0 eq { % we need 3 points   
    dup \psk@yMaxValue\space \pst@number\psyunit mul gt 
    { moveto pop pop pop pop}
    { dup \psk@yMinValue\space \pst@number\psyunit mul lt 
      { moveto pop pop pop pop}{ curveto } ifelse 
    } ifelse 
  } if
}
\def\endqp@bezier{%
  \addto@pscode{n 3 mod { pop pop } repeat}
  \end@OpenObj}%
\def\testqp@bezier{%
  \ifshowpoints\else
    \ifx\psk@arrowA\@empty
      \ifx\psk@arrowB\@empty
        \@psttrue
      \fi
    \fi
  \fi}
\def\beginqp@cbezier{/n 0 def moveto }
\def\doqp@cbezier{\doqp@bezier}
\def\endqp@cbezier{%
  \addto@pscode{n 3 mod { pop pop } repeat closepath}
  \end@ClosedObj}%
\def\testqp@cbezier{\ifshowpoints\else\@psttrue\fi}
\def\tx@LineToYAxis{LineToYAxis }
\def\psLineToYAxis@ii{%
\addto@pscode{\pst@cp \psline@iii \psk@Ox\space \pst@number\psxunit mul \tx@LineToYAxis}%
\end@OpenObj}
\def\tx@LineToXAxis{LineToXAxis }
\def\psLineToXAxis@ii{%
\addto@pscode{\pst@cp \psline@iii \psk@Oy\space \pst@number\psyunit mul \tx@LineToXAxis}%
\end@OpenObj}
\define@key[psset]{pst-plot}{PSfont}[NimbusRomNo9L-Regu]{\def\psk@PSfont{/#1 }}
\define@key[psset]{pst-plot}{valuewidth}[10]{\pst@getint{#1}\psk@valuewidth }
\define@key[psset]{pst-plot}{fontscale}[10]{\pst@checknum{#1}\psk@fontscale }
\define@key[psset]{pst-plot}{decimals}[-1]{\pst@getint{#1}\psk@decimals }
\psset[pst-plot]{PSfont=NimbusRomNo9L-Regu,fontscale=10,valuewidth=10,decimals=-1}
\newdimen\psxlabelsep
\newdimen\psylabelsep
\define@key[psset]{pst-plot}{xlabelsep}[5pt]{\pssetlength\psxlabelsep{#1}}
\define@key[psset]{pst-plot}{ylabelsep}[5pt]{\pssetlength\psylabelsep{#1}}
\psset[pst-plot]{xlabelsep=5pt,ylabelsep=5pt}

\newif\ifPst@valuesStar\Pst@valuesStarfalse
\newif\ifPst@xvalues\Pst@xvaluesfalse
\def\psvalues@ii{\addto@pscode{ false \tx@NArray \psvalues@iii }}
\def\psvalues@iii{
  \psk@PSfont findfont \psk@fontscale scalefont setfont 
  newpath 
  n { /yO ED /xO ED
      gsave
      \ifPst@xvalues
        xO \pst@number\psxunit div
      \else
        yO \pst@number\psyunit div
      \fi
      \psk@decimals 0 eq { cvi } if
      \psk@decimals 0 gt { 10 \psk@decimals exp dup 3 1 roll mul cvi exch div } if
      \psk@valuewidth string cvs /Str ED
      \ifPst@valuesStar
      Str stringwidth pop /yS \psk@fontscale def /xS ED 
      gsave newpath 
        xO \ifPst@xvalues \pst@number\pslabelsep add \fi 
        yO \ifPst@xvalues \psk@fontscale 4 div sub \else \pst@number\pslabelsep add \fi 
        moveto \ifx\psk@rot\@empty\else\psk@rot rotate \fi
        xS 0 rlineto 0 yS rlineto xS neg 0 rlineto 0 yS neg rlineto 
        closepath  1 setgray fill stroke 
      grestore 
      \fi
      xO \ifPst@xvalues \pst@number\pslabelsep add \fi
      yO \ifPst@xvalues \psk@fontscale 4 div sub \else \pst@number\pslabelsep add \fi 
      moveto \ifx\psk@rot\@empty\else\psk@rot rotate \fi 
      Str show 
      grestore } repeat 
}%
\def\beginplot@values{\Pst@valuesStarfalse\begin@SpecialObj}
\expandafter\def\csname beginplot@values*\endcsname{\Pst@valuesStartrue\begin@SpecialObj}
\def\beginplot@xvalues{\Pst@valuesStarfalse\begin@SpecialObj}
\expandafter\def\csname beginplot@xvalues*\endcsname{\Pst@valuesStartrue\begin@SpecialObj}
\def\endplot@values{%
  \Pst@xvaluesfalse%  
  \psvalues@ii%
  \pst@stroke
  \end@SpecialObj}
\@namedef{endplot@values*}{\endplot@values}
\def\endplot@xvalues{%
  \Pst@xvaluestrue%  
  \psvalues@ii%
  \pst@stroke
  \end@SpecialObj}
\@namedef{endplot@xvalues*}{\endplot@xvalues}
\def\psdataplot{\def\pst@par{}\pst@object{dataplot}}
\def\dataplot{\def\pst@par{}\pst@object{dataplot}}
\def\dataplot@i#1{%
  \pst@killglue
  \begingroup
    \use@par
    \@pstfalse
    \@nameuse{testqp@\psplotstyle}%
    \if@pst
      \dataplot@ii{\addto@pscode{#1}}%
    \else
      \listplot@ii{\addto@pscode{#1}}%
    \fi
  \endgroup
  \ignorespaces}
\def\dataplot@ii#1{%
  \@nameuse{beginplot@\psplotstyle}%
    \addto@pscode{%
      /Dx { \pst@number\psxunit mul /D { Dy } def } def
      /Dy { \pst@number\psyunit mul Do /D { Dx } def } def
      /D { /D { Dx } def } def
      /Do {
        \@nameuse{beginqp@\psplotstyle}%
        /Do { \@nameuse{doqp@\psplotstyle}} def
      } def}%
    #1%			% this is \pst@readfile{#1} for fileplot
    \addto@pscode{ D }%
  \@nameuse{endqp@\psplotstyle}}
\def\psfileplot{\def\pst@par{}\pst@object{fileplot}}
\def\fileplot{\def\pst@par{}\pst@object{fileplot}}
\def\fileplot@i#1{%
  \pst@killglue%
  \begingroup%
    \use@par%
    \@pstfalse%
    \@nameuse{testqp@\psplotstyle}%
    \if@pst\dataplot@ii{\pst@readfile{#1}}\else\listplot@ii{\pst@altreadfile{#1}}\fi%
  \endgroup%
  \ignorespaces}
\def\pslistplot{\pst@object{listplot}}
\def\listplot{\pst@object{listplot}}
\def\listplot@i#1{\listplot@ii{\addto@pscode{#1}}}
\def\listplot@ii#1{%
  \@nameuse{beginplot@\psplotstyle}%
  \addto@pscode{/D {} def mark}%
  #1%
  \addto@pscode{
    \tx@PreparePoints
    \pst@number\psxunit
    \pst@number\psyunit
    \tx@ScalePoints
  }%
  \@nameuse{endplot@\psplotstyle}%
}
\define@boolkey[psset]{pst-plot}[Psk@]{xyValues}[true]{}
\define@boolkey[psset]{pst-plot}[Pst@]{ChangeOrder}[true]{}
\psset[pst-plot]{xyValues,ChangeOrder=false}
\pst@def{PreparePoints}<{%
  counttomark /m exch def
  /maxYValues \psk@plotNoMax\space def
  /YValuePos \psk@plotNo\space def
  /XValuePos \psk@plotNoX\space def
  /maxYvalue \ifx\empty\psk@plotYMax false \else true \fi def
  maxYvalue { /YMaxValue \psk@plotYMax\space def } if
  \ifPsk@xyValues\else % we have only y values
    /mm m def
    /M m 1 add def
    m { mm exch M 2 roll /M M 1 add def /mm mm 1 sub def } repeat
    /m m dup add def
  \fi
  \ifPst@ChangeOrder
    /m0 m def
    m maxYValues 1 add div 1 sub cvi {
      m0 maxYValues 1 add roll /m0 m0 maxYValues 1 add sub def
    } repeat
  \fi
  /n m maxYValues 1 add div cvi def
  XValuePos 1 gt {% multiple x values?            x y y xNo y
    n {
      maxYValues 1 add XValuePos neg roll    % y x y y xNo
      dup /XValue ED
      maxYValues 1 add XValuePos 1 sub  roll % y y xNo y x
      pop XValue                             % y y xNo y xNo
      maxYValues 1 add 1 roll                % xNo y y xNo y 
      m maxYValues 1 add  roll               % next values
    } repeat
  } if % no multiple data files
  maxYValues 1 gt {% multiple data files?           x y y yNo y 
    n {
      maxYValues YValuePos 1 sub neg roll % x yNo y y y ...
      maxYValues 1 sub { pop } repeat     % x yNo
      /m m maxYValues 1 sub sub def
      m 2 roll
    } repeat
  } if % no multiple data files
  /xMax -99999 def /yMax -99999 def
  /xP 0 def /yP 0 def
  m copy
  n {
    /y exch def /x exch def
    xMax x lt { /xMax x def } if
    yMax y lt {/yMax y def } if
    xP x gt { /xP x def } if
    yP y gt { /yP y def } if
  } repeat
  \psk@xStep\space 0 gt \psk@yStep\space 0 gt or (\psk@xStart) length 0 gt or
  (\psk@yStart) length 0 gt or (\psk@xEnd) length 0 gt or (\psk@yEnd) length 0 gt or {
    (\psk@xStart) length 0 gt {\psk@xStart\space }{ xP } ifelse /xStart exch def
    (\psk@yStart) length 0 gt {\psk@yStart\space }{ yP } ifelse /yStart exch def
    (\psk@xEnd) length 0 gt { \psk@xEnd\space }{ xMax } ifelse /xEnd exch def
    (\psk@yEnd) length 0 gt { \psk@yEnd\space }{ yMax } ifelse /yEnd exch def
    n {
      m -2 roll
      2 copy /yVal exch def /xVal exch def
      xVal xP ge
      yVal yP ge and
      xVal xEnd le and
      yVal yEnd le and
      xVal xStart ge and
      yVal yStart ge and {
        /xP xP \psk@xStep\space add def
        /yP yP \psk@yStep\space add def
      }{%
        pop pop
        /m m 2 sub def
      } ifelse
    } repeat
  }{%
    /ncount 1 def
    (\psk@nEnd) length 0 gt { \psk@nEnd\space }{ m } ifelse 
    /nEnd exch def
    n {
      m -2 roll
      \psk@nStep\space 1 gt { ncount \psk@nStart\space sub \psk@nStep\space mod 0 eq }{ true } ifelse
        ncount nEnd le and
        ncount \psk@nStart\space ge and not {
          pop pop
          /m m 2 sub def
        } if
        /ncount ncount 1 add def
      } repeat
  } ifelse
  maxYvalue { % delete values gt YMaxValue
    /m0 m def
    n {
      dup YMaxValue gt { pop pop /m m 2 sub def } if
      m -2 roll 
    } repeat
  } if
}>
\define@boolkey[psset]{pst-plot}[Pst@]{polarplot}[true]{}
\define@boolkey[psset]{pst-plot}[Pst@]{VarStep}[true]{}
\define@key[psset]{pst-plot}{PlotDerivative}{\def\psk@PlotDerivative{#1}}%
\define@key[psset]{pst-plot}{VarStepEpsilon}{\def\psk@VarStepEpsilon{#1}}%
\define@key[psset]{pst-plot}{method}{\def\psk@method{#1}}%     	   adams - rk4
\def\@rkiv{rk4}%		Runge-Kutta 4  method
\def\@varrkiv{varrkiv}%		Runge-Kutta 4 with an adaptive step method
\def\@adams{adams}%		Adams method
\def\@default{default}%		Adams method
\psset[pst-plot]{VarStep=false,PlotDerivative=none,VarStepEpsilon=default,polarplot=false,method={}}
\def\psplotinit#1{\xdef\psplot@init{#1 }}
\def\psplot@init{}
\def\psplot{\def\pst@par{}\pst@object{psplot}}
\def\psplot@i#1#2{\@ifnextchar[{\psplot@x{#1}{#2}}{\psplot@x{#1}{#2}[]}}
\def\psplot@x#1#2[#3]#4{%
  \pst@killglue
  \begingroup
    \use@par
    \@nameuse{beginplot@\psplotstyle}%
    \ifPst@polarplot
      \addto@pscode{
        \psplot@init
        #3 
        /x #1 def
        /x1 #2 def
        /dx x1 x sub \psk@plotpoints div def
        /F@pstplot \ifPst@algebraic (#4)
                    \ifx\psk@PlotDerivative\@none\else
                      \psk@PlotDerivative\space { (x) tx@Derive begin Derive end } repeat
                    \fi\space
                    tx@AlgToPs begin AlgToPs end cvx
                 \else { #4 } \fi  def
        \ifPst@VarStep
          /StillZero 0 def /LastNonZeroStep dx def
          /F2@pstplot tx@Derive begin (#4) (x) Derive (x) Derive end
                     \ifx\psk@PlotDerivative\@none\else
                       \psk@PlotDerivative\space { (x) tx@Derive begin Derive end } repeat
                     \fi\space
                    tx@AlgToPs begin AlgToPs end cvx def
          /epsilon12 \ifx\psk@VarStepEpsilon\@default tx@Derive begin F2@pstplot end dx 3 exp abs mul abs
                    \else\psk@VarStepEpsilon\space 12 mul \fi def
          /ComputeStep {
            dup 1e-4 lt
            { pop StillZero 2 ge { LastNonZeroStep 2 mul } { LastNonZeroStep } ifelse /StillZero StillZero 1 add def }
            { epsilon12 exch div 1 3 div exp /StillZero 0 def }
            ifelse } bind def
        \fi
        /xy {% Adapted from \parametricplot@i
          F@pstplot x \ifPst@algebraic RadtoDeg \fi PtoC
          \pst@number\psyunit mul exch
          \pst@number\psxunit mul exch
        } def}%
    \else% polarplot
    \addto@pscode{
      \psplot@init
      #3 
      /x #1 def
      /x1 #2 def
      /dx x1 x sub \psk@plotpoints div def
      /F@pstplot \ifPst@algebraic (#4)
                    \ifx\psk@PlotDerivative\@none\else
                      \psk@PlotDerivative\space { (x) tx@Derive begin Derive end } repeat
                    \fi\space
                    tx@AlgToPs begin AlgToPs end cvx
                 \else { #4 } \fi  def
      \ifPst@VarStep
         /StillZero 0 def /LastNonZeroStep dx def
         /F2@pstplot tx@Derive begin (#4) (x) Derive (x) Derive end
                     \ifx\psk@PlotDerivative\@none\else
                       \psk@PlotDerivative\space { (x) tx@Derive begin Derive end } repeat
                     \fi\space
                    tx@AlgToPs begin AlgToPs end cvx def
         /epsilon12 \ifx\psk@VarStepEpsilon\@default tx@Derive begin F2@pstplot end dx 3 exp abs mul abs
                    \else\psk@VarStepEpsilon\space 12 mul \fi def
         /ComputeStep {
           dup 1e-4 lt
           { pop StillZero 2 ge { LastNonZeroStep 2 mul } { LastNonZeroStep } ifelse /StillZero StillZero 1 add def }
           { epsilon12 exch div 1 3 div exp /StillZero 0 def }
           ifelse } bind def
      \fi
      /xy { x \pst@number\psxunit mul F@pstplot \pst@number\psyunit mul
      } def}%
    \fi
    \gdef\psplot@init{}%
    \ifx\pslinestyle\psls@@symbol
      \psplot@iii
    \else
      \@pstfalse
      \@nameuse{testqp@\psplotstyle}%
      \if@pst\psplot@ii\else\psplot@iii\fi
    \fi
  \endgroup
  \ignorespaces}
\def\psplot@ii{%
  \ifPst@VarStep%
    \addto@pscode{%
      mark xy \@nameuse{beginqp@\psplotstyle}
      { F2@pstplot abs ComputeStep
        x 2 copy add dup x1 gt {pop x1} if /x exch def F2@pstplot abs ComputeStep
        /x 3 -1 roll def 2 copy gt { exch } if pop
        /x x 3 -1 roll add dup x1 gt {pop x1} if def
        xy \@nameuse{doqp@\psplotstyle}
        x x1 eq { exit } if} loop}%
  \else
    \pst@killglue%
    \addto@pscode{
      /ps@Exit false def
      xy \@nameuse{beginqp@\psplotstyle}
      \ifx\psk@method\@varrkiv\else\psk@plotpoints 1 sub \fi {
        /x x dx add \ifx\psk@method\@varrkiv  dup x1 gt { pop x1 } if \fi def
        xy \@nameuse{doqp@\psplotstyle}
        \ifx\psk@method\@varrkiv  x x1 eq { exit } if \fi
      } 
      ps@Exit { exit } if
      \ifx\psk@method\@varrkiv loop \else repeat \fi
      ps@Exit not {
        /x x1 def
        xy \@nameuse{doqp@\psplotstyle}
      } if }%
  \fi%
  \@nameuse{endqp@\psplotstyle}}
\def\psplot@iii{%
  \ifPst@VarStep%
    \addto@pscode{
      /n 2 def
      mark
      { xy n 2 roll F2@pstplot abs
        ComputeStep x 2 copy add dup x1 gt {pop x1} if
        /x exch def F2@pstplot abs ComputeStep
        /x 3 -1 roll def 2 copy gt { exch } if pop
        /x x 3 -1 roll dup /LastNonZeroStep exch def add dup x1 gt {pop x1} if def /n n 2 add def
        x x1 eq { exit } if } loop
      xy 
      n 2 roll}%
  \else\pst@killglue%
    \addto@pscode{
      mark
      /n 2 def
      \ifx\psk@method\@varrkiv\else\psk@plotpoints\fi {
        xy
        n 2 roll
        /n n 2 add def
        /x x dx add \ifx\psk@method\@varrkiv  dup x1 gt { pop x1 } if \fi def
        \ifx\psk@method\@varrkiv  x x1 eq { exit } if \fi
      } \ifx\psk@method\@varrkiv loop\else repeat \fi \space
      /x x1 def
      xy 
      2 copy \tx@UserCoor 2 array astore /FinalState ED
      n 2 roll}%
  \fi%
  \@nameuse{endplot@\psplotstyle}}
\def\psparametricplot{\pst@object{parametricplot}}% 	hv 2008-11-22
\def\parametricplot{\pst@object{parametricplot}}
\def\parametricplot@i#1#2{\@ifnextchar[{\parametricplot@x{#1}{#2}}{\parametricplot@x{#1}{#2}[]}}
\def\parametricplot@x#1#2[#3]{\@ifnextchar[{\parametricplot@xi{#1}{#2}[#3]}{\parametricplot@xi{#1}{#2}[#3][]}}
\def\parametricplot@xi#1#2[#3][#4]#5{%
  \pst@killglue%
  \begingroup%
    \use@par%
    \@nameuse{beginplot@\psplotstyle}%
    \addto@pscode{%
      #3 %prefix PS code
      \psplot@init
      /t #1 def
      /t1 #2 def
      /dt t1 t sub \psk@plotpoints div def
      /F@pstplot \ifPst@algebraic (#5)
                    \ifx\psk@PlotDerivative\@none\else
                      \psk@PlotDerivative\space { (t) tx@Derive begin Derive end } repeat
                    \fi\space
                    tx@AlgToPs begin AlgToPs end cvx
                 \else { #5 } \fi  def
      \ifPst@VarStep
         /StillZero 0 def /LastNonZeroStep dt def
         /F2@pstplot tx@Derive begin (#5) (t) Derive (t) Derive end
                     \ifx\psk@PlotDerivative\@none\else
                       \psk@PlotDerivative\space { (t) tx@Derive begin Derive end } repeat
                     \fi\space
                    tx@AlgToPs begin AlgToPs end cvx def
         /epsilon12 \ifx\psk@VarStepEpsilon\@default
                       tx@Derive begin F2@pstplot end Pyth
                       dt 3 exp abs mul
                    \else\psk@VarStepEpsilon\space 12 mul \fi def
         /ComputeStep {
           dup 1e-4 lt
           { pop StillZero 2 ge { LastNonZeroStep 2 mul } { LastNonZeroStep } ifelse /StillZero StillZero 1 add def }
           { epsilon12 exch div 1 3 div exp /StillZero 0 def }
           ifelse } bind def
      \fi
      /xy {
        \ifPst@algebraic F@pstplot \else #5 \fi
        \pst@number\psyunit mul exch
        \pst@number\psxunit mul exch
      } def
      }%
    \gdef\psplot@init{}%
    \@pstfalse
    \@nameuse{testqp@\psplotstyle}%
    \if@pst\parametricplot@ii{#4}\else\parametricplot@iii{#4}\fi
  \endgroup%
  \ignorespaces}
\def\parametricplot@ii#1{% para is the post code
  \ifPst@VarStep%
    \addto@pscode{%
      mark xy \@nameuse{beginqp@\psplotstyle}
      { F2@pstplot Pyth ComputeStep
        t 2 copy add dup t1 gt {pop t1} if /t exch def F2@pstplot Pyth ComputeStep
        /t 3 -1 roll def 2 copy gt { exch } if pop
        /t t 3 -1 roll add dup t1 gt {pop t1} if def
        xy \@nameuse{doqp@\psplotstyle}
        t t1 eq { exit } if } loop}%
  \else\pst@killglue%
    \addto@pscode{%
      /ps@Exit false def
      xy \@nameuse{beginqp@\psplotstyle}
      \psk@plotpoints 1 sub {
        /t t dt add def
        xy \@nameuse{doqp@\psplotstyle}
        ps@Exit { exit } if 
      } repeat
      ps@Exit not {
        /t t1 def
        xy \@nameuse{doqp@\psplotstyle}
      } if 
    }%
  \fi%
  \addto@pscode{ #1 }%
  \@nameuse{endqp@\psplotstyle}}
\def\parametricplot@iii#1{%
  \ifPst@VarStep%
    \addto@pscode{%
      /n 2 def
      mark
      { xy n 2 roll F2@pstplot Pyth
        ComputeStep t 2 copy add dup t1 gt {pop t1} if
        /t exch def F2@pstplot Pyth ComputeStep
        /t 3 -1 roll def 2 copy gt { exch } if pop
        /t t 3 -1 roll dup /LastNonZeroStep exch def add dup t1 gt {pop t1} if def /n n 2 add def
        t t1 eq { exit } if } loop
      xy 
      2 copy \tx@UserCoor 2 array astore /FinalState ED
      n 2 roll}%
  \else\pst@killglue%
    \addto@pscode{
      mark
      /n 2 def
      \psk@plotpoints {
        xy
        n 2 roll
        /n n 2 add def
        /t t dt add def
      } repeat
      /t t1 def
      xy
      n 2 roll}%
  \fi%
  \addto@pscode{ #1 }%
  \@nameuse{endplot@\psplotstyle}}
\newdimen\psk@subticksize\psk@subticksize=\z@
\newdimen\pst@xticksizeA
\newdimen\pst@xticksizeB
\newdimen\pst@xticksizeC
\newdimen\pst@yticksizeA
\newdimen\pst@yticksizeB
\newdimen\pst@yticksizeC
\define@key[psset]{pst-plot}{ticks}[all]{\pst@expandafter\psset@@ticks{#1}\@nil\psk@ticks}
\def\psset@@ticks#1#2\@nil#3{%
  \ifx#1a\let#3\z@\else%				0=a)ll
    \ifx#1x\let#3\@ne\else%				1=x
      \ifx#1y\let#3\tw@\else%				2=y
        \ifx#1n\let#3\thr@@\else%			3=n)one
          \@pstrickserr{Bad argument: `#1#2'}\@ehpa
  \fi\fi\fi\fi}
\psset[pst-plot]{ticks=all}
\define@key[psset]{pst-plot}{labels}[all]{\pst@expandafter\psset@@ticks{#1}\@nil\psk@labels}% same as ticks
\psset[pst-plot]{labels=all}
\define@key[psset]{pst-plot}{Ox}[0]{\def\psk@Ox{#1}}
\psset[pst-plot]{Ox=0}
\define@key[psset]{pst-plot}{Dx}[1]{\def\psk@Dx{#1}}
\psset[pst-plot]{Dx=1}
\define@key[psset]{pst-plot}{dx}[0]{%
  \pssetxlength\pst@dimg{#1}%
  \edef\psk@dx{\number\pst@dimg}}
\psset[pst-plot]{dx=0}
\define@key[psset]{pst-plot}{Oy}[0]{\def\psk@Oy{#1}}
\psset[pst-plot]{Oy=0}
\define@key[psset]{pst-plot}{Dy}[1]{\def\psk@Dy{#1}}
\psset[pst-plot]{Dy=1}
\define@key[psset]{pst-plot}{dy}[0]{%
  \pssetylength\pst@dimg{#1}%
  \edef\psk@dy{\number\pst@dimg}}
\psset[pst-plot]{dy=0}
\define@boolkey[psset]{pst-plot}[]{showXorigin}[true]{}
\define@boolkey[psset]{pst-plot}[]{showYorigin}[true]{}
\define@boolkey[psset]{pst-plot}[]{showorigin}[true]{%
  \ifshoworigin
    \showXorigintrue\showYorigintrue
  \else
    \showXoriginfalse\showYoriginfalse
  \fi
}
\psset[pst-plot]{showorigin=true}
\long\def\psrotatebox#1#2{%
  \leavevmode
  \Grot@setangle{#1}%
  \setbox\z@\hbox{{#2}}%
  \Grot@x\z@
  \Grot@y\z@
  \Grot@box}
\def\Grot@setangle#1{\edef\Grot@angle{#1}}
\def\Grot@Px#1#2#3{%
        #1\Grot@cos#2%
        \advance#1-\Grot@sin#3}
\def\Grot@Py#1#2#3{%
        #1\Grot@sin#2%
        \advance#1\Grot@cos#3}
\def\Grot@box{%
  \begingroup
  \CalculateSin\Grot@angle
  \CalculateCos\Grot@angle
  \edef\Grot@sin{\UseSin\Grot@angle}%
  \edef\Grot@cos{\UseCos\Grot@angle}%
  \Grot@r\wd\z@  \advance\Grot@r-\Grot@x
  \Grot@l\z@     \advance\Grot@l-\Grot@x
  \Grot@h\ht\z@  \advance\Grot@h-\Grot@y
  \Grot@d-\dp\z@ \advance\Grot@d-\Grot@y
  \ifdim\Grot@sin\p@>\z@
    \ifdim\Grot@cos\p@>\z@
      \Grot@Py\Grot@height \Grot@r\Grot@h%B
      \Grot@Px\Grot@right  \Grot@r\Grot@d%E
      \Grot@Px\Grot@left   \Grot@l\Grot@h%C
      \Grot@Py\Grot@depth  \Grot@l\Grot@d%D
    \else
      \Grot@Py\Grot@height \Grot@r\Grot@d%E
      \Grot@Px\Grot@right  \Grot@l\Grot@d%D
      \Grot@Px\Grot@left   \Grot@r\Grot@h%B
      \Grot@Py\Grot@depth  \Grot@l\Grot@h%C
    \fi
  \else
    \ifdim\Grot@cos\p@<\z@
      \Grot@Py\Grot@height \Grot@l\Grot@d%D
      \Grot@Px\Grot@right  \Grot@l\Grot@h%C
      \Grot@Px\Grot@left   \Grot@r\Grot@d%E
      \Grot@Py\Grot@depth  \Grot@r\Grot@h%B
    \else
      \Grot@Py\Grot@height \Grot@l\Grot@h%C
      \Grot@Px\Grot@right  \Grot@r\Grot@h%B
      \Grot@Px\Grot@left   \Grot@l\Grot@d%D
      \Grot@Py\Grot@depth  \Grot@r\Grot@d%E
    \fi
  \fi
  \advance\Grot@height\Grot@y
  \advance\Grot@depth\Grot@y
  \Grot@Px\dimen@  \Grot@x\Grot@y
  \Grot@Py\dimen@ii \Grot@x\Grot@y
  \dimen@-\dimen@     \advance\dimen@-\Grot@left
  \dimen@ii-\dimen@ii \advance\dimen@ii\Grot@y
  \setbox\z@\hbox{%
    \kern\dimen@
    \raise\dimen@ii\hbox{\Grot@start\box\z@\Grot@end}}%
  \ht\z@\Grot@height
  \dp\z@-\Grot@depth
  \advance\Grot@right-\Grot@left\wd\z@\Grot@right
  \leavevmode\box\z@
  \endgroup}
\define@key[psset]{pst-plot}{labelFontSize}[{}]{\def\psk@xlabelFontSize{#1}\def\psk@ylabelFontSize{#1}}%
\define@key[psset]{pst-plot}{xlabelFontSize}[{}]{\def\psk@xlabelFontSize{#1}}%
\define@key[psset]{pst-plot}{ylabelFontSize}[{}]{\def\psk@ylabelFontSize{#1}}%
\define@boolkey[psset]{pst-plot}[Pst@]{mathLabel}[true]{%
  \ifPst@mathLabel%
    \Pst@xmathLabeltrue \Pst@ymathLabeltrue
    \def\pshlabel##1{$\psk@xlabelFontSize##1$}%
    \def\psvlabel##1{$\psk@ylabelFontSize##1$}%
  \else%
    \Pst@xmathLabelfalse \Pst@ymathLabelfalse
    \def\pshlabel##1{\psk@xlabelFontSize##1}%
    \def\psvlabel##1{\psk@ylabelFontSize##1}%
  \fi}
\define@boolkey[psset]{pst-plot}[Pst@]{xmathLabel}[true]{%
  \ifPst@xmathLabel%
    \def\pshlabel##1{$\psk@xlabelFontSize##1$}\else\def\pshlabel##1{\psk@xlabelFontSize##1}\fi}
\define@boolkey[psset]{pst-plot}[Pst@]{ymathLabel}[true]{%
  \ifPst@ymathLabel%
    \def\psvlabel##1{$\psk@ylabelFontSize##1$}\else\def\psvlabel##1{\psk@ylabelFontSize##1}\fi}
\psset[pst-plot]{labelFontSize={},mathLabel}
\define@boolkey[psset]{pst-plot}[Pst@]{xAxis}[true]{}
\define@boolkey[psset]{pst-plot}[Pst@]{yAxis}[true]{}
\define@boolkey[psset]{pst-plot}[Pst@]{xyAxes}[true]{%
    \@nameuse{Pst@xAxis#1}\@nameuse{Pst@yAxis#1}}%
\psset[pst-plot]{xAxis,yAxis}%
\define@key[psset]{pst-plot}{xlabelPos}[b]{\pst@expandafter\psset@@xlabelPos#1\@nil}
\define@key[psset]{pst-plot}{ylabelPos}[l]{\pst@expandafter\psset@@ylabelPos#1\@nil}
\def\psset@@xlabelPos#1#2\@nil{%
  \ifx#1t\relax
    \let\psk@xlabelPos\tw@%		2=top
    \pst@xticksizeC=\pst@xticksizeB
  \else
    \ifx#1a\relax
      \let\psk@xlabelPos\@ne %	 	1=axis
      \pst@xticksizeC=\z@
    \else
      \def\psk@xlabelPos{\z@}%		0=bottom	
      \pst@xticksizeC=\pst@xticksizeA
  \fi\fi
}
\def\psset@@ylabelPos#1#2\@nil{%
  \ifx#1r\relax
    \def\psk@ylabelPos{\tw@}%		2=right
    \pst@yticksizeC=\pst@yticksizeB
  \else
    \ifx#1a\relax
      \def\psk@ylabelPos{\@ne}% 	1=axis
      \pst@yticksizeC=\z@
    \else 
      \def\psk@ylabelPos{\z@}%		0=left	
      \pst@yticksizeC=\pst@yticksizeA
  \fi\fi
}
\psset[pst-plot]{xlabelPos=b, ylabelPos=l}%
\define@key[psset]{pst-plot}{xyDecimals}[{}]{\def\psk@xDecimals{#1}\def\psk@yDecimals{#1}}
\define@key[psset]{pst-plot}{xDecimals}[{}]{\def\psk@xDecimals{#1}}
\define@key[psset]{pst-plot}{yDecimals}[{}]{\def\psk@yDecimals{#1}}
\psset[pst-plot]{xyDecimals={}}%
\define@key[psset]{pst-plot}{xlogBase}[{}]{\def\psk@xlogBase{#1}}
\define@key[psset]{pst-plot}{ylogBase}[{}]{\def\psk@ylogBase{#1}}
\define@key[psset]{pst-plot}{xylogBase}[{}]{\def\psk@xlogBase{#1}\def\psk@ylogBase{#1}}%
\psset[pst-plot]{xylogBase={}}%
\define@key[psset]{pst-plot}{trigLabelBase}[0]{\pst@getint{#1}{\psk@xtrigLabelBase}\let\psk@ytrigLabelBase\psk@xtrigLabelBase}
\define@key[psset]{pst-plot}{xtrigLabelBase}[0]{%
  \pst@getint{#1}{\psk@xtrigLabelBase}\psset{xtrigLabels}}
\define@key[psset]{pst-plot}{ytrigLabelBase}[0]{%
  \pst@getint{#1}{\psk@ytrigLabelBase}\psset{ytrigLabels}}
\psset[pst-plot]{trigLabelBase=0}
\define@key[psset]{pst-plot}{fractionLabelBase}[0]{\pst@getint{#1}{\psk@xfractionLabelBase}\let\psk@yfractionLabelBase\psk@xfractionLabelBase}
\define@key[psset]{pst-plot}{xfractionLabelBase}[0]{%
  \pst@getint{#1}{\psk@xfractionLabelBase}\psset{xfractionLabels}}
\define@key[psset]{pst-plot}{yfractionLabelBase}[0]{%
  \pst@getint{#1}{\psk@yfractionLabelBase}\psset{yfractionLabels}}
\psset[pst-plot]{fractionLabelBase=0}
\def\setDefaulthLabels{%
  \ifPst@xmathLabel\def\pshlabel##1{$\psk@xlabelFontSize##1$}\else\def\pshlabel##1{\psk@xlabelFontSize##1}\fi
  \def\pst@@@hlabel##1{%
      \edef\@xyDecimals{\psk@xDecimals}%
      \ifnum\psk@labels<\tw@\relax% labels=all|x
        \ifx\psk@xlogBase\@empty
          \pshlabel{\psk@xlabelFontSize\expandafter\@LabelComma##1..\@nil\psk@xlabelFactor}%
        \else
          \ifPst@xmathLabel
            \pshlabel{\psk@xlabelFontSize\psk@xlogBase^{\expandafter\@stripDecimals##1..\@nil}}%
          \else
            \pshlabel{\psk@xlabelFontSize\psk@xlogBase\textsuperscript{\expandafter\@stripDecimals##1..\@nil}}%
          \fi
        \fi
      \fi
    }%
    \ifPst@xmathLabel\def\pshlabel##1{$\psk@xlabelFontSize##1$}\else\def\pshlabel##1{\psk@xlabelFontSize##1}\fi
}
\def\setTrighLabels{%
    \def\pst@@@hlabel##1{\pshlabel{##1}}%
    \def\pshlabel##1{%
      \ifnum\psk@xtrigLabelBase<2
        \def\de@nominator{\@ne}\else\def\de@nominator{\psk@xtrigLabelBase}\fi
      \def\pst@tempA{##1}% 
      \pst@abs{\pst@tempA}\pst@cntm 
      \pst@mod{\pst@cntm}{\de@nominator}\pst@cntp % cntb=##1 modulo trigLabelBase
      \ifnum\@ne>\pst@cntp                  % 1 > modulo -> then we have pi/x
        \pst@cnto=\pst@cntm \divide\pst@cnto by \de@nominator  
	\ifPst@xmathLabel
          $\psk@xlabelFontSize
  	  \ifnum\pst@tempA<0 -\fi
          \ifnum\pst@cnto=\@ne                % #1 = trigLabelBase
            \pi                 	      % print pi
          \else
            \ifnum\pst@cnto=\z@ 0\else
            \the\pst@cnto\pi 	              % print \pst@cnto/\de@nominator pi
          \fi\fi$%   
	\else%
          \psk@xlabelFontSize
  	  \ifnum\pst@tempA<0 -\fi
          \ifnum\pst@cnto=\@ne%                % #1 = trigLabelBase
            $\pi$%                             % print pi
          \else%
            \the\pst@cnto$\pi$%                % print \pst@cnto/\de@nominator pi
          \fi%
	\fi%
      \else%
	\ifPst@xmathLabel%
          $\psk@xlabelFontSize%
          \ifnum\pst@cntp=\@ne%                % < 1 pi?
            \if\pst@cntm=\@ne%
              \frac{\pi}{\de@nominator}%   % pi/x
            \else\ifnum\pst@tempA=-1 \frac{-\pi}{\de@nominator}%
              \else \ifnum\pst@tempA=1 \frac{\pi}{\de@nominator}%
                \else\frac{\pst@tempA\pi}{\de@nominator}% (x pi)/y
            \fi\fi\fi%
          \else%
            \ifnum\pst@tempA=1 \frac{\pi}{\de@nominator}%
            \else\ifnum\pst@tempA=\de@nominator \pi%
              \else\frac{\pst@tempA\pi}{\de@nominator}% 
          \fi\fi\fi$%
	\else%
          \psk@xlabelFontSize%
          \ifnum\pst@cntp=\@ne%                % < 1 pi?
            \if\pst@cntm=\@ne%
              $\frac{\pi}{\de@nominator}$%   % pi/x
            \else\ifnum\pst@tempA=-1 $\frac{-\pi}{\de@nominator}$%
              \else \ifnum\pst@tempA=1 $\frac{\pi}{\de@nominator}$%
                \else$\frac{\pst@tempA\pi}{\de@nominator}$% (x pi)/y
            \fi\fi\fi
          \else
            \ifnum\pst@tempA=1 $\frac{\pi}{\de@nominator}$%
            \else\ifnum\pst@tempA=\de@nominator $\pi$%
              \else$\frac{\pst@tempA\pi}{\de@nominator}$% 
          \fi\fi\fi
	\fi
      \fi
    }%
}
\def\setDefaultvLabels{%
  \ifPst@ymathLabel\def\psvlabel##1{$\psk@ylabelFontSize##1$}\else\def\psvlabel##1{\psk@ylabelFontSize##1}\fi
    \def\pst@@@vlabel##1{%
      \edef\@xyDecimals{\psk@yDecimals}%
      \ifodd\psk@labels % labelss=all||y (0,2)
      \else%
        \ifx\psk@ylogBase\@empty
          \psvlabel{\expandafter\@LabelComma##1..\@nil\psk@ylabelFactor}%
        \else%
          \ifPst@ymathLabel%
            \psvlabel{\psk@ylogBase^{\expandafter\@stripDecimals##1..\@nil }}%
	  \else
            \psvlabel{\psk@ylogBase\textsuperscript{\expandafter\@stripDecimals##1..\@nil }}%
          \fi%
        \fi%
      \fi%
    }%
}%
\def\setTrigvLabels{%
  \def\pst@@@vlabel##1{\psvlabel{##1}}%
    \def\psvlabel##1{%
      \ifnum\psk@ytrigLabelBase<2 \def\de@nominator{\@ne}\else\def\de@nominator{\psk@ytrigLabelBase}\fi
      \def\pst@tempA{##1} 
      \pst@abs{\pst@tempA}\pst@cntm 
      \pst@mod{\pst@cntm}{\de@nominator}\pst@cntp % cntb=##1 modulo trigLabelBase
      \ifnum\@ne>\pst@cntp                  % 1 > modulo -> then we have pi/x
        \pst@cnto=\pst@cntm \divide\pst@cnto by \de@nominator  
	\ifPst@ymathLabel%
          $\psk@ylabelFontSize
  	  \ifnum\pst@tempA<0 -\fi
          \ifnum\pst@cnto=\@ne                % #1 = trigLabelBase
            \pi                 	      % print pi
          \else
            \the\pst@cnto\pi 	              % print \pst@cnto/\de@nominator pi
          \fi$%   
	\else%
          \psk@ylabelFontSize%
  	  \ifnum\pst@tempA<0 -\fi
          \ifnum\pst@cnto=\@ne%                % #1 = trigLabelBase
            $\pi$%                             % print pi
          \else
            \the\pst@cnto$\pi$%                % print \pst@cnto/\de@nominator pi
          \fi
	\fi
      \else
	\ifPst@ymathLabel%
          $\psk@ylabelFontSize
          \ifnum\pst@cntp=\@ne%                % < 1 pi?    $
            \if\pst@cntm=\@ne%
              \frac{\pi}{\de@nominator}%   % pi/x
            \else\ifnum\pst@tempA=-1 \frac{-\pi}{\de@nominator}%
              \else \ifnum\pst@tempA=1 \frac{\pi}{\de@nominator}%
                \else\frac{\pst@tempA\pi}{\de@nominator}% (x pi)/y
            \fi\fi\fi%
          \else%
            \ifnum\pst@tempA=1 \frac{\pi}{\de@nominator}%
            \else\ifnum\pst@tempA=\de@nominator \pi%
              \else\frac{\pst@tempA\pi}{\de@nominator}% 
          \fi\fi\fi$%
	\else
          \psk@ylabelFontSize
          \ifnum\pst@cntp=\@ne%                % < 1 pi?
            \if\pst@cntm=\@ne
              $\frac{\pi}{\de@nominator}$%   % pi/x
            \else\ifnum\pst@tempA=-1 $\frac{-\pi}{\de@nominator}$%
              \else \ifnum\pst@tempA=1 $\frac{\pi}{\de@nominator}$%
                \else$\frac{\pst@tempA\pi}{\de@nominator}$% (x pi)/y
            \fi\fi\fi
          \else
            \ifnum\pst@tempA=1 $\frac{\pi}{\de@nominator}$%
            \else\ifnum\pst@tempA=\de@nominator $\pi$%
              \else$\frac{\pst@tempA\pi}{\de@nominator}$% 
          \fi\fi\fi
	\fi
      \fi
    }%
}%$
\def\setFractionvLabels{%
  \def\pst@@@vlabel##1{\psvlabel{##1}}%
  \def\psvlabel##1{%
      \ifnum\psk@yfractionLabelBase<2 \def\de@nominator{\@ne}\else\def\de@nominator{\psk@yfractionLabelBase}\fi
      \def\pst@tempA{##1}% 
      \pst@abs{\pst@tempA}\pst@cntm 
      \pst@mod{\pst@cntm}{\de@nominator}\pst@cntp % cntb=##1 modulo trigLabelBase
      \ifnum\@ne>\pst@cntp                  % 1 > modulo -> then we have pi/x
        \pst@cnto=\pst@cntm \divide\pst@cnto by \de@nominator  
	\ifPst@ymathLabel$\psk@ylabelFontSize\ifnum\pst@tempA<0 -\fi\the\pst@cnto\psk@ylabelFactor$%
	\else             \psk@ylabelFontSize\ifnum\pst@tempA<0 -\fi\the\pst@cnto\psk@ylabelFactor
	\fi
      \else
	\ifPst@ymathLabel
          $\psk@ylabelFontSize
          \ifnum\pst@cntp=\@ne                % < 1?    $
            \if\pst@cntm=\@ne
              \frac{1}{\de@nominator}\psk@ylabelFactor%   % 1/x
            \else\ifnum\pst@tempA=-1 \frac{-1}{\de@nominator}\psk@ylabelFactor%
              \else \ifnum\pst@tempA=1 \frac{1}{\de@nominator}\psk@ylabelFactor%
                \else\frac{\pst@tempA}{\de@nominator}\psk@ylabelFactor% x/y
            \fi\fi\fi
          \else
            \ifnum\pst@tempA=1 \frac{1}{\de@nominator}\psk@ylabelFactor%
            \else\ifnum\pst@tempA=\de@nominator 1\psk@xlabelFactor \else\frac{\pst@tempA}{\de@nominator}\psk@ylabelFactor%
          \fi\fi\fi$
	\else
          \psk@ylabelFontSize
          \ifnum\pst@cntp=\@ne%                % < 1?
            \if\pst@cntm=\@ne
              $\frac{1}{\de@nominator}\psk@ylabelFactor$%   % 1/x
            \else\ifnum\pst@tempA=-1 $\frac{-1}{\de@nominator}\psk@ylabelFactor$%
              \else \ifnum\pst@tempA=1 $\frac{1}{\de@nominator}\psk@ylabelFactor$%
                \else$\frac{\pst@tempA}{\de@nominator}\psk@ylabelFactor$% x/y
            \fi\fi\fi%
          \else%
            \ifnum\pst@tempA=1 $\frac{1}{\de@nominator}\psk@ylabelFactor$%
            \else\ifnum\pst@tempA=\de@nominator 1\psk@ylabelFactor
              \else$\frac{\pst@tempA}{\de@nominator}\psk@ylabelFactor$%   %$
          \fi\fi\fi
	\fi
      \fi
    }%
}%$
\def\setFractionhLabels{%
  \def\pst@@@hlabel##1{\pshlabel{##1}}%
  \def\pshlabel##1{%
      \ifnum\psk@xfractionLabelBase<2 \def\de@nominator{\@ne}\else\def\de@nominator{\psk@xfractionLabelBase}\fi
      \def\pst@tempA{##1}% 
      \pst@abs{\pst@tempA}\pst@cntm 
      \pst@mod{\pst@cntm}{\de@nominator}\pst@cntp% cntb=##1 modulo trigLabelBase
      \ifnum\@ne>\pst@cntp                  % 1 > modulo -> then we have 1/x
        \pst@cnto=\pst@cntm \divide\pst@cnto by \de@nominator  
	\ifPst@xmathLabel$\psk@xlabelFontSize\ifnum\pst@tempA<0 -\fi\the\pst@cnto\psk@xlabelFactor$%
	\else             \psk@xlabelFontSize\ifnum\pst@tempA<0 -\fi\the\pst@cnto\psk@xlabelFactor
	\fi
      \else
	\ifPst@xmathLabel
          $\psk@xlabelFontSize% $
          \ifnum\pst@cntp=\@ne
            \if\pst@cntm=\@ne \frac{1}{\de@nominator}\psk@xlabelFactor%   % 1/x
            \else\ifnum\pst@tempA=-1 \frac{-1}{\de@nominator}\psk@xlabelFactor%
              \else\ifnum\pst@tempA=1 \frac{1}{\de@nominator}\psk@xlabelFactor%
                \else\frac{\pst@tempA}{\de@nominator}\psk@xlabelFactor% x/y
            \fi\fi\fi%
          \else%
            \ifnum\pst@tempA=1 \frac{1}{\de@nominator}\psk@xlabelFactor%
            \else\ifnum\pst@tempA=\de@nominator 1\psk@xlabelFactor\else\frac{\pst@tempA}{\de@nominator}\psk@xlabelFactor%
          \fi\fi\fi$
	\else
          \psk@xlabelFontSize
          \ifnum\pst@cntp=\@ne
            \if\pst@cntm=\@ne $\frac{1}{\de@nominator}\psk@xlabelFactor$%            % 1/x
            \else\ifnum\pst@tempA=-1 $\frac{-1}{\de@nominator}\psk@xlabelFactor$%
              \else \ifnum\pst@tempA=1 $\frac{1}{\de@nominator}\psk@xlabelFactor$%
                \else$\frac{\pst@tempA}{\de@nominator}\psk@xlabelFactor$% x/y
            \fi\fi\fi
          \else
            \ifnum\pst@tempA=1 $\frac{1}{\de@nominator}\psk@xlabelFactor$%
            \else\ifnum\pst@tempA=\de@nominator 1\psk@xlabelFactor%
              \else$\frac{\pst@tempA}{\de@nominator}\psk@xlabelFactor$%   %$
          \fi\fi\fi
	\fi
      \fi
    }%
}%$
\define@boolkey[psset]{pst-plot}[Pst@]{xtrigLabels}[true]{%
  \ifPst@xtrigLabels \setTrighLabels \else \setDefaulthLabels\fi}
\define@boolkey[psset]{pst-plot}[Pst@]{ytrigLabels}[true]{%
  \ifPst@ytrigLabels \setTrigvLabels \else \setDefaultvLabels \fi}
\define@boolkey[psset]{pst-plot}[Pst@]{trigLabels}[true]{%
  \ifPst@trigLabels\psset[pst-plot]{xtrigLabels,ytrigLabels=false}
  \else            \psset[pst-plot]{xtrigLabels=false,ytrigLabels=false}%
  \fi}
\psset[pst-plot]{trigLabels=false}
\define@boolkey[psset]{pst-plot}[Pst@]{xfractionLabels}[true]{%
  \ifPst@xfractionLabels \setFractionhLabels \else \setDefaulthLabels \fi}
\define@boolkey[psset]{pst-plot}[Pst@]{yfractionLabels}[true]{%
  \ifPst@yfractionLabels \setFractionvLabels \else \setDefaultvLabels \fi}
\define@boolkey[psset]{pst-plot}[Pst@]{fractionLabels}[true]{%
  \ifPst@fractionLabels \setFractionhLabels\setFractionvLabels\Pst@xfractionLabelstrue\Pst@yfractionLabelstrue\fi}
\psset[pst-plot]{fractionLabels=false}
\def\psk@logLines{3}
\define@key[psset]{pst-plot}{logLines}[none]{\pst@expandafter\psset@@logLines#1\@nil\psk@logLines}
\def\psset@@logLines#1#2\@nil#3{%
  \ifx#1a\relax
    \let#3\z@
    \Pst@maxxTickstrue\Pst@maxyTickstrue
    \set@xticksize{0 4pt}\set@yticksize{0 4pt}%
    \def\psk@xsubticksize{1}\def\psk@ysubticksize{1}%
  \else
    \ifx#1x\relax
      \let#3\@ne
      \Pst@maxxTickstrue\Pst@maxyTicksfalse
      \set@xticksize{0 4pt}\def\psk@xsubticksize{1}%
    \else
      \ifx#1y\relax
        \let#3\tw@
	\Pst@maxyTickstrue\Pst@maxxTicksfalse
	\set@yticksize{0 4pt}\def\psk@ysubticksize{1}%
      \else
        \ifx#1n\let#3\thr@@\else
          \@pstrickserr{Bad argument: `#1#2'}\@ehpa
  \fi\fi\fi\fi}
\psset[pst-plot]{logLines=none}%
\define@key[psset]{pst-plot}{ylabelFactor}[\relax]{\def\psk@ylabelFactor{#1}}
\define@key[psset]{pst-plot}{xlabelFactor}[\relax]{\def\psk@xlabelFactor{#1}}
\define@boolkey[psset]{pst-plot}[Pst@]{showOriginTick}[true]{}%
\psset[pst-plot]{xlabelFactor=\relax,ylabelFactor=\relax,showOriginTick}%

\def\psxTick{\pst@object{psxTick}}% idea by Martin Chicoine
\def\psxTick@i{\@ifnextchar({\psxTick@ii{0}}\psxTick@ii}
\def\psxTick@ii#1(#2)#3{{%
  \pst@killglue
  \addbefore@par{arrows=-,linewidth=\psk@xtickwidth\pslinewidth}
  \ifPst@xtrigLabels\addto@par{xtrigLabels=false}\fi 
  \use@par
  \edef\temp@coor{(!#2 \pst@number\pst@xticksizeB \pst@number\psyunit div)(!#2 \pst@number\pst@xticksizeA \pst@number\psyunit div)}%
  \expandafter\psline\temp@coor
  \rput[t]{#1}(! \psk@origin 
                 #2 \pst@number\psxlabelsep \pst@number\pst@xticksizeB add
                 \pst@number\psyunit div neg ){\pshlabel{#3\vphantom{1}}}%
  }\ignorespaces}
\def\psyTick{\pst@object{psyTick}}% idea by Martin Chicoine
\def\psyTick@i{\@ifnextchar({\psyTick@ii{0}}\psyTick@ii}
\def\psyTick@ii#1(#2)#3{{%
  \pst@killglue
  \addbefore@par{arrows=-,linewidth=\psk@ytickwidth\pslinewidth}
  \ifPst@ytrigLabels \setDefaultvLabels \fi
  \use@par
  \edef\temp@coor{(!\pst@number\pst@yticksizeB \pst@number\psxunit div #2)(!\pst@number\pst@yticksizeA \pst@number\psxunit div #2)}%
  \expandafter\psline\temp@coor
    \rput[r]{#1}(!\psk@origin
                  \pst@number\pst@yticksizeB \pst@number\psylabelsep add
                  \pst@number\psxunit div neg #2){\psvlabel{#3}}}\ignorespaces}
\define@boolkey[psset]{pst-plot}[Pst@]{markPoint}[true]{}%
\psset[pst-plot]{markPoint}
\def\psCoordinates{\pst@object{psCoordinates}}
\def\psCoordinates@i(#1){%
  \pst@killglue%
  \begingroup
  \addbefore@par{showpoints=false,markPoint}
  \use@par
  \psline(#1|0,0)(#1)% single lines to allow arrows
  \psline(#1)(0,0|#1)%
  \ifPst@markPoint\psdot(#1)\fi%
  \endgroup
  \ignorespaces
}
\def\stripDecimals#1{\expandafter\@stripDecimals#1..\@nil}
\def\@stripDecimals#1.#2.#3\@nil{%
  \def\pst@dummy{#1}%
  \ifx\pst@dummy\@empty\the\@zero\else#1\fi% the integer part
}
\newcount\@digitcounter\@digitcounter=0\relax
\def\@inc@digitcounter{\global\advance\@digitcounter by 1\relax}
\def\@get@digitcounter{\the\@digitcounter\relax}
\def\@Reset@digitcounter{\global\@digitcounter=0\relax}
\def\@zeroFill{%
  \ifnum \@xyDecimals>\@get@digitcounter
    \bgroup
      0\@inc@digitcounter\@zeroFill
    \egroup
  \fi
}
\def\@process@digits#1#2;{%
  \ifx *#1\@zeroFill\else#1\@inc@digitcounter 
  \ifnum\@xyDecimals>\@get@digitcounter\expandafter\@process@digits#2;\fi\fi%
}
\def\@writeDecimals#1{%
  \ifx\@xyDecimals\@empty% take value as is
    \def\@tempa{#1}% write only if not empty
    \ifx\@tempa\@empty% write nothing
    \else\ifmmode\expandafter\mathord\expandafter{\psk@decimalSeparator}\else\psk@decimalSeparator\fi#1\fi%
  \else% write only \xy@decimals
    \ifnum\@xyDecimals>\@zero
      \ifmmode\expandafter\mathord\expandafter{\psk@decimalSeparator}\else\psk@decimalSeparator\fi%
        \@Reset@digitcounter
        \expandafter\@process@digits#1*;%
      \fi%
  \fi%
}
\def\@LabelComma#1.#2.#3\@nil{%
  \def\pst@tempA{#1}%
  \ifx\pst@tempA\@empty\the\@zero\else#1\fi% the integer part
  \def\pst@tempA{#2}%
  \ifx\pst@tempA\@empty\@writeDecimals{}\else\@writeDecimals{#2}\fi
}
\def\set@xticksize#1{%
  \pst@expandafter\pst@getydimdim{#1} {} {}\@nil% y-unit!! 
  \ifdim\pst@dimm>\pst@dimn% 		%	first > second value
    \pst@xticksizeA=\the\pst@dimn%
    \pst@xticksizeB=\the\pst@dimm%
  \else%
    \pst@xticksizeA=\the\pst@dimm%
    \pst@xticksizeB=\the\pst@dimn%	first > second value
  \fi%
  \edef\psk@xticksize{\pst@number\pst@xticksizeA \pst@number\pst@xticksizeB}%
  \ifnum\psk@xlabelPos<\z@\relax% top
    \pst@xticksizeC=\pst@dimn
  \else
    \pst@xticksizeC=\pst@dimm%	bottom	
  \fi
}
\def\set@yticksize#1{%
  \pst@expandafter\pst@getxdimdim{#1} {} {}\@nil% x-unit!
  \ifdim\pst@dimm>\pst@dimn\relax%   		%	first > second value
    \pst@yticksizeA=\the\pst@dimn%
    \pst@yticksizeB=\the\pst@dimm%
  \else%
    \pst@yticksizeA=\the\pst@dimm%
    \pst@yticksizeB=\the\pst@dimn%	first > second value
  \fi%
  \edef\psk@yticksize{\pst@number\pst@yticksizeA \pst@number\pst@yticksizeB}%
  \ifnum\psk@ylabelPos<\z@	% right	
    \pst@yticksizeC=\pst@dimn%
  \else%
      \pst@yticksizeC=\pst@dimo%  left
  \fi%
}
\newif\ifPst@maxxTicks
\newif\ifPst@maxyTicks
\define@key[psset]{pst-plot}{ticksize}[-4pt 4pt]{%
  \def\pst@tempA{max}%
  \def\pst@tempB{#1}%
  \ifx\pst@tempA\pst@tempB%
    \Pst@maxxTickstrue\Pst@maxyTickstrue%
    \set@xticksize{0 4pt}\set@yticksize{0 4pt}%
  \else%
    \Pst@maxxTicksfalse\Pst@maxyTicksfalse%
    \set@xticksize{#1}\set@yticksize{#1}%
  \fi}
\define@key[psset]{pst-plot}{xticksize}{%
  \def\pst@tempA{max}%
  \def\pst@tempB{#1}%
  \ifx\pst@tempA\pst@tempB
    \Pst@maxxTickstrue\set@xticksize{0 4pt}%
  \else\set@xticksize{#1}\Pst@maxxTicksfalse\fi}
\define@key[psset]{pst-plot}{yticksize}{%
  \def\pst@tempA{max}%
  \def\pst@tempB{#1}%
  \ifx\pst@tempA\pst@tempB%
    \Pst@maxyTickstrue\set@yticksize{0 4pt}%
  \else\set@yticksize{#1}\Pst@maxyTicksfalse\fi}%
\psset[pst-plot]{ticksize=-4pt 4pt}
\define@key[psset]{pst-plot}{tickstyle}[full]{\pst@expandafter\psset@@tickstyle{#1}\@nil}
\def\psset@@tickstyle#1#2\@nil{%
  \ifx#1f\let\psk@tickstyle\z@\else			% 0=f)ull
    \ifx#1t\let\psk@tickstyle\@ne			% 1=t)op
      \edef\psk@xticksize{0 \pst@number\pst@xticksizeB}%
      \edef\psk@yticksize{0 \pst@number\pst@yticksizeB}%
    \else\ifx#1b\let\psk@tickstyle\m@ne			% -1=b)ottom
      \edef\psk@xticksize{\pst@number\pst@xticksizeA 0}%
      \edef\psk@yticksize{\pst@number\pst@yticksizeA 0}%
      \else\ifx#1i\let\psk@tickstyle\tw@%		% 2=i)nner (for frame)
        \else\@pstrickserr{Bad tick style: `#1#2'}\@ehpa
  \fi\fi\fi\fi}
\psset[pst-plot]{tickstyle=full}
\define@key[psset]{pst-plot}{subticks}[1]{\def\psk@xsubticks{#1}\def\psk@ysubticks{#1}}
\define@key[psset]{pst-plot}{xsubticks}[1]{\def\psk@xsubticks{#1}}
\define@key[psset]{pst-plot}{ysubticks}[1]{\def\psk@ysubticks{#1}}
\define@key[psset]{pst-plot}{subticksize}[0.75]{\def\psk@xsubticksize{#1}\def\psk@ysubticksize{#1}}
\define@key[psset]{pst-plot}{xsubticksize}[0.75]{\def\psk@xsubticksize{#1}}
\define@key[psset]{pst-plot}{ysubticksize}[0.75]{\def\psk@ysubticksize{#1}}
\define@key[psset]{pst-plot}{tickwidth}[0.5\pslinewidth]{%
  \pst@getlength{#1}\psk@xtickwidth%
  \pst@getlength{#1}\psk@ytickwidth}
\define@key[psset]{pst-plot}{xtickwidth}[0.5\pslinewidth]{\pst@getlength{#1}\psk@xtickwidth}
\define@key[psset]{pst-plot}{ytickwidth}[0.5\pslinewidth]{\pst@getlength{#1}\psk@ytickwidth}
\define@key[psset]{pst-plot}{subtickwidth}[0.25\pslinewidth]{%
  \pst@getlength{#1}\psk@xsubtickwidth%
  \pst@getlength{#1}\psk@ysubtickwidth}
\define@key[psset]{pst-plot}{xsubtickwidth}[0.25\pslinewidth]{\pst@getlength{#1}\psk@xsubtickwidth}
\define@key[psset]{pst-plot}{ysubtickwidth}[0.25\pslinewidth]{\pst@getlength{#1}\psk@ysubtickwidth}
\define@key[psset]{pst-plot}{labelOffset}[0pt]{%
  \pst@getlength{#1}\psk@xlabelOffset%
  \pst@getlength{#1}\psk@ylabelOffset}
\define@key[psset]{pst-plot}{xlabelOffset}[0pt]{\pst@getlength{#1}\psk@xlabelOffset}
\define@key[psset]{pst-plot}{ylabelOffset}[0pt]{\pst@getlength{#1}\psk@ylabelOffset}
\define@key[psset]{pst-plot}{frameOffset}[0pt]{\pst@getlength{#1}\psk@frameOffset}
\define@key[psset]{pst-plot}{tickcolor}[black]{%
    \pst@getcolor{#1}\psk@xtickcolor%
    \pst@getcolor{#1}\psk@ytickcolor}
\define@key[psset]{pst-plot}{xtickcolor}[black]{\pst@getcolor{#1}\psk@xtickcolor}
\define@key[psset]{pst-plot}{ytickcolor}[black]{\pst@getcolor{#1}\psk@ytickcolor}
\define@key[psset]{pst-plot}{subtickcolor}[gray]{%
  \pst@getcolor{#1}\psk@xsubtickcolor%
  \pst@getcolor{#1}\psk@ysubtickcolor}
\define@key[psset]{pst-plot}{xsubtickcolor}[gray]{\pst@getcolor{#1}\psk@xsubtickcolor}
\define@key[psset]{pst-plot}{ysubtickcolor}[gray]{\pst@getcolor{#1}\psk@ysubtickcolor}
\define@key[psset]{pst-plot}{xticklinestyle}[solid]{%
  \@ifundefined{psls@#1}%
    {\@pstrickserr{Line style `#1' not defined}\@eha}%
    {\def\psxticklinestyle{#1}}}
\define@key[psset]{pst-plot}{xsubticklinestyle}[solid]{%
  \@ifundefined{psls@#1}%
    {\@pstrickserr{Line style `#1' not defined}\@eha}%
    {\def\psxsubticklinestyle{#1}}}
\define@key[psset]{pst-plot}{yticklinestyle}[solid]{%
  \@ifundefined{psls@#1}%
    {\@pstrickserr{Line style `#1' not defined}\@eha}%
    {\def\psyticklinestyle{#1}}}
\define@key[psset]{pst-plot}{ysubticklinestyle}[solid]{%
  \@ifundefined{psls@#1}%
    {\@pstrickserr{Line style `#1' not defined}\@eha}%
    {\def\psysubticklinestyle{#1}}}
\define@key[psset]{pst-plot}{ticklinestyle}[solid]{%
  \@ifundefined{psls@#1}%
    {\@pstrickserr{Line style `#1' not defined}\@eha}%
    {\def\psxticklinestyle{#1}\def\psyticklinestyle{#1}}}
\define@key[psset]{pst-plot}{subticklinestyle}[solid]{%
  \@ifundefined{psls@#1}%
    {\@pstrickserr{Line style `#1' not defined}\@eha}%
    {\def\psxsubticklinestyle{#1}\def\psysubticklinestyle{#1}}}
\psset[pst-plot]{subticksize=0.75,subticks=1,tickcolor=black,ticklinestyle=solid,
  subticklinestyle=solid,subtickcolor=gray,tickwidth=0.5\pslinewidth,
  subtickwidth=0.25\pslinewidth,labelOffset=0pt,frameOffset=0pt}
\define@key[psset]{pst-plot}{nStep}[1]{\def\psk@nStep{#1}}
\define@key[psset]{pst-plot}{nStart}[0]{\def\psk@nStart{#1}}
\define@key[psset]{pst-plot}{nEnd}[{}]{\def\psk@nEnd{#1}}
\define@key[psset]{pst-plot}{xStep}[0]{\def\psk@xStep{#1}}
\define@key[psset]{pst-plot}{yStep}[0]{\def\psk@yStep{#1}}
\define@key[psset]{pst-plot}{xStart}[{}]{\def\psk@xStart{#1}}
\define@key[psset]{pst-plot}{xEnd}[{}]{\def\psk@xEnd{#1}}
\define@key[psset]{pst-plot}{yStart}[{}]{\def\psk@yStart{#1}}
\define@key[psset]{pst-plot}{yEnd}[{}]{\def\psk@yEnd{#1}}
\define@key[psset]{pst-plot}{plotNoX}[1]{\def\psk@plotNoX{#1}}
\define@key[psset]{pst-plot}{plotNo}[1]{\def\psk@plotNo{#1}}
\define@key[psset]{pst-plot}{plotNoMax}[1]{\def\psk@plotNoMax{#1}}
\define@key[psset]{pst-plot}{plotYMax}[1]{\def\psk@plotYMax{#1}}
\psset[pst-plot]{nStep=1, nStart=0, nEnd={},%
  xStep=0, yStep=0, xStart={}, xEnd={},  yStart={}, yEnd={}, 
  plotNo=1,plotNoMax=1,plotNoX=1,plotYMax={}}%
\def\pstScalePoints(#1,#2)#3#4{%
  \def\pstXScale{#1 }%
  \def\pstYScale{#2 }%
  \def\pstXPSScale{#3 }%
  \def\pstYPSScale{#4 }%
  \pst@def{ScalePoints}<%
    /yVal ED /xVal ED
    /yPSOp { #4 yVal mul #2 mul } def
    /xPSOp { #3 xVal mul #1 mul } def
    counttomark dup dup cvi eq not { exch pop } if
    /m exch def /n m 2 div cvi def
    n {
      \ifPst@polarplot exch cvi 360 mod PtoC \fi  % x cvi 360 mod PtoC
      yPSOp m 1 roll xPSOp m 1 roll 
      /m m 2 sub
      def } repeat>%
}
\pstScalePoints(1,1){}{}% the default -> no special operators
\def\psxs@none{\let\psk@arrowA\@empty\let\psk@arrowB\@empty\psxs@axes}
\def\psxs@axes{{%
  \ifPst@xAxis\psxs@@axes\pst@dima\pst@dimb\pst@dimc\pst@dimd{}{x}\fi%
  \ifPst@yAxis\psxs@@axes\pst@dima\pst@dimb\pst@dimc\pst@dimd{exch}{y}\fi%
}}
\newif\ifSpecialLabelsDone
\def\psaxes{\pst@object{psaxes}}
\def\psaxes@i{%
  \let\pst@par@save\pst@par
  \pst@getarrows\psaxes@ii}
\def\psaxes@ii(#1){\@ifnextchar({\psaxes@iii(#1)}{\psaxes@iv(0,0)(0,0)(#1)}}
\def\psaxes@iii(#1)(#2){\@ifnextchar({\psaxes@iv(#1)(#2)}{\psaxes@iv(#1)(#1)(#2)}}
\def\psaxes@iv(#1)(#2)(#3){\@ifnextchar[{\psaxes@v(#1)(#2)(#3)}{\psaxes@vii(#1)(#2)(#3)}}%
\def\psaxes@v(#1)(#2)(#3)[#4]{\@ifnextchar[{\psaxes@vi(#1)(#2)(#3)[#4]}{\psaxes@vi(#1)(#2)(#3)[#4][]}}%
\def\psaxes@vi(#1)(#2)(#3)[#4,#5][#6,#7]{%
  \psaxes@vii(#1)(#2)(#3)%
  \let\pst@par\pst@par@save
  \begingroup
  \SpecialCoor
  \use@par
  \ifshowgrid\psgrid[style=gridstyleA]\fi
  \uput{\psxlabelsep}[#5](#3|#1){#4}\uput{\psylabelsep}[#7](#1|#3){#6}%
  \endgroup
  \ignorespaces
}
\def\psaxes@vii(#1,#2)(#3,#4)(#5,#6){%
  \pst@killglue
  \begingroup
  \ifdim\pst@dimc<\z@\relax 
    \ifdim\pst@dimd<\z@\relax % axes show to left and down
      \addbefore@par{xlabelPos=t,ylabelPos=r}%
  \fi\fi
  \use@par%	now the same with an optional unit=... in par
  \pssetxlength\pst@dimc{#5}% ur-x
  \pssetylength\pst@dimd{#6}% ur-y
    \pssetxlength\pst@dimg{#1}% o-x
    \pssetylength\pst@dimh{#2}% o-y
    \pssetxlength\pst@dima{#3}% ll-x
    \pssetylength\pst@dimb{#4}% ll-y
    \pst@dima=\dimexpr\pst@dima-\pst@dimg\relax
    \pst@dimb=\dimexpr\pst@dimb-\pst@dimh\relax
    \pst@dimc=\dimexpr\pst@dimc-\pst@dimg\relax
    \pst@dimd=\dimexpr\pst@dimd-\pst@dimh\relax
   \setbox\pst@hbox=\hbox\bgroup
    \ifshowgrid\psgrid[style=gridstyleA]\fi
    \@nameuse{psxs@\psk@axesstyle}%  \psxs@axes or \psxs@frame or \psxs@polar
    \ifPst@xAxis
      \SpecialLabelsDonefalse
      \begingroup
      \ifnum\psk@dx=\z@
        \pst@dimg=\psk@Dx\psxunit
        \ifdim\pst@dimg<\p@ 
          \pst@cnta=\psk@Dx
          \edef\psk@Dx{\the\numexpr-1*\pst@cnta}%
        \fi% v.1.21
        \edef\psk@dx{\number\pst@dimg}%
      \fi
      \pst@hlabels{\pst@dimc}{\psk@arrowB}{#3}{#5}% Right
      \ifPst@yAxis\showoriginfalse\fi
      \pst@hlabels{\pst@dima}{\psk@arrowA}{#3}{#5}% Left
      \endgroup
    \fi
    \ifPst@yAxis
      \SpecialLabelsDonefalse
      \begingroup
      \ifdim\pst@dima=\z@ \else\ifPst@xtrigLabels\showoriginfalse\fi\fi
      \ifnum\psk@dy=\z@
        \pst@dimg=\psk@Dy\psyunit
        \ifdim\pst@dimg<\p@ 
          \pst@cnta=\psk@Dy
          \edef\psk@Dy{\the\numexpr-1*\pst@cnta}%
        \fi% v.1.21
        \edef\psk@dy{\number\pst@dimg}%
      \fi
      \pst@vlabels{\pst@dimb}{\psk@arrowA}{#4}{#6}%
      \ifPst@xAxis\ifdim\pst@dima<\z@ \showoriginfalse\fi\fi % no 0 when x- axis is crossing
      \pst@vlabels{\pst@dimd}{\psk@arrowB}{#4}{#6}%
      \endgroup
    \fi
  \egroup%
  \pssetxlength\pst@dimg{#1}%
  \pssetylength\pst@dimh{#2}%
  \leavevmode
  \psput@cartesian\pst@hbox
  \endgroup
  \ignorespaces
}
\newif\ifis@yAxis%
\def\psxs@@axes#1#2#3#4#5#6{% llx,lly,urx,ury,exch,x|y,arrowA,arrowB
  \pst@killglue
  \begin@SpecialObj
    \ifx#6x\relax%				% x-axis?
      \is@yAxisfalse
      \ifnum\psk@dx=\z@
        \pst@dimg=\psk@Dx\psxunit
        \def\psk@dx{\number\pst@dimg}%
      \fi
    \else
      \is@yAxistrue
      \ifnum\psk@dy=\z@
        \pst@dimg=\psk@Dy\psyunit
        \def\psk@dy{\number\pst@dimg}%
      \fi
    \fi
    \let\pst@linetype\pst@arrowtype
    \def\pst@axes{axes}%
    \pst@addarrowdef
    \addto@pscode{
      /showOrigin \ifPst@showOriginTick true \else false \fi def 	% ticks for 0/0 ?
      \ifis@yAxis 0 \pst@number#4 \else \pst@number#3 0 \fi
      \ifis@yAxis 0 \pst@number#2 \else \pst@number#1 0 \fi
      ArrowA
      CP 4 2 roll
      ArrowB 
      2 copy
      /yEnd exch def /xEnd exch def
      \ifx\psk@axesstyle\@none   
        pop pop % axesstyle = none (only ticks) or frame (already drawn)
      \else
        gsave                              		% save current state
        L                                  		% the line with arrows 
        \@nameuse{psls@\pslinestyle}                 	% linestyle for the axes
        stroke                                       	% draw the main line
        grestore
      \fi
      /yStart exch def
      /xStart exch def
      \number\psk@ticks\space dup 2 mod 0 eq \ifis@yAxis true \else false \fi and 
      exch 2 lt \ifis@yAxis false \else true \fi and or {
      /viceversa 
        \ifis@yAxis\pst@number#2 \pst@number#4 \else\pst@number#1 \pst@number#3 \fi
         gt { true }{ false } ifelse def           % other way round
      /epsilon 0.01 def                            % rounding errors
      /minTickline \ifis@yAxis \pst@number#1 \else \pst@number#2 \fi def
      /maxTickline \ifis@yAxis \pst@number#3 \else \pst@number#4 \fi def
      /dT \ifis@yAxis \psk@dy \else \psk@dx \fi\space abs  % added abs 2006-07-07
        65536 div viceversa { neg } if def                 % div to get pt instead of sp
      /DT \ifis@yAxis \psk@Dy \else \psk@Dx \fi\space abs viceversa { neg } if def  
      /subTNo \ifis@yAxis\psk@ysubticks\else\psk@xsubticks\fi \space def
      subTNo 0 gt { /dsubT dT subTNo div def}{ /dsubT 0 def } ifelse  % deltaSubTick
      \ifis@yAxis \psk@yticksize \else \psk@xticksize \fi
      /tickend exch def /tickstart exch def
      /Twidth \ifis@yAxis \psk@ytickwidth \else \psk@xtickwidth \fi\space def
      /subTwidth \ifis@yAxis \psk@ysubtickwidth \else \psk@xsubtickwidth \fi\space def
      /STsize \ifis@yAxis \psk@ysubticksize \else \psk@xsubticksize \fi\space def
      /TColor {
        \ifis@yAxis\pst@usecolor\psk@ytickcolor
        \else\pst@usecolor\psk@xtickcolor\fi\space } def
      /subTColor {
        \ifis@yAxis\pst@usecolor\psk@ysubtickcolor
        \else\pst@usecolor\psk@xsubtickcolor\fi\space } def
      /MinValue { \ifis@yAxis yStart \else xStart \fi
        \ifx\psk@arrowA\@empty\else 
          \psk@arrowsize\space CLW mul add \psk@arrowlength\space mul 
           viceversa { sub epsilon add }{ add epsilon sub } ifelse \fi } def
      /MaxValue { \ifis@yAxis yEnd \else xEnd \fi 
        \ifx\psk@arrowB\@empty\else
          \psk@arrowsize\space CLW mul add \psk@arrowlength\space mul 
           viceversa { add epsilon sub }{ sub epsilon add } ifelse \fi } def
      /logLines {
        \ifnum\psk@logLines=\z@ true \else         % all axes
          \ifnum\psk@logLines<\tw@                 % x axis
            \ifis@yAxis false \else true \fi       % do we have x or y axis
          \else
            \ifnum\psk@logLines<\thr@@             % y axis
              \ifis@yAxis true \else false \fi     % do we have x or y axis
            \else 
              false                                % no one
            \fi
          \fi
        \fi
      } def
      /LSstroke {                                  % set linestyle and stroke
        \ifis@yAxis \@nameuse{psls@\psyticklinestyle}
        \else       \@nameuse{psls@\psxticklinestyle}\fi 
        stroke} def
      /subLSstroke {                               % set sublinestyle and stroke
        \ifis@yAxis \@nameuse{psls@\psysubticklinestyle}
        \else       \@nameuse{psls@\psxsubticklinestyle}\fi 
        stroke} def
      0 dT MaxValue 1 add {                        % the positive part of the axes, step unit is pt
        /cntTick exch def                          % the index
        logLines {                                 % log lines?
          gsave
          1 1 DT {
           1 sub /OffSet exch def
          -10 subTNo 1 add div dup 10 add exch dup -0.1 mul 1 add {                   % do not write a line for 10 and 1, trace lines between 10 and 1 by steps of 10/subTno
            /dx exch def                           % save index
            /x dx log OffSet add \ifis@yAxis\pst@number\psyunit\else\pst@number\psxunit\fi\space mul cntTick add def       %
            x abs MaxValue abs le {                % out of range?
	      \ifis@yAxis
	        \ifPst@maxyTicks true \else false \fi
	      \else
	        \ifPst@maxxTicks true \else false \fi
	      \fi
                { x minTickline #5 moveto
                  x maxTickline #5 lineto }
                { x tickstart STsize mul #5 moveto
                  x tickend STsize mul #5 lineto } ifelse
            } if
          } for } for
          subTwidth SLW subTColor                  % set line width and subtick color
          subLSstroke
          grestore                                 % restore main tick status
          stroke
          /dsubT 0 def                             % no other subticks
        } if 					   % end logLines
        dsubT abs 0 gt {                           % du we have subticks?
          gsave                                    % save graphic state
          /cntsubTick cntTick dsubT add def
          subTNo 1 sub {
            cntsubTick abs MaxValue abs le {       % out of range?
    	    \ifis@yAxis
              \ifPst@maxyTicks true \else false \fi
    	    \else
              \ifPst@maxxTicks true \else false \fi
    	    \fi
              { cntsubTick minTickline STsize mul #5 moveto
                cntsubTick maxTickline STsize mul #5 lineto }
              { cntsubTick tickstart STsize mul #5 moveto
                cntsubTick tickend STsize mul #5 lineto } ifelse
            }{ exit }  ifelse
            /cntsubTick cntsubTick dsubT add def
          } repeat 
          subTwidth SLW subTColor               % set line width and subtick color
          subLSstroke
          grestore                              % restore tick status
        } if
        showOrigin {
          gsave
          \ifis@yAxis
            \ifPst@maxyTicks true \else false \fi
          \else
            \ifPst@maxxTicks true \else false \fi
          \fi
            { cntTick minTickline #5 moveto
              cntTick maxTickline #5 lineto }
            { cntTick tickstart #5 moveto        % line begin main Tick
              cntTick tickend #5 lineto } ifelse % lineto tick end
          Twidth SLW TColor                      % set line width and tick color
          LSstroke
          grestore
        }{ /showOrigin true def } ifelse         % only for the very first tick valid
      } for
      /showOrigin \ifPst@showOriginTick true \else false \fi def % ticks for 0/0 ?
      /dT dT neg def                               % the other side of the axis
      /dsubT dsubT neg def
      0 dT MinValue epsilon viceversa { add }{ sub } ifelse {
        /cntTick exch def
        logLines {                                 % log lines?
          gsave
          1 1 DT cvi {
            1 sub /OffSet exch def
          -10 subTNo 1 add div dup 10 add exch dup -0.1 mul 1 add {                   % do not write a line for 10 and 1, trace lines between 10 and 1 by steps of 10/subTno
            /dx exch def                           % save index
            /x dx log OffSet add \ifis@yAxis\pst@number\psyunit\else\pst@number\psxunit\fi\space mul cntTick add def
            x abs MinValue abs le {                % out of range?
	      \ifis@yAxis
	        \ifPst@maxyTicks true \else false \fi
	      \else
	        \ifPst@maxxTicks true \else false \fi
	      \fi
                { x minTickline #5 moveto
                  x maxTickline #5 lineto }
                { x tickstart STsize mul #5 moveto
                  x tickend STsize mul #5 lineto } ifelse
            } if
          } for } for
          /dsubT 0 def 
          subTwidth SLW subTColor                  % set line width and subtick color
          subLSstroke
          grestore
        }                                          % end loglines
        dsubT abs 0 gt {                           % do we have subticks?
          gsave                                    % save main state
          /cntsubTick cntTick dsubT add def
          subTNo 1 sub {
            cntsubTick abs MinValue abs le {       % out of range?
              cntsubTick tickstart STsize mul #5 moveto
              cntsubTick tickend STsize mul #5 lineto
            }{ exit } ifelse
            /cntsubTick cntsubTick dsubT add def
          } repeat % for
          subTwidth SLW subTColor                  % set line width and subtick color
          subLSstroke
          grestore                                 % restore main state
        } if
        showOrigin {
          gsave
          cntTick tickstart #5 moveto         	% line begin main Tick
          cntTick tickend #5 lineto    	       	% lineto tick end
          Twidth SLW TColor                         % set line width and tick color
          LSstroke
          grestore
        }{ /showOrigin true def } ifelse         % only for the very first tick valid
      } for
    } if
   }%	end of \pscode
  \end@SpecialObj%
  \ifx\psk@axesstyle\@none\else
    \ifPst@yAxis\psline[linecolor=\pslinecolor](0,#2)(0,#4)\fi
    \ifPst@xAxis\psline[linecolor=\pslinecolor](#1,0)(#3,0)\fi
  \fi
  \ignorespaces
}%
\def\psxs@frame{%
  \psset{axesstyle=none}%
  \begin@SpecialObj%
    \addto@pscode{					% the frame
      \pst@number\pst@dima \psk@frameOffset sub \pst@number\pst@dimb \psk@frameOffset sub moveto 	% lower left
      \pst@number\pst@dimc \psk@frameOffset add \pst@number\pst@dimb \psk@frameOffset sub L	% upper left
      \pst@number\pst@dimc \psk@frameOffset add \pst@number\pst@dimd \psk@frameOffset add L 	% upper right
      \pst@number\pst@dima \psk@frameOffset sub \pst@number\pst@dimd \psk@frameOffset add L 	% lower right
      closepath 
      }%
    \pst@stroke%
    \psk@fillstyle%
  \end@SpecialObj%
  \let\psk@arrowA\@empty%
  \let\psk@arrowB\@empty%
  \pst@xticksizeC=\z@\pst@yticksizeC=\z@  
  \ifPst@xAxis\psxs@@axes\pst@dima\pst@dimb\pst@dimc\pst@dimd{}{x}\fi%		x axis
  \ifPst@yAxis\psxs@@axes\pst@dima\pst@dimb\pst@dimc\pst@dimd{ exch }{y}\fi%	y axis
  \ifnum\psk@tickstyle=\tw@	% llx,lly,urx,ury,exch,x|y,arrowA,arrowB	
    \psDEBUG[psxs@frame]{psk@tickstyle=2 (inner)}%
    \psDEBUG[psxs@frame]{pst@dima=\pst@number\pst@dima}%
    \psDEBUG[psxs@frame]{pst@dimb=\pst@number\pst@dimb}%
    \psDEBUG[psxs@frame]{pst@dimc=\pst@number\pst@dimc}%
    \psDEBUG[psxs@frame]{pst@dimd=\pst@number\pst@dimd}%
    \ifPst@xAxis\psxs@@axes\pst@dima\pst@dimb\pst@dimc\pst@dimd{ neg \pst@number\pst@dimd add }{x}\fi%	% upper x axis
    \ifPst@yAxis\psxs@@axes\pst@dima\pst@dimb\pst@dimc\pst@dimd{ neg \pst@number\pst@dimc add exch }{y}\fi%  right y axis
  \fi%
}
\def\psxs@polar{% (rx,ry) % all other values are ignored
  \pst@killglue
  \begingroup
  \edef\pst@dimC{\strip@pt\pst@dimc}% 			RadiusX
  \pstFPDiv\pstR@dius{\pst@dimC}{\strip@pt\psxunit}%	in cm and as integer
  \edef\pst@dimD{\strip@pt\pst@dimd}% 			RadiusX
  \pstFPDiv\psk@EndAngle{\pst@dimD}{\strip@pt\psyunit}%	in cm and as integer
  \ifnum\psk@EndAngle=0 \def\psk@EndAngle{360}\fi
  \use@keep@par
  \pstFPDiv\pstN@lpha{\psk@EndAngle}{\psk@Dy}% 			No. of (int) main lines
  \pstFPdiv\pstd@lpha{\psk@Dy}{\psk@ysubticks}% 	sub dAlpha
  \pstFPdiv\pstdR@dius{1}{\psk@xsubticks}%		sub dRadius
  \pst@cntm=\psk@xsubticks\advance\pst@cntm by \m@ne
  \multido{\iA=\psk@Dx+\psk@Dx,\rB=\pstdR@dius+\psk@Dx,\iB=0+1}{\pstR@dius}{%
    \multido{\rA=\rB+\pstdR@dius}{\the\pst@cntm}{%
      \psarc[linestyle=\psxsubticklinestyle,
         linecolor=\psk@xsubtickcolor,linewidth=\psk@xsubtickwidth pt](0,0){\rA}{0}{\psk@EndAngle}}    
    \psarc[linestyle=\psxticklinestyle,linecolor=\psk@xtickcolor,
		linewidth=\psk@xtickwidth pt](0,0){\iA}{0}{\psk@EndAngle}%
    \ifnum\psk@labels<2\relax% is all or x (0,1)
      \uput[-45](\iB,0){\pshlabel{\iB}}\uput[45](0,\iB){\pshlabel{\iB}}%
    \fi%
  }%
  \pst@cntm=\psk@ysubticks\advance\pst@cntm by \m@ne
  \multido{\iA=\psk@Oy+\psk@Dy,\rB=\pstd@lpha+\psk@Dy}{\pstN@lpha}{%
    \multido{\rA=\rB+\pstd@lpha}{\the\pst@cntm}{\psline[linestyle=\psysubticklinestyle,
      linecolor=\psk@ysubtickcolor,linewidth=\psk@ysubtickwidth pt](\pstR@dius;\rA)} 
    \psline[linestyle=\psyticklinestyle,
      linecolor=\psk@ytickcolor,linewidth=\psk@ytickwidth pt](\pstR@dius;\iA)%
    \ifodd\psk@labels\else% is all or y (0,3)
      \uput[\iA](\pstR@dius;\iA){\psvlabel{\iA\psk@ylabelFactor}}%
    \fi%
  }%
  \ifnum\psk@EndAngle<360 \psline[linestyle=\psyticklinestyle,
      linecolor=\psk@ytickcolor,linewidth=\psk@ytickwidth pt](\pstR@dius;0)\fi
  \endgroup\ignorespaces%
  \Pst@xAxisfalse\Pst@yAxisfalse%
}
\def\@polar{polar}
\define@key[psset]{pst-plot}{axesstyle}[axes]{%
  \@ifundefined{psxs@#1}%
    {\@pstrickserr{Axes style `#1' not defined}\@eha}%
    {\def\psk@axesstyle{#1}%
     \ifx\psk@axesstyle\@polar\psset{Dy=30}\fi}}
\psset[pst-plot]{axesstyle=axes}
\define@key[psset]{pst-plot}{xLabels}[]{\def\psk@xLabels{#1}}
\define@key[psset]{pst-plot}{xLabelsRot}[0]{\pst@getangle{#1}\pst@xLabelsRot}
\psset[pst-plot]{xLabels=,xLabelsRot=0}
\define@key[psset]{pst-plot}{yLabels}[]{\def\psk@yLabels{#1}}
\define@key[psset]{pst-plot}{yLabelsRot}[0]{\pst@getangle{#1}\pst@yLabelsRot}
\psset[pst-plot]{yLabels=,yLabelsRot=0}
\def\pst@hlabels#1#2#3#4{%
  \ifSpecialLabelsDone
  \else
    \kern\psk@xlabelOffset pt            % set the x offset?
    \ifx\empty\psk@xLabels
      \ifdim#1=\z@
      \else                   % start from 0 ?
        \ifx#2\empty
        \else
          \advance#1\ifdim#1>\z@-\fi7\pslinewidth
        \fi
        \pst@cnta=#1\relax                % Distance (in sp) to end.
        \divide\pst@cnta\psk@dx\relax     % Number of ticks/labels
        \ifnum\pst@cnta=\z@
        \else
          \pst@dimb=\psk@dx sp            % Space between ticks.
          \ifnum\psk@labels<\tw@ \ifPst@xAxis\pst@@hlabels\fi\fi
          \showoriginfalse
        \fi
      \fi
   \else
     \ifnum\psk@xlabelPos=\tw@ \def\pst@tempC{90}\else\def\pst@tempC{-90}\fi
       \pstFPsub\pst@pmtempa{#4}{#3}%
       \pstFPDiv\pst@pmtempb{\pst@pmtempa}{\psk@Dx}%
       \pstFPadd\pst@pmtempc{\pst@pmtempb}{-1}%
       \pstFPadd\pst@pmtempd{\pst@pmtempb}{1}%
       \ifdim\pst@pmtempb pt < \z@ 
         \def\pst@pmtempe{\pst@int{\pst@pmtempc}}%
       \else
         \def\pst@pmtempe{\pst@int{\pst@pmtempd}}%
       \fi
       \multido{\nA=0+1,\rA=#3+\psk@Dx}{\pst@pmtempe}{%
         \ifdim \nA pt < \z@ \def\nB{-\nA} \else \def\nB{\nA} \fi
         \uput{\psxlabelsep}[\pst@tempC]{\pst@xLabelsRot}(\rA,0){%
              \strut\expandafter\pshlabel\expandafter{\psPutXLabel{\nB}}}}%
       \SpecialLabelsDonetrue
    \fi
  \fi
}
\def\pst@@hlabels{%
  \setbox\z@=\vbox{%			save all in a box
    \ifcase\psk@xlabelPos
      \vskip-\pst@xticksizeA\vskip\psxlabelsep\or% 1
      \vskip-1ex\vskip-\pslabelsep\or% 2
      \vskip-\pst@xticksizeB\vskip-\psxlabelsep\vskip-1ex% 3
    \fi
    \ifnum\pst@cnta<\z@ \pst@dimb=-\pst@dimb\fi
    \hbox to \z@{%
      \ifshoworigin\hbox to \z@{\hss\pst@@@hlabel{\psk@Ox}\hss}\fi
      \mmultido{\nA=\psk@Ox+\psk@Dx}{\pst@cnta}{%
        \hskip\pst@dimb \hbox to \z@{\hss
          \ifdim\nA pt=\z@\relax\ifshoworigin\pst@@@hlabel{0}\fi
          \else\expandafter\pst@@@hlabel{\nA}%
          \fi% prevent -0, doesn't work with \ifnum
        \hss}%
      }\hss%    1.85
    }%
  }\ht\z@\z@ \dp\z@\z@ \box\z@}% set all values to zero
\def\pst@vlabels#1#2#3#4{%
  \ifSpecialLabelsDone\else
      \ifx\empty\psk@yLabels
        \ifdim#1=\z@\else
          \ifx#2\empty\else\ifdim#1>\z@ \advance#1 by -7\pslinewidth\else\advance#1 by 7\pslinewidth\fi\fi
          \pst@cnta=#1\relax           %      % Distance (in sp) to end.
          \divide\pst@cnta\psk@dy\relax%   % Number of ticks/labels
          \ifnum\pst@cnta=\z@\else
            \pst@dima=\psk@dy sp%            % Space between ticks.
            \ifodd\number\psk@labels\else\ifPst@yAxis\pst@@vlabels\fi
          \fi
          \showoriginfalse
        \fi
      \fi
    \else
	\pstFPsub\pst@pmtempa{#4}{#3}%
	\pstFPDiv\pst@pmtempb{\pst@pmtempa}{\psk@Dy}%
	\pstFPadd\pst@pmtempc{\pst@pmtempb}{-1}%
	\pstFPadd\pst@pmtempd{\pst@pmtempb}{1}%
	\ifdim\pst@pmtempb pt < \z@ \def\pst@pmtempe{\pst@int{\pst@pmtempc}}\else\def\pst@pmtempe{\pst@int{\pst@pmtempd}}\fi
	\multido{\nA=0+1,\rA=#3+\psk@Dy}{\pst@pmtempe}{%
	  \ifdim \nA pt < \z@ \def\nB{-\nA}\else \def\nB{\nA}\fi
	  \ifnum\psk@ylabelPos=0
            \uput{\psylabelsep}[180]{\pst@yLabelsRot}(0,\rA){%
              \strut\expandafter\psvlabel\expandafter{\psPutYLabel{\nB}}}%
          \else
            \uput{\psylabelsep}[0]{\pst@yLabelsRot}(0,\rA){%
              \strut\expandafter\psvlabel\expandafter{\psPutYLabel{\nB}}}%
          \fi
        }%  
      \SpecialLabelsDonetrue
    \fi
  \fi
}
\def\pst@@vlabels{%
  \vbox to\z@{%
   \vbox to -\psk@ylabelOffset pt{}% the y label offset
    \ifnum\pst@cnta>\z@ \pst@dima=-\pst@dima\fi%  up or down label positions
    \offinterlineskip
    \ifshoworigin
      \vbox to \z@{\vss\hbox to\z@{%
        \ifcase\psk@ylabelPos
	  \hss\pst@@@vlabel{\psk@Oy}\hskip\psylabelsep\hskip-\pst@yticksizeA\or%
	  \hskip\pslabelsep\hss\pst@@@vlabel{\psk@Oy}\hss\or%		% right labels
	  \hskip\pst@yticksizeB\hskip\psylabelsep\pst@@@vlabel{\psk@Oy}%
	\fi}\vss}%
    \fi
    \mmultido{\nA=\psk@Oy+\psk@Dy}{\pst@cnta}{%
      \vbox to\pst@dima{\vss}%
      \vbox to \z@{%
        \vss\hbox to\z@{%
        \ifcase\psk@ylabelPos % and also check for -0
	  \hss\ifdim\nA pt=\z@ \ifshoworigin\pst@@@vlabel{0}\fi\else\pst@@@vlabel{\nA}\fi
	    \hskip\psylabelsep\hskip-\pst@yticksizeA\or% top = 1
	  \hss\ifdim\nA pt=\z@ \ifshoworigin\pst@@@vlabel{0}\fi\else\pst@@@vlabel{\nA}\fi
	  \ifdim\psylabelsep=\z@\hss\else\kern-\psylabelsep\fi\or% right=2
	  \hskip\pst@yticksizeB\hskip\psylabelsep
	  \ifdim\nA pt=\z@ \ifshoworigin\pst@@@vlabel{0}\fi\else\pst@@@vlabel{\nA}\fi% bottom
	\fi}\vss}%
    }\vss}%
}
\define@key[psset]{pst-plot}{xAxisLabel}[x]{\def\psk@xAxisLabel{#1}}
\define@key[psset]{pst-plot}{yAxisLabel}[y]{\def\psk@yAxisLabel{#1}}
\psset[pst-plot]{xAxisLabel=x,yAxisLabel=y}
\define@key[psset]{pst-plot}{xAxisLabelPos}[{}]{\def\psk@xAxisLabelPos{#1}}
\define@key[psset]{pst-plot}{yAxisLabelPos}[{}]{\def\psk@yAxisLabelPos{#1}}
\psset[pst-plot]{yAxisLabelPos={},xAxisLabelPos={}}
\newdimen\psk@llx
\newdimen\psk@lly
\newdimen\psk@urx
\newdimen\psk@ury
\define@key[psset]{pst-plot}{llx}[\z@]{\pssetxlength\psk@llx{#1}}
\define@key[psset]{pst-plot}{lly}[\z@]{\pssetylength\psk@lly{#1}}
\define@key[psset]{pst-plot}{urx}[\z@]{\pssetxlength\psk@urx{#1}}
\define@key[psset]{pst-plot}{ury}[\z@]{\pssetylength\psk@ury{#1}}
\psset[pst-plot]{llx=\z@, lly=\z@, urx=\z@, ury=\z@}% prevents rounding errors 
\define@boolkey[psset]{pst-plot}[Pst@]{psgrid}[true]{}
\define@key[psset]{pst-plot}{gridpara}[{}]{\def\psk@gridpara{#1}}
\define@key[psset]{pst-plot}{gridcoor}[\relax]{\def\psk@gridcoor{#1}}
\psset[pst-plot]{psgrid=false,gridpara={gridlabels=0pt,gridcolor=red!30,subgridcolor=green!30,subgridwidth=0.5\pslinewidth,
  subgriddiv=5},gridcoor=\relax}
\define@key[psset]{pst-plot}{axespos}[bottom]{\pst@expandafter\psset@@axespos{#1}\@nil}
\def\psset@@axespos#1#2\@nil{%
  \ifx#1b\let\psk@axespos\z@\else		% 0=b)bottom
    \ifx#1t\let\psk@axespos\@ne			% 1=t)op
      \else\@pstrickserr{Bad axes position: `#1#2'}\@ehpa
  \fi\fi}
\psset[pst-plot]{axespos=b}
\newdimen\pst@xunit
\newdimen\pst@yunit
\def\pslegend{\@ifnextchar[\pslegend@i{\pslegend@i[rt]}}
\def\pslegend@i[#1]{\@ifnextchar({\pslegend@ii[#1]}{\pslegend@ii[#1](\pst@number\pslabelsep,\pst@number\pslabelsep)}}
\def\pslegend@ii[#1](#2,#3)#4{%
  \gdef\pslegend@ref{#1}%
  \xdef\pslegend@sepx{#2 }%
  \xdef\pslegend@sepy{#3 }%
  \gdef\pslegend@text{#4}}
\newpsstyle{legendstyle}{fillstyle=solid,fillcolor=white,linewidth=0.5pt}
\def\pslegend@iii[#1](#2){\rput[#1](#2){\psframebox[style=legendstyle]{%
  \footnotesize\tabcolsep=2pt%
  \tabular[t]{@{}ll@{}}\pslegend@text\endtabular}}\global\let\pslegend@text\relax}
\let\pslegend@text\relax% define it as empty
\def\psgraph{\pst@object{psgraph}}
\def\psgraph@i{%
  \let\psgraph@para\pst@par
  \let\psk@save@arrowA\psk@arrowA
  \let\psk@save@arrowB\psk@arrowB
  \pst@getarrows\psgraph@ii}
\def\psgraph@ii(#1,#2){\catcode`\!=12\relax
  \@ifnextchar({\psgraph@iii(#1,#2)}{\psgraph@iv(0,0)(#1,#2)}}
\def\psgraph@iii(#1,#2)(#3,#4){\@ifnextchar({\psgraph@v(#1,#2)(#3,#4)}{\psgraph@iv(#1,#2)(#3,#4)}}
\def\psgraph@iv(#1,#2)(#3,#4)#5#6{%  no special origin defined
  \pst@killglue%
  \begingroup
  \use@keep@par
  \pstFPsub\pst@tempA{#3}{#1}%
  \pst@dimm=#5
  \pst@dimo=\pst@tempA pt
  \pstFPdiv\pst@@dx{\strip@pt\pst@dimm}{\pst@tempA}%
  \pst@xunit=\pst@@dx\p@
  \ifx!#6\let\pst@yunit=\pst@xunit\else
    \pst@dimm=#6
    \pstFPsub\pst@tempA{#4}{#2}%
    \pstFPdiv\pst@@dy{\strip@pt\pst@dimm}{\pst@tempA}%
    \pst@yunit=\pst@@dy\p@
  \fi
  \pst@dimm=#1\pst@xunit\advance\pst@dimm by \psk@llx
  \pst@dimn=#2\pst@yunit\advance\pst@dimn by \psk@lly
  \pst@dimo=#3\pst@xunit\advance\pst@dimo by \psk@urx
  \pst@dimp=#4\pst@yunit\advance\pst@dimp by \psk@ury
  \if@star\pspicture*(\pst@dimm,\pst@dimn)(\pst@dimo,\pst@dimp)\else
  \pspicture(\pst@dimm,\pst@dimn)(\pst@dimo,\pst@dimp)\fi
  \let\psxunit\pst@xunit \let\psyunit\pst@yunit
  \ifdim\pst@xunit=\pst@yunit\relax\psset{runit=\pst@xunit}\fi%
  \bgroup
    \use@par
  \ifPst@psgrid
     \expandafter\psset\expandafter{\psk@gridpara}%
      \rput[lb](0,0){\expandafter\psgrid\psk@gridcoor}  
  \fi
    \ifnum\psk@axespos=0
      \expandafter\psaxes\expandafter[\psgraph@para](#1,#2)(#3,#4)%
    \else
      \xdef\psgraph@coor{(#1,#2)(#3,#4)(#5,#6)}%
    \fi
  \egroup
  \psgraph@vi(#1,#2)(#1,#2)(#3,#4)%
}
\def\psgraph@v(#1,#2)(#3,#4)(#5,#6)#7#8{%  with special origin
  \pst@killglue%
  \let\psgraph@para\pst@par
  \begingroup%
  \use@keep@par
  \pstFPsub\pst@tempA{#5}{#3}%
  \pst@dimm=#7%
  \pst@dimo=\pst@tempA pt%
  \pstFPdiv\pst@@dx{\strip@pt\pst@dimm}\pst@tempA%
  \pst@xunit=\pst@@dx\p@%
  \ifx!#8\let\pst@yunit=\pst@xunit\else
    \pst@dimm=#8%
    \pstFPsub\pst@tempA{#6}{#4}%
    \pstFPdiv\pst@@dy{\strip@pt\pst@dimm}\pst@tempA%
    \pst@yunit=\pst@@dy\p@%
  \fi%
  \pst@dima=#3\pst@xunit \advance\pst@dima by \psk@llx%
  \pst@dimb=#4\pst@yunit \advance\pst@dimb by \psk@lly%
  \pst@dimc=#5\pst@xunit \advance\pst@dimc by \psk@urx%
  \pst@dimd=#6\pst@yunit \advance\pst@dimd by \psk@ury%
  \if@star\pspicture*(\pst@dima,\pst@dimb)(\pst@dimc,\pst@dimd)\else%
          \pspicture(\pst@dima,\pst@dimb)(\pst@dimc,\pst@dimd)\fi%
  \psset{xunit=\pst@xunit,yunit=\pst@yunit}
  \ifdim\pst@xunit=\pst@yunit \psset{runit=\pst@xunit}\fi%
  \bgroup%
    \use@par%
  \ifPst@psgrid
     \expandafter\psset\expandafter{\psk@gridpara}%
      \rput[lb](0,0){\expandafter\psgrid\psk@gridcoor}
  \fi%
    \ifnum\psk@axespos=0
      \psaxes(#1,#2)(#3,#4)(#5,#6)%
    \else
      \xdef\psgraph@coor{(#1,#2)(#3,#4)(#5,#6)}%
    \fi
  \egroup
  \psgraph@vi(#1,#2)(#3,#4)(#5,#6)%
}
\def\setxLabelC@@r#1,#2(#3,#4)(#5){%
  \pst@getcoor{#5}\pst@tempB%
  \ifx c#1 
    \pssetylength\pst@dimm{#2}%
    \rput(! #4 #3 add 2 div \pst@number\pst@dimm \pst@tempB\space exch pop add 
      \pst@number\psyunit div ){\psk@xAxisLabel}%
  \else%
    \pst@getcoor{\psk@xAxisLabelPos}\pst@tempA%
    \rput(! \pst@tempA\space \pst@tempB\space exch pop add \tx@UserCoor ){\psk@xAxisLabel}%
  \fi}
\def\setyLabelC@@r#1,#2(#3,#4)(#5){%
  \pst@getcoor{#5}\pst@tempB%
  \ifx c#2
    \pssetxlength\pst@dimm{#1}%
    \rput{90}(! \pst@number\pst@dimm \pst@tempB\space pop add \pst@number\psxunit div #4 #3 add 2 div ){\psk@yAxisLabel}%
  \else%
    \pst@getcoor{\psk@yAxisLabelPos}\pst@tempA%
    \rput{90}(! \pst@tempB\space pop \pst@tempA\space 3 1 roll add exch \tx@UserCoor ){\psk@yAxisLabel}%
  \fi}
\def\psgraph@vi(#1,#2)(#3,#4)(#5,#6){%
  \ifx\psk@xAxisLabel\@empty\else%
    \ifx\psk@xAxisLabelPos\@empty\uput[0](#5,#2){\psk@xAxisLabel}%
    \else\expandafter\setxLabelC@@r\psk@xAxisLabelPos(#3,#5)(#1,#2)\fi%
  \fi%
  \ifx\psk@yAxisLabel\@empty\else%
    \ifx\psk@yAxisLabelPos\@empty\uput[90](#1,#6){\psk@yAxisLabel}%
    \else\expandafter\setyLabelC@@r\psk@yAxisLabelPos(#4,#6)(#1,#2)\fi%
  \fi%
  \def\lt@@{lt}\def\lb@@{lb}\def\rb@@{rb}%
  \ifx\pslegend@ref\lb@@    \gdef\pslegend@coor{#3 \pslegend@sepx \pst@number\psxunit div add 
                                                   \pslegend@sepy \pst@number\psyunit div}%
  \else%
    \ifx\pslegend@ref\lt@@  \gdef\pslegend@coor{#3 \pslegend@sepx \pst@number\psxunit div add 
                                                #6 \pslegend@sepy \pst@number\psyunit div sub}%
    \else%
      \ifx\pslegend@ref\rb@@\gdef\pslegend@coor{#5 \pslegend@sepx \pst@number\psxunit div sub 
                                                   \pslegend@sepy \pst@number\psyunit div}%
      \else                 \gdef\pslegend@coor{#5 \pslegend@sepx \pst@number\psxunit div sub 
                                                #6 \pslegend@sepy \pst@number\psyunit div sub}%
      \fi%
    \fi%
  \fi%
  \xdef\psgraphLLx{#3}\xdef\psgraphLLy{#4}\xdef\psgraphURx{#5}\xdef\psgraphURy{#6}%
  \global\let\psk@arrowA\psk@save@arrowA
  \global\let\psk@arrowB\psk@save@arrowB
  \ignorespaces
}
\def\endpsgraph{%
  \ifx\relax\pslegend@text\relax \else\pslegend@iii[\pslegend@ref](!\pslegend@coor)\fi
  \expandafter\psset\expandafter{\psgraph@para}%
  \ifnum\psk@axespos>0
    \expandafter\psaxes\psgraph@coor
  \fi
  \endpspicture
  \endgroup\ignorespaces}
\@namedef{psgraph*}{\psgraph*}
\@namedef{endpsgraph*}{\endpsgraph}
\def\psPutXLabel#1{%
  \global\pst@cnto=0\relax
  \global\pst@cntp=#1\relax
  \expandafter\get@Label\psk@xLabels,,\@nil%
}
\def\psPutYLabel#1{%        
  \global\pst@cnto=0\relax
  \global\pst@cntp=#1\relax
  \expandafter\get@Label\psk@yLabels,,\@nil%
}
\def\get@Label#1,#2,#3\@nil{%
    \ifnum\the\pst@cnto<\the\pst@cntp
      \global\advance\pst@cnto by \@ne 
      \ifx\relax#3\relax\else\expandafter\get@Label#2,#3\@nil\fi%
    \else #1\fi%
}
\def\psVectorfield{\pst@object{psVectorfield}}
\def\psVectorfield@i(#1,#2)(#3,#4)#5{{%
  \addbefore@par{Dx=0.1,Dy=0.1,Ox=3,arrows=->,linewidth=0.2pt}%
  \begin@SpecialObj
  \SpecialCoor
  \pstFPsub\pst@tempA{#3}{#1}%
  \pstFPsub\pst@tempB{#4}{#2}%
  \pstFPDiv{\pst@tempC}{\pst@tempA}{\psk@Dx}%
  \pstFPDiv{\pst@tempD}{\pst@tempB}{\psk@Dy}%
  \pstVerb{ /yStrich \ifPst@algebraic (#5) tx@AlgToPs begin AlgToPs end cvx
                \else { #5 } \fi def }%
  \multido{\rX=#1+\psk@Dx}{\numexpr\pst@tempC+1}{%
    \multido{\rY=#2+\psk@Dy}{\numexpr\pst@tempD+1}{%
       \psline%
         (! /x \rX\space def 
            /y \rY\space def 
            /yTemp yStrich \psk@Dx\space \psk@Ox\space div mul def 
            \rX\space \psk@Dx\space \psk@Ox\space div sub \rY\space yTemp sub)%
         (! /x \rX\space def 
            /y \rY\space def 
            /yTemp yStrich \psk@Dx\space \psk@Ox\space div mul def 
            \rX\space \psk@Dx\space \psk@Ox\space div add \rY\space yTemp add)%
   }}%
  \end@SpecialObj
}\ignorespaces}  
\def\psFixpoint{\pst@object{psFixpoint}}
\def\psFixpoint@i#1#2#3{% #1: xStart #2: f(x) #3: number of iterations
  \pst@killglue%
  \begingroup%
  \use@par%
  \@nameuse{beginplot@\psplotstyle}%
  \addto@pscode{
    \psplot@init
      /x #1 def
      /F@pstplot \ifPst@algebraic (#2) tx@AlgToPs begin AlgToPs end cvx
                 \else { #2 } \fi  def
      /xy { x \pst@number\psxunit mul F@pstplot dup /x ED \pst@number\psyunit mul } def 
  }%
  \gdef\psplot@init{}%
  \@pstfalse%
  \@nameuse{testqp@\psplotstyle}%
  \addto@pscode{
      mark
      x \pst@number\psxunit mul 0
      /n 2 def
      #3 {
        xy 
        dup dup 
        /n n 4 add def
      } repeat 
  }%
  \@nameuse{endplot@\psplotstyle}%
  \endgroup%
  \ignorespaces}
\define@boolkey[psset]{pst-plot}[Pst@]{showDerivation}[true]{}
\psset{showDerivation}
\def\psNewton{\pst@object{psNewton}}
\def\psNewton@i#1#2{\@ifnextchar[{\psNewton@ii{#1}{#2}}{\psNewton@iii{#1}{#2}}}
\def\psNewton@ii#1#2[#3]#4{% #1:xStart #2:f(x) #3:f'(x) #4:number of iterations
  \pst@killglue%
  \begingroup%
  \addbefore@par{showDerivation}%
  \use@par%
  \@nameuse{beginplot@\psplotstyle}%
  \addto@pscode{
    \psplot@init
      /x #1 def
      /F@pstplot \ifPst@algebraic (#2) tx@AlgToPs begin AlgToPs end cvx \else { #2 } \fi  def
      /F@pstplotDerive \ifPst@algebraic (#3) tx@AlgToPs begin AlgToPs end cvx \else { #3 } \fi  def
      /newxVal { % y on stack
        F@pstplotDerive % we have m
        div neg %\pst@number\psxunit div % new x val = -y0/m
      } def
  }%
  \gdef\psplot@init{}%
  \@pstfalse%
  \@nameuse{testqp@\psplotstyle}%
  \addto@pscode{
      mark
      x 0 \tx@ScreenCoor % start point
      /n 2 def
      #4 {
        F@pstplot /yVal ED
        x yVal \tx@ScreenCoor
        /n n 2 add def
        yVal newxVal x add /x ED
        x 0 \tx@ScreenCoor 
        \ifPst@showDerivation /n n 4 add def \else moveto /n n 2 add def\fi
      } repeat 
      pstack
  }%
  \@nameuse{endplot@\psplotstyle}%
  \endgroup%
  \ignorespaces}
\def\psNewton@iii#1#2#3{% #1:xStart #2:f(x) #3:number of iterations
  \pst@killglue%
  \begingroup%
  \addbefore@par{VarStepEpsilon=0.01,showDerivation}%
  \use@par%
  \@nameuse{beginplot@\psplotstyle}%
  \addto@pscode{
    \psplot@init
      /epsilon \psk@VarStepEpsilon\space def
      /x #1 def
      /F@pstplot \ifPst@algebraic (#2) tx@AlgToPs begin AlgToPs end cvx \else { #2 } \fi  def
      /newxVal { % y on stack
        /saveX x def
        saveX epsilon add /x ED F@pstplot saveX epsilon sub /x ED F@pstplot sub epsilon dup add div % we have m
        div neg % new x val = -y0/m
        /x saveX def
      } def
  }%
  \gdef\psplot@init{}%
  \@pstfalse%
  \@nameuse{testqp@\psplotstyle}%
  \addto@pscode{
      mark
      x 0 \tx@ScreenCoor % start point
      /n 2 def
      #3 {
        F@pstplot /yVal ED
        x yVal \tx@ScreenCoor
        yVal newxVal x add /x ED
        x 0 \tx@ScreenCoor 
        \ifPst@showDerivation /n n 4 add def \else moveto /n n 2 add def\fi
      } repeat 
  }%
  \@nameuse{endplot@\psplotstyle}%
  \endgroup%
  \ignorespaces}
\def\psResetPlotValues{%
  \psset{method={}}%
}%
\catcode`\@=\TheAtCode\relax
 
\csname PSTnodesLoaded\endcsname
\let\PSTnodesLoaded 
\ifx\PSTricksLoaded \else\input pstricks.tex \fi\relax
\ifx\PSTXKeyLoaded \else \input pst-xkey \fi
\def\fileversion{1.42}
\def\filedate{2019/03/03}
\edef\TheAtCode{\the\catcode`\@}
\catcode`\@=11
\pstheader{pst-node.pro}
\pst@addfams{pst-node}
\SpecialCoor
\define@boolkey[psset]{pst-node}[Pst@]{trueAngle}[true]{}
\psset[pst-node]{trueAngle=false}
\define@boolkey[psset]{pst-node}[Pst@]{storeNodeInfo}[true]{}
\psset[pst-node]{storeNodeInfo=false}
\def\pst@nodedict{tx@NodeDict begin }
\def\pst@zapspace#1 #2{%
#1%
\ifx#2\@empty\else\expandafter\pst@zapspace\fi
#2}
\def\pst@getnode#1#2{\pst@expandafter\pst@@getnode{#1},,\@nil#2}
\def\pst@@getnode#1,#2,#3\@nil#4{%
  \ifx\@empty#3\@empty
    \edef#4{/N@\pst@zapspace#1 \@empty\space}%
  \else
    \pst@cntg=#1\relax
    \pst@cnth=#2\relax
    \edef#4{/N@M-\ifnum\psmatrixcnt=\z@ 1\else\the\psmatrixcnt\fi
    -\the\pst@cntg-\the\pst@cnth\space}%
  \fi}
\def\tx@NewNode{/NodeScale {\ifx\pstnodescale\@undefined  \else\pstnodescale \fi} def NewNode }
\def\psopenNodeFile{%
  \pst@Verb{ %globaldict begin
    (\jobname.nodes) (w) file /NodeFile exch def 
  }}
\def\pscloseNodeFile{\pstVerb{ tx@NodeDict begin NodeFile closefile end }}
\define@boolkey[psset]{pst-node}[Pst@]{showNode}[true]{\ifPst@showNode\psopenNodeFile\fi}% write coors into the logfile
\define@boolkey[psset]{pst-node}[Pst@]{markNode}[true]{}% makr the node with its name
\define@boolkey[psset]{pst-node}[Pst@]{saveNodeCoors}[true]{}
\define@key[psset]{pst-node}{NodeCoorPrefix}[]{\def\psk@NodeCoorPrefix{#1}}% if empty it is N-<Name>.x|y
\psset[pst-node]{saveNodeCoors=false,showNode=false,markNode=false,NodeCoorPrefix=}% <NodeCoorPrefix<Name>x|y>
\def\pst@newnode#1#2#3#4{%
\pst@killglue
\leavevmode
\pst@getnode{#1}\pst@thenode
\pst@Verb{
  \ifPst@saveNodeCoors
    \ifx\relax#3\relax 0 0 \else gsave \pst@dict STV CP T end #3 \tx@UserCoor grestore \fi 
    \if$\psk@NodeCoorPrefix$
      /N-#1.y exch def
      /N-#1.x exch def
    \else
      /\psk@NodeCoorPrefix#1y exch def
      /\psk@NodeCoorPrefix#1x exch def
    \fi
  \fi
  \pst@nodedict
  {#3}
  \ifx\psk@name\relax false \else \psk@name true \fi
  \pst@thenode
  #2
  {#4}
  \ifPst@showNode 
  exch dup /NodeType ED 
  exch
   NodeType 10 eq {  % pnode type
    5 copy 
    cvlit aload pop
    20 string cvs (; )   6 2 roll % InitPnode
    20 string cvs (; )   7 2 roll % type
    20 string cvs (; )   8 2 roll %/N@Name
    20 string cvs (; )   9 2 roll % true/false
    cvlit dup length 2 eq 
      { aload pop exch 
        20 string cvs (; ) 11 2 roll 
        20 string cvs (, ) 12 2 roll  % x,y
        (\string\n)                   % add newline
        13 array astore concatstringarray 
      }
      { 255 string cvs (; ) 10 2 roll 
        (\string\n)                   % add newline
        11 array astore concatstringarray 
      } ifelse 
    NodeFile exch writestring 
  } if
  NodeType 14 eq {  % dotnode
    5 copy 
    /@@temp ED 
    @@temp  % to get X Y
    4 -1 roll cvlit pop
    ( OvalNodePos ) (; )  5 2 roll
    20 string cvs (; )   6 2 roll % type
    20 string cvs (; )   7 2 roll %/N@Name
    20 string cvs (; )   8 2 roll % true/false
    Y 20 string cvs (; ) 10 2 roll
    X 20 string cvs (, ) 12 2 roll
    (\string\n)                   % add newline
    13 array astore concatstringarray 
    tx@NodeDict begin NodeFile exch writestring end
  } if
  \fi
  \tx@NewNode
  end 
}%
\global\let\psk@name\relax%
\pstree@nodehook%
\global\let\pstree@nodehook\relax}
\let\pstree@nodehook\relax
\define@boolkey[psset]{pst-node}[Pst@]{nodealign}[true]{}
\psset[pst-node]{nodealign=false}

\def\pst@nodealign{%
\pst@dimg=\ht\pst@hbox
\advance\pst@dimg by -\dp\pst@hbox
\divide\pst@dimg by \tw@
\lower\pst@dimg}
\def\tx@InitPnode{InitPnode }
\def\pnode{\@ifnextchar[{\pnode@i}{\pnode@iii}}
\def\pnode@i[#1]{\@ifnextchar({\pnode@ii[#1]}{\pnode@ii[#1](0,0)}}
\def\pnode@ii[#1](#2)#3{%
  \pst@getcoor{#1}\pst@tempA%
  \pst@getcoor{#2}\pst@tempB%
  \pst@newnode{#3}{10}{\pst@tempA \pst@tempB 3 -1 roll add 3 1 roll add exch }{\tx@InitPnode}%
  \ifPst@showNode\psdot(#3)\uput[\ifx\psk@rot\@empty0\else\psk@rot\fi]{0}(#3){#3}\fi
  \ignorespaces}
\def\pnode@iii{\@ifnextchar({\pnode@}{\pnode@(0,0)}}
\def\pnode@(#1)#2{%
  \pst@@getcoor{#1}%
  \pst@newnode{#2}{10}{\pst@coor}{\tx@InitPnode}%
  \ifPst@showNode\psdot(#2)\uput[\ifx\psk@rot\@empty0\else\psk@rot\fi]{0}(#2){#2}\fi
  \ignorespaces}
\def\pnodes{\@ifnextchar[{\pnodes@i}{\pnodes@i[0,0]}}
\def\pnodes@i[#1]{\@ifnextchar({\psnodes@ii[#1]}{\pnodes@ii}}
\def\psnodes@ii[#1](#2)#3{%
  \pnode[#1](#2){#3}%
  \@ifnextchar({\psnodes@ii[#1]}{}%
}
\def\tx@InitCnode{InitCnode }
\def\cnode{\pst@object{cnode}}
\def\cnode@i{\@ifnextchar({\cnode@ii}{\cnode@ii(0,0)}}
\def\cnode@ii(#1)#2#3{%
  \leavevmode
  \hbox{%
    \use@par
    \pst@@getcoor{#1}%
    \pssetlength\pst@dimc{#2}%
    \pst@dimg=\psk@dimen\pslinewidth
    \advance\pst@dimc-\pst@dimg
    \advance\pst@dimc.5\pslinewidth
    \ifPst@nodealign
      \kern\pst@dimc
      \vrule width\z@ height \pst@dimc depth \pst@dimc
    \fi
    \pscircle@do(#1){#2}%
    \pst@newnode{#3}{11}{\pst@coor \pst@number\pst@dimc}{\tx@InitCnode}%
    \ifPst@nodealign\kern\pst@dimc\fi%
  }%
  \ignorespaces}
\def\Cnode{\pst@object{Cnode}}
\def\Cnode@i{\@ifnextchar({\Cnode@ii}{\Cnode@ii(0,0)}}
\def\Cnode@ii(#1)#2{\cnode@ii(#1){\psk@radius}{#2}}%
\def\cnodeput{\pst@object{cnodeput}}
\def\cnodeput@i{\@ifnextchar({\cnodeput@iii}{\cnodeput@ii}}
\def\cnodeput@ii#1{%
  \addto@par{rot={#1}}%
  \@ifnextchar({\cnodeput@iii}{\cnodeput@iii(\z@,\z@)}%
}
\def\cnodeput@iii(#1)#2{%
  \pst@killglue
  \@fixedradiusfalse
  \def\pst@nodehook{\cnodeput@iv{#2}}%
  \pst@makebox{\cput@v{#1}}%
}
\def\cnodeput@iv#1{%
  \pst@newnode{#1}{11}{\pscirclebox@iv \pst@number\pslinewidth add}{\tx@InitCnode}%
  \global\let\pst@nodehook\relax
  \ignorespaces
}
\def\Cnodeput{\pst@object{Cnodeput}}
\def\Cnodeput@i{\@ifnextchar({\Cnodeput@iii}{\Cnodeput@ii}}
\def\Cnodeput@ii#1{%
  \addto@par{rot={#1}}%
  \@ifnextchar({\Cnodeput@iii}{\Cnodeput@iii(\z@,\z@)}}
\def\Cnodeput@iii(#1)#2{%
  \pst@killglue
  \@fixedradiustrue
  \def\pst@nodehook{\Cnodeput@iv{#2}}%
  \pst@makebox{\cput@v{#1}}%
}
\def\Cnodeput@iv#1{%
  \pst@newnode{#1}{11}{%
    \pst@number{\wd\pst@hbox} 2 div \pst@number\pst@dima % x y
    \pst@number\pst@dimb \pst@number\pslinewidth \psk@dimen .5 sub mul sub }% r
       {\tx@InitCnode}%
  \global\let\pst@nodehook\relax}
\def\circlenode{\pst@object{circlenode}}
\def\circlenode@i#1{\pst@makebox{\circlenode@ii{#1}}}
\def\circlenode@ii#1{%
  \begingroup
  \pst@useboxpar
  \setbox\pst@hbox=\hbox{%
    \cnodeput@iv{#1}%
    \pscirclebox@iii
    \box\pst@hbox}%
  \ifPst@nodealign \psboxseptrue \fi
  \ifpsboxsep \pscirclebox@sep \fi
  \leavevmode
  \ifPst@nodealign\pst@nodealign\fi
  \box\pst@hbox
  \endgroup}
\def\Circlenode{\pst@object{Circlenode}}
\def\Circlenode@i#1{\pst@makebox{\Circlenode@ii{#1}}}
\def\Circlenode@ii#1{%
\begingroup
  \pst@useboxpar
  \pst@dima=\ht\pst@hbox
  \advance\pst@dima by -\dp\pst@hbox
  \divide\pst@dima by \tw@
  \pssetlength\pst@dimb\psk@radius
  \setbox\pst@hbox=\hbox{%
  \Cnodeput@iv{#1}%
  \pscircle(.5\wd\pst@hbox,\pst@dima){\pst@dimb}%
  \box\pst@hbox}%
  \ifPst@nodealign \psboxseptrue \fi
  \ifpsboxsep \psCirclebox@sep \fi
  \leavevmode
  \ifPst@nodealign\pst@nodealign\fi
  \box\pst@hbox
  \endgroup}
\def\tx@GetRnodePos{GetRnodePos }
\def\tx@InitRnode{InitRnode }
\def\psnode{\pst@object{psnode}}
\def\psnode@i{\@ifnextchar(\psnode@ii{\psnode@ii(0,0)}}
\def\psnode@ii(#1)#2#3{%    #1: coordinates, #2: node name,  #3 contents
  \rput(#1){\rnode{#2}{#3}}}
\def\rnode{\@ifnextchar[{\rnode@i}{\def\pst@par{}\rnode@ii}}
\def\rnode@i[#1]{\def\pst@par{ref=#1}\rnode@ii}
\def\rnode@ii#1{\pst@makebox{\rnode@iii\rnode@iv{#1}}}
\def\rnode@iii#1#2{%
\leavevmode
\begingroup
\pst@useboxpar
#1%
\ifPst@nodealign\lower\pst@dimb\fi
\hbox{%
\pst@newnode{#2}{16}{%
\pst@number{\ht\pst@hbox}%
\pst@number{\dp\pst@hbox}%
\pst@number{\wd\pst@hbox}%
\pst@number\pst@dima%
\pst@number\pst@dimb}%
{\tx@InitRnode}%
\box\pst@hbox}%
\endgroup}
\def\rnode@iv{%
\pst@dima=\psk@xref\wd\pst@hbox
\ifx\psk@yref\relax
\pst@dimb=\z@
\else
\pst@dimb=\ht\pst@hbox
\advance\pst@dimb\dp\pst@hbox
\pst@dimb=\psk@yref\pst@dimb
\advance\pst@dimb-\dp\pst@hbox
\fi}
\define@key[psset]{pst-node}{href}[0]{\pst@checknum{#1}\psk@href}
\psset[pst-node]{href=0}
\define@key[psset]{pst-node}{vref}[0.7ex]{\def\psk@vref{#1}}
\psset[pst-node]{vref=0.7ex}
\def\Rnode{\pst@object{Rnode}}
\def\Rnode@i#1{\pst@makebox{\rnode@iii\Rnode@ii{#1}}}
\def\Rnode@ii{%
\use@par
\pst@dima=\psk@href\wd\pst@hbox
\advance\pst@dima\wd\pst@hbox
\divide\pst@dima 2
\pssetlength\pst@dimb{\psk@vref}}
\def\tx@DiaNodePos{DiaNodePos }
\def\dianode{\pst@object{dianode}}
\def\dianode@i#1{\pst@makebox{\dianode@ii{#1}}}
\def\dianode@ii#1{%
\begingroup
\pst@useboxpar
\psdiabox@iii
\setbox\pst@hbox=\hbox{%
\pst@newnode{#1}{14}{}{%
/X \pst@number\pst@dima def
/Y \pst@number\pst@dimb def
/w \pst@number\pst@dimc 2 mul def
/h \pst@number\pst@dimd 2 mul def
/NodePos { \tx@DiaNodePos } def}%
\box\pst@hbox}%
\ifPst@nodealign\psboxseptrue\fi
\ifpsboxsep\psdiabox@sep\fi
\leavevmode
\ifPst@nodealign\lower\pst@dimb\fi
\box\pst@hbox
\endgroup}
\def\tx@TriNodePos{TriNodePos }
\def\tx@InitTriNode{InitTriNode }
\def\trinode{\pst@object{trinode}}
\def\trinode@i#1{\pst@makebox{\trinode@ii{#1}}}
\def\trinode@ii#1{%
  \begingroup%
  \pst@useboxpar%
  \pstribox@iii
  \setbox\pst@hbox=\hbox{%
    \pst@newnode{#1}{14}{}{
      \pst@number\pst@dimc
      \pst@number\pst@dimd
      \ifodd\psk@trimode
        exch
        \pst@number\pst@dima
      \else
        \pst@number\pst@dimb
      \fi
      \psk@trimode
      \pst@number{\wd\pst@hbox}
      \pst@number{\ht\pst@hbox}
      \pst@number{\dp\pst@hbox}
      \tx@InitTriNode
    }%
    \box\pst@hbox%
  }%
  \ifPst@nodealign\psboxseptrue\fi
  \ifpsboxsep\pstribox@sep\fi
  \leavevmode
  \ifPst@nodealign\lower\pst@tempa\fi
  \box\pst@hbox%
  \endgroup}
\def\tx@OvalNodePos{OvalNodePos }
\def\ovalnode{\pst@object{ovalnode}}
\def\ovalnode@i#1{\pst@makebox{\ovalnode@ii{#1}}}
\def\ovalnode@ii#1{%
\begingroup
\pst@useboxpar
\psovalbox@iii
\setbox\pst@hbox=\hbox{%
\pst@newnode{#1}{14}{}{%
/X \pst@number\pst@dima def
/Y \pst@number\pst@dimb def
/w \pst@number\pst@dimc def
/h \pst@number\pst@dimd def
/NodePos { \tx@OvalNodePos } def}%
\unhbox\pst@hbox}%
\ifPst@nodealign\psboxseptrue\fi
\ifpsboxsep\psovalbox@sep\fi
\leavevmode
\ifPst@nodealign\lower\pst@dimb\fi
\box\pst@hbox
\endgroup}
\def\dotnode{\pst@object{dotnode}}
\def\dotnode@i{\@ifnextchar({\dotnode@ii}{\dotnode@ii(\z@,\z@)}}
\def\dotnode@ii(#1)#2{%
  \leavevmode
  \hbox{%
    \use@par
    \pst@@getcoor{#1}%
    \pst@getdotsize
    \pstree@nodehook
    \ifPst@nodealign
      \pst@dima=\pst@dimg
      \kern\pst@dima
      \vrule width\z@ height \pst@dimh depth \pst@dimh
    \fi
    \pst@newnode{#2}{14}{}{
      \pst@coor
      /Y exch def /X exch def
      /w \pst@number\pst@dimg def
      /h \pst@number\pst@dimh def
      /NodePos { \tx@OvalNodePos } def}%
    \psdot@ii(#1)%
    \ifPst@nodealign\kern\pst@dima\fi}%
  \ifPst@markNode\uput[\ifx\psk@rot\@empty0\else\psk@rot\fi]{0}(#2){#2}\fi
  \ignorespaces}
\def\dotnodes{\pst@object{dotnodes}}
\def\dotnodes@i{\use@par\dotnodes@ii}
\def\dotnodes@ii(#1)#2{%
  \dotnode(#1){#2}%
  \@ifnextchar(\dotnodes@ii{\def\pst@par{}}}
\define@key[psset]{pst-node}{framesize}{\pst@expandafter\psset@@framesize{#1} \@nil}
\def\psset@@framesize#1 #2\@nil{%
  \pssetlength\pst@dimg{#1}%
  \divide\pst@dimg2
  \edef\psk@framewidth{\pst@number\pst@dimg}%
  \ifx\@empty#2\@empty
    \let\psk@frameheight\psk@framewidth
  \else
    \pssetlength\pst@dimg{#2}%
    \divide\pst@dimg2
    \edef\psk@frameheight{\pst@number\pst@dimg}%
  \fi}
\psset[pst-node]{framesize=10pt}
\def\fnode{\pst@object{fnode}}
\def\fnode@i{\@ifnextchar({\fnode@ii}{\fnode@ii(\z@,\z@)}}
\def\fnode@ii(#1)#2{%
  \leavevmode
  \pst@killglue
  \hbox{%
    \use@par%
    \begin@ClosedObj%
    \ifPst@nodealign
      \kern\psk@framewidth\p@
      \vrule width\z@ height \psk@frameheight\p@ depth \psk@frameheight\p@
      \edef\pst@coor{0 0 }%
    \else\pst@@getcoor{#1}\fi
    \pst@newnode{#2}{14}{}{
      \pst@coor
      /Y exch def /X exch def
      /d \psk@dimen .5 sub CLW mul neg def
      /r \psk@framewidth d add def
      /l r neg def
      /u \psk@frameheight d add def
      /d u neg def
      /NodePos { \tx@GetRnodePos } def}%
    \addto@pscode{
      /x2 \psk@framewidth CLW \psk@dimen mul sub def
      /y2 \psk@frameheight CLW \psk@dimen mul sub def
      \pst@coor 2 copy
      y2 sub /y1 ED
      x2 sub /x1 exch def
      y2 add /y2 exch def
      x2 add /x2 exch def
      \psk@cornersize
      1 index 0 eq { pop pop \tx@Rect } { \tx@OvalFrame } ifelse}%
    \def\pst@linetype{2}%
    \showpointsfalse%
    \end@ClosedObj%
    \ifPst@nodealign\kern\psk@framewidth\p@\fi}% end of \hbox
  \ignorespaces}
\define@key[psset]{}{XnodesepA}{\pst@getlength{#1}\psk@nodesepA\def\psk@nodeseptypeA{2 }}
\define@key[psset]{}{XnodesepB}{\pst@getlength{#1}\psk@nodesepB\def\psk@nodeseptypeB{2 }}
\define@key[psset]{}{Xnodesep}{%
    \pst@getlength{#1}\psk@nodesepA
    \let\psk@nodesepB\psk@nodesepA
    \def\psk@nodeseptypeA{2 }%
    \def\psk@nodeseptypeB{2 }}
\define@key[psset]{}{YnodesepA}{\pst@getlength{#1}\psk@nodesepA\def\psk@nodeseptypeA{1 }}
\define@key[psset]{}{YnodesepB}{\pst@getlength{#1}\psk@nodesepB\def\psk@nodeseptypeB{1 }}
\define@key[psset]{}{Ynodesep}{%
    \pst@getlength{#1}\psk@nodesepA
    \let\psk@nodesepB\psk@nodesepA
    \def\psk@nodeseptypeA{1 }%
    \def\psk@nodeseptypeB{1 }}
\define@key[psset]{pst-node}{nodesepA}[0pt]{\pst@getlength{#1}\psk@nodesepA \def\psk@nodeseptypeA{0 }}
\define@key[psset]{pst-node}{nodesepB}[0pt]{\pst@getlength{#1}\psk@nodesepB \def\psk@nodeseptypeB{0 }}
\define@key[psset]{pst-node}{nodesep}[0pt]{%
  \pst@getlength{#1}\psk@nodesepA
  \let\psk@nodesepB\psk@nodesepA
  \def\psk@nodeseptypeA{0 }%
  \def\psk@nodeseptypeB{0 }}
\psset[pst-node]{nodesep=0pt}
\define@key[psset]{pst-node}{armA}[10pt]{\pst@getlength{#1}\psk@armA \def\psk@armtypeA{0 }}
\define@key[psset]{pst-node}{armB}[10pt]{\pst@getlength{#1}\psk@armB \def\psk@armtypeB{0 }}
\define@key[psset]{pst-node}{arm}[10pt]{%
  \pst@getlength{#1}\psk@armA
  \let\psk@armB\psk@armA
  \def\psk@armtypeA{0 }%
  \def\psk@armtypeB{0 }}
\psset[pst-node]{arm=10pt}
\define@key[psset]{pst-node}{XarmA}[]{\pst@getlength{#1}\psk@armA \def\psk@armtypeA{1 }}
\define@key[psset]{pst-node}{XarmB}[]{\pst@getlength{#1}\psk@armB \def\psk@armtypeB{1 }}
\define@key[psset]{pst-node}{Xarm}{%
  \pst@getlength{#1}\psk@armA
  \let\psk@armB\psk@armA
  \def\psk@armtypeA{1 }%
  \def\psk@armtypeB{1 }}
\define@key[psset]{pst-node}{YarmA}[]{\pst@getlength{#1}\psk@armA \def\psk@armtypeA{2 }}
\define@key[psset]{pst-node}{YarmB}[]{\pst@getlength{#1}\psk@armB \def\psk@armtypeB{2 }}
\define@key[psset]{pst-node}{Yarm}[]{%
  \pst@getlength{#1}\psk@armA
  \let\psk@armB\psk@armA
  \def\psk@armtypeA{2 }%
  \def\psk@armtypeB{2 }}
\define@key[psset]{pst-node}{offsetA}[0pt]{\pst@getlength{#1}\psk@offsetA}
\define@key[psset]{pst-node}{offsetB}[0pt]{\pst@getlength{#1}\psk@offsetB}
\define@key[psset]{pst-node}{offset}[0pt]{\pst@getlength{#1}\psk@offsetA\let\psk@offsetB\psk@offsetA}
\psset[pst-node]{offset=0pt}
\define@key[psset]{pst-node}{angleA}[0]{\pst@getangle{#1}\psk@angleA}
\define@key[psset]{pst-node}{angleB}[0]{\pst@getangle{#1}\psk@angleB}%
\define@key[psset]{pst-node}{angle}[0]{%
  \pst@getangle{#1}\psk@angleA
  \let\psk@angleB\psk@angleA}
\psset[pst-node]{angle=0}
\define@key[psset]{pst-node}{arcangleA}[8]{\pst@getangle{#1}\psk@arcangleA}
\define@key[psset]{pst-node}{arcangleB}[8]{\pst@getangle{#1}\psk@arcangleB}%
\define@key[psset]{pst-node}{arcangle}[8]{%
  \pst@getangle{#1}\psk@arcangleA
  \let\psk@arcangleB\psk@arcangleA}
\psset[pst-node]{arcangle=8}
\define@key[psset]{pst-node}{ncurvA}[0.67]{\pst@checknum{#1}\psk@ncurvA}
\define@key[psset]{pst-node}{ncurvB}[0.67]{\pst@checknum{#1}\psk@ncurvB}%
\define@key[psset]{pst-node}{ncurv}[0.67]{\pst@checknum{#1}\psk@ncurvA\let\psk@ncurvB\psk@ncurvA}
\psset[pst-node]{ncurv=0.67}
\def\tx@GetCenter{GetCenter }
\def\tx@XYPos{XYPos }
\def\tx@GetEdge{GetEdge }
\def\tx@AddOffset{AddOffset }
\def\tx@GetEdgeA{GetEdgeA }
\def\tx@GetEdgeB{GetEdgeB }
\def\tx@GetArmA{GetArmA }
\def\tx@GetArmB{GetArmB }
\def\check@arrow#1#2{%
  \check@@arrow#2-\@nil
  \if@pst\addto@par{arrows=#2}\def\next{#1}%
  \else\def\next{#1{#2}}\fi
  \next}
\def\check@@arrow#1-#2\@nil{%
\ifx\@nil#2\@nil\@pstfalse\else\@psttrue\fi}
\def\tx@InitNC{InitNC }
\def\nc@object#1#2#3#4#5{%
  \csname begin@#1Obj\endcsname
  \showpointsfalse
  \pst@getnode{#2}\pst@tempa
  \pst@getnode{#3}\pst@tempb
  \gdef\npos@default{#4 }%
  \addto@pscode{%
    /NCLW CLW def
    \pst@nodedict
    \psk@offsetA
    \psk@offsetB neg
    \psk@nodesepA
    \psk@nodesepB
    \psk@nodeseptypeA
    \psk@nodeseptypeB
    \pst@tempa
    \pst@tempb
    \tx@InitNC { #5 } if
    end }%
  \def\use@pscode{%
    \pst@Verb{gsave \tx@STV newpath \pst@code\space grestore}%
    \gdef\pst@code{}}%
  \csname end@#1Obj\endcsname
  \pst@shortput}
\def\npos@default{.5 }
\def\pc@object#1{%
  \@ifnextchar({\pc@@object#1}{\pst@getarrows{\pc@@object#1}}}
\def\pc@@object#1(#2)(#3){%
  \pnode(#2){@@A}\pnode(#3){@@B}%
  #1{@@A}{@@B}}
\def\tx@LPutLine{LPutLine }
\def\tx@LPutLines{LPutLines }
\def\tx@BezierMidpoint{BezierMidpoint }
\def\tx@HPosBegin{HPosBegin }
\def\tx@HPosEnd{HPosEnd }
\def\tx@HPutLine{HPutLine }
\def\tx@HPutLines{HPutLines }
\def\tx@VPosBegin{VPosBegin }
\def\tx@VPosEnd{VPosEnd }
\def\tx@VPutLine{VPutLine }
\def\tx@VPutLines{VPutLines }
\def\tx@HPutCurve{HPutCurve }
\def\tx@NCCoor{NCCoor }
\def\tx@NCLine{NCLine }
\def\ncline{\pst@object{ncline}}
\def\ncline@i{\check@arrow{\ncline@ii}}
\def\ncline@ii#1#2{\nc@object{Open}{#1}{#2}{.5}{\tx@NCLine}}
\def\pcline{\pst@object{pcline}}
\def\pcline@i{\pc@object\ncline@ii}
\def\ncLine{\pst@object{ncLine}}
\def\ncLine@i{\check@arrow{\ncLine@ii}}
\def\ncLine@ii#1#2{\nc@object{Open}{#1}{#2}{.5}%
{\tx@NCLine /LPutPos { xB yB xA yA \tx@LPutLine } def}}
\def\tx@NCLines{NCLines }
\def\nclines{\pst@object{nclines}}
\def\nclines@i{\check@arrow\nclines@ii}
\def\nclines@ii#1#2{%
\begingroup
\use@par
\def\pst@aftercoors{\nclines@iii{#1}{#2}}%
\def\pst@coors{}%
\pst@@getcoors}
\def\nclines@iii#1#2{%
\nc@object{Open}{#1}{#2}{.5}{%
tx@Dict begin \psline@iii pop end
mark \pst@coors \tx@NCLines}%
\endgroup
\ignorespaces}
\def\tx@NCCurve{NCCurve }
\def\nccurve{\pst@object{nccurve}}
\def\nccurve@i{\check@arrow{\nccurve@ii}}
\def\nccurve@ii#1#2{\nc@object{Open}{#1}{#2}{.5}{%
  /AngleA \psk@angleA\space def /AngleB \psk@angleB\space def
  \psk@ncurvB\space \psk@ncurvA\space
  \tx@NCCurve}}
\def\pccurve{\pst@object{pccurve}}
\def\pccurve@i{\pc@object\nccurve@ii}
\def\ncarc{\pst@object{ncarc}}
\def\ncarc@i{\check@arrow{\ncarc@ii}}
\def\ncarc@ii#1#2{\nc@object{Open}{#1}{#2}{.5}{%
  yB yA sub xB xA sub \tx@Atan dup
  \psk@arcangleA\space add /AngleA exch def
  \psk@arcangleB\space sub 180 add /AngleB exch def
  \psk@ncurvB\space \psk@ncurvA\space
  \tx@NCCurve}}
\def\pcarc{\pst@object{pcarc}}
\def\pcarc@i{\pc@object\ncarc@ii}
\def\tx@NCAngles{NCAngles }
\define@boolkey[psset]{pst-node}[Pst@]{pcRef}[true]{}
\psset[pst-node]{pcRef=false}
\def\ncangles{\pst@object{ncangles}}
\def\ncangles@i{\check@arrow{\ncangles@ii}}
\def\ncangles@ii#1#2{%
  \nc@object{Open}{#1}{#2}{1.5}{\ncangles@iii \tx@NCAngles}}
\def\ncangles@iii{
  tx@Dict begin \psline@iii pop end
  /AngleA \psk@angleA def
  /AngleB \psk@angleB def
  /ArmA \psk@armA \ifPst@pcRef 
    GetEdgeA yA yA1 sub dup mul xA xA1 sub dup mul add sqrt sub \fi def
  /ArmB \psk@armB def
  /ArmTypeA \psk@armtypeA def
  /ArmTypeB \psk@armtypeB def }
\def\pcangles{\pst@object{pcangles}}
\def\pcangles@i{\pc@object\ncangles@ii}
\def\tx@NCAngle{NCAngle }
\def\ncangle{\pst@object{ncangle}}
\def\ncangle@i{\check@arrow{\ncangle@ii}}
\def\ncangle@ii#1#2{%
\nc@object{Open}{#1}{#2}{1.5}{\ncangles@iii \tx@NCAngle}}
\def\pcangle{\pst@object{pcangle}}
\def\pcangle@i{\pc@object\ncangle@ii}
\def\tx@NCBar{NCBar }
\def\ncbar{\pst@object{ncbar}}
\def\ncbar@i{\check@arrow{\ncbar@ii}}
\def\ncbar@ii#1#2{\nc@object{Open}{#1}{#2}{1.5}{%
\ncangles@iii /AngleB \psk@angleA def \tx@NCBar}}
\def\pcbar{\pst@object{pcbar}}
\def\pcbar@i{\pc@object\ncbar@ii}
\define@key[psset]{pst-node}{lineAngle}[0]{%
  \ifdim#1pt=\z@\else\psset{armB=0.5}\fi
  \def\psk@lineAngle{#1}}%
\psset[pst-node]{lineAngle=0}%
\def\tx@NCDiag{NCDiag }
\def\ncdiag{\pst@object{ncdiag}}
\def\ncdiag@i{\check@arrow{\ncdiag@ii}}
\def\ncdiag@ii#1#2{%
  \nc@object{Open}{#1}{#2}{1.5}{\ncangles@iii \psk@lineAngle\space \tx@NCDiag}}
\def\pcdiag{\pst@object{pcdiag}}
\def\pcdiag@i{\pc@object\ncdiag@ii}
\def\tx@NCDiagg{NCDiagg }
\def\ncdiagg{\pst@object{ncdiagg}}
\def\ncdiagg@i{\check@arrow{\ncdiagg@ii}}
\def\ncdiagg@ii#1#2{%
  \nc@object{Open}{#1}{#2}{.5}{\ncangles@iii \psk@lineAngle\space \tx@NCDiagg}}
\def\pcdiagg{\pst@object{pcdiagg}}
\def\pcdiagg@i{\pc@object\ncdiagg@ii}
\def\tx@NCLoop{NCLoop }
\define@key[psset]{pst-node}{loopsize}{\pst@getlength{#1}\psk@loopsize}
\psset[pst-node]{loopsize=1cm}
\def\ncloop{\pst@object{ncloop}}
\def\ncloop@i{\check@arrow{\ncloop@ii}}
\def\ncloop@ii#1#2{%
\nc@object{Open}{#1}{#2}{2.5}%
{\ncangles@iii /loopsize \psk@loopsize def \tx@NCLoop}}
\def\pcloop{\pst@object{pcloop}}
\def\pcloop@i{\pc@object\ncloop@ii}
\def\tx@NCCircle{NCCircle }
\def\nccircle{\pst@object{nccircle}}
\def\nccircle@i{\check@arrow{\nccircle@ii}}
\def\nccircle@ii#1#2{%
\pssetlength\pst@dima{#2}%
\nc@object{Open}{#1}{#1}{.5}{%
/AngleA \psk@angleA def
/r \pst@number\pst@dima def
\tx@NCCircle \psarc@v end}}
\def\tx@NCBox{NCBox }
\def\ncbox{\pst@object{ncbox}}
\def\ncbox@i{\check@arrow{\ncbox@ii}}
\def\ncbox@ii#1#2{%
\def\pst@linetype{2}%
\nc@object{Closed}{#1}{#2}{.5}{%
tx@Dict begin \psline@iii pop end
\psk@boxheight \psk@boxdepth
\tx@NCBox}}
\def\pcbox{\pst@object{pcbox}}
\def\pcbox@i{\pc@object\ncbox@ii}
\def\tx@NCArcBox{NCArcBox }
\define@key[psset]{pst-node}{boxheight}[0.4cm]{\pst@getlength{#1}\psk@boxheight}
\define@key[psset]{pst-node}{boxdepth}[0.4cm]{\pst@getlength{#1}\psk@boxdepth}
\define@key[psset]{pst-node}{boxsize}[0.4cm]{%
  \pst@getlength{#1}\psk@boxheight%  
  \let\psk@boxdepth\psk@boxheight}
\psset[pst-node]{boxsize=0.4cm}
\def\ncarcbox{\pst@object{ncarcbox}}
\def\ncarcbox@i{\check@arrow{\ncarcbox@ii}}
\def\ncarcbox@ii#1#2{%
\def\pst@linetype{1}%
\nc@object{Closed}{#1}{#2}{.5}{%
\psk@arcangleA \psk@boxheight \psk@boxdepth \pst@number\pslinearc
\tx@NCArcBox}}
\def\pcarcbox{\pst@object{pcarcbox}}
\def\pcarcbox@i{\pc@object\ncarcbox@ii}
\def\tx@Tfan{Tfan }
\begingroup
\catcode`\:=13
\gdef\pst@activerot{\def:{\string:}}
\endgroup
\define@key[psset]{pst-node}{nrot}[0]{%
  \begingroup
  \pst@activerot
  \pst@expandafter{\@ifnextchar:{\psset@@nrot}{\psset@@rot}}{#1}\@nil
  \global\let\pst@tempg\psk@rot
  \endgroup
  \let\psk@nrot\pst@tempg}
\def\psset@@nrot:#1\@nil{%
  \psset@@rot#1\@nil
  \edef\psk@rot{NAngle \ifx\psk@rot\@empty\else\psk@rot add \fi}}
\psset[pst-node]{nrot=0}
\def\tx@LPutCoor{LPutCoor }
\def\tx@LPut{LPut }
\define@key[psset]{pst-node}{npos}[{}]{%
  \def\pst@tempa{#1}%
  \ifx\pst@tempa\@empty\def\psk@npos{\npos@default}\else\pst@checknum{#1}\psk@npos\fi}
\psset[pst-node]{npos=}
\def\ncput{\pst@object{ncput}}
\def\ncput@i{\pst@killglue\pst@makebox{\ncput@ii}}
\def\ncput@ii{%
  \begingroup%
  \use@par%
  \if@star\pst@starbox\fi%
  \pst@makesmall\pst@hbox%
  \pst@rotate\psk@nrot\pst@hbox%
  \ncput@iii%
  \endgroup%
  \pst@shortput}
\def\ncput@iii{%
  \leavevmode%
  \hbox{%
    \pst@Verb{
      \pst@nodedict
      /t \psk@npos def
      \tx@LPut
      end
      \tx@PutBegin}%
    \box\pst@hbox%
    \pst@Verb{\tx@PutEnd}}}
\def\naput{\pst@object{naput}}
\def\naput@i{\pst@killglue\pst@makebox{\naput@ii{NAngle 90 add}}}
\def\naput@ii#1{%
  \begingroup
  \use@par
  \if@star\pst@starbox\fi
  \def\psk@refangle{#1 }%
  \let\psk@rot\psk@nrot
  \pst@Verb{ 
    gsave  STV CP T /ps@refangle {#1 } def 
    /ps@rot { \psk@rot } def grestore }%ADDED (MJS)
  \uput@vii
  {exch pop add a \tx@PtoC h1 add exch w1 add exch }%
  {tx@Dict /NCLW known { NCLW add } if }%
  \ncput@iii
  \endgroup
  \pst@shortput}
\def\nbput{\pst@object{nbput}}
\def\nbput@i{\pst@killglue\pst@makebox{\naput@ii{NAngle 90 sub}}}
\define@key[psset]{pst-node}{tpos}[0.5]{%
  \pst@checknum{#1}\psk@tpos
  \ifdim\psk@tpos \p@<\z@
    \def\psk@tpos{.5}%
    \@pstrickserr{Bad `tpos' value: `#1'. Must be 0<tpos<1}\@ehpa
  \else
    \ifdim\psk@tpos \p@>\p@
      \def\psk@tpos{.5}%
      \@pstrickserr{Bad `tpos' value: `#1'. Must be 0<tpos<1}\@ehpa%
    \fi%
  \fi}
\psset[pst-node]{tpos=0.5}
\def\nlput{\pst@object{nlput}}
\def\nlput@i(#1)(#2)#3#4{%
  \begin@SpecialObj
  \psLDNode(#1)(#2){#3}{temp@lnput}
  \pcline[linestyle=none](#1)(temp@lnput)%
  \ncput[npos=1]{#4}%
  \end@SpecialObj}
\def\tvput{\pst@object{tvput}}
\def\tvput@i{\pst@makebox{\psput@tput{H}{1}}}
\def\tlput{\pst@object{tlput}}
\def\tlput@i{\pst@makebox{\psput@tput{H}{true}}}
\def\trput{\pst@object{trput}}
\def\trput@i{\pst@makebox{\psput@tput{H}{false}}}
\def\thput{\pst@object{thput}}
\def\thput@i{\pst@makebox{\psput@tput{V}{1}}}
\def\taput{\pst@object{taput}}
\def\taput@i{\pst@makebox{\psput@tput{V}{true}}}
\def\tbput{\pst@object{tbput}}
\def\tbput@i{\pst@makebox{\psput@tput{V}{false}}}
\def\tx@HPutAdjust{HPutAdjust }
\def\tx@VPutAdjust{VPutAdjust }
\def\psput@tput#1#2{%
  \begingroup
  \use@par
  \pst@tputmakesmall
  \leavevmode
  \hbox{%
    \pst@Verb{%
      \pst@nodedict
      /t \psk@tpos \pst@tposflip def
      tx@NodeDict /HPutPos known
        { #1PutPos }
        { CP /Y exch def /X exch def /NAngle 0 def /NCLW 0 def }
      ifelse
      /Sin NAngle sin def
      /Cos NAngle cos def
      /s \pst@number\pslabelsep NCLW add def
      /l \pst@number\pst@dima def
      /r \pst@number\pst@dimb def
      /h \pst@number\pst@dimc def
      /d \pst@number\pst@dimd def
      \ifnum1=0#2 \else
        /flag #2 def
        \csname tx@#1PutAdjust\endcsname
      \fi
      \tx@LPutCoor
      end
      \tx@PutBegin}%
    \box\pst@hbox
    \pst@Verb{\tx@PutEnd}}%
  \endgroup
  \pst@shortput}
\def\pst@tposflip{}
\def\pst@tputmakesmall{%
\pst@dima=\wd\pst@hbox
\divide\pst@dima 2
\pst@dimg=\psk@href\pst@dimg
\pst@dimb\pst@dima
\advance\pst@dima\pst@dimg % leftsize
\advance\pst@dimb-\pst@dimg % rightsize
\pst@dimd=\psk@vref\relax
\pst@dimc=\ht\pst@hbox
\advance\pst@dimc-\pst@dimd % height
\advance\pst@dimd\dp\pst@hbox % depth
\setbox\pst@hbox=\hbox to\z@{%
\kern-\pst@dima\vbox to\z@{\vss\box\pst@hbox\vskip-\pst@dimd}\hss}}
\def\MakeShortNab#1#2{%
  \def\pst@shortput@nab{%
    \def\pst@tempg{\next}%
    \ifx#1\next
      \let\pst@tempg\naput
    \else
      \ifx#2\next
        \let\pst@tempg\nbput
      \else
        \ifx\@sptoken\next
          \let\pst@tempg\pst@shortput
        \fi
      \fi
    \fi
    \pst@tempg}}
\MakeShortNab{^}{_}
\def\MakeShortTablr#1#2#3#4{%
  \def\pst@shortput@tablr{%
    \def\pst@tempg{\next}%
    \ifx#1\next
      \let\pst@tempg\taput
    \else
      \ifx#2\next
        \let\pst@tempg\tbput
      \else
        \ifx#3\next
          \let\pst@tempg\tlput
        \else
          \ifx#4\next
            \let\pst@tempg\trput
          \else
            \ifx\@sptoken\next
              \let\pst@tempg\pst@shortput
            \fi
          \fi
        \fi
      \fi
    \fi
    \pst@tempg}}
\MakeShortTablr{^}{_}{<}{>}
\def\MakeShortTab#1#2{%
  \def\pst@shortput@tab{%
    \def\pst@tempg{\next}%
    \ifx#1\next
      \def\pst@tempg{%
        \@nameuse{%
          t\ifodd\psk@treemode\ifpstreeflip b\else a\fi
          \else\ifpstreeflip r\else l\fi\fi put}}%
    \else
      \ifx#2\next
        \def\pst@tempg{%
          \@nameuse{%
            t\ifodd\psk@treemode\ifpstreeflip a\else b\fi
            \else\ifpstreeflip l\else r\fi\fi put}}%
      \else
        \ifx\@sptoken\next
          \let\pst@tempg\pst@shortput
        \fi
      \fi
    \fi
    \pst@tempg}}
\MakeShortTab{^}{_}
\define@key[psset]{pst-node}{shortput}[none]{%
  \def\pst@tempg{#1}%
  \ifx\pst@tempg\@none
    \let\pst@shortput\ignorespaces
  \else
    \@ifundefined{pst@shortput@#1}%
     {\@pstrickserr{Bad short put: `#1'}\@ehpa}%
     {\edef\pst@shortput{\noexpand\afterassignment\expandafter\noexpand
      \csname pst@shortput@#1\endcsname\noexpand\let\noexpand\next}}%
  \fi}
\psset[pst-node]{shortput=none}
\def\lput{\def\pst@par{}\pst@ifstar{\@ifnextchar[{\lput@i}{\lput@ii}}}
\def\lput@i[#1]{\addto@par{ref=#1}\lput@ii}
\def\lput@ii{\@ifnextchar({\lput@iv}{\lput@iii}}
\def\lput@iii#1{\addto@par{nrot=#1}\@ifnextchar({\lput@iv}{\ncput@i}}
\def\lput@iv(#1){\addto@par{npos=#1}\ncput@i}
\def\mput{\def\pst@par{}\pst@ifstar{\@ifnextchar[{\mput@i}{\ncput@i}}}
\def\mput@i[#1]{\addto@par{ref=#1}\ncput@i}
\def\Lput{\def\pst@par{}\pst@ifstar{\@ifnextchar[{\Lput@ii}{\Lput@i}}}
\def\Lput@i#1{\addto@par{labelsep=#1}\Lput@ii}
\def\Lput@ii[#1]{\addto@par{ref={#1}}\@ifnextchar({\Lput@iv}{\Lput@iii}}
\def\Lput@iii#1{\addto@par{nrot={#1}}\@ifnextchar({\Lput@iv}{\Lput@v}}
\def\Lput@iv(#1){\addto@par{npos=#1}\Lput@v}
\def\Lput@v{\pst@killglue\pst@makebox{\Lput@vi}}
\def\Lput@vi{%
\begingroup
\use@par
\if@star\pst@starbox\fi
\Rput@vi
\pst@makesmall\pst@hbox
\ifx\psk@rot\@empty\else\pst@rotate{ps@rot }\pst@hbox\fi% (MJS)
\ncput@iii
\endgroup
\pst@shortput}
\def\Mput{\def\pst@par{}\pst@ifstar{\@ifnextchar[{\Mput@ii}{\Mput@i}}}
\def\Mput@i#1{\addto@par{labelsep=#1}\Mput@ii}
\def\Mput@ii[#1]{\addto@par{ref={#1}}\Lput@v}
\def\aput@#1{\def\pst@par{}\pst@ifstar{\@ifnextchar[{\aput@i#1}{\aput@ii#1}}}
\def\aput@i#1[#2]{\addto@par{labelsep=#2}\aput@ii#1}
\def\aput@ii#1{\@ifnextchar({\aput@iv#1}{\aput@iii#1}}
\def\aput@iii#1#2{\addto@par{nrot=#2}\@ifnextchar({\aput@iv#1}{#1}}
\def\aput@iv#1(#2){\addto@par{npos=#2}#1}
\def\aput{\aput@\naput@i}
\def\bput{\aput@\nbput@i}
\def\Aput{\def\pst@par{}\pst@ifstar{\@ifnextchar[{\Aput@i}{\naput@i}}}
\def\Aput@i[#1]{\addto@par{labelsep=#1}\naput@i}
\def\Bput{\def\pst@par{}\pst@ifstar{\@ifnextchar[{\Bput@i}{\nbput@i}}}
\def\Bput@i[#1]{\addto@par{labelsep=#1}\nbput@i}
\def\node@coor#1;#2\@nil{%  for normal nodes (name)
  \pst@getnode{#1}\pst@tempg
  \edef\pst@coor{%
    \pst@nodedict
    tx@NodeDict \pst@tempg known
    \pslbrace \pst@tempg load \tx@GetCenter \psrbrace
    \pslbrace 0 0 \psrbrace ifelse
    end }}
\def\Node@coor[#1]#2;#3\@nil{%  for special nodes ([...]{node}node)
\begingroup
\psset{angle=0,#1}%   angle=0, to prevent problems if angle is set globally
\@ifnextchar\bgroup{\Node@@@coor}% we have an additional node [...]{node}
                   {\Node@@coor}#2\@nil% we have  [...]node
\endgroup
\let\pst@coor\pst@tempg}
\def\Node@@coor#1\@nil{%
\pst@getnode{#1}\pst@tempg
\xdef\pst@tempg{%
\pst@nodedict
tx@NodeDict \pst@tempg known
  { \psk@nodesepA \psk@angleA 
    \pst@tempg load \psk@nodeseptypeA \tx@GetEdge
    \psk@offsetA \psk@angleA \tx@AddOffset
    \pst@tempg load \tx@GetCenter
    3 -1 roll add 3 1 roll add exch }
  { CP } ifelse end }}
\def\Node@@@coor#1{%   [...]{#1}node
\pst@@getcoor{#1}%
\def\psk@angleA{%
  \pst@tempg load \tx@GetCenter \pst@coor
  3 -1 roll sub 3 1 roll sub neg \tx@Atan \psk@angleB add
  }%
\Node@@coor}
\def\nput{\pst@object{nput}}
\def\nput@i#1#2{\pst@killglue\pst@makebox{\nput@ii{#1}{#2}}}
\def\nput@ii#1#2{%
  \begingroup
  \use@par
  \if@star\pst@starbox\fi%
  \psset[pstricks]{refangle=#1}%
  \let\psk@angleA\psk@refangle
  \edef\psk@nodesepA{\pst@number\pslabelsep}%
  \def\psk@nodeseptypeA{0 }%
  \pslabelsep\z@
  \uput@vi
  \Node@@coor#2\@nil
  \let\pst@coor\pst@tempg
  \leavevmode
  \psput@special\pst@hbox
  \endgroup
  \ignorespaces}
\newcount\psrow
\newcount\pscol
\newcount\psmatrixcnt
\newskip\psrowsep
\newskip\pscolsep
\define@key[psset]{pst-node}{colsep}[1.5cm]{\pssetlength\pscolsep{#1}}
\define@key[psset]{pst-node}{rowsep}[1.5cm]{\pssetlength\psrowsep{#1}}
\psset[pst-node]{colsep=1.5cm}
\psset[pst-node]{rowsep=1.5cm}
\newif\ifpsmatrix
\ifx\mscount\@undefined\let\mscount\@multicnt\fi
\def\psmatrix{\begingroup{\ifnum0=`}\fi % Don't want to expand any &.
  \@ifnextchar[{\psmatrix@i}{\ifnum0=`{\fi}{}\psmatrix@ii}}
\def\psmatrix@i[#1]{%
  \ifnum0=`{\fi}{}%
  \psset{#1}%
  \psmatrix@ii}
\def\psmatrix@ii{%
  \KillGlue
  \edef\psm@beginmath{%
    \ifmmode$\m@th\ifinner\textstyle\else\displaystyle\fi\fi}%
  \edef\psm@endmath{\ifmmode$\fi}%
  \let\\\psm@cr
  \advance\psmatrixcnt by \@ne
  \def\psm@thenode{M-\the\psmatrixcnt-\the\psrow-\the\pscol}%
  \tabskip\z@
  \psrow=\@ne
  \pscol\z@
  \psset{shortput=tablr}%
  \leavevmode
  \vbox\bgroup\halign\bgroup&%
  \begingroup
  \global\advance\pscol by \@ne
  \csname psrowhook\romannumeral\psrow\endcsname
  \csname pscolhook\romannumeral\pscol\endcsname
  \psm@beginnode##\psm@endnode\endgroup
  \cr}
\def\endpsmatrix{%
  \crcr\egroup\unskip\egroup
  \endgroup}
\def\psm@cr{{\ifnum0=`}\fi\ps@ifnextchar[{\psm@@cr}{\psm@@@cr{}}}
\def\psm@@cr[#1]{\psm@@@cr{\vskip#1\relax}}
\def\psm@@@cr#1{%
  \ifnum0=`{\fi}{}\cr
  \noalign{%
  \global\advance\psrow 1
  \global\pscol\z@
  \vskip\psrowsep
  #1}}
\def\psm@beginnode{%
  \@ifnextchar\psm@endnode
    {\let\psm@endnode@i\relax\setbox\pst@hbox=\hbox{}}%
    {\pst@object{psm@beginnode}}}
\def\psm@beginnode@i{%
  \setbox\pst@hbox=\hbox\bgroup
  \psm@beginmath
  \begingroup
  \ignorespaces}
\def\psm@endnode@i{%
  \unskip
  \endgroup
  \psm@endmath
  \egroup
  \use@par
  \@psttrue}
\def\psm@endnode{%
  \@pstfalse
  \psm@endnode@i
  \ifnum\pscol>1\relax \pshskip\pscolsep \fi
  \psk@mnodesize
  \hfil
  \Pst@nodealigntrue
  \if@pst\csname mnode@\psk@mnode\endcsname
  \else\csname mnode@\psk@emnode\endcsname\fi
  \psk@mcol
  \psk@@mnodesize}
\def\psspan#1{\global\mscount#1\relax\pstloop\ifnum\mscount>\@ne\sp@n\repeat}
\def\pstloop#1\repeat{\gdef\pstiterate{#1\relax\expandafter\pstiterate\fi}%
  \pstiterate
  \let\pstiterate\relax}
\define@key[psset]{pst-node}{name}[\relax]{\pst@getnode{#1}\psk@name}
\let\psk@name\relax
\define@key[psset]{pst-node}{mcol}[c]{%
  \ifx r#1\relax\let\psk@mcol\relax\else
    \ifx l#1\relax\let\psk@mcol\hfill\else
    \let\psk@mcol\hfil\fi\fi}
\psset[pst-node]{mcol=c}
\define@key[psset]{pst-node}{mnodesize}[-1pt]{%
  \pssetlength\pst@dimg{#1}%
  \ifdim\pst@dimg<\z@
    \let\psk@mnodesize\relax
    \let\psk@@mnodesize\relax
  \else
    \edef\psk@mnodesize{\noexpand\hbox to\number\pst@dimg sp\noexpand\bgroup}%
    \let\psk@@mnodesize\egroup
  \fi}
\psset[pst-node]{mnodesize=-1pt}
\def\mnode@R{\rnode@iii\Rnode@ii{\psm@thenode}}
\def\mnode@r{\rnode@iii\rnode@iv{\psm@thenode}}
\def\mnode@oval{\ovalnode@ii{\psm@thenode}}
\def\mnode@tri{\trinode@ii{\psm@thenode}}
\def\mnode@dia{\dianode@ii{\psm@thenode}}
\def\mnode@C{{\Pst@nodealigntrue\cnode@ii(\z@,\z@){\psk@radius}{\psm@thenode}}}
\def\mnode@f{{\Pst@nodealigntrue\fnode@ii(\z@,\z@){\psm@thenode}}}
\def\mnode@circle{\circlenode@ii{\psm@thenode}}
\def\mnode@Circle{\Circlenode@ii{\psm@thenode}}
\def\mnode@p{\pnode(\z@,\z@){\psm@thenode}}
\def\mnode@dot{\dotnode@ii(\z@,\z@){\psm@thenode}}
\def\mnode@none{\box\pst@hbox}
\define@key[psset]{pst-node}{mnode}[R]{%
  \@ifundefined{mnode@#1}%
    {\@pstrickserr{\string\psmatrix\space node `#1' not defined.}\@ehpa}%
    {\edef\psk@mnode{#1}}}
\define@key[psset]{pst-node}{emnode}[none]{%
  \@ifundefined{mnode@#1}%
    {\@pstrickserr{\string\psmatrix\space node `#1' not defined.}\@ehpa}%
    {\edef\psk@emnode{#1}}}
\psset[pst-node]{mnode=R,emnode=none}
\def\nccoil{\pst@object{nccoil}}
\def\nccoil@i{\check@arrow{\nccoil@ii}}
\def\nccoil@ii#1#2{\nc@object{Open}{#1}{#2}{.5}{
  \tx@NCCoor
  tx@Dict begin
  4 2 roll
  \psk@coilwidth \pscoilheight
  \psk@coilarmA \psk@coilarmB
  \psk@coilaspect \psk@coilinc
  \pst@coildict \tx@Coil end
  end}%
}
\def\nczigzag{\pst@object{nczigzag}}
\def\nczigzag@i{\check@arrow{\nczigzag@ii}}
\def\nczigzag@ii#1#2{\nc@object{Open}{#1}{#2}{.5}{
  \tx@NCCoor
  tx@Dict begin
  4 2 roll
  \pscoilheight
  \psk@coilwidth
  \psk@coilarmA
  \psk@coilarmB
  \pst@coildict \tx@ZigZag end
  \psline@iii
  \tx@Line
  end}%
}
\def\psGetNodeCenter#1{ tx@NodeDict begin /N@#1 load GetCenter end % x y on stack in system coor
  \pst@number\psyunit div /#1.y exch def 	% /#1.y in user coor
  \pst@number\psxunit div /#1.x exch def }	% /#1.x in user coor
\def\psGetEdgeA#1#2{
  tx@NodeDict begin \psk@offsetA \psk@offsetB neg 
    \psk@nodesepA \psk@nodesepB 0 0 
    /N@#1 /N@#2 InitNC { NCCoor } if pop pop \tx@UserCoor end}
\def\psGetEdgeB#1#2{
  tx@NodeDict begin \psk@offsetA \psk@offsetB neg 
    \psk@nodesepA \psk@nodesepB 0 0 
    /N@#1 /N@#2 InitNC { NCCoor } if 4 2 roll pop pop \tx@UserCoor end}
\def\ncbarr{\pst@object{ncbarr}}
\def\ncbarr@i#1#2{%
  \begingroup
  \use@par%
  \psLNode(#1)(#2){0.5}{barr@tempNode}%
  \pst@dimc=\psk@angleA pt
  \pst@dimd=180pt
  \ifdim\pst@dimc=\z@\else\ifdim\pst@dimc=\pst@dimd\else\psset{angleA=0}\fi\fi
  \ncbar[arrows=-]{#1}{barr@tempNode}
  \ifdim\psk@angleA pt=\z@\relax
    \ncbar[angleA=180,angleB=180]{barr@tempNode}{#2}
  \else\ncbar[angleA=0,angleB=0]{barr@tempNode}{#2}\fi%
  \endgroup%
}
\def\psLNode(#1)(#2)#3#4{%
  \pst@getcoor{#1}\pst@tempA%
  \pst@getcoor{#2}\pst@tempB%
  \pnode(!
    \pst@tempA /YA exch \pst@number\psyunit div def
    /XA exch \pst@number\psxunit div def
    \pst@tempB /YB exch \pst@number\psyunit div def
    /XB exch \pst@number\psxunit div def
    /dx XB XA sub def
    /dy YB YA sub def
    XA dx #3\space mul add YA dy #3\space mul add){#4}}
\def\psLCNode(#1)#2(#3)#4#5{%
  \pst@getcoor{#1}\pst@tempA%
  \pst@getcoor{#3}\pst@tempB%
  \pnode(!
    \pst@tempA /YA exch \pst@number\psyunit div def
    /XA exch \pst@number\psxunit div def
    \pst@tempB /YB exch \pst@number\psyunit div def
    /XB exch \pst@number\psxunit div def
    XA #2\space mul XB #4\space mul add
    YA #2\space mul YB #4\space mul add){#5}}
\def\psLDNode(#1)(#2)#3#4{%  
  \pst@getcoor{#1}\pst@tempA%
  \pst@getcoor{#2}\pst@tempB%
  \pssetlength\pst@dimb{#3}%
  \pnode(!%
    \pst@tempA /YA exch \pst@number\psyunit div def
    /XA exch \pst@number\psxunit div def
    \pst@tempB /YB exch \pst@number\psyunit div def
    /XB exch \pst@number\psxunit div def
    /dx XB XA sub def
    /dy YB YA sub def
    /angle dy dx Atan def
    /linelength \pst@number\pst@dimb \pst@number\psunit div def
    XA linelength angle cos mul add YA linelength angle sin mul add ){#4}%
}
\def\psRelNode{\pst@object{psRelNode}}
\def\psRelNode@i(#1)(#2)#3#4{{% A - B - factor - node name
  \use@par
  \pst@getcoor{#1}\pst@tempA%
  \pst@getcoor{#2}\pst@tempB%
  \pnode(!
    \pst@tempA /YA exch \pst@number\psyunit div def
    /XA exch \pst@number\psxunit div def
    \pst@tempB /YB exch \pst@number\psyunit div def
    /XB exch \pst@number\psxunit div def
    /AlphaStrich \psk@angleA\space def
    /unit \pst@number\psyunit \pst@number\psxunit div def % yunit/xunit
    /dx XB XA sub  def
    /dy YB YA sub \ifPst@trueAngle\space unit mul \fi\space def
    /laenge dy dup mul dx dup mul add sqrt #3 mul def
    /Alpha dy dx atan def 
    /beta Alpha AlphaStrich add def
    laenge beta cos mul XA add
    laenge beta sin mul \ifPst@trueAngle\space unit div \fi\space YA add ){#4}%
}\ignorespaces}
\define@cmdkeys[psset]{pstricks-add}[PSTPSPNk@]{% Christophe Jorssen 2007
  blName,bcName,brName,
  clName,ccName,crName,
  tlName,tcName,trName}[]{}%
\psset[pstricks-add]{%
  blName=PSPbl,bcName=PSPbc,brName=PSPbr,
  clName=PSPcl,ccName=PSPcc,crName=PSPcr,
  tlName=PSPtl,tcName=PSPtc,trName=PSPtr}
\def\psDefPSPNodes{\def\pst@par{}\pst@object{psDefPSPNodes}}
\def\psDefPSPNodes@i{%
  \pst@killglue
  \begingroup
  \use@par
  \expandafter\psDefPSPNodes@ii\pic@coor}
\def\psDefPSPNodes@ii(#1)(#2)(#3){%
    \pnode(#1){PSPN@temp}\pnode([angle=45]PSPN@temp){\PSTPSPNk@blName}
    \pnode(#3){PSPN@temp}\pnode([angle=-135]PSPN@temp){\PSTPSPNk@trName}
    \pnode(\PSTPSPNk@blName|\PSTPSPNk@trName){\PSTPSPNk@tlName}
    \pnode(\PSTPSPNk@trName|\PSTPSPNk@blName){\PSTPSPNk@brName}
    \ncline[linestyle=none]{\PSTPSPNk@blName}{\PSTPSPNk@tlName}
    \ncput[npos=.5]{\pnode{\PSTPSPNk@clName}}
    \ncline[linestyle=none]{\PSTPSPNk@blName}{\PSTPSPNk@brName}
    \ncput[npos=.5]{\pnode{\PSTPSPNk@bcName}}
    \pnode(\PSTPSPNk@brName|\PSTPSPNk@clName){\PSTPSPNk@crName}
    \pnode(\PSTPSPNk@bcName|\PSTPSPNk@trName){\PSTPSPNk@tcName}
    \pnode(\PSTPSPNk@bcName|\PSTPSPNk@clName){\PSTPSPNk@ccName}
  \endgroup
  \ignorespaces}
\def\psDefBoxNodes#1#2{\rnode[tl]{#1:tl}{\rnode[Bl]{#1:Bl}{\rnode[tr]{#1:tr}{%
\rnode[bl]{#1:bl}{\rnode[Br]{#1:Br}{\rnode[br]{#1:br}{#2}}}}}}%
\pnode(!\psGetNodeCenter{#1:bl}
          \psGetNodeCenter{#1:tl} 
          #1:bl.x #1:tl.x add 2 div #1:bl.y #1:tl.y add 2 div ){#1:Cl}%
\pnode(!\psGetNodeCenter{#1:tr}
          \psGetNodeCenter{#1:br} 
          #1:tr.x #1:br.x add 2 div #1:tr.y #1:br.y add 2 div ){#1:Cr}%
\pnode(!\psGetNodeCenter{#1:Cl}
          \psGetNodeCenter{#1:Cr} 
          #1:Cl.x #1:Cr.x add 2 div #1:Cl.y #1:Cr.y add 2 div ){#1:C}%
\pnode(!\psGetNodeCenter{#1:Br}
          \psGetNodeCenter{#1:Bl} 
          #1:Br.x #1:Bl.x add 2 div #1:Br.y #1:Bl.y add 2 div ){#1:BC}%
\pnode(!\psGetNodeCenter{#1:tr}
          \psGetNodeCenter{#1:tl} 
          #1:tr.x #1:tl.x add 2 div #1:tr.y #1:tl.y add 2 div ){#1:tC}%
\pnode(!\psGetNodeCenter{#1:br}
          \psGetNodeCenter{#1:bl} 
          #1:br.x #1:bl.x add 2 div #1:br.y #1:bl.y add 2 div ){#1:bC}}%
\newcount\pst@args%used in several macros
\newcount\num@pts
\newcount\pst@argcnt% for use in parsing node expressions
\def\PST@root{}
\let\pst@next\relax
\def\my@tempA{}
\def\my@tempB{}
\def\my@tempC{}
\def\my@tempD{}
\def\my@next{}
\newif\if@paren%
\newif\if@equal%
\newif\if@colon%
\newif\ifshow
\def\plussign{+}\def\minussign{-}
\def\defaultvalue#1#2{%#1 is a command, #2 is a value, possibly a command
  \ifdefined#1\ifx#1\@empty\xdef#1{#2}\fi\else\xdef#1{#2}\fi}%
\def\testAlg#1|#2\@nil{%
\ifx\relax#2\relax%
   \let\my@next\psparnode\xdef\my@tempD{}%
\else%
   \let\my@next\algparnode\xdef\my@tempD{A}% algebraic
\fi}%
\def\trim #1{\expandafter\trim@\expandafter{#1 }#1}%
\def\trim@ #1{\trim@@ @#1 @ #1 @ @@}%
\def\trim@@ #1@ #2@ #3@@{\trim@@@\empty #2 @}%
\def\unbrace#1{#1}%
\unbrace{\def\trim@@@ #1 } #2@#3{\expandafter\def%
  \expandafter #3\expandafter {#1}}%
\def\hasparen#1(#2\@nil{%check if expression contains a (--call with \hasparen#1(\@nil
  \ifx\relax#2\relax \@parenfalse \else \@parentrue\fi}%
\def\hasequal#1=#2\@nil{%check if expression contains a =--call with \hasequal#1=\@nil
  \ifx\relax#2\relax \@equalfalse \else \@equaltrue\fi
  \hascolon#2:\@nil}%
\def\hascolon#1:#2\@nil{%check if expression contains a :--call with \hascolon#1:\@nil
\ifx\relax#2\relax \@colonfalse \else \@colontrue\fi}%
\def\equalwhat#1=#2:#3\@nil{{#2}{#3}}%
\def\parsenodexn#1(#2)#3\@nil{%
  \def\coeffA{#1}\edef\nodeA{#2}%
  \trim\coeffA%
  \ifx\nodeA\@empty\else%
    \pnode(#2){@@TMP}%
    \ifx\coeffA\@empty\def\coeffA{1}\else%
      \ifx\coeffA\plussign\def\coeffA{1}\else\ifx\coeffA\minussign\def\coeffA{-1}\fi\fi\fi% 
  \edef\cmd{\noexpand\psLCNode(@TMP\the\pst@argcnt){1}(@@TMP){\coeffA}{@TMP}}%
  \cmd%
  \advance\pst@argcnt by \@ne%
  \pnode(@TMP){@TMP\the\pst@argcnt}%
  \parsenodexn#3\@nil%
  \fi}%
\def\normalvec(#1)#2{%
  \psRelNodeVar(0,0)(#1)(0,1){#2}}%
\def\curvepnode#1#2#3{%
  \edef\my@tempA{#2}% x(t) y(t) expanded
  \expandafter\testAlg\my@tempA|\@nil\my@next {#1}{#2}{#3}}
\def\psparnode#1#2#3{%
  \pnode(!/t #1 def #2){#3}%
  \pnode(!/t #1 .001 sub def #2 
          /t #1 .001 add def 
           #2 3 -1 roll sub 3 1 roll sub neg 
           2 copy Pyth dup 3 1 roll div 3 1 roll div ){#3tang}}%unit tangent vector at t
\def\algparnode#1#2#3{%
  \pstVerb{tx@Dict begin /Func (#2) AlgParser cvx def end }
  \pnode(!/t #1 def Func){#3}
  \pnode(!/t #1 .001 sub def Func 
          /t #1 .001 add def 
          Func 3 -1 roll sub 3 1 roll sub neg 
          2 copy Pyth dup 3 1 roll div 3 1 roll div ){#3tang}%unit tangent vector at t
}%
\def\nodex#1{%
\expandafter\hasparen#1(\@nil%
\if@paren%it's an expression
  \pnode(0,0){@TMP0}%
  \pst@argcnt=0%
  \expandafter\parsenodexn#1()\@nil%
\else%
  \def\my@tempC{#1}%
  \ifx\my@tempC\@empty\pnode(0,0){@TMP}\else\pnode(#1){@TMP}\fi%
\fi}%\if@paren
\def\nodexn#1#2{%
\expandafter\hasparen#1(\@nil%%%      hv 20130917 use \expandafter
\if@paren%it's an expression
  \pnode(0,0){@TMP0}%
  \pst@argcnt=0%
  \parsenodexn#1()\@nil%
  \pnode(@TMP){#2}%
\else%
  \def\my@tempC{#1}%
  \ifx\my@tempC\@empty\pnode(0,0){#2}\else\pnode(#1){#2}\fi%
\fi}%\if@paren
\def\psxline{\pst@object{psxline}}%
\def\psxline@i{\@ifnextchar({\psxline@iii}{\psxline@ii}}%
\def\psxline@ii#1{%
\addto@par{arrows=#1}%
\psxline@iii}%
\def\psxline@iii(#1)#2#3{{%#1=basepoint, #2,#3 node expressions
\pst@killglue%
\use@par%
\nodexn{#2}{@TMP@a}%
\AplusB(#1)(@TMP@a){@TMP@A}%
\nodexn{#3}{@TMP@a}%
\AplusB(#1)(@TMP@a){@TMP@B}%
\psline(@TMP@A)(@TMP@B)%
}%
\ignorespaces}%
\def\curvepnodes{\pst@object{curvepnodes}}
\def\curvepnodes@i#1#2#3#4{{%optional [plotpoints=xx]
  \pst@killglue
  \use@par
  \edef\my@tempA{#3}% x(t) y(t) expanded
  \expandafter\testAlg\my@tempA|\@nil %
  \pstVerb{% 
	tx@Dict begin % so we can use definitions from tx@Dict
	/t0 #1 def
	/t1 #2 def  
	 t1 t0 sub end \psk@plotpoints div /dt exch def }%
  \pst@cntc=\psk@plotpoints\relax%\psk@plotpoints=plotpoints-1
  \advance\pst@cntc by \@ne\relax %=plotpoints
  \ifx\my@tempD\@empty\pstVerb{tx@Dict begin /Func (#3) cvx def end }%add tx@Dict
  \else\pstVerb{tx@Dict begin /Func (#3 ) AlgParser cvx def end }%
  \fi%
    \multido{\i=0+1}{\pst@cntc}{%
      \pnode(! /t #1 dt \i\space mul add def Func ){#4\i}}% remove t before Func
    \expandafter\xdef \csname #4nodecount\endcsname {\psk@plotpoints}%
    \ifnum\Pst@Debug>0 \typeout{Created nodes #40 .. #4\psk@plotpoints}\fi%
}\ignorespaces}%
\def\fnpnode{\pst@object{fnpnode}}
\def\fnpnode@i#1#2#3{{%optional [algebraic]
  \pst@killglue
  \use@par
  \ifPst@algebraic\pnode(*#1 {#2}){#3}\else\pnode(! /x #1 def x #2){#3}\fi
}\ignorespaces}%
\def\fnpnodes{\pst@object{fnpnodes}}
\def\fnpnodes@i#1#2#3#4{{%optional [algebraic]
\pst@killglue
\use@par
\pst@dima=#1pt \pst@dimb=#2pt \advance\pst@dimb -\pst@dima%
\pst@cnta=\psk@plotpoints \relax %=plotpoints-1
\def\PST@root{#4}
\divide\pst@dimb by \pst@cnta%plotpoint-1 intervals
\pst@cntc=\pst@cnta %
\advance\pst@cntc by 1 \relax %=plotpoints
\ifPst@algebraic 
  \multido{\i=0+1}{\pst@cntc}{\pnode(*{\pst@number\pst@dima} {#3}){#4\i}%  hv 20130713
  \advance\pst@dima \pst@dimb}%
\else
    \multido{\i=0+1}{\pst@cntc}{\pnode(!/x \pst@number\pst@dima\space def x #3){#4\i}%
  \advance\pst@dima \pst@dimb}%
\fi%
  \expandafter\xdef \csname \PST@root nodecount\endcsname {\the\pst@cnta}%
  \ifnum\Pst@Debug>0 \typeout{Created nodes #40 .. #4\the\pst@cnta}\fi%
}\ignorespaces}%
\def\AtoB(#1)(#2)#3{\psLCNodeVar(#1)(#2)(-1,1){#3}}
\def\AplusB(#1)(#2)#3{\psLCNodeVar(#1)(#2)(1,1){#3}}
\def\midAB(#1)(#2)#3{\psLCNodeVar(#1)(#2)(.5,.5){#3}}
\def\psnline{\pst@object{psnline}}%line of nodes
\def\psnline@i{\pst@getarrows{\psnline@ii}}
\def\psnline@ii(#1,#2)#3{{%
\pst@killglue%
\use@par%
\pst@cnta=#2 \relax\advance\pst@cnta by 1
\edef\@tmp{}%
\multido{\i=#1+1}{\pst@cnta}{\xdef\@tmp{\@tmp(#3\i)}}%
\expandafter\psline\@tmp}%
\ignorespaces}%
\def\psnpolygon{\pst@object{psnpolygon}}%polygon of nodes
\def\psnpolygon@i{\pst@getarrows{\psnpolygon@ii}}
\def\psnpolygon@ii(#1,#2)#3{{%
\pst@killglue%
\use@par%
\pst@cnta=#2 \relax\advance\pst@cnta by 1
\edef\@tmp{}%
\multido{\i=#1+1}{\pst@cnta}{\xdef\@tmp{\@tmp(#3\i)}}%
\expandafter\pspolygon\@tmp}%
\ignorespaces}%
\def\psncurve{\pst@object{psncurve}}%line of nodes
\def\psncurve@i{\pst@getarrows{\psncurve@ii}}
\def\psncurve@ii(#1,#2)#3{{%
\pst@killglue%
\use@par%
\pst@cnta=#2 \relax\advance\pst@cnta by 1
\edef\@tmp{}%
\multido{\i=#1+1}{\pst@cnta}{\xdef\@tmp{\@tmp(#3\i)}}%
\expandafter\pscurve\@tmp}%
\ignorespaces}%
\def\psnccurve{\pst@object{psnccurve}}%line of nodes
\def\psnccurve@i{\pst@getarrows{\psnccurve@ii}}
\def\psnccurve@ii(#1,#2)#3{{%
\pst@killglue%
\use@par%
\pst@cnta=#2 \relax\advance\pst@cnta by 1
\xdef\@tmp{}%
\multido{\i=#1+1}{\pst@cnta}{\xdef\@tmp{\@tmp(#3\i)}}%
\expandafter\psccurve\@tmp}%
\ignorespaces}%
\def\shownode(#1){%display node user coords in console
  \pst@killglue%
  \pstVerb{% 
    gsave tx@Dict begin %
    tx@NodeDict /N@#1 known { % known node
      /tmpar [(Node #1: ) <28> () (, ) () <29>] def %
      /str 12 string def 
      STV CP T \psGetNodeCenter{#1}\space 
      tmpar 2 #1.x str cvs put 
      /str 12 string def 
      tmpar 4 #1.y str cvs put 
      tmpar concatstringarray = }%
    {% not known
      (Node #1: (NOT KNOWN)) = %
    } ifelse %
    end grestore }%
  \ignorespaces}%
\def\pnodes@ii#1{\getnodelist{#1}{}}
\def\getnodelist#1#2{%
\pst@args=0 \relax%
\def\PST@root{#1}%
\def\pst@next{#2}% command to perform after reading list
\getnext@Node}%
\def\getnext@Node{\@ifnextchar({\getnext@Node@i}%
  {\advance\pst@args by \m@ne \expandafter\xdef \csname \PST@root nodecount\endcsname {\the\pst@args}
  \ifnum\Pst@Debug>0 \typeout{Created nodes \PST@root0 .. \PST@root\the\pst@args}\fi% 
  \pst@next}%
}%
\def\getnext@Node@i(#1){%
\pnode(#1){\PST@root\the\pst@args}%
\advance\pst@args by \@ne\relax%
\getnext@Node}%
\def\psLCNodeVar(#1)(#2)(#3)#4{%
\pst@getcoor{#1}\my@tempA%
\pst@getcoor{#2}\my@tempB%
\pnode(#3){tmpLCn@de}%
\pnode(!%
  \my@tempA /YA exch \pst@number\psyunit div def
  /XA exch \pst@number\psxunit div def
  \my@tempB /YB exch \pst@number\psyunit div def
  /XB exch \pst@number\psxunit div def %stack now empty
  \psGetNodeCenter{tmpLCn@de}\space
  XA tmpLCn@de.x mul XB tmpLCn@de.y mul add
  YA tmpLCn@de.x mul YB tmpLCn@de.y mul add){tmpLCn@deA}%
\pnode(tmpLCn@deA){#4}%
}%
\def\psRelNodeVar{\pst@object{psRelNodeVar}}
\def\psRelNodeVar@i(#1)(#2)(#3)#4{{% A - B - factor;angle - node name
  \use@par
  \pst@getcoor{#1}\my@tempA%
  \pst@getcoor{#2}\my@tempB%
   \pnode(#3){tmpn@de}%
\pnode(!
  /unit \pst@number\psyunit \pst@number\psxunit div def % yunit/xunit
    \my@tempA /YA exch \pst@number\psyunit div def
    /XA exch \pst@number\psxunit div def
    \my@tempB /YB exch \pst@number\psyunit div YA sub 
    \ifPst@trueAngle\space unit mul \fi\space def
    /XB exch \pst@number\psxunit div XA sub def
    \psGetNodeCenter{tmpn@de}
    XB tmpn@de.x mul YB tmpn@de.y mul sub
    YB tmpn@de.x mul XB tmpn@de.y mul add
    \ifPst@trueAngle\space unit div \fi\space 
   YA add exch XA add exch %x, y coords on stack
    ){#4}%
}}
\def\psRelLineVar{\pst@object{psRelLineVar}}
\def\psRelLineVar@i{\@ifnextchar({\psRelLineVar@iii}{\psRelLineVar@ii}}
\def\psRelLineVar@ii#1{%
  \addto@par{arrows=#1}%
  \psRelLineVar@iii}
\def\psRelLineVar@iii(#1)(#2)(#3)#4{{%
  \pst@killglue
  \use@par
  \psRelNodeVar(#1)(#2)(#3){#4}%
  \psline(#1)(#4)%
}\ignorespaces}
\def\rhombus#1(#2)(#3)#4#5{% \rhombus{m}(B)(D){A}{C} 
\AtoB(#2)(#3){node@P}% P=BD
\pnode(! %compute angle and scale in PS
/tmp \psGetNodeCenter{node@P} node@P.x node@P.y 
Pyth 2 div def %tmp=half-length of BD
/ang tmp #1\space div Acos def %ang=angle from BD to BC & BA
#1\space tmp 2 mul div %scale factor s=m/BD
dup ang cos mul exch ang sin mul ){node@A1}% s cos(ang), s sin(ang)
\pnode(! \psGetNodeCenter{node@A1} node@A1.x node@A1.y neg ){node@A2}%reflect in x axis
\psRelNodeVar(#2)(#3)(node@A1){#4}%
\psRelNodeVar(#2)(#3)(node@A2){#5}%
}%
\def\psrline{\pst@object{psrline}}% relative lines
\def\psrline@i{\@ifnextchar({\psrline@iii}{\psrline@ii}}%
\def\psrline@ii#1{%
\addto@par{arrows=#1}%
\psrline@iii}%
\def\psrline@iii{%
\getnodelist{@tmpnode}{\psrline@iv}%
}%
\def\psrline@iv{%
   \ifnum\pst@args<0\else%do nothing
      \pnode(@tmpnode0){@tmpnodeB0}%
      \multido{\iA=1+1,\iB=0+1}{\pst@args}{%
      \AplusB(@tmpnodeB\iB)(@tmpnode\iA){@tmpnodeB\iA}}%
      \psrline@v%
   \fi%
}%
\def\psrline@v{{%finish up
  \pst@killglue%
  \use@par%
  \xdef\tmp{(@tmpnodeB0)}%
  \multido{\i=1+1}{\pst@args}%
{\xdef\tmp{\tmp(@tmpnodeB\i)}}%
\expandafter\psline\tmp%
}\ignorespaces}%
\def\polyIntersections#1#2(#3)(#4){%
\def\nodenameA{#1}\def\nodenameB{#2}%
\pnode(#3){P@A}\pnode(#4){P@B}%
\@ifnextchar({\polyIntersections@next}{\polyIntersections@ii}%
}%
\def\polyIntersections@ii#1#2{%
\def\root@node{#1}\num@pts=#2 \relax%
\polyIntersections@iii}% 
\def\polyIntersections@next{%read as many points as exist
\def\root@node{P@}\getnodelist{P@}{\num@pts=\pst@args \relax\polyIntersections@iii}%
}%
\def\polyIntersections@iii{%nodes are now XXX0....XXXn, n=num@pts
\pst@cnta=\num@pts \relax\advance\pst@cnta by 1 \relax%
\pstVerb{%
 /xarray \the\pst@cnta\space array def
 /yarray \the\pst@cnta\space array def  tx@Dict begin }%
\multido{\i=0+1}{\the\pst@cnta}{\pstVerb{ \psGetNodeCenter{\root@node\i} xarray \i\space \root@node\i.x put yarray \i\space \root@node\i.y put }}%
\pstVerb{ /tposmin 100 def /tnegmax -100 def %/argposmin 0 def /argposmax 0 def 
\psGetNodeCenter{P@B} \psGetNodeCenter{P@A} 
/dx P@B.x P@A.x sub def 
/dy P@B.y P@A.y sub def 
/lenAB dx dy Pyth def
/oldx xarray 0 get def /oldy yarray 0 get def 
1 1 \the\num@pts\space {/k exch def /newx xarray k get def /newy yarray k get def 
/ddx newx oldx sub def /ddy newy oldy sub def 
/det ddy dx mul ddx dy mul sub def
det abs lenAB ddx ddy Pyth mul .001 mul gt 
{/ac oldx P@A.x sub def /bd oldy P@A.y sub def 
 /tt  ac ddy mul bd ddx mul sub det div def %solve for t value at intersection
 /ss ac  dy mul bd dx mul sub det div def % solve for s value at intersection
ss 0 ge 
   {ss 1 le 
        {tt 0 lt {tt tnegmax gt {/tnegmax tt def} if } {tt tposmin lt {/tposmin tt def} if } ifelse }
    if } % ss 1 le
if }%ss 0 ge
 if %det>
 /oldx newx def /oldy newy def} for end }%
\pnode(! \psGetNodeCenter{P@A} \psGetNodeCenter{P@B} P@B.x P@A.x sub  tposmin mul P@A.x add  P@B.y P@A.y sub tposmin  mul P@A.y add ){\nodenameA}%
\pnode(! \psGetNodeCenter{P@A} \psGetNodeCenter{P@B} P@B.x P@A.x sub tnegmax mul P@A.x add P@B.y P@A.y sub tnegmax mul P@A.y add){\nodenameB}%
}%
\def\actualscale#1 #2 scale{% extract x-scale from, eg,  {2. 2. scale}
#1}
\def\psGetCenter#1{ tx@NodeDict begin /N@#1 load GetCenter end }% x y on stack in system coor
\def\ArrowNotch{\pst@object{ArrowNotch}}
\def\ArrowNotch@i#1#2#3#4{{%
\pst@killglue%
\use@par%
\def\inc{-1}%
\ifx#3<\def\inc{1}\fi% -1 means notch to left of arrowhead
\pstVerb{ 
    1 \psk@arrowinset\space sub \psk@arrowlength\space \psk@arrowsize\space  
    \pst@number\pslinewidth \space mul add  mul mul 
    \expandafter\actualscale\psk@arrowscale \space  mul 
    /hh exch def /hh1 hh .05 sub def }% PS variable hh contains dist from tip to notch of arrow, in pts
\def\root@node{#1}\num@pts=\csname\root@node nodecount\endcsname %
\pst@cntb=\num@pts \advance\pst@cntb by \@ne%actual node count
\pst@cnta=\num@pts \advance\pst@cnta by \thr@@%size of PS array
\pst@cntc=#2 \relax% index of center of circle
\ifnum\pst@cntc>\num@pts \pnode(0,0){#4}\else
\pstVerb{%
/PythSq { dup mul exch dup mul add } def
/PtSub {					%  xA yA xB yB
  3 -1 roll 		% xA xB yB yA
  sub neg		% xA xB yA-yB
  3 1 roll 		% yB-yA xA xB
  sub			% yB-yA xA-xB
  exch                     % xB-xA yA-yB
} def
  /xarray \the\pst@cnta\space array def
  /yarray \the\pst@cnta\space array def  
  tx@Dict begin }% end pstVerb
\multido{\i=0+1,\ib=1+1}{\the\pst@cntb}{\pnode(! \psGetCenter{\root@node\i}\space  % center on stack in system coords, not user coords
yarray \ib\space 3 -1 roll put xarray \ib\space 3 -1 roll put 0 0 ){@tmp}}% end multido
\pnode(! xarray 1 get dup yarray 1 get dup 3 1 roll % x1 y1 x1 y1
xarray 2 get yarray 2 get PtSub  % x1 y1 x1-x2 y1-y2
2 copy Pyth hh div 2 div dup % x1 y1 x1-x2 y1-y2 d d ,d->d/2*hh
3 1 roll % x1 y1 x1-x2 d y1-y2 d
div 3 1 roll div %x1 y1  (y1-y2)/d (x1-x2)/d
3 1 roll %x1  (x1-x2)/d y1  (y1-y2)/d
add 3 1 roll add %  y1-(y2-y1)/d x1-(x2-x1)/d
 xarray 0 3 -1 roll put yarray 0 3 -1 roll put %stack empty
 xarray length 2 sub /topnum exch def 
 xarray topnum get dup yarray topnum get dup 3 1 roll %xn yn xn yn
topnum 1 sub /topnum exch def xarray topnum get yarray topnum get % xn yn xn yn x(n-1) y(n-1)
3 -1 roll sub  neg 3 1 roll sub exch % xn yn (xn-x(n-1)) (yn-y(n-1))
2 copy Pyth hh div 2 div dup % xn yn (xn-x(n-1)) (yn-y(n-1)) d d (d->d/(2*h))
3 1 roll div 3 1 roll div %xn yn (xn-x(n-1))/d (yn-y(n-1))/d
3 -1 roll add 3 1 roll % y(n+1) x(n+1)
topnum 2 add /topnum exch def xarray topnum 3 -1 roll put yarray topnum 3 -1 roll put % empty
 /oldcindex \the\pst@cntc\space 1 add def %position in array
 xarray oldcindex get /xc exch def yarray oldcindex get /yc exch def
/inc \inc\space def %+1 for left facing arrow, else -1 
/cindex oldcindex def 
{cindex inc add /cindex exch def xarray cindex get xc sub yarray cindex get yc sub Pyth dup hh1 gt 
{ exit } if } loop % exit from loop with cindex the first index of an external point--dist on stack
 hh1 .1 add lt { xarray cindex get yarray cindex get } %else within segment
{ xarray cindex inc sub get dup yarray cindex inc sub get dup 4 -1 roll exch 
xarray cindex get yarray cindex get PtSub /dy1 exch def /dx1 exch def dx1 dy1 PythSq /Aterm exch def 
 2 copy xc yc PtSub % x(n-inc) y(n-inc) (x(n-inc)-xc) (y(n-inc)-yc)
 2 copy 2 copy 3 -1 roll mul 3 1 roll mul add hh dup mul sub % x(n-inc) y(n-inc) (y(n-inc)-yc) (x(n-inc)-xc) (x(n-inc)-xc)^2+(y(n-inc)-yc)^2-hh^2
 Aterm div /Cterm exch def  % x(n-inc) y(n-inc)  (y(n-inc)-yc) (x(n-inc)-xc) , Cterm=((x(n-inc)-xc)^2+(y(n-inc)-yc)^2-hh^2)/(Aterm) (<0)
 dx1 dy1 %  x(n-inc) y(n-inc) (y(n-inc)-yc)  (x(n-inc)-xc)  dx1 dy1
 4 1 roll mul 3 1 roll mul add Aterm div /Bterm exch def %   x(n-inc) y(n-inc) , Bterm=( (x(n-inc)-xc)*dx1+(y(n-inc)-yc)*dy1)/Aterm
 Bterm abs neg dup dup mul Cterm sub sqrt add dup /tval exch def
 dup dx1 dy1 4 1 roll mul 3 1 roll mul  % x(n-inc) y(n-inc) t*dx1 t*dy1
 PtSub } ifelse % x y screen coords of arrow notch now on stack---convert to user x y
 \pst@number\psyunit div exch \pst@number\psxunit div exch  %use coords now on stack
){#4}\fi%
\pstVerb { end } %tx@Dict
}\ignorespaces}%
\def\saveDataAsNodes#1#2{%  Filename NodePrefix
  \psLoopIndex=0\relax
  \typeout{Open file #1}%
  \openin7=#1
  \loop
    \read7 to \@Data
    \ifeof7\else
      \ifx\@Data\@empty
      \else
        \pnode(!\@Data){#2\the\psLoopIndex}%
        \typeout{#2\the\psLoopIndex -> \@Data}%
	\advance\psLoopIndex by 1
        \let\@oldData\@Data
      \fi
  \repeat
  \closein7
  \advance\psLoopIndex by -1
  \pnode(!\@oldData){#2Last}%  
}
\catcode`\@=\TheAtCode\relax
 
\csname PSTcoilsLoaded\endcsname
\let\PSTcoilsLoaded 
\ifx\PSTricksLoaded   \else\input pstricks.tex\fi
\ifx\PSTnodeLoaded    \else\fi
\ifx\PSTXKeyLoaded    \else\input pst-xkey \fi
\def\fileversion{1.07}
\def\filedate{2015/05/13}
\edef\TheAtCode{\the\catcode`\@}
\catcode`\@=11
\pst@addfams{pst-coil}
\pstheader{pst-coil.pro}
\edef\pst@theheaders{\pst@theheaders,pst-coil.pro}
\def\pst@CoilDict{tx@CoilDict begin }
\def\tx@CoilLoop  {\pst@CoilDict CoilLoop   end }
\def\tx@Coil      {\pst@CoilDict Coil       end }
\def\tx@AltCoil   {\pst@CoilDict AltCoil    end }
\def\tx@ZigZag    {\pst@CoilDict ZigZag     end }
\def\tx@ZigZagCirc{\pst@CoilDict ZigZagCirc end }
\def\tx@Sin       {\pst@CoilDict Sin        end }
\define@key[psset]{pst-coil}{coilwidth}[1cm]{\pst@getlength{#1}\psk@coilwidth}
\define@key[psset]{pst-coil}{coilheight}[1]{\pst@checknum{#1}\pscoilheight}
\define@key[psset]{pst-coil}{coilarmA}[0.5cm]{\pst@getlength{#1}\psk@coilarmA}
\define@key[psset]{pst-coil}{coilarmB}[0.5cm]{\pst@getlength{#1}\psk@coilarmB}
\define@key[psset]{pst-coil}{coilarm}[0.5cm]{%
  \pst@getlength{#1}\psk@coilarmA%
  \let\psk@coilarmB\psk@coilarmA}
\define@key[psset]{pst-coil}{coilaspect}[45]{\pst@getangle{#1}\psk@coilaspect}
\define@key[psset]{pst-coil}{coilinc}[10]{\pst@getangle{#1}\psk@coilinc}
\psset[pst-coil]{coilaspect=45,coilarm=.5cm,coilheight=1,coilwidth=1cm,coilinc=10}
\def\pscoil{\def\pst@par{}\pst@object{pscoil}}
\def\pscoil@i{\pst@getarrows\pscoil@ii}
\def\pscoil@ii(#1){\@ifnextchar({\pscoil@iii{1}(#1)}{\pscoil@iii{\z@}(0,0)(#1)}}
\def\pscoil@iii#1(#2)(#3){%
  \begin@OpenObj
  \pst@getcoor{#2}\pst@tempa
  \pst@getcoor{#3}\pst@tempb
  \pst@optcp{#1}\pst@tempa
  \addto@pscode{%
    \pst@tempa \pst@tempb
    \psk@coilwidth \pscoilheight
    \psk@coilarmA \psk@coilarmB
    \psk@coilaspect \psk@coilinc
    \tx@Coil }%
    \showpointsfalse
  \end@OpenObj}
\def\psCoil{\def\pst@par{}\pst@object{psCoil}}
\def\psCoil@i#1#2{%
  \begin@AltOpenObj
  \showpointsfalse
  \pst@getangle{#1}\pst@tempa
  \pst@getangle{#2}\pst@tempb
  \addto@pscode{%
    \pst@tempa
    \pst@tempb
    \psk@coilwidth
    \pscoilheight
    \psk@coilaspect
    \psk@coilinc
    \tx@AltCoil  
    \@nameuse{psls@\pslinestyle} }%
  \end@OpenObj}
\define@key[psset]{pst-coil}{bow}[0]{%
  \pst@getlength{#1}\psk@bow% 
  \pst@dimm=\psk@bow pt%
  \pst@absdim{\pst@dimm}{\pst@dimn}%
  \ifdim\pst@dimn<1pt \def\psk@bow{0}\fi}%
\psset[pst-coil]{bow=0}
\def\pszigzag{\def\pst@par{}\pst@object{pszigzag}}
\def\pszigzag@i{\pst@getarrows\pszigzag@ii}
\def\pszigzag@ii(#1){\@ifnextchar({\pszigzag@iii{1}(#1)}{\pszigzag@iii{\z@}(0,0)(#1)}}
\def\pszigzag@iii#1(#2)(#3){%
  \addbefore@par{bow=0}%
  \begin@OpenObj%
  \pst@getcoor{#2}\pst@tempA%
  \pst@getcoor{#3}\pst@tempB%
  \pst@optcp{#1}\pst@tempA%
  \addto@pscode{%
    \pst@tempA
    \pst@tempB
    \pscoilheight
    \psk@coilwidth
    \psk@coilarmA
    \psk@coilarmB 
    \ifdim\psk@bow pt=\z@ \tx@ZigZag \else \psk@bow\space \tx@ZigZagCirc \fi
    \psline@iii
    \tx@Line }%
  \end@OpenObj}
\def\nccoil{\pst@object{nccoil}}
\def\nccoil@i{\check@arrow{\nccoil@ii}}
\def\nccoil@ii#1#2{\nc@object{Open}{#1}{#2}{.5}{%
  \tx@NCCoor
  tx@Dict begin
  4 2 roll
  \psk@coilwidth \pscoilheight
  \psk@coilarmA \psk@coilarmB
  \psk@coilaspect \psk@coilinc
  \tx@Coil 
  end }}
\def\pccoil{\def\pst@par{}\pst@object{pccoil}}
\def\pccoil@i{\pc@object\nccoil@ii}
\def\nczigzag{\pst@object{nczigzag}}
\def\nczigzag@i{\check@arrow{\nczigzag@ii}}
\def\nczigzag@ii#1#2{\nc@object{Open}{#1}{#2}{.5}{%
  \tx@NCCoor
  tx@Dict begin
  4 2 roll
  \pscoilheight
  \psk@coilwidth
  \psk@coilarmA
  \psk@coilarmB
  \ifdim\psk@bow pt=\z@\tx@ZigZag\else\psk@bow\space\tx@ZigZagCirc\fi 
  \psline@iii
  \tx@Line
  end }}
\def\pczigzag{\def\pst@par{}\pst@object{pczigzag}}
\def\pczigzag@i{\pc@object\nczigzag@ii}
\def\pst@checkUnit#1#2{\expandafter\pst@checkUnit@i#1!!#2}
\def\pst@checkUnit@i{\@ifnextchar*%
  {\def\pst@roundValue{0 }\pst@checkUnit@ii}%
  {\def\pst@roundValue{-1 }\pst@checkUnit@iii**}}
\def\pst@checkUnit@ii*{\@ifnextchar*%
  {\def\pst@roundValue{1 }\pst@checkUnit@iii*}%
  {\pst@checkUnit@iii**}}
\def\pst@checkUnit@iii**#1!!#2{%
  \edef\ps@next{#1}%
  \ifx\ps@next\@empty\let\pst@num\z@%
  \else\expandafter\pst@@checknum\ps@next..\@nil%
  \fi%
  \ifnum\pst@num=\z@\pst@getlength{#1}{#2}\def\pst@relativePeriod{false }%
  \else%
    \def\pst@relativePeriod{true }%
    \edef#2{\ifnum\pst@num=\tw@-\fi\the\pst@cntg.%
    \expandafter\@gobble\the\pst@cnth\space}%
  \fi}
\define@key[psset]{pst-coil}{periods}[1]{\pst@checkUnit{#1}{\psk@periods}}
\define@key[psset]{pst-coil}{amplitude}[1]{\def\psk@amplitude{#1 }}
\define@key[psset]{pst-coil}{ppoints}[360]{\def\psk@ppoints{#1 }}
\define@key[psset]{pst-coil}{function}[sin]{\def\psk@function{#1 }}
\psset[pst-coil]{periods=1,amplitude=1,ppoints=360,function=sin}
\def\pssin{\pst@object{pssin}}
\def\pssin@i{\pst@getarrows\pssin@ii}
\def\pssin@ii(#1){\@ifnextchar({\pssin@iii{1}(#1)}{\pssin@iii{\z@}(0,0)(#1)}}
\def\pssin@iii#1(#2)(#3){%
  \begin@OpenObj
  \pst@getcoor{#2}\pst@tempa
  \pst@getcoor{#3}\pst@tempb
  \pst@optcp{#1}\pst@tempa
  \addto@pscode{%
    \pst@tempa \pst@tempb
    \psk@periods 
    \pst@relativePeriod 
    \pst@roundValue
    \psk@amplitude \pst@number\psyunit mul
    \psk@coilarmA \psk@coilarmB 
    \psk@ppoints
    { \psk@function }
    \tx@Sin
  }%
  \showpointsfalse%
  \end@OpenObj}
\def\ncsin{\pst@object{ncsin}}
\def\ncsin@i{\check@arrow{\ncsin@ii}}
\def\ncsin@ii#1#2{\nc@object{Open}{#1}{#2}{.5}{%
  \tx@NCCoor
  tx@Dict begin
  4 2 roll
  \psk@periods 
  \pst@relativePeriod 
  \pst@roundValue
  \psk@amplitude \pst@number\psyunit mul
  \psk@coilarmA \psk@coilarmB 
  \psk@ppoints
  { \psk@function }
  \tx@Sin 
  end }}
\def\pcsin{\def\pst@par{}\pst@object{pcsin}}
\def\pcsin@i{\pc@object\ncsin@ii}
\catcode`\@=\TheAtCode\relax

 \psset{unit=1.0\unitlength}
 \SpecialCoor

\newcommand{\schemaPOG}[3]{
 \setlength{\unitlength}{#2}
 \psset{unit=1.0\unitlength}
 \SpecialCoor
 \begin{pspicture}#1
 \put(0,0){$#3$}
 \end{pspicture}
}

\newcommand{\punto}{\pscircle*{0.16}}

\newcommand{\tratteggio}{\psline[linestyle=dashed,linewidth=0.2pt](0,-0.75)(0,10.75)}

\newcommand{\NOtratteggio}[1]{\rput(#1,0){\psline[linestyle=dashed,linecolor=white,linewidth=0.4pt](0,-0.75)(0,10.75)}}

\newcommand{\addspace}[1]{\begin{pspicture}(#1,0)
\end{pspicture}}

\newcommand{\piu}{\begin{pspicture}(0,0)
 \put(0,0){\pscircle[linewidth=0.4pt,fillstyle=none]{0.5}}
 \psline(0.5; 45)(0.5;225)
 \psline(0.5;315)(0.5;135)
\end{pspicture}}

\newcommand{\dst}{\begin{pspicture}(0,0)
 \put(0.293,0){\pscircle*{0.207}}
\end{pspicture}}

\newcommand{\snt}{\begin{pspicture}(0,0)
 \put(-0.293,0){\pscircle*{0.207}}
\end{pspicture}}

\newcommand{\giu}{\begin{pspicture}(0,0)
 \put(0      ,-0.293 ){\pscircle*{0.207}}
\end{pspicture}}

\newcommand{\su}{\begin{pspicture}(0,0)
 \put(0      ,0.293 ){\pscircle*{0.207}}
\end{pspicture}}

\newcommand{\POGcontdst}{\begin{pspicture}(0,10)
   \put(0.5,5){\makebox(0,0)[l]{{\Large$\cdots$}}}
\end{pspicture}}

\newcommand{\POGcontsnt}{\begin{pspicture}(0,10)
   \put(-0.5,5){\makebox(0,0)[r]{{\Large$\cdots$}}}
\end{pspicture}}

\newcommand{\addgiuvar}[3]{\begin{pspicture}(0,10)
   \put(#1,-2){\psline{->}(0,1.5)}
   \put(#1,-2){\punto}
   \put(#1,-2){\put(0.5,0){\makebox(0,0)[l]{$#2$}}}
   \put(#1,0){#3}
\end{pspicture}}

\newcommand{\addsuvar}[3]{\begin{pspicture}(0,10)
   \put(#1,12){\psline{->}(0,-1.5)}
   \put(#1,12){\punto}
   \put(#1,12){\put(0.5,0){\makebox(0,0)[l]{$#2$}}}
   \put(#1,10){#3}
\end{pspicture}}

\newcommand{\escigiuvar}[2]{\begin{pspicture}(0,10)
   \put(#1,0){\psline{->}(0,-2)}
   \put(#1,-2){\punto}
   \put(#1,-2){\put(0.5,0){\makebox(0,0)[l]{$#2$}}}
\end{pspicture}}

\newcommand{\escisuvar}[2]{\begin{pspicture}(0,10)
   \put(#1,10){\psline{->}(0,2)}
   \put(#1,12){\punto}
   \put(#1,12){\put(0.5,0){\makebox(0,0)[l]{$#2$}}}
\end{pspicture}}

\newcommand{\sumdstsnt}[1]{\begin{pspicture}(0,1)
   \put(4,0){\piu}
   \put(4,0){#1}
   \put(0,0){\psline{->}(3.5,0)}
   \put(8,0){\psline{->}(-3.5,0)}
\end{pspicture}}

\newcommand{\sumdstdst}[1]{\begin{pspicture}(0,1)
   \put(4,0){\piu}
   \put(4,0){#1}
   \put(0,0){\psline{->}(3.5,0)}
   \put(4.5,0){\psline{->}(3.5,0)}
\end{pspicture}}

\newcommand{\sumsntsnt}[1]{\begin{pspicture}(0,1)
   \put(4,0){\piu}
   \put(4,0){#1}
   \put(3.5,0){\psline{->}(-3.5,0)}
   \put(8,0){\psline{->}(-3.5,0)}
\end{pspicture}}

\newcommand{\sumsntdst}[1]{\begin{pspicture}(0,1)
   \put(4,0){\piu}
   \put(4,0){#1}
   \put(3.5,0){\psline{->}(-3.5,0)}
   \put(4.5,0){\psline{->}(3.5,0)}
\end{pspicture}}

\newcommand{\sumdstsntvi}[1]{\begin{pspicture}(0,1)
   \put(3,0){\piu}
   \put(3,0){#1}
   \put(0,0){\psline{->}(2.5,0)}
   \put(6,0){\psline{->}(-2.5,0)}
\end{pspicture}}

\newcommand{\sumdstdstvi}[1]{\begin{pspicture}(0,1)
   \put(3,0){\piu}
   \put(3,0){#1}
   \put(0,0){\psline{->}(2.5,0)}
   \put(3.5,0){\psline{->}(2.5,0)}
\end{pspicture}}

\newcommand{\sumsntsntvi}[1]{\begin{pspicture}(0,1)
   \put(3,0){\piu}
   \put(3,0){#1}
   \put(2.5,0){\psline{->}(-2.5,0)}
   \put(6,0){\psline{->}(-2.5,0)}
\end{pspicture}}

\newcommand{\sumsntdstvi}[1]{\begin{pspicture}(0,1)
   \put(3,0){\piu}
   \put(3,0){#1}
   \put(2.5,0){\psline{->}(-2.5,0)}
   \put(3.5,0){\psline{->}(2.5,0)}
\end{pspicture}}

\newcommand{\sumdstsntiv}[1]{\begin{pspicture}(0,1)
   \put(2,0){\piu}
   \put(2,0){#1}
   \put(0,0){\psline{->}(1.5,0)}
   \put(4,0){\psline{->}(-1.5,0)}
\end{pspicture}}

\newcommand{\sumdstdstiv}[1]{\begin{pspicture}(0,1)
   \put(2,0){\piu}
   \put(2,0){#1}
   \put(0,0){\psline{->}(1.5,0)}
   \put(2.5,0){\psline{->}(1.5,0)}
\end{pspicture}}

\newcommand{\sumsntsntiv}[1]{\begin{pspicture}(0,1)
   \put(2,0){\piu}
   \put(2,0){#1}
   \put(1.5,0){\psline{->}(-1.5,0)}
   \put(4,0){\psline{->}(-1.5,0)}
\end{pspicture}}

\newcommand{\sumsntdstiv}[1]{\begin{pspicture}(0,1)
   \put(2,0){\piu}
   \put(2,0){#1}
   \put(1.5,0){\psline{->}(-1.5,0)}
   \put(2.5,0){\psline{->}(1.5,0)}
\end{pspicture}}

\newcommand{\dotdstsnt}{\begin{pspicture}(0,1)
   \put(4,0){\punto}
   \put(0,0){\psline{->}(4,0)}
   \put(8,0){\psline{->}(-4,0)}
\end{pspicture}}

\newcommand{\dotdstdst}{\begin{pspicture}(0,1)
   \put(4,0){\punto}
   \put(0,0){\psline{->}(4,0)}
   \put(4,0){\psline{->}(4,0)}
\end{pspicture}}

\newcommand{\dotsntsnt}{\begin{pspicture}(0,1)
   \put(4,0){\punto}
   \put(4,0){\psline{->}(-4,0)}
   \put(8,0){\psline{->}(-4,0)}
\end{pspicture}}

\newcommand{\dotsntdst}{\begin{pspicture}(0,1)
   \put(4,0){\punto}
   \put(4,0){\psline{->}(-4,0)}
   \put(4,0){\psline{->}(4,0)}
\end{pspicture}}

\newcommand{\dotdstsntvi}{\begin{pspicture}(0,1)
   \put(3,0){\punto}
   \put(0,0){\psline{->}(3,0)}
   \put(6,0){\psline{->}(-3,0)}
\end{pspicture}}

\newcommand{\dotdstdstvi}{\begin{pspicture}(0,1)
   \put(3,0){\punto}
   \put(0,0){\psline{->}(3,0)}
   \put(3,0){\psline{->}(3,0)}
\end{pspicture}}

\newcommand{\dotsntsntvi}{\begin{pspicture}(0,1)
   \put(3,0){\punto}
   \put(3,0){\psline{->}(-3,0)}
   \put(6,0){\psline{->}(-3,0)}
\end{pspicture}}

\newcommand{\dotsntdstvi}{\begin{pspicture}(0,1)
   \put(3,0){\punto}
   \put(3,0){\psline{->}(-3,0)}
   \put(3,0){\psline{->}(3,0)}
\end{pspicture}}

\newcommand{\dotdstsntiv}{\begin{pspicture}(0,1)
   \put(2,0){\punto}
   \put(0,0){\psline{->}(2,0)}
   \put(4,0){\psline{->}(-2,0)}
\end{pspicture}}

\newcommand{\dotdstdstiv}{\begin{pspicture}(0,1)
   \put(2,0){\punto}
   \put(0,0){\psline{->}(2,0)}
   \put(2,0){\psline{->}(2,0)}
\end{pspicture}}

\newcommand{\dotsntsntiv}{\begin{pspicture}(0,1)
   \put(2,0){\punto}
   \put(2,0){\psline{->}(-2,0)}
   \put(4,0){\psline{->}(-2,0)}
\end{pspicture}}

\newcommand{\dotsntdstiv}{\begin{pspicture}(0,1)
   \put(2,0){\punto}
   \put(2,0){\psline{->}(-2,0)}
   \put(2,0){\psline{->}(2,0)}
\end{pspicture}}

\newcommand{\midgiuaddbase}[2]{\begin{pspicture}(0,1)
   \put( 0,0.5){#1}
   \put( 0.25,1.5){\makebox(0,0)[lb]{$#2$}}
\end{pspicture}}

\newcommand{\midsuaddbase}[2]{\begin{pspicture}(0,1)
   \put( 0,-0.5){#1}
   \put( 0.25,8.5){\makebox(0,0)[lt]{$#2$}}
\end{pspicture}}

\newcommand{\midgiuaddbasetop}[2]{\begin{pspicture}(0,1)
   \put( 0,0.5){#1}
   \put( 0.25,0.75){\makebox(0,0)[lb]{$#2$}}
\end{pspicture}}

\newcommand{\midsuaddbasetop}[2]{\begin{pspicture}(0,1)
   \put( 0,-0.5){#1}
   \put( 0.25,9.25){\makebox(0,0)[lt]{$#2$}}
\end{pspicture}}

\newcommand{\midgiu}[2]{\begin{pspicture}(0,1)
   \put(-2,3){\psframe(4,4)\rput(2,2){$#1$}}
   \put(0,3){\psline{->}(0,-3)}
   \put(0,9.5){\psline{->}(0,-2.5)}
   \put(0,-0.5){\makebox(0,0)[t]{$#2$}}
\end{pspicture}}

\newcommand{\midgiuadd}[2]{\midgiuaddbase{\midgiu{#1}{}}{#2}}

\newcommand{\midsu}[2]{\begin{pspicture}(0,1)
   \put(-2,3){\psframe(4,4)\rput(2,2){$#1$}}
   \put(0,7){\psline{->}(0,3)}
   \put(0,0.5){\psline{->}(0,2.5)}
   \put(0,10.5){\makebox(0,0)[b]{$#2$}}
\end{pspicture}}

\newcommand{\midsuadd}[2]{\midsuaddbase{\midsu{#1}{}}{#2}}

\newcommand{\midgiularge}[2]{\begin{pspicture}(0,1)
   \put(-3,3){\psframe(6,4)\rput(3,2){$#1$}}
   \put(0,3){\psline{->}(0,-3)}
   \put(0,9.5){\psline{->}(0,-2.5)}
   \put(0,-0.5){\makebox(0,0)[t]{$#2$}}
\end{pspicture}}

\newcommand{\midgiulargeadd}[2]{\midgiuaddbase{\midgiularge{#1}{}}{#2}}

\newcommand{\midsularge}[2]{\begin{pspicture}(0,1)
   \put(-3,3){\psframe(6,4)\rput(3,2){$#1$}}
   \put(0,7){\psline{->}(0,3)}
   \put(0,0.5){\psline{->}(0,2.5)}
   \put(0,10.5){\makebox(0,0)[b]{$#2$}}
\end{pspicture}}

\newcommand{\midsulargeadd}[2]{\midsuaddbase{\midsularge{#1}{}}{#2}}

\newcommand{\midgiubig}[2]{\begin{pspicture}(0,1)
   \put( 0, 9.5){\psline{->}(0,-1.5)}
   \put(-3,   2){\psframe(6,6)\rput(3,3){$#1$}}
   \put( 0,   2){\psline{->}(0,-2)}
   \put( 0,-0.5){\makebox(0,0)[t]{$#2$}}
\end{pspicture}}

\newcommand{\midgiubigadd}[2]{\midgiuaddbase{\midgiubig{#1}{}}{#2}}

\newcommand{\midsubig}[2]{\begin{pspicture}(0,1)
   \put(0,0.5){\psline{->}(0,1.5)}
   \put(-3,2){\psframe(6,6)\rput(3,3){$#1$}}
   \put(0,8){\psline{->}(0,2)}
   \put(0,10.5){\makebox(0,0)[b]{$#2$}}
\end{pspicture}}

\newcommand{\midsubigadd}[2]{\midsuaddbase{\midsubig{#1}{}}{#2}}

\newcommand{\midgiuvi}[2]{\begin{pspicture}(0,1)
   \put(   0, 9.5){\psline{->}(0,-3)}
   \put(-1.5, 3.5){\psframe(3,3)\rput(1.5,1.5){$#1$}}
   \put(   0, 3.5){\psline{->}(0,-3.5)}
   \put(   0,-0.5){\makebox(0,0)[t]{$#2$}}
\end{pspicture}}

\newcommand{\midgiuviadd}[2]{\midgiuaddbase{\midgiuvi{#1}{}}{#2}}

\newcommand{\midsuvi}[2]{\begin{pspicture}(0,1)
   \put(   0, 0.5){\psline{->}(0,3)}
   \put(-1.5, 3.5){\psframe(3,3)\rput(1.5,1.5){$#1$}}
   \put(   0, 6.5){\psline{->}(0,3.5)}
   \put(   0,10.5){\makebox(0,0)[b]{$#2$}}
\end{pspicture}}

\newcommand{\midsuviadd}[2]{\midsuaddbase{\midsuvi{#1}{}}{#2}}

\newcommand{\midgiumint}[2]{\begin{pspicture}(0,1)
   \put(-2,4.5){\psframe(4,4)\rput(2,2){$#1$}}
   \put(-1.25,1){\psframe(2.5,2.5)\rput(1.25,1.25){{\bf $\frac{1}{s}$}}}
   \put(0,1){\psline{->}(0,-1)}
   \put(0,4.5){\psline{->}(0,-1)}
   \put(0,9.5){\psline{->}(0,-1)}
   \put(0,-0.5){\makebox(0,0)[t]{$#2$}}
\end{pspicture}}

\newcommand{\midgiumintlarge}[2]{\begin{pspicture}(0,1)
   \put(-3,4.5){\psframe(6,4)\rput(3,2){$#1$}}
   \put(-1.25,1){\psframe(2.5,2.5)\rput(1.25,1.25){{\bf $\frac{1}{s}$}}}
   \put(0,1){\psline{->}(0,-1)}
   \put(0,4.5){\psline{->}(0,-1)}
   \put(0,9.5){\psline{->}(0,-1)}
   \put(0,-0.5){\makebox(0,0)[t]{$#2$}}
\end{pspicture}}

\newcommand{\midgiumintadd}[2]{\midgiuaddbasetop{\midgiumint{#1}{}}{#2}}
\newcommand{\midgiumintaddlarge}[2]{\midgiuaddbasetop{\midgiumintlarge{#1}{}}{#2}}

\newcommand{\midsuintm}[2]{\begin{pspicture}(0,1)
   \put(    0,0.5){\psline{->}(0,1)}
   \put(-1.25,1.5){\psframe(2.5,2.5)\rput(1.25,1.25){{\bf $\frac{1}{s}$}}}
   \put(    0,  4){\psline{->}(0,1)}
   \put(   -2,  5){\psframe(4,4)\rput(2,2){$#1$}}
   \put(    0,  9){\psline{->}(0,1)}
   \put(0,10.5){\makebox(0,0)[b]{$#2$}}
\end{pspicture}}

\newcommand{\midsuintmlarge}[2]{\begin{pspicture}(0,1)
   \put(    0,0.5){\psline{->}(0,1)}
   \put(-1.25,1.5){\psframe(2.5,2.5)\rput(1.25,1.25){{\bf $\frac{1}{s}$}}}
   \put(    0,  4){\psline{->}(0,1)}
   \put(   -3,  5){\psframe(6,4)\rput(3,2){$#1$}}
   \put(    0,  9){\psline{->}(0,1)}
   \put(0,10.5){\makebox(0,0)[b]{$#2$}}
\end{pspicture}}

\newcommand{\midsuintmadd}[2]{\midsuaddbasetop{\midsuintm{#1}{}}{#2}}
\newcommand{\midsuintmaddlarge}[2]{\midsuaddbasetop{\midsuintmlarge{#1}{}}{#2}}

\newcommand{\midgiumder}[2]{\begin{pspicture}(0,1)
   \put(-2,4.5){\psframe(4,4)\rput(2,2){$#1$}}
   \put(-1.25,1){\psframe(2.5,2.5)\rput(1.25,1.25){{\bf $s$}}}
   \put(0,1){\psline{->}(0,-1)}
   \put(0,4.5){\psline{->}(0,-1)}
   \put(0,9.5){\psline{->}(0,-1)}
   \put(0,-0.5){\makebox(0,0)[t]{$#2$}}
\end{pspicture}}

\newcommand{\midgiumderlarge}[2]{\begin{pspicture}(0,1)
   \put(-3,4.5){\psframe(6,4)\rput(3,2){$#1$}}
   \put(-1.25,1){\psframe(2.5,2.5)\rput(1.25,1.25){{\bf $s$}}}
   \put(0,1){\psline{->}(0,-1)}
   \put(0,4.5){\psline{->}(0,-1)}
   \put(0,9.5){\psline{->}(0,-1)}
   \put(0,-0.5){\makebox(0,0)[t]{$#2$}}
\end{pspicture}}

\newcommand{\midgiumderadd}[2]{\midgiuaddbasetop{\midgiumder{#1}{}}{#2}}
\newcommand{\midgiumderaddlarge}[2]{\midgiuaddbasetop{\midgiumderlarge{#1}{}}{#2}}

\newcommand{\midsuderm}[2]{\begin{pspicture}(0,1)
   \put(    0,0.5){\psline{->}(0,1)}
   \put(-1.25,1.5){\psframe(2.5,2.5)\rput(1.25,1.25){{\bf $s$}}}
   \put(    0,  4){\psline{->}(0,1)}
   \put(   -2,  5){\psframe(4,4)\rput(2,2){$#1$}}
   \put(    0,  9){\psline{->}(0,1)}
   \put(0,10.5){\makebox(0,0)[b]{$#2$}}
\end{pspicture}}

\newcommand{\midsudermlarge}[2]{\begin{pspicture}(0,1)
   \put(    0,0.5){\psline{->}(0,1)}
   \put(-1.25,1.5){\psframe(2.5,2.5)\rput(1.25,1.25){{\bf $s$}}}
   \put(    0,  4){\psline{->}(0,1)}
   \put(   -3,  5){\psframe(6,4)\rput(3,2){$#1$}}
   \put(    0,  9){\psline{->}(0,1)}
   \put(0,10.5){\makebox(0,0)[b]{$#2$}}
\end{pspicture}}

\newcommand{\midsudermadd}[2]{\midsuaddbasetop{\midsuderm{#1}{}}{#2}}
\newcommand{\midsudermaddlarge}[2]{\midsuaddbasetop{\midsudermlarge{#1}{}}{#2}}

\newcommand{\midgiumxm}[4]{\begin{pspicture}(0,1)
   \put(    0,9.5){\psline{->}(0,-1)}
   \put(-1.25,6.0){\psframe(2.5,2.5)\rput(1.25,1.25){{\bf $#3$}}}
   \put(    0,  6){\psline{->}(0,-1)}
   \put(  0.5,5.5){\makebox(0,0)[l]{$#4$}}
   \put(   -2,  1){\psframe(4,4)\rput(2,2){$#1$}}
   \put(    0,  1){\psline{->}(0,-1)}
   \put(0,-0.5){\makebox(0,0)[t]{$#2$}}
\end{pspicture}}

\newcommand{\midgiumxmlarge}[4]{\begin{pspicture}(0,1)
   \put(    0,9.5){\psline{->}(0,-1)}
   \put(-1.25,6.0){\psframe(2.5,2.5)\rput(1.25,1.25){{\bf $#3$}}}
   \put(    0,  6){\psline{->}(0,-1)}
   \put(  0.5,5.5){\makebox(0,0)[l]{$#4$}}
   \put(   -3,  1){\psframe(6,4)\rput(3,2){$#1$}}
   \put(    0,  1){\psline{->}(0,-1)}
   \put(0,-0.5){\makebox(0,0)[t]{$#2$}}
\end{pspicture}}

\newcommand{\midgiumxmadd}[4]{\midgiuaddbasetop{\midgiumxm{#1}{}{#3}{#4}}{#2}}
\newcommand{\midgiumxmaddlarge}[4]{\midgiuaddbasetop{\midgiumxmlarge{#1}{}{#3}{#4}}{#2}}

\newcommand{\midgiuivxxx}[4]{\begin{pspicture}(0,1)
   \put(    0,9.5){\psline{->}(0,-1)}
   \put(-1.5,6.5){\psframe(3,2)\rput(1.5,1){{\bf $#1$}}}
   \put(    0,6.5){\psline{->}(0,-1)}
   \put(-1.25,4.0){\psframe(2.5,1.5)\rput(1.25,0.75){{\bf $#2$}}}
   \put(    0,4.0){\psline{->}(0,-1)}
   \put(-1.5,1.0){\psframe(3,2)\rput(1.5,1){{\bf $#3$}}}
   \put(    0,1.0){\psline{->}(0,-1)}
   \put(0,-0.5){\makebox(0,0)[t]{$#4$}}
\end{pspicture}}

\newcommand{\midgiuivxxxr}[4]{\begin{pspicture}(0,1)
   \put(    0,9.5){\psline{->}(0,-0.75)}
   \put(-1.5,7.25){\psframe(3,1.5)\rput(1.5,0.75){{\bf $#1$}}}
   \put(    0,7.25){\psline{->}(0,-0.75)}
   \put(0,6){\piu{}}
   \put(    0,5.5){\psline{->}(0,-0.75)}
   \put(-1.25,3.25){\psframe(2.5,1.5)\rput(1.25,0.75){{\bf $#2$}}}
    \put(    0,3.25){\psline{->}(0,-0.75)}
    \put(-1.5,1.0){\psframe(3,1.5)\rput(1.5,0.75){{\bf $#3$}}}
   \put(    0,1.0){\psline{->}(0,-0.888)}
  \put(0,-0.5){\makebox(0,0)[t]{$#4$}}
\end{pspicture}}

\newcommand{\midgiuivxxxadd}[4]{\midgiuaddbasetop{\midgiuivxxx{#1}{#2}{#3}{}}{#4}}

\newcommand{\midgiuxxx}[4]{\begin{pspicture}(0,1)
   \put(    0,9.5){\psline{->}(0,-1)}
   \put(-2.5,6.5){\psframe(5,2)\rput(2.5,1){{\bf $#1$}}}
   \put(    0,6.5){\psline{->}(0,-1)}
   \put(-1.5,4.0){\psframe(3,1.5)\rput(1.5,0.75){{\bf $#2$}}}
   \put(    0,4.0){\psline{->}(0,-1)}
   \put(-2.5,1.0){\psframe(5,2)\rput(2.5,1){{\bf $#3$}}}
   \put(    0,1.0){\psline{->}(0,-1)}
   \put(0,-0.5){\makebox(0,0)[t]{$#4$}}
\end{pspicture}}

\newcommand{\midgiuxxxadd}[4]{\midgiuaddbasetop{\midgiuxxx{#1}{#2}{#3}{}}{#4}}

\newcommand{\midgiuxxxlarge}[4]{\begin{pspicture}(0,1)
   \put(    0,9.5){\psline{->}(0,-1)}
   \put(-3.5,6.5){\psframe(7,2)\rput(3.5,1){{\bf $#1$}}}
   \put(    0,6.5){\psline{->}(0,-1)}
   \put(-1.5,4.0){\psframe(3,1.5)\rput(1.5,0.75){{\bf $#2$}}}
   \put(    0,4.0){\psline{->}(0,-1)}
   \put(-3.5,1.0){\psframe(7,2)\rput(3.5,1){{\bf $#3$}}}
   \put(    0,1.0){\psline{->}(0,-1)}
   \put(0,-0.5){\makebox(0,0)[t]{$#4$}}
\end{pspicture}}

\newcommand{\midgiuxxxlargeadd}[4]{\midgiuaddbasetop{\midgiuxxxlarge{#1}{#2}{#3}{}}{#4}}

\newcommand{\midsuivxxx}[4]{\begin{pspicture}(0,1)
   \put(    0,0.5){\psline{->}(0,1)}
   \put(-1.5,7.0){\psframe(3,2)\rput(1.5,1){{\bf $#3$}}}
   \put(    0,3.5){\psline{->}(0,1)}
   \put(-1.25,4.5){\psframe(2.5,1.5)\rput(1.25,0.75){{\bf $#2$}}}
   \put(    0,6.0){\psline{->}(0,1)}
   \put(-1.5,1.5){\psframe(3,2)\rput(1.5,1){{\bf $#1$}}}
   \put(    0,9.0){\psline{->}(0,1)}
   \put(0,10.5){\makebox(0,0)[b]{$#4$}}
\end{pspicture}}

\newcommand{\midsuivxxxadd}[4]{\midsuaddbasetop{\midsuivxxx{#1}{#2}{#3}{}}{#4}}

\newcommand{\midsuxxx}[4]{\begin{pspicture}(0,1)
   \put(    0,0.5){\psline{->}(0,1)}
   \put(-2.5,7.0){\psframe(5,2)\rput(2.5,1){{\bf $#3$}}}
   \put(    0,3.5){\psline{->}(0,1)}
   \put(-1.5,4.5){\psframe(3,1.5)\rput(1.5,0.75){{\bf $#2$}}}
   \put(    0,6.0){\psline{->}(0,1)}
   \put(-2.5,1.5){\psframe(5,2)\rput(2.5,1){{\bf $#1$}}}
   \put(    0,9.0){\psline{->}(0,1)}
   \put(0,10.5){\makebox(0,0)[b]{$#4$}}
\end{pspicture}}

\newcommand{\midsuxxxadd}[4]{\midsuaddbasetop{\midsuxxx{#1}{#2}{#3}{}}{#4}}

\newcommand{\midsuxxxlarge}[4]{\begin{pspicture}(0,1)
   \put(    0,0.5){\psline{->}(0,1)}
   \put(-3.5,7.0){\psframe(7,2)\rput(3.5,1){{\bf $#3$}}}
   \put(    0,3.5){\psline{->}(0,1)}
   \put(-1.5,4.5){\psframe(3,1.5)\rput(1.5,0.75){{\bf $#2$}}}
   \put(    0,6.0){\psline{->}(0,1)}
   \put(-3.5,1.5){\psframe(7,2)\rput(3.5,1){{\bf $#1$}}}
   \put(    0,9.0){\psline{->}(0,1)}
   \put(0,10.5){\makebox(0,0)[b]{$#4$}}
\end{pspicture}}

\newcommand{\midsuxxxlargeadd}[4]{\midsuaddbasetop{\midsuxxxlarge{#1}{#2}{#3}{}}{#4}}

\newcommand{\midgiuivmxm}[4]{\begin{pspicture}(0,1)
   \put(    0,9.5){\psline{->}(0,-1.5)}
   \put(-1.25,5.5){\psframe(2.5,2.5)\rput(1.25,1.25){{\bf $#3$}}}
   \put(    0,5.5){\psline{->}(0,-1.5)}
   \put(  0.5,4.75){\makebox(0,0)[l]{$#4$}}
   \put(-1.25,1.5){\psframe(2.5,2.5)\rput(1.25,1.25){$#1$}}
   \put(    0,1.5){\psline{->}(0,-1.5)}
   \put(0,-0.5){\makebox(0,0)[t]{$#2$}}
\end{pspicture}}

\newcommand{\midgiuivmxmadd}[4]{\midgiuaddbasetop{\midgiuivmxm{#1}{}{#3}{#4}}{#2}}

\newcommand{\midsumxm}[4]{\begin{pspicture}(0,1)
   \put(    0,0.5){\psline{->}(0,1)}
   \put(-1.25,1.5){\psframe(2.5,2.5)\rput(1.25,1.25){{\bf $#3$}}}
   \put(    0,  4){\psline{->}(0,1)}
   \put(  0.5,4.5){\makebox(0,0)[l]{$#4$}}
   \put(   -2,  5){\psframe(4,4)\rput(2,2){$#1$}}
   \put(    0,  9){\psline{->}(0,1)}
   \put(0,10.5){\makebox(0,0)[b]{$#2$}}
\end{pspicture}}

\newcommand{\midsumxmlarge}[4]{\begin{pspicture}(0,1)
   \put(    0,0.5){\psline{->}(0,1)}
   \put(-1.25,1.5){\psframe(2.5,2.5)\rput(1.25,1.25){{\bf $#3$}}}
   \put(    0,  4){\psline{->}(0,1)}
   \put(  0.5,4.5){\makebox(0,0)[l]{$#4$}}
   \put(   -3,  5){\psframe(6,4)\rput(3,2){$#1$}}
   \put(    0,  9){\psline{->}(0,1)}
   \put(0,10.5){\makebox(0,0)[b]{$#2$}}
\end{pspicture}}

\newcommand{\midsumxmadd}[4]{\midsuaddbasetop{\midsumxm{#1}{}{#3}{#4}}{#2}}
\newcommand{\midsumxmaddlarge}[4]{\midsuaddbasetop{\midsumxmlarge{#1}{}{#3}{#4}}{#2}}

\newcommand{\midsuivmxm}[4]{\begin{pspicture}(0,1)
   \put(    0,0.5){\psline{->}(0,1.5)}
   \put(-1.25,  2){\psframe(2.5,2.5)\rput(1.25,1.25){{\bf $#3$}}}
   \put(    0,4.5){\psline{->}(0,1.5)}
   \put(  0.5,5.25){\makebox(0,0)[l]{$#4$}}
   \put(-1.25,  6){\psframe(2.5,2.5)\rput(1.25,1.25){$#1$}}
   \put(    0,8.5){\psline{->}(0,1.5)}
   \put(0,10.5){\makebox(0,0)[b]{$#2$}}
\end{pspicture}}

\newcommand{\midsuivmxmadd}[4]{\midsuaddbasetop{\midsuivmxm{#1}{}{#3}{#4}}{#2}}

\newcommand{\midgiuintm}[2]{\begin{pspicture}(0,1)
   \put(    0,9.5){\psline{->}(0,-1)}
   \put(-1.25,  6){\psframe(2.5,2.5)\rput(1.25,1.25){{\bf $\frac{1}{s}$}}}
   \put(    0,  6){\psline{->}(0,-1)}
   \put(   -2,  1){\psframe(4,4)\rput(2,2){$#1$}}
   \put(    0,  1){\psline{->}(0,-1)}
   \put(0,-0.5  ){\makebox(0,0)[t]{$#2$}}
\end{pspicture}}

\newcommand{\midgiuintmlarge}[2]{\begin{pspicture}(0,1)
   \put(    0,9.5){\psline{->}(0,-1)}
   \put(-1.25,  6){\psframe(2.5,2.5)\rput(1.25,1.25){{\bf $\frac{1}{s}$}}}
   \put(    0,  6){\psline{->}(0,-1)}
   \put(   -3,  1){\psframe(6,4)\rput(3,2){$#1$}}
   \put(    0,  1){\psline{->}(0,-1)}
   \put(0,-0.5  ){\makebox(0,0)[t]{$#2$}}
\end{pspicture}}

\newcommand{\midgiuintmadd}[2]{\midgiuaddbasetop{\midgiuintm{#1}{}}{#2}}
\newcommand{\midgiuintmaddlarge}[2]{\midgiuaddbasetop{\midgiuintmlarge{#1}{}}{#2}}

\newcommand{\midsumint}[2]{\begin{pspicture}(0,1)
   \put(-2,1.5){\psframe(4,4)\rput(2,2){$#1$}}
   \put(-1.25,6.5){\psframe(2.5,2.5)\rput(1.25,1.25){{\bf $\frac{1}{s}$}}}
   \put(0,0.5){\psline{->}(0,1)}
   \put(0,5.5){\psline{->}(0,1)}
   \put(0,9){\psline{->}(0,1)}
   \put(0,10.5){\makebox(0,0)[b]{$#2$}}
\end{pspicture}}

\newcommand{\midsumintlarge}[2]{\begin{pspicture}(0,1)
   \put(-3,1.5){\psframe(6,4)\rput(3,2){$#1$}}
   \put(-1.25,6.5){\psframe(2.5,2.5)\rput(1.25,1.25){{\bf $\frac{1}{s}$}}}
   \put(0,0.5){\psline{->}(0,1)}
   \put(0,5.5){\psline{->}(0,1)}
   \put(0,9){\psline{->}(0,1)}
   \put(0,10.5){\makebox(0,0)[b]{$#2$}}
\end{pspicture}}

\newcommand{\midsumintadd}[2]{\midsuaddbasetop{\midsumint{#1}{}}{#2}}
\newcommand{\midsumintaddlarge}[2]{\midsuaddbasetop{\midsumintlarge{#1}{}}{#2}}

\newcommand{\midgiuderm}[2]{\begin{pspicture}(0,1)
   \put(    0,9.5){\psline{->}(0,-1)}
   \put(-1.25,  6){\psframe(2.5,2.5)\rput(1.25,1.25){{\bf $s$}}}
   \put(    0,  6){\psline{->}(0,-1)}
   \put(   -2,  1){\psframe(4,4)\rput(2,2){$#1$}}
   \put(    0,  1){\psline{->}(0,-1)}
   \put(0,-0.5  ){\makebox(0,0)[t]{$#2$}}
\end{pspicture}}

\newcommand{\midgiudermlarge}[2]{\begin{pspicture}(0,1)
   \put(    0,9.5){\psline{->}(0,-1)}
   \put(-1.25,  6){\psframe(2.5,2.5)\rput(1.25,1.25){{\bf $s$}}}
   \put(    0,  6){\psline{->}(0,-1)}
   \put(   -3,  1){\psframe(6,4)\rput(3,2){$#1$}}
   \put(    0,  1){\psline{->}(0,-1)}
   \put(0,-0.5  ){\makebox(0,0)[t]{$#2$}}
\end{pspicture}}

\newcommand{\midgiudermadd}[2]{\midgiuaddbasetop{\midgiuderm{#1}{}}{#2}}
\newcommand{\midgiudermaddlarge}[2]{\midgiuaddbasetop{\midgiudermlarge{#1}{}}{#2}}

\newcommand{\midsumder}[2]{\begin{pspicture}(0,1)
   \put(-2,1.5){\psframe(4,4)\rput(2,2){$#1$}}
   \put(-1.25,6.5){\psframe(2.5,2.5)\rput(1.25,1.25){{\bf $s$}}}
   \put(0,0.5){\psline{->}(0,1)}
   \put(0,5.5){\psline{->}(0,1)}
   \put(0,9){\psline{->}(0,1)}
   \put(0,10.5){\makebox(0,0)[b]{$#2$}}
\end{pspicture}}

\newcommand{\midsumderlarge}[2]{\begin{pspicture}(0,1)
   \put(-3,1.5){\psframe(6,4)\rput(3,2){$#1$}}
   \put(-1.25,6.5){\psframe(2.5,2.5)\rput(1.25,1.25){{\bf $s$}}}
   \put(0,0.5){\psline{->}(0,1)}
   \put(0,5.5){\psline{->}(0,1)}
   \put(0,9){\psline{->}(0,1)}
   \put(0,10.5){\makebox(0,0)[b]{$#2$}}
\end{pspicture}}

\newcommand{\midsumderadd}[2]{\midsuaddbasetop{\midsumder{#1}{}}{#2}}
\newcommand{\midsumderaddlarge}[2]{\midsuaddbasetop{\midsumderlarge{#1}{}}{#2}}

\newcommand{\midgiuiv}[2]{\begin{pspicture}(0,1)
   \put(-1.5,3.5){\psframe(3,3)\rput(1.5,1.5){$#1$}}
   \put(0,9.5){\psline{->}(0,-3)}
   \put(0,3.5){\psline{->}(0,-3.5)}
   \put(0,-0.5){\makebox(0,0)[t]{$#2$}}
\end{pspicture}}

\newcommand{\midgiuivadd}[2]{\midgiuaddbase{\midgiuiv{#1}{}}{#2}}

\newcommand{\midsuiv}[2]{\begin{pspicture}(0,1)
   \put(-1.5,3.5){\psframe(3,3)\rput(1.5,1.5){$#1$}}
   \put(0,6.5){\psline{->}(0,3.5)}
   \put(0,0.5){\psline{->}(0,3)}
   \put(0,10.5){\makebox(0,0)[b]{$#2$}}
\end{pspicture}}

\newcommand{\midsuivadd}[2]{\midsuaddbase{\midsuiv{#1}{}}{#2}}

\newcommand{\midgiuintmtre}[2]{\begin{pspicture}(0,1)
   \put(    0,9.5){\psline{->}(0,-1.5)}
   \put(-1.25,5.5){\psframe(2.5,2.5)\rput(1.25,1.25){{\bf $\frac{1}{s}$}}}
   \put(    0,5.5){\psline{->}(0,-1.5)}
   \put(-1.25,1.5){\psframe(2.5,2.5)\rput(1.25,1.25){$#1$}}
   \put(    0,1.5){\psline{->}(0,-1.5)}
   \put(0,-0.5  ){\makebox(0,0)[t]{$#2$}}
\end{pspicture}}

\newcommand{\midgiuintmtreadd}[2]{\midgiuaddbasetop{\midgiuintmtre{#1}{}}{#2}}

\newcommand{\midsuintmtre}[2]{\begin{pspicture}(0,1)
   \put(    0,0.5){\psline{->}(0,1.5)}
   \put(-1.25,2){\psframe(2.5,2.5)\rput(1.25,1.25){{\bf $\frac{1}{s}$}}}
   \put(    0,4.5){\psline{->}(0,1.5)}
   \put(-1.25,  6){\psframe(2.5,2.5)\rput(1.25,1.25){$#1$}}
   \put(    0,8.5){\psline{->}(0,1.5)}
   \put(0,10.5){\makebox(0,0)[b]{$#2$}}
\end{pspicture}}

\newcommand{\midsuintmtreadd}[2]{\midsuaddbasetop{\midsuintmtre{#1}{}}{#2}}

\newcommand{\midgiuminttre}[2]{\begin{pspicture}(0,1)
   \put(    0,9.5){\psline{->}(0,-1.5)}
   \put(-1.25,5.5){\psframe(2.5,2.5)\rput(1.25,1.25){$#1$}}
   \put(    0,5.5){\psline{->}(0,-1.5)}
   \put(-1.25,1.5){\psframe(2.5,2.5)\rput(1.25,1.25){{\bf $\frac{1}{s}$}}}
   \put(    0,1.5){\psline{->}(0,-1.5)}
   \put(0,-0.5  ){\makebox(0,0)[t]{$#2$}}
\end{pspicture}}

\newcommand{\midgiuminttreadd}[2]{\midgiuaddbasetop{\midgiuminttre{#1}{}}{#2}}

\newcommand{\midsuminttre}[2]{\begin{pspicture}(0,1)
   \put(0,0.5){\psline{->}(0,1.5)}
   \put(-1.25,   2){\psframe(2.5,2.5)\rput(1.25,1.25){{$#1$}}}
   \put(    0, 4.5){\psline{->}(0,1.5)}
   \put(-1.25,   6){\psframe(2.5,2.5)\rput(1.25,1.25){{\bf $\frac{1}{s}$}}}
   \put(    0, 8.5){\psline{->}(0,1.5)}
   \put(    0,10.5){\makebox(0,0)[b]{$#2$}}
\end{pspicture}}

\newcommand{\midsuminttreadd}[2]{\midsuaddbasetop{\midsuminttre{#1}{}}{#2}}

\newcommand{\midgiudermtre}[2]{\begin{pspicture}(0,1)
   \put(    0,9.5){\psline{->}(0,-1.5)}
   \put(-1.25,5.5){\psframe(2.5,2.5)\rput(1.25,1.25){{\bf $s$}}}
   \put(    0,5.5){\psline{->}(0,-1.5)}
   \put(-1.25,1.5){\psframe(2.5,2.5)\rput(1.25,1.25){$#1$}}
   \put(    0,1.5){\psline{->}(0,-1.5)}
   \put(0,-0.5  ){\makebox(0,0)[t]{$#2$}}
\end{pspicture}}

\newcommand{\midgiudermtreadd}[2]{\midgiuaddbasetop{\midgiudermtre{#1}{}}{#2}}

\newcommand{\midsudermtre}[2]{\begin{pspicture}(0,1)
   \put(    0,0.5){\psline{->}(0,1.5)}
   \put(-1.25,2){\psframe(2.5,2.5)\rput(1.25,1.25){{\bf $s$}}}
   \put(    0,4.5){\psline{->}(0,1.5)}
   \put(-1.25,  6){\psframe(2.5,2.5)\rput(1.25,1.25){$#1$}}
   \put(    0,8.5){\psline{->}(0,1.5)}
   \put(0,10.5){\makebox(0,0)[b]{$#2$}}
\end{pspicture}}

\newcommand{\midsudermtreadd}[2]{\midsuaddbasetop{\midsudermtre{#1}{}}{#2}}

\newcommand{\midgiumdertre}[2]{\begin{pspicture}(0,1)
   \put(    0,9.5){\psline{->}(0,-1.5)}
   \put(-1.25,5.5){\psframe(2.5,2.5)\rput(1.25,1.25){$#1$}}
   \put(    0,5.5){\psline{->}(0,-1.5)}
   \put(-1.25,1.5){\psframe(2.5,2.5)\rput(1.25,1.25){{\bf $s$}}}
   \put(    0,1.5){\psline{->}(0,-1.5)}
   \put(0,-0.5  ){\makebox(0,0)[t]{$#2$}}
\end{pspicture}}

\newcommand{\midgiumdertreadd}[2]{\midgiuaddbasetop{\midgiumdertre{#1}{}}{#2}}

\newcommand{\midsumdertre}[2]{\begin{pspicture}(0,1)
   \put(0,0.5){\psline{->}(0,1.5)}
   \put(-1.25,   2){\psframe(2.5,2.5)\rput(1.25,1.25){{$#1$}}}
   \put(    0, 4.5){\psline{->}(0,1.5)}
   \put(-1.25,   6){\psframe(2.5,2.5)\rput(1.25,1.25){{\bf $s$}}}
   \put(    0, 8.5){\psline{->}(0,1.5)}
   \put(    0,10.5){\makebox(0,0)[b]{$#2$}}
\end{pspicture}}

\newcommand{\midsumdertreadd}[2]{\midsuaddbasetop{\midsumdertre{#1}{}}{#2}}

\newcommand{\midsupiu}[4]{\begin{pspicture}(0,1)
   \put(-2,2.5){\psframe(4,4)\rput(2,2){$#1$}}
   \put(0,6.5){\psline{->}(0,1.5)}
   \put(0,8.5){\piu}
   \put(0,9){\psline{->}(0,1)}
   \put(1.5,8.5){\psline{->}(-1,0)}
   \put(1.5,8.5){\punto}
   \put(0,0.5){\psline{->}(0,2)}
   \put(0,10.5){\makebox(0,0)[b]{$#2$}}
   \put(0,8.5){\makebox(0,0){$#4$}}
   \put(1.75,8.5){\makebox(0,0)[l]{$#3$}}
\end{pspicture}}

\newcommand{\midsupiutre}[4]{\begin{pspicture}(0,1)
   \put(-1.5,2.5){\psframe(3,3)\rput(1.5,1.5){$#1$}}
   \put(0,5.5){\psline{->}(0,2)}
   \put(0,8){\piu}
   \put(0,8.5){\psline{->}(0,1.5)}
   \put(1.5,8){\psline{->}(-1,0)}
   \put(1.5,8){\punto}
   \put(0,0.5){\psline{->}(0,2)}
   \put(0,10.5){\makebox(0,0)[b]{$#2$}}
   \put(0,8){\makebox(0,0){$#4$}}
   \put(1.75,7.75){\makebox(0,0)[tr]{$#3$}}
\end{pspicture}}

\newcommand{\midgiuintmvuoto}[2]{\begin{pspicture}(0,1)
   \put(    0,9.5){\psline{->}(0,-1)}
   \put(    0,8  ){\makebox(0,0)[t]{$#1$}}
   \put(    0,1.5){\makebox(0,0)[b]{$#2$}}
   \put(    0,  1){\punto}
   \put(    0,  1){\psline{->}(0,-1)}
\end{pspicture}}

\newcommand{\midsuintmvuoto}[2]{\begin{pspicture}(0,1)
   \put(    0,0.5){\psline{->}(0,1)}
   \put(    0,  2 ){\makebox(0,0)[b]{$#1$}}
   \put(    0,8.5 ){\makebox(0,0)[t]{$#2$}}
   \put(    0,  9){\punto}
   \put(    0,  9){\psline{->}(0,1)}
\end{pspicture}}

\newcommand{\blosubase}[2]{\begin{pspicture}(8,10)
   \put(0, 0){\tratteggio}
   \put(0,10){\dotsntdst}
   \put(4, 0){#1}
   \put(0, 0){\sumdstsnt{#2}}
\end{pspicture}}

\newcommand{\blovisubase}[2]{\begin{pspicture}(6,10)
   \put(0, 0){\tratteggio}
   \put(0,10){\dotsntdstvi}
   \put(3, 0){#1}
   \put(0, 0){\sumdstsntvi{#2}}
\end{pspicture}}

\newcommand{\bloivsubase}[2]{\begin{pspicture}(4,10)
   \put(0, 0){\tratteggio}
   \put(0,10){\dotsntdstiv}
   \put(2, 0){#1}
   \put(0, 0){\sumdstsntiv{#2}}
\end{pspicture}}

\newcommand{\blosuorbase}[2]{\begin{pspicture}(8,10)
   \put(0, 0){\tratteggio}
   \put(0,10){\sumdstdst{#2}}
   \put(4, 0){#1}
   \put(0, 0){\dotsntsnt}
\end{pspicture}}

\newcommand{\blovisuorbase}[2]{\begin{pspicture}(6,10)
   \put(0, 0){\tratteggio}
   \put(0,10){\sumdstdstvi{#2}}
   \put(3, 0){#1}
   \put(0, 0){\dotsntsntvi}
\end{pspicture}}

\newcommand{\bloivsuorbase}[2]{\begin{pspicture}(4,10)
   \put(0, 0){\tratteggio}
   \put(0,10){\sumdstdstiv{#2}}
   \put(2, 0){#1}
   \put(0, 0){\dotsntsntiv}
\end{pspicture}}

\newcommand{\blosuaorbase}[2]{\begin{pspicture}(8,10)
   \put(0, 0){\tratteggio}
   \put(0,10){\sumsntsnt{#2}}
   \put(4, 0){#1}
   \put(0, 0){\dotdstdst}
\end{pspicture}}

\newcommand{\blovisuaorbase}[2]{\begin{pspicture}(6,10)
   \put(0, 0){\tratteggio}
   \put(0,10){\sumsntsntvi{#2}}
   \put(3, 0){#1}
   \put(0, 0){\dotdstdstvi}
\end{pspicture}}

\newcommand{\bloivsuaorbase}[2]{\begin{pspicture}(4,10)
   \put(0, 0){\tratteggio}
   \put(0,10){\sumsntsntiv{#2}}
   \put(2, 0){#1}
   \put(0, 0){\dotdstdstiv}
\end{pspicture}}

\newcommand{\blosudstbase}[2]{\begin{pspicture}(8,10)
   \put(0, 0){\tratteggio}
   \put(0,10){\sumdstdst{#2}}
   \put(4, 0){#1}
   \put(0, 0){\dotdstdst}
\end{pspicture}}

\newcommand{\bloivsudstbase}[2]{\begin{pspicture}(4,10)
   \put(0, 0){\tratteggio}
   \put(0,10){\sumdstdstiv{#2}}
   \put(2, 0){#1}
   \put(0, 0){\dotdstdstiv}
\end{pspicture}}

\newcommand{\blovisudstbase}[2]{\begin{pspicture}(6,10)
   \put(0, 0){\tratteggio}
   \put(0,10){\sumdstdstvi{#2}}
   \put(3, 0){#1}
   \put(0, 0){\dotdstdstvi}
\end{pspicture}}

\newcommand{\blosusntbase}[2]{\begin{pspicture}(8,10)
   \put(0, 0){\tratteggio}
   \put(0,10){\sumsntsnt{#2}}
   \put(4, 0){#1}
   \put(0, 0){\dotsntsnt}
\end{pspicture}}

\newcommand{\bloivsusntbase}[2]{\begin{pspicture}(4,10)
   \put(0, 0){\tratteggio}
   \put(0,10){\sumsntsntiv{#2}}
   \put(2, 0){#1}
   \put(0, 0){\dotsntsntiv}
\end{pspicture}}

\newcommand{\blovisusntbase}[2]{\begin{pspicture}(6,10)
   \put(0, 0){\tratteggio}
   \put(0,10){\sumsntsntvi{#2}}
   \put(3, 0){#1}
   \put(0, 0){\dotsntsntvi}
\end{pspicture}}

\newcommand{\blogiubase}[2]{\begin{pspicture}(8,10)
   \put(0, 0){\tratteggio}
   \put(0,10){\sumdstsnt{#2}}
   \put(4, 0){#1}
   \put(0, 0){\dotsntdst}
\end{pspicture}}

\newcommand{\blovigiubase}[2]{\begin{pspicture}(6,10)
   \put(0, 0){\tratteggio}
   \put(0,10){\sumdstsntvi{#2}}
   \put(3, 0){#1}
   \put(0, 0){\dotsntdstvi}
\end{pspicture}}

\newcommand{\bloivgiubase}[2]{\begin{pspicture}(4,10)
   \put(0, 0){\tratteggio}
   \put(0,10){\sumdstsntiv{#2}}
   \put(2, 0){#1}
   \put(0, 0){\dotsntdstiv}
\end{pspicture}}

\newcommand{\bloivgiubaser}[2]{\begin{pspicture}(4,10)
   \put(0, 0){\tratteggio}
   \put(0,10){\sumdstsntiv{#2}}
   \put(2, 0){#1}
   \psline{->}(4,0)(0,0)
   \put(2, 0){\punto}
\end{pspicture}
}

\newcommand{\blogiuorbase}[2]{\begin{pspicture}(8,10)
   \put(0, 0){\tratteggio}
   \put(0,10){\dotdstdst}
   \put(4, 0){#1}
   \put(0, 0){\sumsntsnt{#2}}
\end{pspicture}}

\newcommand{\blovigiuorbase}[2]{\begin{pspicture}(6,10)
   \put(0, 0){\tratteggio}
   \put(0,10){\dotdstdstvi}
   \put(3, 0){#1}
   \put(0, 0){\sumsntsntvi{#2}}
\end{pspicture}}

\newcommand{\bloivgiuorbase}[2]{\begin{pspicture}(4,10)
   \put(0, 0){\tratteggio}
   \put(0,10){\dotdstdstiv}
   \put(2, 0){#1}
   \put(0, 0){\sumsntsntiv{#2}}
\end{pspicture}}

\newcommand{\blogiuaorbase}[2]{\begin{pspicture}(8,10)
   \put(0, 0){\tratteggio}
   \put(0,10){\dotsntsnt}
   \put(4, 0){#1}
   \put(0, 0){\sumdstdst{#2}}
\end{pspicture}}

\newcommand{\blovigiuaorbase}[2]{\begin{pspicture}(6,10)
   \put(0, 0){\tratteggio}
   \put(0,10){\dotsntsntvi}
   \put(3, 0){#1}
   \put(0, 0){\sumdstdstvi{#2}}
\end{pspicture}}

\newcommand{\bloivgiuaorbase}[2]{\begin{pspicture}(4,10)
   \put(0, 0){\tratteggio}
   \put(0,10){\dotsntsntiv}
   \put(2, 0){#1}
   \put(0, 0){\sumdstdstiv{#2}}
\end{pspicture}}

\newcommand{\blogiudstbase}[2]{\begin{pspicture}(8,10)
   \put(0, 0){\tratteggio}
   \put(0,10){\dotdstdst}
   \put(4, 0){#1}
   \put(0, 0){\sumdstdst{#2}}
\end{pspicture}}

\newcommand{\bloivgiudstbase}[2]{\begin{pspicture}(4,10)
   \put(0, 0){\tratteggio}
   \put(0,10){\dotdstdstiv}
   \put(2, 0){#1}
   \put(0, 0){\sumdstdstiv{#2}}
\end{pspicture}}

\newcommand{\blovigiudstbase}[2]{\begin{pspicture}(6,10)
   \put(0, 0){\tratteggio}
   \put(0,10){\dotdstdstvi}
   \put(3, 0){#1}
   \put(0, 0){\sumdstdstvi{#2}}
\end{pspicture}}

\newcommand{\blogiusntbase}[2]{\begin{pspicture}(8,10)
   \put(0, 0){\tratteggio}
   \put(0,10){\dotsntsnt}
   \put(4, 0){#1}
   \put(0, 0){\sumsntsnt{#2}}
\end{pspicture}}

\newcommand{\bloivgiusntbase}[2]{\begin{pspicture}(4,10)
   \put(0, 0){\tratteggio}
   \put(0,10){\dotsntsntiv}
   \put(2, 0){#1}
   \put(0, 0){\sumsntsntiv{#2}}
\end{pspicture}}

\newcommand{\blovigiusntbase}[2]{\begin{pspicture}(6,10)
   \put(0, 0){\tratteggio}
   \put(0,10){\dotsntsntvi}
   \put(3, 0){#1}
   \put(0, 0){\sumsntsntvi{#2}}
\end{pspicture}}

\newcommand{\bloin}[2]{\begin{pspicture}(0,0)
 \put(-0.5,10){\makebox(0,0)[r]{$#1$}}
 \put(-0.5, 0){\makebox(0,0)[r]{$#2$}}
 \put(0, 0){\punto}
 \put(0,10){\punto}
 \end{pspicture}}

\newcommand{\bloout}[2]{\begin{pspicture}(0,10)
   \put(0.5,10){\makebox(0,0)[l]{$#1$}}
   \put(0.5, 0){\makebox(0,0)[l]{$#2$}}
   \put(0, 0){\punto}
   \put(0,10){\punto}
   \put(0,0){\tratteggio}
\end{pspicture}}

\newcommand{\bloivinsu}[3]{\begin{pspicture}(4,0)
 \put(0,0){\tratteggio}
 \put(0,0){\dotsntdstiv}
 \put(0,10){\sumdstsntiv{#3}}
 \put(2,-2){\punto}
 \put(2,10.5){\psline{->}(0,1.5)}
 \put(2.25,12){\makebox(0,0)[l]{$#2$}}
 \put(2,-2){\psline{->}(0,2)}
 \put(2.25,-2){\makebox(0,0)[l]{$#1$}}
 \end{pspicture}}

\newcommand{\bloivingiu}[3]{\begin{pspicture}(4,0)
 \put(0,0){\tratteggio}
 \put(0,0){\sumdstsntiv{#3}}
 \put(0,10){\dotsntdstiv}
 \put(2,12){\punto}
 \put(2,-0.5){\psline{->}(0,-1.5)}
 \put(2.25,-2){\makebox(0,0)[l]{$#2$}}
 \put(2,12){\psline{->}(0,-2)}
 \put(2.25,12){\makebox(0,0)[l]{$#1$}}
 \end{pspicture}}

\newcommand{\bloivinsuor}[3]{\begin{pspicture}(4,0)
 \put(0,0){\tratteggio}
 \put(0,0){\sumsntsntiv{#3}}
 \put(0,10){\dotdstdstiv}
 \put(2,10){\punto}
 \put(2,-2){\punto}
 \put(2.25,-2){\makebox(0,0)[l]{$#1$}}
 \put(2.25,12){\makebox(0,0)[l]{$#2$}}
 \put(2,-2){\psline{->}(0,1.5)}
 \put(2,10){\psline{->}(0,2)}
 \end{pspicture}}

\newcommand{\bloivinsuaor}[3]{\begin{pspicture}(4,0)
 \put(0,0){\tratteggio}
 \put(0,0){\sumdstdstiv{#3}}
 \put(0,10){\dotsntsntiv}
 \put(2,10){\punto}
 \put(2,-2){\punto}
 \put(2,-2){\psline{->}(0,1.5)}
 \put(2.25,-2){\makebox(0,0)[l]{$#1$}}
 \put(2,10){\psline{->}(0,2)}
 \put(2.25,12){\makebox(0,0)[l]{$#2$}}
 \end{pspicture}}

\newcommand{\bloivingiuaor}[3]{\begin{pspicture}(4,0)
 \put(0,0){\tratteggio}
 \put(0,0){\dotdstdstiv}
 \put(0,10){\sumsntsntiv{#3}}
 \put(2,12){\punto}
 \put(2,12){\psline{->}(0,-1.5)}
 \put(2.25,12){\makebox(0,0)[l]{$#1$}}
 \put(2,0){\psline{->}(0,-2)}
 \put(2.25,-2){\makebox(0,0)[l]{$#2$}}
 \end{pspicture}}

\newcommand{\bloivingiuor}[3]{\begin{pspicture}(4,0)
 \put(0,0){\tratteggio}
 \put(0,0){\dotsntsntiv}
 \put(0,10){\sumdstdstiv{#3}}
 \put(2,12){\punto}
 \put(2,12){\psline{->}(0,-1.5)}
 \put(2.25,12){\makebox(0,0)[l]{$#1$}}
 \put(2,0){\psline{->}(0,-2)}
 \put(2.25,-2){\makebox(0,0)[l]{$#2$}}
 \end{pspicture}}

\newcommand{\blosu}[3]{\blosubase{\midsu{#1}{#2}}{#3}}
\newcommand{\blovisu}[3]{\blovisubase{\midsu{#1}{#2}}{#3}}
\newcommand{\bloivsu}[3]{\bloivsubase{\midsuiv{#1}{#2}}{#3}}
\newcommand{\blosuor}[3]{\blosuorbase{\midsuadd{#1}{#2}}{#3}}
\newcommand{\blovisuor}[3]{\blovisuorbase{\midsuadd{#1}{#2}}{#3}}
\newcommand{\bloivsuor}[3]{\bloivsuorbase{\midsuivadd{#1}{#2}}{#3}}
\newcommand{\blosuaor}[3]{\blosuaorbase{\midsuadd{#1}{#2}}{#3}}
\newcommand{\blovisuaor}[3]{\blovisuaorbase{\midsuadd{#1}{#2}}{#3}}
\newcommand{\bloivsuaor}[3]{\bloivsuaorbase{\midsuivadd{#1}{#2}}{#3}}
\newcommand{\blosupiu}[5]{\blosubase{\midsupiu{#1}{#2}{#4}{#5}}{#3}}
\newcommand{\blovisupiu}[5]{\blovisubase{\midsupiu{#1}{#2}{#4}{#5}}{#3}}
\newcommand{\bloivsupiu}[5]{\bloivsubase{\midsupiutre{#1}{#2}{#4}{#5}}{#3}}
\newcommand{\blosularge}[3]{\blosubase{\midsularge{#1}{#2}}{#3}}
\newcommand{\blosulargeor}[3]{\blosuorbase{\midsulargeadd{#1}{#2}}{#3}}
\newcommand{\blosulargeaor}[3]{\blosuaorbase{\midsulargeadd{#1}{#2}}{#3}}
\newcommand{\blosubig}[3]{\blosubase{\midsubig{#1}{#2}}{#3}}
\newcommand{\blosubigor}[3]{\blosuorbase{\midsubigadd{#1}{#2}}{#3}}
\newcommand{\blosubigaor}[3]{\blosuaorbase{\midsubigadd{#1}{#2}}{#3}}
\newcommand{\blosuintmlarge}[3]{\blosubase{\midsuintmlarge{#1}{#2}}{#3}}
\newcommand{\blosuintm}[3]{\blosubase{\midsuintm{#1}{#2}}{#3}}
\newcommand{\blovisuintm}[3]{\blovisubase{\midsuintm{#1}{#2}}{#3}}
\newcommand{\bloivsuintm}[3]{\bloivsubase{\midsuintmtre{#1}{#2}}{#3}}
\newcommand{\blosuintmorlarge}[3]{\blosuorbase{\midsuintmaddlarge{#1}{#2}}{#3}}
\newcommand{\blosuintmor}[3]{\blosuorbase{\midsuintmadd{#1}{#2}}{#3}}
\newcommand{\blovisuintmor}[3]{\blovisuorbase{\midsuintmadd{#1}{#2}}{#3}}
\newcommand{\bloivsuintmor}[3]{\bloivsuorbase{\midsuintmtreadd{#1}{#2}}{#3}}
\newcommand{\blosuintmaorlarge}[3]{\blosuaorbase{\midsuintmaddlarge{#1}{#2}}{#3}}
\newcommand{\blosuintmaor}[3]{\blosuaorbase{\midsuintmadd{#1}{#2}}{#3}}
\newcommand{\blovisuintmaor}[3]{\blovisuaorbase{\midsuintmadd{#1}{#2}}{#3}}
\newcommand{\bloivsuintmaor}[3]{\bloivsuaorbase{\midsuintmtreadd{#1}{#2}}{#3}}
\newcommand{\blosumintlarge}[3]{\blosubase{\midsumintlarge{#1}{#2}}{#3}}
\newcommand{\blosumint}[3]{\blosubase{\midsumint{#1}{#2}}{#3}}
\newcommand{\blovisumint}[3]{\blovisubase{\midsumint{#1}{#2}}{#3}}
\newcommand{\bloivsumint}[3]{\bloivsubase{\midsuminttre{#1}{#2}}{#3}}
\newcommand{\blosumintorlarge}[3]{\blosuorbase{\midsumintaddlarge{#1}{#2}}{#3}}
\newcommand{\blosumintor}[3]{\blosuorbase{\midsumintadd{#1}{#2}}{#3}}
\newcommand{\blovisumintor}[3]{\blovisuorbase{\midsumintadd{#1}{#2}}{#3}}
\newcommand{\bloivsumintor}[3]{\bloivsuorbase{\midsuminttreadd{#1}{#2}}{#3}}
\newcommand{\blosumintaorlarge}[3]{\blosuaorbase{\midsumintaddlarge{#1}{#2}}{#3}}
\newcommand{\blosumintaor}[3]{\blosuaorbase{\midsumintadd{#1}{#2}}{#3}}
\newcommand{\blovisumintaor}[3]{\blovisuaorbase{\midsumintadd{#1}{#2}}{#3}}
\newcommand{\bloivsumintaor}[3]{\bloivsuaorbase{\midsuminttreadd{#1}{#2}}{#3}}
\newcommand{\blosudermlarge}[3]{\blosubase{\midsudermlarge{#1}{#2}}{#3}}
\newcommand{\blosuderm}[3]{\blosubase{\midsuderm{#1}{#2}}{#3}}
\newcommand{\blovisuderm}[3]{\blovisubase{\midsuderm{#1}{#2}}{#3}}
\newcommand{\bloivsuderm}[3]{\bloivsubase{\midsudermtre{#1}{#2}}{#3}}
\newcommand{\blosudermorlarge}[3]{\blosuorbase{\midsudermaddlarge{#1}{#2}}{#3}}
\newcommand{\blosudermor}[3]{\blosuorbase{\midsudermadd{#1}{#2}}{#3}}
\newcommand{\blovisudermor}[3]{\blovisuorbase{\midsudermadd{#1}{#2}}{#3}}
\newcommand{\bloivsudermor}[3]{\bloivsuorbase{\midsudermtreadd{#1}{#2}}{#3}}
\newcommand{\blosudermaorlarge}[3]{\blosuaorbase{\midsudermaddlarge{#1}{#2}}{#3}}
\newcommand{\blosudermaor}[3]{\blosuaorbase{\midsudermadd{#1}{#2}}{#3}}
\newcommand{\blovisudermaor}[3]{\blovisuaorbase{\midsudermadd{#1}{#2}}{#3}}
\newcommand{\bloivsudermaor}[3]{\bloivsuaorbase{\midsudermtreadd{#1}{#2}}{#3}}
\newcommand{\blosumderlarge}[3]{\blosubase{\midsumderlarge{#1}{#2}}{#3}}
\newcommand{\blosumder}[3]{\blosubase{\midsumder{#1}{#2}}{#3}}
\newcommand{\blovisumder}[3]{\blovisubase{\midsumder{#1}{#2}}{#3}}
\newcommand{\bloivsumder}[3]{\bloivsubase{\midsumdertre{#1}{#2}}{#3}}
\newcommand{\blosumderorlarge}[3]{\blosuorbase{\midsumderaddlarge{#1}{#2}}{#3}}
\newcommand{\blosumderor}[3]{\blosuorbase{\midsumderadd{#1}{#2}}{#3}}
\newcommand{\blovisumderor}[3]{\blovisuorbase{\midsumderadd{#1}{#2}}{#3}}
\newcommand{\bloivsumderor}[3]{\bloivsuorbase{\midsumdertreadd{#1}{#2}}{#3}}
\newcommand{\blosumderaorlarge}[3]{\blosuaorbase{\midsumderaddlarge{#1}{#2}}{#3}}
\newcommand{\blosumderaor}[3]{\blosuaorbase{\midsumderadd{#1}{#2}}{#3}}
\newcommand{\blovisumderaor}[3]{\blovisuaorbase{\midsumderadd{#1}{#2}}{#3}}
\newcommand{\bloivsumderaor}[3]{\bloivsuaorbase{\midsumdertreadd{#1}{#2}}{#3}}
\newcommand{\blosumxmlarge}[5]{\blosubase{\midsumxmlarge{#1}{#2}{#4}{#5}}{#3}}
\newcommand{\blosumxm}[5]{\blosubase{\midsumxm{#1}{#2}{#4}{#5}}{#3}}
\newcommand{\blovisumxm}[5]{\blovisubase{\midsumxm{#1}{#2}{#4}{#5}}{#3}}
\newcommand{\bloivsumxm}[5]{\bloivsubase{\midsuivmxm{#1}{#2}{#4}{#5}}{#3}}
\newcommand{\blosumxmorlarge}[5]{\blosuorbase{\midsumxmaddlarge{#1}{#2}{#4}{#5}}{#3}}
\newcommand{\blosumxmor}[5]{\blosuorbase{\midsumxmadd{#1}{#2}{#4}{#5}}{#3}}
\newcommand{\blovisumxmor}[5]{\blovisuorbase{\midsumxmadd{#1}{#2}{#4}{#5}}{#3}}
\newcommand{\bloivsumxmor}[5]{\bloivsuorbase{\midsuivmxmadd{#1}{#2}{#4}{#5}}{#3}}
\newcommand{\blosumxmaorlarge}[5]{\blosuaorbase{\midsumxmaddlarge{#1}{#2}{#4}{#5}}{#3}}
\newcommand{\blosumxmaor}[5]{\blosuaorbase{\midsumxmadd{#1}{#2}{#4}{#5}}{#3}}
\newcommand{\blovisumxmaor}[5]{\blovisuaorbase{\midsumxmadd{#1}{#2}{#4}{#5}}{#3}}
\newcommand{\bloivsumxmaor}[5]{\bloivsuaorbase{\midsuivmxmadd{#1}{#2}{#4}{#5}}{#3}}
\newcommand{\blosuxxxlarge}[5]{\blosubase{\midsuxxxlarge{#1}{#2}{#3}{#4}}{#5}}
\newcommand{\blosuxxx}[5]{\blosubase{\midsuxxx{#1}{#2}{#3}{#4}}{#5}}
\newcommand{\blovisuxxx}[5]{\blovisubase{\midsuivxxx{#1}{#2}{#3}{#4}}{#5}}
\newcommand{\bloivsuxxx}[5]{\bloivsubase{\midsuivxxx{#1}{#2}{#3}{#4}}{#5}}
\newcommand{\blosuxxxlargeor}[5]{\blosuorbase{\midsuxxxlargeadd{#1}{#2}{#3}{#4}}{#5}}
\newcommand{\blosuxxxor}[5]{\blosuorbase{\midsuxxxadd{#1}{#2}{#3}{#4}}{#5}}
\newcommand{\blovisuxxxor}[5]{\blovisuorbase{\midsuivxxxadd{#1}{#2}{#3}{#4}}{#5}}
\newcommand{\bloivsuxxxor}[5]{\bloivsuorbase{\midsuivxxxadd{#1}{#2}{#3}{#4}}{#5}}
\newcommand{\blosuxxxlargeaor}[5]{\blosuaorbase{\midsuxxxlargeadd{#1}{#2}{#3}{#4}}{#5}}
\newcommand{\blosuxxxaor}[5]{\blosuaorbase{\midsuxxxadd{#1}{#2}{#3}{#4}}{#5}}
\newcommand{\blovisuxxxaor}[5]{\blovisuaorbase{\midsuivxxxadd{#1}{#2}{#3}{#4}}{#5}}
\newcommand{\bloivsuxxxaor}[5]{\bloivsuaorbase{\midsuivxxxadd{#1}{#2}{#3}{#4}}{#5}}
\newcommand{\blosuintmvuoto}[3]{\blosubase{\midsuintmvuoto{#1}{#2}}{#3}}
\newcommand{\blovisuintmvuoto}[3]{\blovisubase{\midsuintmvuoto{#1}{#2}}{#3}}
\newcommand{\bloivsuintmvuoto}[3]{\bloivsubase{\midsuintmvuoto{#1}{#2}}{#3}}
\newcommand{\blosudst}[3]{\blosudstbase{\midsuadd{#1}{#2}}{#3}}
\newcommand{\bloivsudst}[3]{\bloivsudstbase{\midsuivadd{#1}{#2}}{#3}}
\newcommand{\blovisudst}[3]{\blovisudstbase{\midsuadd{#1}{#2}}{#3}}
\newcommand{\blosusnt}[3]{\blosusntbase{\midsuadd{#1}{#2}}{#3}}
\newcommand{\bloivsusnt}[3]{\bloivsusntbase{\midsuivadd{#1}{#2}}{#3}}
\newcommand{\blovisusnt}[3]{\blovisusntbase{\midsuadd{#1}{#2}}{#3}}
\newcommand{\blosumxmdst}[5]{\blosudstbase{\midsumxmadd{#1}{#2}{#4}{#5}}{#3}}
\newcommand{\bloivsumxmdst}[5]{\bloivsudstbase{\midsuivmxmadd{#1}{#2}{#4}{#5}}{#3}}
\newcommand{\blovisumxmdst}[5]{\blovisudstbase{\midsumxmadd{#1}{#2}{#4}{#5}}{#3}}
\newcommand{\blosumxmsnt}[5]{\blosusntbase{\midsumxmadd{#1}{#2}{#4}{#5}}{#3}}
\newcommand{\bloivsumxmsnt}[5]{\bloivsusntbase{\midsuivmxmadd{#1}{#2}{#4}{#5}}{#3}}
\newcommand{\blovisumxmsnt}[5]{\blovisusntbase{\midsumxmadd{#1}{#2}{#4}{#5}}{#3}}

\newcommand{\blogiu}[3]{\blogiubase{\midgiu{#1}{#2}}{#3}}
\newcommand{\blovigiu}[3]{\blovigiubase{\midgiu{#1}{#2}}{#3}}
\newcommand{\bloivgiu}[3]{\bloivgiubase{\midgiuiv{#1}{#2}}{#3}}
\newcommand{\blogiuor}[3]{\blogiuorbase{\midgiuadd{#1}{#2}}{#3}}
\newcommand{\blovigiuor}[3]{\blovigiuorbase{\midgiuadd{#1}{#2}}{#3}}
\newcommand{\bloivgiuor}[3]{\bloivgiuorbase{\midgiuivadd{#1}{#2}}{#3}}
\newcommand{\blogiuaor}[3]{\blogiuaorbase{\midgiuadd{#1}{#2}}{#3}}
\newcommand{\blovigiuaor}[3]{\blovigiuaorbase{\midgiuadd{#1}{#2}}{#3}}
\newcommand{\bloivgiuaor}[3]{\bloivgiuaorbase{\midgiuivadd{#1}{#2}}{#3}}
\newcommand{\blogiularge}[3]{\blogiubase{\midgiularge{#1}{#2}}{#3}}
\newcommand{\blogiulargeor}[3]{\blogiuorbase{\midgiulargeadd{#1}{#2}}{#3}}
\newcommand{\blogiulargeaor}[3]{\blogiuaorbase{\midgiulargeadd{#1}{#2}}{#3}}
\newcommand{\blogiubig}[3]{\blogiubase{\midgiubig{#1}{#2}}{#3}}
\newcommand{\blogiubigor}[3]{\blogiuorbase{\midgiubigadd{#1}{#2}}{#3}}
\newcommand{\blogiubigaor}[3]{\blogiuaorbase{\midgiubigadd{#1}{#2}}{#3}}
\newcommand{\blogiuintmlarge}[3]{\blogiubase{\midgiuintmlarge{#1}{#2}}{#3}}
\newcommand{\blogiuintm}[3]{\blogiubase{\midgiuintm{#1}{#2}}{#3}}
\newcommand{\blovigiuintm}[3]{\blovigiubase{\midgiuintm{#1}{#2}}{#3}}
\newcommand{\bloivgiuintm}[3]{\bloivgiubase{\midgiuintmtre{#1}{#2}}{#3}}
\newcommand{\blogiuintmorlarge}[3]{\blogiuorbase{\midgiuintmaddlarge{#1}{#2}}{#3}}
\newcommand{\blogiuintmor}[3]{\blogiuorbase{\midgiuintmadd{#1}{#2}}{#3}}
\newcommand{\blovigiuintmor}[3]{\blovigiuorbase{\midgiuintmadd{#1}{#2}}{#3}}
\newcommand{\bloivgiuintmor}[3]{\bloivgiuorbase{\midgiuintmtreadd{#1}{#2}}{#3}}
\newcommand{\blogiuintmaorlarge}[3]{\blogiuaorbase{\midgiuintmaddlarge{#1}{#2}}{#3}}
\newcommand{\blogiuintmaor}[3]{\blogiuaorbase{\midgiuintmadd{#1}{#2}}{#3}}
\newcommand{\blovigiuintmaor}[3]{\blovigiuaorbase{\midgiuintmadd{#1}{#2}}{#3}}
\newcommand{\bloivgiuintmaor}[3]{\bloivgiuaorbase{\midgiuintmtreadd{#1}{#2}}{#3}}
\newcommand{\blogiumintlarge}[3]{\blogiubase{\midgiumintlarge{#1}{#2}}{#3}}
\newcommand{\blogiumint}[3]{\blogiubase{\midgiumint{#1}{#2}}{#3}}
\newcommand{\blovigiumint}[3]{\blovigiubase{\midgiumint{#1}{#2}}{#3}}
\newcommand{\bloivgiumint}[3]{\bloivgiubase{\midgiuminttre{#1}{#2}}{#3}}
\newcommand{\blogiumintorlarge}[3]{\blogiuorbase{\midgiumintaddlarge{#1}{#2}}{#3}}
\newcommand{\blogiumintor}[3]{\blogiuorbase{\midgiumintadd{#1}{#2}}{#3}}
\newcommand{\blovigiumintor}[3]{\blovigiuorbase{\midgiumintadd{#1}{#2}}{#3}}
\newcommand{\bloivgiumintor}[3]{\bloivgiuorbase{\midgiuminttreadd{#1}{#2}}{#3}}
\newcommand{\blogiumintaorlarge}[3]{\blogiuaorbase{\midgiumintaddlarge{#1}{#2}}{#3}}
\newcommand{\blogiumintaor}[3]{\blogiuaorbase{\midgiumintadd{#1}{#2}}{#3}}
\newcommand{\blovigiumintaor}[3]{\blovigiuaorbase{\midgiumintadd{#1}{#2}}{#3}}
\newcommand{\bloivgiumintaor}[3]{\bloivgiuaorbase{\midgiuminttreadd{#1}{#2}}{#3}}
\newcommand{\blogiudermlarge}[3]{\blogiubase{\midgiudermlarge{#1}{#2}}{#3}}
\newcommand{\blogiuderm}[3]{\blogiubase{\midgiuderm{#1}{#2}}{#3}}
\newcommand{\blovigiuderm}[3]{\blovigiubase{\midgiuderm{#1}{#2}}{#3}}
\newcommand{\bloivgiuderm}[3]{\bloivgiubase{\midgiudermtre{#1}{#2}}{#3}}
\newcommand{\blogiudermorlarge}[3]{\blogiuorbase{\midgiudermaddlarge{#1}{#2}}{#3}}
\newcommand{\blogiudermor}[3]{\blogiuorbase{\midgiudermadd{#1}{#2}}{#3}}
\newcommand{\blovigiudermor}[3]{\blovigiuorbase{\midgiudermadd{#1}{#2}}{#3}}
\newcommand{\bloivgiudermor}[3]{\bloivgiuorbase{\midgiudermtreadd{#1}{#2}}{#3}}
\newcommand{\blogiudermaorlarge}[3]{\blogiuaorbase{\midgiudermaddlarge{#1}{#2}}{#3}}
\newcommand{\blogiudermaor}[3]{\blogiuaorbase{\midgiudermadd{#1}{#2}}{#3}}
\newcommand{\blovigiudermaor}[3]{\blovigiuaorbase{\midgiudermadd{#1}{#2}}{#3}}
\newcommand{\bloivgiudermaor}[3]{\bloivgiuaorbase{\midgiudermtreadd{#1}{#2}}{#3}}
\newcommand{\blogiumderlarge}[3]{\blogiubase{\midgiumderlarge{#1}{#2}}{#3}}
\newcommand{\blogiumder}[3]{\blogiubase{\midgiumder{#1}{#2}}{#3}}
\newcommand{\blovigiumder}[3]{\blovigiubase{\midgiumder{#1}{#2}}{#3}}
\newcommand{\bloivgiumder}[3]{\bloivgiubase{\midgiumdertre{#1}{#2}}{#3}}
\newcommand{\blogiumderorlarge}[3]{\blogiuorbase{\midgiumderaddlarge{#1}{#2}}{#3}}
\newcommand{\blogiumderor}[3]{\blogiuorbase{\midgiumderadd{#1}{#2}}{#3}}
\newcommand{\blovigiumderor}[3]{\blovigiuorbase{\midgiumderadd{#1}{#2}}{#3}}
\newcommand{\bloivgiumderor}[3]{\bloivgiuorbase{\midgiumdertreadd{#1}{#2}}{#3}}
\newcommand{\blogiumderaorlarge}[3]{\blogiuaorbase{\midgiumderaddlarge{#1}{#2}}{#3}}
\newcommand{\blogiumderaor}[3]{\blogiuaorbase{\midgiumderadd{#1}{#2}}{#3}}
\newcommand{\blovigiumderaor}[3]{\blovigiuaorbase{\midgiumderadd{#1}{#2}}{#3}}
\newcommand{\bloivgiumderaor}[3]{\bloivgiuaorbase{\midgiumdertreadd{#1}{#2}}{#3}}
\newcommand{\blogiumxmlarge}[5]{\blogiubase{\midgiumxmlarge{#1}{#2}{#4}{#5}}{#3}}
\newcommand{\blogiumxm}[5]{\blogiubase{\midgiumxm{#1}{#2}{#4}{#5}}{#3}}
\newcommand{\blovigiumxm}[5]{\blovigiubase{\midgiumxm{#1}{#2}{#4}{#5}}{#3}}
\newcommand{\bloivgiumxm}[5]{\bloivgiubase{\midgiuivmxm{#1}{#2}{#4}{#5}}{#3}}
\newcommand{\blogiumxmorlarge}[5]{\blogiuorbase{\midgiumxmaddlarge{#1}{#2}{#4}{#5}}{#3}}
\newcommand{\blogiumxmor}[5]{\blogiuorbase{\midgiumxmadd{#1}{#2}{#4}{#5}}{#3}}
\newcommand{\blovigiumxmor}[5]{\blovigiuorbase{\midgiumxmadd{#1}{#2}{#4}{#5}}{#3}}
\newcommand{\bloivgiumxmor}[5]{\bloivgiuorbase{\midgiuivmxmadd{#1}{#2}{#4}{#5}}{#3}} %%
\newcommand{\blogiumxmaorlarge}[5]{\blogiuaorbase{\midgiumxmaddlarge{#1}{#2}{#4}{#5}}{#3}}
\newcommand{\blogiumxmaor}[5]{\blogiuaorbase{\midgiumxmadd{#1}{#2}{#4}{#5}}{#3}}
\newcommand{\blovigiumxmaor}[5]{\blovigiuaorbase{\midgiumxmadd{#1}{#2}{#4}{#5}}{#3}}
\newcommand{\bloivgiumxmaor}[5]{\bloivgiuaorbase{\midgiuivmxmadd{#1}{#2}{#4}{#5}}{#3}}
\newcommand{\blogiuxxxlarge}[5]{\blogiubase{\midgiuxxxlarge{#1}{#2}{#3}{#4}}{#5}}
\newcommand{\blogiuxxx}[5]{\blogiubase{\midgiuxxx{#1}{#2}{#3}{#4}}{#5}}
\newcommand{\blovigiuxxx}[5]{\blovigiubase{\midgiuxxx{#1}{#2}{#3}{#4}}{#5}}
\newcommand{\bloivgiuxxx}[5]{\bloivgiubase{\midgiuivxxx{#1}{#2}{#3}{#4}}{#5}}
\newcommand{\bloivgiuxxxr}[5]{\bloivgiubaser{\midgiuivxxxr{#1}{#2}{#3}{#4}}{#5}}
\newcommand{\blogiuxxxlargeor}[5]{\blogiuorbase{\midgiuxxxlargeadd{#1}{#2}{#3}{#4}}{#5}}
\newcommand{\blogiuxxxor}[5]{\blogiuorbase{\midgiuxxxadd{#1}{#2}{#3}{#4}}{#5}}
\newcommand{\blovigiuxxxor}[5]{\blovigiuorbase{\midgiuxxxadd{#1}{#2}{#3}{#4}}{#5}}
\newcommand{\bloivgiuxxxor}[5]{\bloivgiuorbase{\midgiuivxxxadd{#1}{#2}{#3}{#4}}{#5}} %%
\newcommand{\blogiuxxxlargeaor}[5]{\blogiuaorbase{\midgiuxxxlargeadd{#1}{#2}{#3}{#4}}{#5}}
\newcommand{\blogiuxxxaor}[5]{\blogiuaorbase{\midgiuxxxadd{#1}{#2}{#3}{#4}}{#5}}
\newcommand{\blovigiuxxxaor}[5]{\blovigiuaorbase{\midgiuxxxadd{#1}{#2}{#3}{#4}}{#5}}
\newcommand{\bloivgiuxxxaor}[5]{\bloivgiuaorbase{\midgiuivxxxadd{#1}{#2}{#3}{#4}}{#5}}
\newcommand{\blogiuintmvuoto}[3]{\blogiubase{\midgiuintmvuoto{#1}{#2}}{#3}}
\newcommand{\blovigiuintmvuoto}[3]{\blovigiubase{\midgiuintmvuoto{#1}{#2}}{#3}}
\newcommand{\bloivgiuintmvuoto}[3]{\bloivgiubase{\midgiuintmvuoto{#1}{#2}}{#3}}
\newcommand{\blogiudst}[3]{\blogiudstbase{\midgiuadd{#1}{#2}}{#3}}
\newcommand{\bloivgiudst}[3]{\bloivgiudstbase{\midgiuivadd{#1}{#2}}{#3}}
\newcommand{\blovigiudst}[3]{\blovigiudstbase{\midgiuadd{#1}{#2}}{#3}}
\newcommand{\blogiusnt}[3]{\blogiusntbase{\midgiuadd{#1}{#2}}{#3}}
\newcommand{\bloivgiusnt}[3]{\bloivgiusntbase{\midgiuivadd{#1}{#2}}{#3}}
\newcommand{\blovigiusnt}[3]{\blovigiusntbase{\midgiuadd{#1}{#2}}{#3}}
\newcommand{\blogiumxmdst}[5]{\blogiudstbase{\midgiumxmadd{#1}{#2}{#4}{#5}}{#3}}
\newcommand{\bloivgiumxmdst}[5]{\bloivgiudstbase{\midgiuivmxmadd{#1}{#2}{#4}{#5}}{#3}}
\newcommand{\blovigiumxmdst}[5]{\blovigiudstbase{\midgiumxmadd{#1}{#2}{#4}{#5}}{#3}}
\newcommand{\blogiumxmsnt}[5]{\blogiusntbase{\midgiumxmadd{#1}{#2}{#4}{#5}}{#3}}
\newcommand{\bloivgiumxmsnt}[5]{\bloivgiusntbase{\midgiuivmxmadd{#1}{#2}{#4}{#5}}{#3}}
\newcommand{\blovigiumxmsnt}[5]{\blovigiusntbase{\midgiumxmadd{#1}{#2}{#4}{#5}}{#3}}

\newcommand{\kxdst}[4]{\begin{pspicture}(0,1)
   \put(0,0){\psline{->}(1,0)}
   \psset{unit=0.5\unitlength}
   \rput[lb](2,-#1){\psset{unit=2.0\unitlength}
   \framebox(#1,#1){$#2$}}
   \psset{unit=2.0\unitlength}
   \put(#1,0){
   \put(1,0){\psline{->}(1,0)}
   \put(2.25, 0.5){\makebox(0,0)[bl]{$#3$}}
   \put(2.25,-0.5){\makebox(0,0)[tl]{$#4$}}}
\end{pspicture}}

\newcommand{\kxsnt}[4]{\begin{pspicture}(0,1)
   \put(1,0){\psline{->}(-1,0)}
   \put(-0.25, 0.5){\makebox(0,0)[br]{$#3$}}
   \put(-0.25,-0.5){\makebox(0,0)[tr]{$#4$}}
   \put(#1,0){
   \put(2,0){\psline{->}(-1,0)}
   \psset{unit=0.5\unitlength}
   \rput[rb](2,-#1){\psset{unit=2.0\unitlength}
   \framebox(#1,#1){$#2$}}
   \psset{unit=2.0\unitlength}}
\end{pspicture}}

\newcommand{\kxydst}[5]{\begin{pspicture}(0,1)
   \put(0,0){\psline{->}(1,0)}
   \psset{unit=0.5\unitlength}
   \rput[lb](2,-#2){\psset{unit=2.0\unitlength}
   \framebox(#1,#2){$#3$}}
   \psset{unit=2.0\unitlength}
   \put(#1,0){
   \put(1,0){\psline{->}(1,0)}
   \put(2.25, 0.5){\makebox(0,0)[bl]{$#4$}}
   \put(2.25,-0.5){\makebox(0,0)[tl]{$#5$}}}
\end{pspicture}}

\newcommand{\kxysnt}[5]{\begin{pspicture}(0,1)
   \put(1,0){\psline{->}(-1,0)}
   \put(-0.25, 0.5){\makebox(0,0)[br]{$#4$}}
   \put(-0.25,-0.5){\makebox(0,0)[tr]{$#5$}}
   \put(#1,0){
   \put(2,0){\psline{->}(-1,0)}
   \psset{unit=0.5\unitlength}
   \rput[rb](2,-#2){\psset{unit=2.0\unitlength}
   \framebox(#1,#2){$#3$}}
   \psset{unit=2.0\unitlength}}
\end{pspicture}}

\newcommand{\blokxbase}[3]{\begin{pspicture}(#1,10)
   \put(0, 0){\tratteggio}
   \put(0, 0){#2}
   \put(0,10){#3}
\end{pspicture}
\begin{pspicture}(2,10)\put(0, 0){}\end{pspicture}
}
\newcommand{\blokxor}[5]{\blokxbase{#1}{\kxsnt{#1}{#3}{}{#5}}{\kxdst{#1}{#2}{#4}{}}}
\newcommand{\blokxaor}[5]{\blokxbase{#1}{\kxdst{#1}{#3}{}{#5}}{\kxsnt{#1}{#2}{#4}{}}}
\newcommand{\blokxyor}[6]{\blokxbase{#1}{\kxysnt{#1}{#2}{#4}{}{#6}}{\kxydst{#1}{#2}{#3}{#5}{}}}
\newcommand{\blokxyaor}[6]{\blokxbase{#1}{\kxydst{#1}{#2}{#4}{}{#6}}{\kxysnt{#1}{#2}{#3}{#5}{}}}
\newcommand{\blokxydst}[6]{\blokxbase{#1}{\kxydst{#1}{#2}{#4}{}{#6}}{\kxydst{#1}{#2}{#3}{#5}{}}}
\newcommand{\blokxysnt}[6]{\blokxbase{#1}{\kxysnt{#1}{#2}{#4}{}{#6}}{\kxysnt{#1}{#2}{#3}{#5}{}}}

\newcommand{\kduedst}[3]{\kxdst{2}{#1}{#2}{#3}}
\newcommand{\kduesnt}[3]{\kxsnt{2}{#1}{#2}{#3}}
\newcommand{\ktredst}[3]{\kxdst{4}{#1}{#2}{#3}}
\newcommand{\ktresnt}[3]{\kxsnt{4}{#1}{#2}{#3}}
\newcommand{\kquadst}[3]{\kxydst{4}{3}{#1}{#2}{#3}}
\newcommand{\kquasnt}[3]{\kxysnt{4}{3}{#1}{#2}{#3}}
\newcommand{\kcindst}[3]{\kxydst{8}{4}{#1}{#2}{#3}}
\newcommand{\kcinsnt}[3]{\kxysnt{8}{4}{#1}{#2}{#3}}
\newcommand{\blokbase}[2]{\blokxbase{2}{#1}{#2}}
\newcommand{\bloko}[4]{\blokxbase{2}{\kduesnt{#2}{}{#4}}{\kduedst{#1}{#3}{}}}
\newcommand{\blokao}[4]{\blokxbase{2}{\kduedst{#2}{}{#4}}{\kduesnt{#1}{#3}{}}}
\newcommand{\bloktreo}[4]{\blokxbase{4}{\ktresnt{#2}{}{#4}}{\ktredst{#1}{#3}{}}}
\newcommand{\bloktreao}[4]{\blokxbase{4}{\ktredst{#2}{}{#4}}{\ktresnt{#1}{#3}{}}}
\newcommand{\blokquao}[4]{\blokxbase{4}{\kquasnt{#2}{}{#4}}{\kquadst{#1}{#3}{}}}
\newcommand{\blokquaao}[4]{\blokxbase{4}{\kquadst{#2}{}{#4}}{\kquasnt{#1}{#3}{}}}
\newcommand{\blokcino}[4]{\blokxbase{8}{\kcinsnt{#2}{}{#4}}{\kcindst{#1}{#3}{}}}
\newcommand{\blokcinao}[4]{\blokxbase{8}{\kcindst{#2}{}{#4}}{\kcinsnt{#1}{#3}{}}}

\newcommand{\blosuatt}[5]{\begin{pspicture}(8,10)
   \put(2.5,10.5){\makebox(0,0)[b]{$#4$}}
   \put(2,7.5){\makebox(0,0)[r]{$#3$}}
   \put(0,0){\tratteggio}
   \put(0,0){\psline{->}(2,0)}
   \put(8,0){\psline{->}(-5,0)}
   \put(2.5,0){\circle{1}}
   \put(2.5,0){\piu}
   \put(2.5,0){#5}
   \put(2.5,0.5){\psline{->}(0,3)}
   \put(2.5, 2){\punto}
   \put(2.5, 2){\psline(3.5,0)}
   \put(  6, 2){\psline{->}(0,2)}
   \put(  5, 4){\psframe(2,2)\rput(1,1){$#2$}}
   \put(  1, 3.5){\psframe(3,3)\rput(1.5,1.5){$#1$}}
   \put(2.5, 6.5){\psline{->}(0,1.5)}
   \put(  6, 6){\psline(0,2.5)}
   \put(  6, 8.5){\psline{->}(-3,0)}
   \put(2.5,8.5){\circle{1}}
   \put(2.5,8.5){\piu}
   \put(2.5,  9){\psline{->}(0,1)}
   \put(2.5, 10){\psline{->}(-2.5,0)}
   \put(2.5, 10){\psline{->}(5.5,0)}
   \end{pspicture}}

\newcommand{\blogiuattraor}[5]{\begin{pspicture}(8,10)
   \put(2,2){\makebox(0,0)[r]{$#3$}}
   \put(0,0){\tratteggio}
   \put(0,0){\psline{->}(2,0)}
   \put(3,0){\psline{->}(5,0)}
   \put(2.5,0){\circle{1}}
   \put(2.5,0){\piu}
   \put(2.5,0){#5}
   \put(2.5,3.5){\psline{->}(0,-3)}
   \put(2.5, 2){\punto}
   \put(2.5, 2){\psline(3.5,0)}
   \put(  6, 2){\psline{->}(0,2)}
   \put(  5, 4){\psframe(2,2)\rput(1,1){$#2$}}
   \put(  1, 3.5){\psframe(3,3)\rput(1.5,1.5){$#1$}}
   \put(2.5, 8){\psline{->}(0,-1.5)}
   \put(  6, 6){\psline(0,2.5)}
   \put(  6, 8.5){\psline{->}(-3,0)}
   \put(2.5,8.5){\circle{1}}
   \put(2.5,8.5){\piu}
   \put(2.5,8.5){#4}
   \put(2.5, 10){\psline{->}(0,-1)}
   \put(2.5, 10){\psline{->}(-2.5,0)}
   \put(8, 10){\psline{->}(-5.5,0)}
   \end{pspicture}}

\newcommand{\blogiuattror}[5]{\begin{pspicture}(8,10)
   \put(2,2){\makebox(0,0)[r]{$#3$}}
   \put(0,0){\tratteggio}
   \put(2,0){\psline{->}(-2,0)}
   \put(8,0){\psline{->}(-5,0)}
   \put(2.5,0){\circle{1}}
   \put(2.5,0){\piu}
   \put(2.5,0){#5}
   \put(2.5,3.5){\psline{->}(0,-3)}
   \put(2.5, 2){\punto}
   \put(2.5, 2){\psline(3.5,0)}
   \put(  6, 2){\psline{->}(0,2)}
   \put(  5, 4){\psframe(2,2)\rput(1,1){$#2$}}
   \put(  1, 3.5){\psframe(3,3)\rput(1.5,1.5){$#1$}}
   \put(2.5, 8){\psline{->}(0,-1.5)}
   \put(  6, 6){\psline(0,2.5)}
   \put(  6, 8.5){\psline{->}(-3,0)}
   \put(2.5,8.5){\circle{1}}
   \put(2.5,8.5){\piu}
   \put(2.5,8.5){#4}
   \put(2.5, 10){\psline{->}(0,-1)}
   \put(0, 10){\psline{->}(2.5,0)}
   \put(2.5, 10){\psline{->}(5.5,0)}
   \end{pspicture}}

\newcommand{\blogiuatt}[5]{\begin{pspicture}(8,10)
   \put(2.5,-0.5){\makebox(0,0)[t]{$#4$}}
   \put(2,2.5){\makebox(0,0)[r]{$#3$}}
   \put(0,0){\tratteggio}
   \put(0,10){\psline{->}(2,0)}
   \put(8,10){\psline{->}(-5,0)}
   \put(2.5,10){\circle{1}}
   \put(2.5,10){\piu}
   \put(2.5,10){#5}
   \put(2.5,9.5){\psline{->}(0,-3)}
   \put(2.5, 8){\punto}
   \put(2.5, 8){\psline(3.5,0)}
   \put(  6, 8){\psline{->}(0,-2)}
   \put(  5, 4){\psframe(2,2)\rput(1,1){$#2$}}
   \put(  1, 3.5){\psframe(3,3)\rput(1.5,1.5){$#1$}}
   \put(2.5, 3.5){\psline{->}(0,-1.5)}
   \put(  6, 1.5){\psline(0,2.5)}
   \put(  6, 1.5){\psline{->}(-3,0)}
   \put(2.5,1.5){\circle{1}}
   \put(2.5,1.5){\piu}
   \put(2.5,  1){\psline{->}(0,-1)}
   \put(2.5,  0){\psline{->}(-2.5,0)}
   \put(2.5,  0){\psline{->}(5.5,0)}
   \end{pspicture}}

\newcommand{\blosuattaor}[5]{\begin{pspicture}(8,10)
   \put(2.5,-0.5){\makebox(0,0)[t]{$#4$}}
   \put(2,2.5){\makebox(0,0)[r]{$#3$}}
   \put(0,0){\tratteggio}
   \put(2,10){\psline{->}(-2,0)}
   \put(8,10){\psline{->}(-5,0)}
   \put(2.5,10){\circle{1}}
   \put(2.5,10){\piu}
   \put(2.5,10){#5}
   \put(2.5,6.5){\psline{->}(0,3)}
   \put(2.5, 8){\punto}
   \put(2.5, 8){\psline(3.5,0)}
   \put(  6, 8){\psline{->}(0,-2)}
   \put(  5, 4){\psframe(2,2)\rput(1,1){$#2$}}
   \put(  1, 3.5){\psframe(3,3)\rput(1.5,1.5){$#1$}}
   \put(2.5, 2.0){\psline{->}(0,1.5)}
   \put(  6, 1.5){\psline(0,2.5)}
   \put(  6, 1.5){\psline{->}(-3,0)}
   \put(2.5,1.5){\circle{1}}
   \put(2.5,1.5){\piu}
   \put(2.5,  0){\psline{->}(0,1)}
   \put(0,  0){\psline{->}(2.5,0)}
   \put(2.5,  0){\psline{->}(5.5,0)}
   \end{pspicture}}

\newcommand{\blosuattor}[5]{\begin{pspicture}(8,10)
   \put(2.5,-0.5){\makebox(0,0)[t]{$#4$}}
   \put(2,2.5){\makebox(0,0)[r]{$#3$}}
   \put(0,0){\tratteggio}
   \put(0,10){\psline{->}(2,0)}
   \put(3,10){\psline{->}(5,0)}
   \put(2.5,10){\circle{1}}
   \put(2.5,10){\piu}
   \put(2.5,10){#5}
   \put(2.5,6.5){\psline{->}(0,3)}
   \put(2.5, 8){\punto}
   \put(2.5, 8){\psline(3.5,0)}
   \put(  6, 8){\psline{->}(0,-2)}
   \put(  5, 4){\psframe(2,2)\rput(1,1){$#2$}}
   \put(  1, 3.5){\psframe(3,3)\rput(1.5,1.5){$#1$}}
   \put(2.5, 2.0){\psline{->}(0,1.5)}
   \put(  6, 1.5){\psline(0,2.5)}
   \put(  6, 1.5){\psline{->}(-3,0)}
   \put(2.5,1.5){\circle{1}}
   \put(2.5,1.5){\piu}
   \put(2.5,  0){\psline{->}(0,1)}
   \put(2.5,  0){\psline{->}(-2.5,0)}
   \put(8,  0){\psline{->}(-5.5,0)}
   \end{pspicture}}

\newcommand{\blosufdk}[9]{\begin{pspicture}(8,10)
   \put(2.5,10.5){\makebox(0,0)[b]{$#7$}}
   \put(2.0,8.5){\makebox(0,0)[r]{$#4$}}
   \put(6.25,8){\makebox(0,0)[l]{$#5$}}
   \put(6.25,3.5){\makebox(0,0)[l]{$#6$}}
   \put(0,0){\tratteggio}
   \put(0,0){\psline{->}(2,0)}
   \put(8,0){\psline{->}(-5,0)}
   \put(2.5,0){\piu}
   \put(2.5,0){#8}
   \put(2.5,0.5){\psline{->}(0,1)}
   \put(2.5,2){\piu}
   \put(2.5,2){#9}
   \put(2.5,2.5){\psline{->}(0,1.5)}
   \put(2.0,3.5){\makebox(0,0)[rt]{$#3$}}
   \put(6, 2){\psline{->}(-3,0)}
   \put(  6, 2){\psline(0,2.5)}
   \put(  5, 4.5){\psframe(2,2)\rput(1,1){$#2$}}
   \put(  1, 4.0){\psframe(3,3)\rput(1.5,1.5){$#1$}}
   \put(2.5, 7.0){\psline{->}(0,1.5)}
   \put(  6, 8.5){\psline{->}(0,-2)}
   \put(  6, 8.5){\psline(-3.5,0)}
   \put(2.5, 8.5){\punto}
   \put(2.5, 8.5){\psline{->}(0,1.5)}
   \put(2.5, 10){\psline{->}(-2.5,0)}
   \put(2.5, 10){\psline{->}(5.5,0)}
   \end{pspicture}}

\newcommand{\blosuaorpar}[9]{\begin{pspicture}(8,10)
   \put(2.5,-0.5){\makebox(0,0)[t]{$#7$}}
   \put(2.0,1.5){\makebox(0,0)[r]{$#4$}}
   \put(6.25,2){\makebox(0,0)[l]{$#5$}}
   \put(6.25,6.5){\makebox(0,0)[l]{$#6$}}
   \put(0,0){\tratteggio}
   \put(2,10){\psline{->}(-2,0)}
   \put(8,10){\psline{->}(-5,0)}
   \put(2.5,10){\piu}
   \put(2.5,10){#8}
   \put(2.5,8.5){\psline{->}(0,1)}
   \put(2.5,8){\piu}
   \put(2.5,8){#9}
   \put(2.5,6.0){\psline{->}(0,1.5)}
   \put(2.0,6.5){\makebox(0,0)[rb]{$#3$}}
   \put(6, 8){\psline{->}(-3,0)}
   \put(  6, 8){\psline(0,-2.5)}
   \put(  5, 3.5){\psframe(2,2)\rput(1,1){$#2$}}
   \put(  1, 3.0){\psframe(3,3)\rput(1.5,1.5){$#1$}}
   \put(2.5, 1.5){\psline{->}(0,1.5)}
   \put(  6, 1.5){\psline{->}(0,2)}
   \put(  6, 1.5){\psline(-3.5,0)}
   \put(2.5, 1.5){\punto}
   \put(2.5, 0.0){\psline{->}(0,1.5)}
   \put(0, 0){\psline{->}(2.5,0)}
   \put(2.5, 0){\psline{->}(5.5,0)}
   \end{pspicture}}

\newcommand{\blosuorpar}[9]{\begin{pspicture}(8,10)
   \put(2.5,-0.5){\makebox(0,0)[t]{$#7$}}
   \put(2.0,1.5){\makebox(0,0)[r]{$#4$}}
   \put(6.25,2){\makebox(0,0)[l]{$#5$}}
   \put(6.25,6.5){\makebox(0,0)[l]{$#6$}}
   \put(0,0){\tratteggio}
   \put(0,10){\psline{->}(2,0)}
   \put(3,10){\psline{->}(5,0)}
   \put(2.5,10){\piu}
   \put(2.5,10){#8}
   \put(2.5,8.5){\psline{->}(0,1)}
   \put(2.5,8){\piu}
   \put(2.5,8){#9}
   \put(2.5,6.0){\psline{->}(0,1.5)}
   \put(2.0,6.5){\makebox(0,0)[rb]{$#3$}}
   \put(6, 8){\psline{->}(-3,0)}
   \put(  6, 8){\psline(0,-2.5)}
   \put(  5, 3.5){\psframe(2,2)\rput(1,1){$#2$}}
   \put(  1, 3.0){\psframe(3,3)\rput(1.5,1.5){$#1$}}
   \put(2.5, 1.5){\psline{->}(0,1.5)}
   \put(  6, 1.5){\psline{->}(0,2)}
   \put(  6, 1.5){\psline(-3.5,0)}
   \put(2.5, 1.5){\punto}
   \put(2.5, 0.0){\psline{->}(0,1.5)}
   \put(2.5, 0){\psline{->}(-2.5,0)}
   \put(8.0, 0){\psline{->}(-5.5,0)}
   \end{pspicture}}

\newcommand{\blogiufdk}[9]{\begin{pspicture}(8,10)
   \put(2.5,-0.5){\makebox(0,0)[t]{$#7$}}
   \put(2.0,1.5){\makebox(0,0)[r]{$#4$}}
   \put(6.25,2){\makebox(0,0)[l]{$#5$}}
   \put(6.25,6.5){\makebox(0,0)[l]{$#6$}}
   \put(0,0){\tratteggio}
   \put(0,10){\psline{->}(2,0)}
   \put(8,10){\psline{->}(-5,0)}
   \put(2.5,10){\piu}
   \put(2.5,10){#8}
   \put(2.5,9.5){\psline{->}(0,-1)}
   \put(2.5,8){\piu}
   \put(2.5,8){#9}
   \put(2.5,7.5){\psline{->}(0,-1.5)}
   \put(2.0,6.5){\makebox(0,0)[rb]{$#3$}}
   \put(6, 8){\psline{->}(-3,0)}
   \put(  6, 8){\psline(0,-2.5)}
   \put(  5, 3.5){\psframe(2,2)\rput(1,1){$#2$}}
   \put(  1, 3.0){\psframe(3,3)\rput(1.5,1.5){$#1$}}
   \put(2.5, 3.0){\psline{->}(0,-1.5)}
   \put(  6, 1.5){\psline{->}(0,2)}
   \put(  6, 1.5){\psline(-3.5,0)}
   \put(2.5, 1.5){\punto}
   \put(2.5, 1.5){\psline{->}(0,-1.5)}
   \put(2.5, 0){\psline{->}(-2.5,0)}
   \put(2.5, 0){\psline{->}(5.5,0)}
   \end{pspicture}}

\newcommand{\blogiuaorpar}[9]{\begin{pspicture}(8,10)
   \put(2.5,10.5){\makebox(0,0)[b]{$#7$}}
   \put(2.0,8.5){\makebox(0,0)[r]{$#4$}}
   \put(6.25,8){\makebox(0,0)[l]{$#5$}}
   \put(6.25,3.5){\makebox(0,0)[l]{$#6$}}
   \put(0,0){\tratteggio}
   \put(0,0){\psline{->}(2,0)}
   \put(3,0){\psline{->}(5,0)}
   \put(2.5,0){\piu}
   \put(2.5,0){#8}
   \put(2.5,1.5){\psline{->}(0,-1)}
   \put(2.5,2){\piu}
   \put(2.5,2){#9}
   \put(2.5,4.0){\psline{->}(0,-1.5)}
   \put(2.0,3.5){\makebox(0,0)[rt]{$#3$}}
   \put(6, 2){\psline{->}(-3,0)}
   \put(  6, 2){\psline(0,2.5)}
   \put(  5, 4.5){\psframe(2,2)\rput(1,1){$#2$}}
   \put(  1, 4.0){\psframe(3,3)\rput(1.5,1.5){$#1$}}
   \put(2.5, 8.5){\psline{->}(0,-1.5)}
   \put(  6, 8.5){\psline{->}(0,-2)}
   \put(  6, 8.5){\psline(-3.5,0)}
   \put(2.5, 8.5){\punto}
   \put(2.5, 10){\psline{->}(0,-1.5)}
   \put(2.5, 10){\psline{->}(-2.5,0)}
   \put(8, 10){\psline{->}(-5.5,0)}
   \end{pspicture}}

\newcommand{\blogiuorpar}[9]{\begin{pspicture}(8,10)
   \put(2.5,10.5){\makebox(0,0)[b]{$#7$}}
   \put(2.0,8.5){\makebox(0,0)[r]{$#4$}}
   \put(6.25,8){\makebox(0,0)[l]{$#5$}}
   \put(6.25,3.5){\makebox(0,0)[l]{$#6$}}
   \put(0,0){\tratteggio}
   \put(2,0){\psline{->}(-2,0)}
   \put(8,0){\psline{->}(-5,0)}
   \put(2.5,0){\piu}
   \put(2.5,0){#8}
   \put(2.5,1.5){\psline{->}(0,-1)}
   \put(2.5,2){\piu}
   \put(2.5,2){#9}
   \put(2.5,4.0){\psline{->}(0,-1.5)}
   \put(2.0,3.5){\makebox(0,0)[rt]{$#3$}}
   \put(6, 2){\psline{->}(-3,0)}
   \put(  6, 2){\psline(0,2.5)}
   \put(  5, 4.5){\psframe(2,2)\rput(1,1){$#2$}}
   \put(  1, 4.0){\psframe(3,3)\rput(1.5,1.5){$#1$}}
   \put(2.5, 8.5){\psline{->}(0,-1.5)}
   \put(  6, 8.5){\psline{->}(0,-2)}
   \put(  6, 8.5){\psline(-3.5,0)}
   \put(2.5, 8.5){\punto}
   \put(2.5, 10){\psline{->}(0,-1.5)}
   \put(0, 10){\psline{->}(2.5,0)}
   \put(2.5, 10){\psline{->}(5.5,0)}
   \end{pspicture}}

\newcommand{\blogiupar}[9]{\begin{pspicture}(8,10)
   \put(2.5,-0.5){\makebox(0,0)[t]{$#7$}}
   \put(2,2.5){\makebox(0,0)[r]{$#4$}}
   \put(2,8.0){\makebox(0,0)[r]{$#3$}}
   \put(6.25,2.5){\makebox(0,0)[l]{$#6$}}
   \put(6.25,7.5){\makebox(0,0)[l]{$#5$}}
   \put(0,0){\tratteggio}
   \put(0,10){\psline{->}(2,0)}
   \put(8,10){\psline{->}(-5,0)}
   \put(2.5,10){\circle{1}}
   \put(2.5,10){\piu}
   \put(2.5,10){#8}
   \put(2.5,9.5){\psline{->}(0,-1.5)}
   \put(2.5, 8){\punto}
   \put(2.5, 8){\psline{->}(0,-1.5)}
   \put(2.5, 8){\psline(3.5,0)}
   \put(  6, 8){\psline{->}(0,-2)}
   \put(  5, 4){\psframe(2,2)\rput(1,1){$#2$}}
   \put(  1, 3.5){\psframe(3,3)\rput(1.5,1.5){$#1$}}
   \put(2.5, 3.5){\psline{->}(0,-1.5)}
   \put(  6, 1.5){\psline(0,2.5)}
   \put(  6, 1.5){\psline{->}(-3,0)}
   \put(2.5,1.5){\piu}
   \put(2.5,1.5){#9}
   \put(2.5,  1){\psline{->}(0,-1)}
   \put(2.5,  0){\psline{->}(-2.5,0)}
   \put(2.5,  0){\psline{->}(5.5,0)}
   \end{pspicture}}

\newcommand{\blogiuorfdk}[9]{\begin{pspicture}(8,10)
   \put(2.5,10.5){\makebox(0,0)[b]{$#7$}}
   \put(2,7.5){\makebox(0,0)[r]{$#4$}}
   \put(2,2.0){\makebox(0,0)[r]{$#3$}}
   \put(6.25,2.5){\makebox(0,0)[l]{$#5$}}
   \put(6.25,7.5){\makebox(0,0)[l]{$#6$}}
   \put(0,0){\tratteggio}
   \put(2,0){\psline{->}(-2,0)}
   \put(8,0){\psline{->}(-5,0)}
   \put(2.5,0){\circle{1}}
   \put(2.5,2.0){\psline{->}(0,-1.5)}
   \put(2.5,0){\piu}
   \put(2.5,0){#8}
   \put(2.5,3.5){\psline{->}(0,-1.5)}
   \put(2.5, 2){\punto}
   \put(2.5, 2){\psline(3.5,0)}
   \put(  6, 2){\psline{->}(0,2)}
   \put(  5, 4){\psframe(2,2)\rput(1,1){$#2$}}
   \put(  1, 3.5){\psframe(3,3)\rput(1.5,1.5){$#1$}}
   \put(2.5, 8.0){\psline{->}(0,-1.5)}
   \put(  6, 6){\psline(0,2.5)}
   \put(  6, 8.5){\psline{->}(-3,0)}
   \put(2.5,8.5){\piu}
   \put(2.5,8.5){#9}
   \put(2.5, 10){\psline{->}(0,-1)}
   \put(0, 10){\psline{->}(2.5,0)}
   \put(2.5, 10){\psline{->}(5.5,0)}
   \end{pspicture}}

\newcommand{\blogiuaorfdk}[9]{\begin{pspicture}(8,10)
   \put(2.5,10.5){\makebox(0,0)[b]{$#7$}}
   \put(2,7.5){\makebox(0,0)[r]{$#4$}}
   \put(2,2.0){\makebox(0,0)[r]{$#3$}}
   \put(6.25,2.5){\makebox(0,0)[l]{$#5$}}
   \put(6.25,7.5){\makebox(0,0)[l]{$#6$}}
   \put(0,0){\tratteggio}
   \put(0,0){\psline{->}(2,0)}
   \put(3,0){\psline{->}(5,0)}
   \put(2.5,0){\circle{1}}
   \put(2.5,2.0){\psline{->}(0,-1.5)}
   \put(2.5,0){\piu}
   \put(2.5,0){#8}
   \put(2.5,3.5){\psline{->}(0,-1.5)}
   \put(2.5, 2){\punto}
   \put(2.5, 2){\psline(3.5,0)}
   \put(  6, 2){\psline{->}(0,2)}
   \put(  5, 4){\psframe(2,2)\rput(1,1){$#2$}}
   \put(  1, 3.5){\psframe(3,3)\rput(1.5,1.5){$#1$}}
   \put(2.5, 8.0){\psline{->}(0,-1.5)}
   \put(  6, 6){\psline(0,2.5)}
   \put(  6, 8.5){\psline{->}(-3,0)}
   \put(2.5,8.5){\piu}
   \put(2.5,8.5){#9}
   \put(2.5, 10){\psline{->}(0,-1)}
   \put(2.5, 10){\psline{->}(-2.5,0)}
   \put(8, 10){\psline{->}(-5.5,0)}
   \end{pspicture}}

\newcommand{\blosupar}[9]{\begin{pspicture}(8,10)
   \put(2.5,10.5){\makebox(0,0)[b]{$#7$}}
   \put(2,7.5){\makebox(0,0)[r]{$#4$}}
   \put(2,2.0){\makebox(0,0)[r]{$#3$}}
   \put(6.25,2.5){\makebox(0,0)[l]{$#5$}}
   \put(6.25,7.5){\makebox(0,0)[l]{$#6$}}
   \put(0,0){\tratteggio}
   \put(0,0){\psline{->}(2,0)}
   \put(8,0){\psline{->}(-5,0)}
   \put(2.5,2.0){\psline{->}(0,1.5)}
   \put(2.5,0){\piu}
   \put(2.5,0){#8}
   \put(2.5,0.5){\psline{->}(0,1.5)}
   \put(2.5, 2){\punto}
   \put(2.5, 2){\psline(3.5,0)}
   \put(  6, 2){\psline{->}(0,2)}
   \put(  5, 4){\psframe(2,2)\rput(1,1){$#2$}}
   \put(  1, 3.5){\psframe(3,3)\rput(1.5,1.5){$#1$}}
   \put(2.5, 6.5){\psline{->}(0,1.5)}
   \put(  6, 6){\psline(0,2.5)}
   \put(  6, 8.5){\psline{->}(-3,0)}
   \put(2.5,8.5){\circle{1}}
   \put(2.5,8.5){\piu}
   \put(2.5,8.5){#9}
   \put(2.5,  9){\psline{->}(0,1)}
   \put(2.5, 10){\psline{->}(-2.5,0)}
   \put(2.5, 10){\psline{->}(5.5,0)}
   \end{pspicture}}

\newcommand{\blosuorfdk}[9]{\begin{pspicture}(8,10)
   \put(2.5,-0.5){\makebox(0,0)[t]{$#7$}}
   \put(2,2.5){\makebox(0,0)[r]{$#4$}}
   \put(2,8.0){\makebox(0,0)[r]{$#3$}}
   \put(6.25,2.5){\makebox(0,0)[l]{$#6$}}
   \put(6.25,7.5){\makebox(0,0)[l]{$#5$}}
   \put(0,0){\tratteggio}
   \put(0,10){\psline{->}(2,0)}
   \put(3,10){\psline{->}(5,0)}
   \put(2.5,10){\piu}
   \put(2.5,10){#8}
   \put(2.5,8.0){\psline{->}(0,1.5)}
   \put(2.5,6.5){\psline{->}(0,1.5)}
   \put(2.5, 8){\punto}
   \put(2.5, 8){\psline(3.5,0)}
   \put(  6, 8){\psline{->}(0,-2)}
   \put(  5, 4){\psframe(2,2)\rput(1,1){$#2$}}
   \put(  1, 3.5){\psframe(3,3)\rput(1.5,1.5){$#1$}}
   \put(2.5, 2.0){\psline{->}(0,1.5)}
   \put(  6, 1.5){\psline(0,2.5)}
   \put(  6, 1.5){\psline{->}(-3,0)}
   \put(2.5,1.5){\piu}
   \put(2.5,1.5){#9}
   \put(2.5,  0){\psline{->}(0,1)}
   \put(2.5,  0){\psline{->}(-2.5,0)}
   \put(8,  0){\psline{->}(-5.5,0)}
   \end{pspicture}}

\newcommand{\blosuaorfdk}[9]{\begin{pspicture}(8,10)
   \put(2.5,-0.5){\makebox(0,0)[t]{$#7$}}
   \put(2,2.5){\makebox(0,0)[r]{$#4$}}
   \put(2,8.0){\makebox(0,0)[r]{$#3$}}
   \put(6.25,2.5){\makebox(0,0)[l]{$#6$}}
   \put(6.25,7.5){\makebox(0,0)[l]{$#5$}}
   \put(0,0){\tratteggio}
   \put(2,10){\psline{->}(-2,0)}
   \put(8,10){\psline{->}(-5,0)}
   \put(2.5,10){\piu}
   \put(2.5,10){#8}
   \put(2.5,8.0){\psline{->}(0,1.5)}
   \put(2.5,6.5){\psline{->}(0,1.5)}
   \put(2.5, 8){\punto}
   \put(2.5, 8){\psline(3.5,0)}
   \put(  6, 8){\psline{->}(0,-2)}
   \put(  5, 4){\psframe(2,2)\rput(1,1){$#2$}}
   \put(  1, 3.5){\psframe(3,3)\rput(1.5,1.5){$#1$}}
   \put(2.5, 2.0){\psline{->}(0,1.5)}
   \put(  6, 1.5){\psline(0,2.5)}
   \put(  6, 1.5){\psline{->}(-3,0)}
   \put(2.5,1.5){\piu}
   \put(2.5,1.5){#9}
   \put(2.5,  0){\psline{->}(0,1)}
   \put(0,  0){\psline{->}(2.5,0)}
   \put(2.5,  0){\psline{->}(5.5,0)}
   \end{pspicture}}

\newcommand{\blovigiuattror}[5]{\begin{pspicture}(6,10)
   \put(1.25,2){\makebox(0,0)[r]{$#3$}}
   \put(0,0){\tratteggio}
   \put(1,0){\psline{->}(-1,0)}
   \put(6,0){\psline{->}(-4,0)}
   \put(1.5,0){\circle{1}}
   \put(1.5,0){\piu}
   \put(1.5,0){#5}
   \put(1.5,4){\psline{->}(0,-3.5)}
   \put(1.5, 2){\punto}
   \put(1.5, 2){\psline(3,0)}
   \put(4.5, 2){\psline{->}(0,2)}
   \put(3.5, 4){\psframe(2,2)\rput(1,1){$#2$}}
   \put(0.5, 4){\psframe(2,2)\rput(1,1){$#1$}}
   \put(1.5, 7.5){\psline{->}(0,-1.5)}
   \put(4.5, 6){\psline(0,2)}
   \put(4.5, 8){\psline{->}(-2.5,0)}
   \put(1.5,8){\circle{1}}
   \put(1.5,8){\piu}
   \put(1.5,8){#4}
   \put(1.5, 10){\psline{->}(0,-1.5)}
   \put(0, 10){\psline{->}(1.5,0)}
   \put(1.5, 10){\psline{->}(4.5,0)}
   \end{pspicture}}

\newcommand{\blovigiuattr}[5]{\begin{pspicture}(6,10)
   \put(1.5,-0.5){\makebox(0,0)[t]{$#3$}}
   \put(0,0){\tratteggio}
   \put(1.5,0){\psline{->}(-1.5,0)}
   \put(1.5,0){\psline{->}(4.5,0)}
   \put(1.5,4){\psline{->}(0,-4)}
   \put(1.5, 2){\punto}
   \put(1.5, 2){\psline(3,0)}
   \put(4.5, 2){\psline{->}(0,2)}
   \put(3.5, 4){\psframe(2,2)\rput(1,1){$#2$}}
   \put(0.5, 4){\psframe(2,2)\rput(1,1){$#1$}}
   \put(1.5, 7.5){\psline{->}(0,-1.5)}
   \put(4.5, 6){\psline(0,2)}
   \put(4.5, 8){\psline{->}(-2.5,0)}
   \put(1.5,8){\circle{1}}
   \put(1.5,8){\piu}
   \put(1.5,8){#4}
   \put(1.5, 9.5){\psline{->}(0,-1)}
   \put(1.5,10){\circle{1}}
   \put(1.5,10){\piu}
   \put(1.5,10){#5}
   \put(0, 10){\psline{->}(1,0)}
   \put(6, 10){\psline{->}(-4,0)}
   \end{pspicture}}

\newcommand{\blosuintpium}[6]{\blosubase{\midsuintpium{#1}{#2}{#4}{#5}{#6}}{#3}}
\newcommand{\blovisuintpium}[6]{\blovisubase{\midsuintpium{#1}{#2}{#4}{#5}{#6}}{#3}}

\newcommand{\blosuattdue}[6]{\begin{pspicture}(8,10)
   \put(2.5,10.5){\makebox(0,0)[b]{$#5$}}
   \put(3.5,9){\makebox(0,0)[bl]{$#4$}}
   \put(0,0){\tratteggio}
   \put(0,0){\psline{->}(2,0)}
   \put(8,0){\psline{->}(-5,0)}
   \put(2.5,0){\circle{1}}
   \put(2.5,0){\piu}
   \put(2.5,0){#6}
   \put(2.5,0.5){\psline{->}(0,1.5)}
   \put(2.5, 1){\punto}
   \put(2.5, 1){\psline(3.5,0)}
   \put(  6, 1){\psline{->}(0,3)}
   \put( 4.5, 4){\psframe(3,2)\rput(1.5,1){$#3$}}
   \put(  1.5, 2){\psframe(2,2)\rput(1,1){$\frac{1}{s}$}}
   \put(2.5, 4){\psline{->}(0,1.5)}
   \put(2.25 ,4.75){\makebox(0,0)[r]{$#1$}}
   \put( 1.5, 5.5){\psframe(2,2)\rput(1,1){$#2$}}
   \put(2.5, 7.5){\psline{->}(0,0.75)}
   \put(  6, 6){\psline(0,2.75)}
   \put(  6, 8.75){\psline{->}(-3,0)}
   \put(2.5,8.75){\circle{1}}
   \put(2.5,8.75){\piu}
   \put(2.5,  9.25){\psline{->}(0,0.75)}
   \put(2.5, 10){\psline{->}(-2.5,0)}
   \put(2.5, 10){\psline{->}(5.5,0)}
   \end{pspicture}}

\newcommand{\blosuatttre}[6]{\begin{pspicture}(8,10)
   \put(2.5,10.5){\makebox(0,0)[b]{$#5$}}
   \put(3.5,9){\makebox(0,0)[bl]{$#4$}}
   \put(0,0){\tratteggio}
   \put(0,0){\psline{->}(2,0)}
   \put(8,0){\psline{->}(-5,0)}
   \put(2.5,0){\circle{1}}
   \put(2.5,0){\piu}
   \put(2.5,0){#6}
   \put(2.5,0.5){\psline{->}(0,1.5)}
   \put(2.5, 1){\punto}
   \put(2.5, 1){\psline(3.5,0)}
   \put(  6, 1){\psline{->}(0,2.5)}
   \put( 4.5,3.5){\psframe(3,2)\rput(1.5,1){$#3$}}
   \put(  1.5, 2){\psframe(2,2)\rput(1,1){$\frac{1}{s}$}}
   \put(2.5, 4){\psline{->}(0,1.5)}
   \put(2.25 ,4.75){\makebox(0,0)[r]{$#1$}}
   \put(1, 5.5){\psframe(3,2)\rput(1.5,1){$#2$}}
   \put(2.5, 7.5){\psline{->}(0,0.75)}
   \put(  6, 5.5){\psline(0,3.25)}
   \put(  6, 8.75){\psline{->}(-3,0)}
   \put(2.5,8.75){\circle{1}}
   \put(2.5,8.75){\piu}
   \put(2.5,  9.25){\psline{->}(0,0.75)}
   \put(2.5, 10){\psline{->}(-2.5,0)}
   \put(2.5, 10){\psline{->}(5.5,0)}
   \end{pspicture}}

\newcommand{\blogiuatttre}[6]{\begin{pspicture}(8,10)
   \put(2.5,-0.5){\makebox(0,0)[t]{$#5$}}
   \put(3.5,1){\makebox(0,0)[tl]{$#4$}}
   \put(0,0){\tratteggio}
   \put(0,10){\psline{->}(2,0)}
   \put(8,10){\psline{->}(-5,0)}
   \put(2.5,10){\circle{1}}
   \put(2.5,10){\piu}
   \put(2.5,10){#6}
   \put(2.5,9.5){\psline{->}(0,-1.5)}
   \put(2.5, 9){\punto}
   \put(2.5, 9){\psline(3.5,0)}
   \put(  6, 9){\psline{->}(0,-2.5)}
   \put( 4.5,4.5){\psframe(3,2)\rput(1.5,1){$#3$}}
   \put(  1.5, 6){\psframe(2,2)\rput(1,1){$\frac{1}{s}$}}
   \put(2.5, 6){\psline{->}(0,-1.5)}
   \put(2.25 ,5.25){\makebox(0,0)[r]{$#1$}}
   \put(1, 2.5){\psframe(3,2)\rput(1.5,1){$#2$}}
   \put(2.5, 2.5){\psline{->}(0,-0.75)}
   \put(  6, 4.5){\psline(0,-3.25)}
   \put(  6, 1.25){\psline{->}(-3,0)}
   \put(2.5,1.25){\circle{1}}
   \put(2.5,1.25){\piu}
   \put(2.5,  0.75){\psline{->}(0,-0.75)}
   \put(2.5,  0){\psline{->}(-2.5,0)}
   \put(2.5,  0){\psline{->}(5.5,0)}
   \end{pspicture}}

\newcommand{\tratteggiogiu}[1]{\begin{pspicture}(0,10)
   \put(0,-1.125){\psline[linestyle=dashed,linewidth=0.2pt](0,0)(0,-#1)}
   \end{pspicture}}

\newcommand{\scrivigiu}[3]{\begin{pspicture}(0,10)
   \put(0,0){\tratteggiogiu{#2}}
   \put(0,-1){
   \put(0,-#2){\begin{minipage}[t]{#1\unitlength} \centering #3 \end{minipage}}
   }
   \end{pspicture}}

\newcommand{\tratteggiosu}[1]{\begin{pspicture}(0,10)
   \put(0, 11.125){\psline[linestyle=dashed,linewidth=0.2pt](0,0)(0,#1)}
   \end{pspicture}}

\newcommand{\scrivisu}[3]{\begin{pspicture}(0,10)
   \put(0,0){\tratteggiosu{#2}}
   \put(0,10){\put(0,#2){\begin{minipage}[b]{#1\unitlength} \centering #3 \end{minipage}}}
   \end{pspicture}}

\newcommand{\nrsezsu}[2]{\begin{pspicture}(0,10)
   \put(0,#2){\small\cput[linewidth=0.3pt,framesep=1pt](0,11.75){\tiny #1}}
  \end{pspicture}}

\newcommand{\nrsezxsu}[3]{\begin{pspicture}(0,10)
   \put(0,#3){\pscircle[linewidth=0.3pt](0,11.75){#1}\rput(0,11.75){\tiny #2}}
  \end{pspicture}}

\newcommand{\nrsezgiu}[2]{\begin{pspicture}(0,10)
   \put(0,#2){\small\cput[linewidth=0.3pt,framesep=1pt](0,-2){\tiny #1}}
  \end{pspicture}}

\newcommand{\nrsezxgiu}[3]{\begin{pspicture}(0,10)
   \put(0,#3){\pscircle[linewidth=0.3pt](0,-2){#1}\rput(0,-2){\tiny #2}}
  \end{pspicture}}

\newcommand{\blogiogiu}[3]{\begin{pspicture}(8,10)
   \put(0,0){\tratteggio}
   \put(2,1){\psframe(4,4)\rput(2,2){}}
   \put(3,6){\psframe(2,2)\rput(1,1){{\bf $1/s$}}}
   \put(2.5,3){\psline(3,0)}
   \put(4,1.5){\psline(0,3)}
   \put(4.75,3){\psline(1,1.5)}
   \put(3.25,3){\psline(-1,-1.5)}
   \put(4,10){\piu\dst}
   \put(0,10){\psline{->}(3.5,0)}
   \put(8,10){\psline{->}(-3.5,0)}
   \put(4,0){\psline{->}(4,0)}
   \put(4,0){\psline{->}(-4,0)}
   \put(4,0){\punto}
   \put(4,1){\psline{->}(0,-1)}
   \put(4,6){\psline{->}(0,-1)}
   \put(4,9.5){\psline{->}(0,-1.5)}
   \put(3,-2){\makebox(2,2){{\bf $#3$}}}
   \put(4,1.5){\makebox(1.5,1.5){{\bf $#1$}}}
   \put(4,3.5){\makebox(1.5,1.5){{\bf $#2$}}}
\end{pspicture}}

\newcommand{\flowrightgiu}[1]{\begin{pspicture}(0,10)
   \psline[linewidth=0.4pt]{->}(-0.5,-1)(0.5,-1)
   \put(0,-1.5){\makebox(0,0)[t]{$#1$}}
\end{pspicture}}

\newcommand{\flowleftgiu}[1]{\begin{pspicture}(0,10)
   \Psline[linewidth=0.4pt]{<-}(-0.5,-1)(0.5,-1)
   \put(0,-1.5){\makebox(0,0)[t]{$#1$}}
\end{pspicture}}

\newcommand{\flowrightsu}[1]{\begin{pspicture}(0,10)
   \psline[linewidth=0.4pt]{->}(-0.5,11.5)(0.5,11.5)
   \put(0,11.75){\makebox(0,0)[b]{$#1$}}
\end{pspicture}}

\newcommand{\flowleftsu}[1]{\begin{pspicture}(0,10)
   \psline[linewidth=0.4pt]{<-}(-0.5,11.5)(0.5,11.5)
   \put(0,11.75){\makebox(0,0)[b]{$#1$}}
\end{pspicture}}

\newcommand{\flowright}[1]{\begin{pspicture}(0,12)
   \put(0,-1){\makebox(0,0)[t]{$\Rightarrow$}}
\end{pspicture}}

\newcommand{\flowleft}[1]{\begin{pspicture}(0,12)
   \put(0,-1){\makebox(0,0)[t]{$\leftarrow$}}
\end{pspicture}}

\newcommand{\downdot}[1]{\begin{pspicture}(0,12)
   \put(0,-1){\makebox(0,0)[t]{$#1$}}
\end{pspicture}}

\newcommand{\updot}[1]{\begin{pspicture}(0,12)
   \put(0,11){\makebox(0,0)[b]{$#1$}}
\end{pspicture}}

\newcommand{\attdstsnt}[1]{\begin{pspicture}(0,1)
   \put(2.5,-1.5){\psframe(3,3)\rput(1.5,1.5){#1}}
   \put(0,0){\psline{->}(2.5,0)}
   \put(8,0){\psline{->}(-2.5,0)}
\end{pspicture}}

\newcommand{\midgiuvipiu}[2]{\begin{pspicture}(0,1)
   \put(   0, 10){\psline{->}(0,-3.5)}
   \put(-1.5,3.5){\psframe(3,3)\rput(1.5,1.5){$#1$}}
   \put(   0, 3.5){\psline{->}(0,-3)}
   \put(   0,1.5){\makebox(0,0)[l]{$#2$}}
\end{pspicture}}

\newcommand{\midsuvipiu}[2]{\begin{pspicture}(0,1)
   \put(   0,  0){\psline{->}(0,3.5)}
   \put(-1.5,3.5){\psframe(3,3)\rput(1.5,1.5){$#1$}}
   \put(   0,6.5){\psline{->}(0,3)}
   \put(   0,8.5){\makebox(0,0)[l]{$#2$}}
\end{pspicture}}

\newcommand{\midsutrelarge}[2]{\begin{pspicture}(0,1)
   \put(-2,3.5){\psframe(4,3)\rput(2,1.5){$#1$}}
   \put(0,6.5){\psline{->}(0,3.5)}
   \put(0,  0){\psline{->}(0,3.5)}
   \put(0,10.5){\makebox(0,0)[b]{$#2$}}
\end{pspicture}}

\newcommand{\midsuintpium}[5]{\begin{pspicture}(0,1)
   \put(    0,0.5){\psline{->}(0,0.5)}
   \put(-1.25,  1){\psframe(2.5,2.5)\rput(1.25,1.25){{\bf $\frac{1}{s}$}}}
   \put(    0,3.5){\psline{->}(0,1)}
   \put(    0,  5){\piu}
   \put(    0,5.5){\psline{->}(0,0.5)}
   \put(    2, 5){\psline{->}(-1.5,0)}
   \put(    2, 5){\punto}
   \put(   -1.5,  6){\psframe(3,3)\rput(1.5,1.5){$#1$}}
   \put(    0,  9){\psline{->}(0,1)}
   \put(0,10.5){\makebox(0,0)[b]{$#2$}}
   \put(  2.25,5){\makebox(0,0)[l]{$#3$}}
   \put(    0, 5){\makebox(0,0)[c]{$#4$}}
   \put(-0.25, 3.75){\makebox(0,0)[rb]{$#5$}}
\end{pspicture}}

\newcommand{\bloseiretaor}[2]{\begin{pspicture}(8,10)
   \put(0,0){\tratteggio}
   \put(3,10){\psline{->}(-3,0)}
   \put(3, 0){\midsutre{#1}{#2}}
   \put(0, 0){\psline{->}(3,0)}
\end{pspicture}}

\newcommand{\bloseiretaorlarge}[2]{\begin{pspicture}(8,10)
   \put(0,0){\tratteggio}
   \put(3,10){\psline{->}(-3,0)}
   \put(3, 0){\midsutrelarge{#1}{#2}}
   \put(0, 0){\psline{->}(3,0)}
\end{pspicture}}

\newcommand{\bloseiretor}[2]{\begin{pspicture}(8,10)
   \put(0,0){\bloseiintgiu{#1}{#2}}
   \put(3,0){\psline{->}(-3,0)}
\end{pspicture}}

\newcommand{\bloseigiuao}[2]{\begin{pspicture}(6,10)
   \put(6,10){\psline{->}(-3,0)}
   \put(3,0){\midgiutre{#1}{#2}}
   \put(3,0){\psline{->}(3,0)}
\end{pspicture}}

\newcommand{\bloseiintgiu}[2]{\begin{pspicture}(8,10)
   \put(0,0){\tratteggio}
   \put(0,10){\psline{->}(3,0)}
   \put(3,0){\midgiutre{#1}{#2}}
\end{pspicture}}

\newcommand{\blogiuretor}[3]{\begin{pspicture}(8,10)
   \put(0,0){\tratteggio}
   \put(4.5,8.0){\makebox(0,0)[lb]{$#2$}}
   \put(0,10){\psline(4,0)}
   \put(4,10){\psline{->}(0,-3)}
   \put(2,3){\psframe(4,4)\rput(2,2){$#1$}}
   \put(4,3){\psline(0,-3)}
   \put(4,0){\psline{->}(-4,0)}
   \put(4.5,2.0){\makebox(0,0)[lt]{$#3$}}
\end{pspicture}}

\newcommand{\blogiuretao}[3]{\begin{pspicture}(0,10)
   \put(8,0){\tratteggio}
   \put(3.5,8.0){\makebox(0,0)[rb]{$#2$}}
   \put(8,10){\psline(-4,0)}
   \put(4,10){\psline{->}(0,-3)}
   \put(2,3){\psframe(4,4)\rput(2,2){$#1$}}
   \put(4,3){\psline(0,-3)}
   \put(4,0){\psline{->}(4,0)}
   \put(3.5,2.0){\makebox(0,0)[rt]{$#3$}}
\end{pspicture}}

\newcommand{\blosuretor}[3]{\begin{pspicture}(8,10)
   \put(8,0){\tratteggio}
   \put(3.5,8.0){\makebox(0,0)[rb]{$#2$}}
   \put(4,10){\psline{->}(4,0)}
   \put(4,7){\psline(0,3)}
   \put(2,3){\psframe(4,4)\rput(2,2){$#1$}}
   \put(4,0){\psline{->}(0,3)}
   \put(8,0){\psline(-4,0)}
   \put(3.5,2.0){\makebox(0,0)[rt]{$#3$}}
\end{pspicture}}

\newcommand{\blosuretao}[3]{\begin{pspicture}(8,10)
   \put(0,0){\tratteggio}
   \put(4.5,8.0){\makebox(0,0)[lb]{$#2$}}
   \put(4,10){\psline{->}(-4,0)}
   \put(4,7){\psline(0,3)}
   \put(2,3){\psframe(4,4)\rput(2,2){$#1$}}
   \put(4,0){\psline{->}(0,3)}
   \put(0,0){\psline(4,0)}
   \put(4.5,2.0){\makebox(0,0)[lt]{$#3$}}
\end{pspicture}}

\newcommand{\blovigiuretintmor}[2]{\begin{pspicture}(6,10)
   \put(0,0){\tratteggio}
   \put(    0,10){\psline(3,0)}
   \put(3,0){
   \put(    0,10){\psline{->}(0,-1.5)}
   \put(-1.25,  6){\psframe(2.5,2.5)\rput(1.25,1.25){{\bf $\frac{1}{s}$}}}
   \put(    0,  6){\psline{->}(0,-1)}
   \put(   -2,  1){\psframe(4,4)\rput(2,2){$#1$}}
   \put(    0,  1){\psline(0,-1)}
   \put(0,-0.5  ){\makebox(0,0)[t]{$#2$}}
   }
   \put(    3,0){\psline{->}(-3,0)}
\end{pspicture}}

\newcommand{\blovigiuretor}[3]{\begin{pspicture}(8,10)
   \put(0,0){\tratteggio}
   \put(3.5,8.0){\makebox(0,0)[lb]{$#2$}}
   \put(0,10){\psline(3,0)}
   \put(3,10){\psline{->}(0,-3)}
   \put(1,3){\psframe(4,4)\rput(2,2){$#1$}}
   \put(3,3){\psline(0,-3)}
   \put(3,0){\psline{->}(-3,0)}
   \put(3.5,2.0){\makebox(0,0)[lt]{$#3$}}
\end{pspicture}}

\newcommand{\blovigiuretao}[3]{\begin{pspicture}(6,10)
   \put(6,0){\tratteggio}
   \put(2.5,8.0){\makebox(0,0)[rb]{$#2$}}
   \put(6,10){\psline(-3,0)}
   \put(3,10){\psline{->}(0,-3)}
   \put(1,3){\psframe(4,4)\rput(2,2){$#1$}}
   \put(3,3){\psline(0,-3)}
   \put(3,0){\psline{->}(3,0)}
   \put(2.5,2.0){\makebox(0,0)[rt]{$#3$}}
\end{pspicture}}

\newcommand{\blovisuretor}[3]{\begin{pspicture}(6,10)
   \put(6,0){\tratteggio}
   \put(2.5,8.0){\makebox(0,0)[rb]{$#2$}}
   \put(3,10){\psline{->}(3,0)}
   \put(3,7){\psline(0,3)}
   \put(1,3){\psframe(4,4)\rput(2,2){$#1$}}
   \put(3,0){\psline{->}(0,3)}
   \put(6,0){\psline(-3,0)}
   \put(2.5,2.0){\makebox(0,0)[rt]{$#3$}}
\end{pspicture}}

\newcommand{\blovisuretao}[3]{\begin{pspicture}(6,10)
   \put(0,0){\tratteggio}
   \put(3.5,8.0){\makebox(0,0)[lb]{$#2$}}
   \put(3,10){\psline{->}(-3,0)}
   \put(3,7){\psline(0,3)}
   \put(1,3){\psframe(4,4)\rput(2,2){$#1$}}
   \put(3,0){\psline{->}(0,3)}
   \put(0,0){\psline(3,0)}
   \put(3.5,2.0){\makebox(0,0)[lt]{$#3$}}
\end{pspicture}}

\newcommand{\bloivgiuretor}[3]{\begin{pspicture}(4,10)
   \put(0,0){\tratteggio}
   \put(2.5,8.0){\makebox(0,0)[lb]{$#2$}}
   \put(0,10){\psline(2,0)}
   \put(2,10){\psline{->}(0,-3.5)}
   \put(0.5,3.5){\psframe(3,3)\rput(1.5,1.5){$#1$}}
   \put(2,3.5){\psline(0,-3.5)}
   \put(2,0){\psline{->}(-2,0)}
   \put(2.5,2.0){\makebox(0,0)[lt]{$#3$}}
\end{pspicture}}

\newcommand{\bloivgiuretao}[3]{\begin{pspicture}(4,10)
   \put(4,0){\tratteggio}
   \put(1.5,8.0){\makebox(0,0)[rb]{$#2$}}
   \put(4,10){\psline(-2,0)}
   \put(2,10){\psline{->}(0,-3.5)}
   \put(0.5,3.5){\psframe(3,3)\rput(1.5,1.5){$#1$}}
   \put(2,3.5){\psline(0,-3.5)}
   \put(2,0){\psline{->}(2,0)}
   \put(1.5,2.0){\makebox(0,0)[rt]{$#3$}}
\end{pspicture}}

\newcommand{\bloivsuretor}[3]{\begin{pspicture}(4,10)
   \put(4,0){\tratteggio}
   \put(1.5,8.0){\makebox(0,0)[rb]{$#2$}}
   \put(2,10){\psline{->}(2,0)}
   \put(2,6.5){\psline(0,3.5)}
   \put(0.5,3.5){\psframe(3,3)\rput(1.5,1.5){$#1$}}
   \put(2,0){\psline{->}(0,3.5)}
   \put(4,0){\psline(-2,0)}
   \put(1.5,2.0){\makebox(0,0)[rt]{$#3$}}
\end{pspicture}}

\newcommand{\bloivsuretao}[3]{\begin{pspicture}(4,10)
   \put(0,0){\tratteggio}
   \put(2.5,8.0){\makebox(0,0)[lb]{$#2$}}
   \put(2,10){\psline{->}(-2,0)}
   \put(2,6.5){\psline(0,3.5)}
   \put(0.5,3.5){\psframe(3,3)\rput(1.5,1.5){$#1$}}
   \put(2,0){\psline{->}(0,3.5)}
   \put(0,0){\psline(2,0)}
   \put(2.5,2.0){\makebox(0,0)[lt]{$#3$}}
\end{pspicture}}

\newcommand{\addgiu}[3]{\begin{pspicture}(4,10)
   \put(0,0){\tratteggio}
   \put(0,0){\psline{->}(1.5,0)}
   \put(2,0){\piu}
   \put(2,0){\makebox(0,0){$#2$}}
   \put(2.5,-2){\makebox(0,0)[l]{$#3$}}
   \put(4,0){\psline{->}(-1.5,0)}
   \put(2,-0.5){\psline{->}(0,-1)}
   \put(2,10){\psline{->}(-2,0)}
   \put(2,10){\psline{->}(2,0)}
   \put(2,10){\punto}
   \put(2,12){\psline{->}(0,-2)}
   \put(2,12){\punto}
   \put(2.5,12){\makebox(0,0)[l]{$#1$}}
\end{pspicture}}

\newcommand{\addsu}[3]{\begin{pspicture}(4,10)
   \put(0,0){\tratteggio}
   \put(0,10){\psline{->}(1.5,0)}
   \put(2,10){\piu}
   \put(2,10){\makebox(0,0){$#2$}}
   \put(2.5,12){\makebox(0,0)[l]{$#3$}}
   \put(4,10){\psline{->}(-1.5,0)}
   \put(2,10.5){\psline{->}(0,1)}
   \put(2,0){\psline{->}(-2,0)}
   \put(2,0){\psline{->}(2,0)}
   \put(2,0){\punto}
   \put(2,-2){\psline{->}(0,2)}
   \put(2,-2){\punto}
   \put(2.5,-2){\makebox(0,0)[l]{$#1$}}
\end{pspicture}}

\newcommand{\blopiugiuor}[5]{\begin{pspicture}(8,10)
   \put(0, 0){\tratteggio}
   \put(0,10){\dotdstdst}
   \put(4,10){\psline{->}(0,-1.5)}
   \put(4,8){\piu}
   \put(6,8){\psline{->}(-1.5,0)}
   \put(6,8){\punto}
   \put(6.25,8){\makebox(0,0)[l]{$#4$}}
   \put(4,8){\makebox(0,0){$#5$}}
   \put(4,7.5){\psline{->}(0,-1)}
   \put(4,2.5){\psline{->}(0,-2)}
   \put(1.5,2.5){\psframe(5,4)\rput(2.5,2){$#1$}}
   \put(4.5,1.5){\makebox(0,0)[l]{$#2$}}
   \put(0, 0){\sumsntsnt{#3}}
\end{pspicture}}

\newcommand{\blopiugiuorlarge}[5]{\begin{pspicture}(8,10)
   \put(0, 0){\tratteggio}
   \put(0,10){\dotdstdst}
   \put(4,10){\psline{->}(0,-1.5)}
   \put(4,8){\piu}
   \put(6,8){\psline{->}(-1.5,0)}
   \put(6,8){\punto}
   \put(6.25,8){\makebox(0,0)[l]{$#4$}}
   \put(4,8){\makebox(0,0){$#5$}}
   \put(4,7.5){\psline{->}(0,-1)}
   \put(4,2.5){\psline{->}(0,-2)}
   \put(0.5,2.5){\psframe(7,4)\rput(3.5,2){$#1$}}
   \put(4.5,1.5){\makebox(0,0)[l]{$#2$}}
   \put(0, 0){\sumsntsnt{#3}}
\end{pspicture}}

\newcommand{\blopiugiuaor}[5]{\begin{pspicture}(8,10)
   \put(0, 0){\tratteggio}
   \put(0,10){\dotsntsnt}
   \put(4,10){\psline{->}(0,-1.5)}
   \put(4,8){\piu}
   \put(6,8){\psline{->}(-1.5,0)}
   \put(6,8){\punto}
   \put(6.25,8){\makebox(0,0)[l]{$#4$}}
   \put(4,8){\makebox(0,0){$#5$}}
   \put(4,7.5){\psline{->}(0,-1)}
   \put(4,2.5){\psline{->}(0,-2)}
   \put(1.5,2.5){\psframe(5,4)\rput(2.5,2){$#1$}}
   \put(4.5,1.5){\makebox(0,0)[l]{$#2$}}
   \put(0, 0){\sumdstdst{#3}}
\end{pspicture}}

\newcommand{\blopiugiuaorlarge}[5]{\begin{pspicture}(8,10)
   \put(0, 0){\tratteggio}
   \put(0,10){\dotsntsnt}
   \put(4,10){\psline{->}(0,-1.5)}
   \put(4,8){\piu}
   \put(6,8){\psline{->}(-1.5,0)}
   \put(6,8){\punto}
   \put(6.25,8){\makebox(0,0)[l]{$#4$}}
   \put(4,8){\makebox(0,0){$#5$}}
   \put(4,7.5){\psline{->}(0,-1)}
   \put(4,2.5){\psline{->}(0,-2)}
   \put(0.5,2.5){\psframe(7,4)\rput(3.5,2){$#1$}}
   \put(4.5,1.5){\makebox(0,0)[l]{$#2$}}
   \put(0, 0){\sumdstdst{#3}}
\end{pspicture}}

\newcommand{\midgiutre}[2]{\begin{pspicture}(0,1)
   \put(-1.5,3.5){\psframe(3,3)\rput(1.5,1.5){$#1$}}
   \put(0,10){\psline{->}(0,-3.5)}
   \put(0,3.5){\psline{->}(0,-3.5)}
   \put(0,-0.5){\makebox(0,0)[t]{$#2$}}
\end{pspicture}}

\newcommand{\midgiutreadd}[2]{\midgiuaddbase{\midgiutre{#1}{}}{#2}}

\newcommand{\midsutre}[2]{\begin{pspicture}(0,1)
   \put(-1.5,3.5){\psframe(3,3)\rput(1.5,1.5){$#1$}}
   \put(0,6.5){\psline{->}(0,3.5)}
   \put(0,  0){\psline{->}(0,3.5)}
   \put(0,10.5){\makebox(0,0)[b]{$#2$}}
\end{pspicture}}

\newcommand{\midsutreadd}[2]{\midsuaddbase{\midsutre{#1}{}}{#2}}

\newcommand{\blokcindst}[1]{\begin{pspicture}(0,1)
   \put(1,-2){\psframe(8,4)\rput(4,2){$#1$}}
   \put(0,0){\psline{->}(1,0)}
   \put(9,0){\psline{->}(1,0)}
\end{pspicture}}

\newcommand{\blokcinsnt}[1]{\begin{pspicture}(0,1)
   \put(1,-2){\psframe(8,4)\rput(4,2){$#1$}}
   \put(1,0){\psline{->}(-1,0)}
   \put(10,0){\psline{->}(-1,0)}
\end{pspicture}}

\newcommand{\blokaocin}[4]{\begin{pspicture}(8,10)
   \put(0, 0){\tratteggio}
   \put(0, 0){\blokcindst{#2}}
   \put(0,10){\blokcinsnt{#1}}
   \put(0,10.5){\makebox(0,0)[br]{$#3$}}
   \put(10,-0.5){\makebox(0,0)[tl]{$#4$}}
\end{pspicture}}

\newcommand{\blokocin}[4]{\begin{pspicture}(6,10)
   \put(0,0){\tratteggio}
   \put(0,10){\blokcindst{#1}}
   \put(0, 0){\blokcinsnt{#2}}
   \put(10,10.5){\makebox(0,0)[bl]{$#3$}}
   \put(0,-0.5){\makebox(0,0)[tr]{$#4$}}
\end{pspicture}}

\newcommand{\piudst}{\begin{pspicture}(1,1)
 \put(0,0){\piu}
 \put(0,0){\dst}
\end{pspicture}}

\newcommand{\midgiupiu}[2]{\begin{pspicture}(0,1)
   \put( 0, 10){\psline{->}(0,-3)}
   \put(-2,  3){\psframe(4,4)\rput(2,2){$#1$}}
   \put( 0,  3){\psline{->}(0,-2.5)}
   \put( 0,1.5){\makebox(0,0)[l]{$#2$}}
\end{pspicture}}

\newcommand{\addsuaoriv}[2]{\begin{pspicture}(4,10)
   \put(0,0){\tratteggio}
   \put(1.5,10){\psline{->}(-1.5,0)}
   \put(2,10){\piu}
   \put(2,11.5){\punto}
   \put(2,10){\makebox(0,0){$#2$}}
   \put(2.5,12){\makebox(0,0)[l]{$#1$}}
   \put(4,10){\psline{->}(-1.5,0)}
   \put(2,11.5){\psline{->}(0,-1)}
   \put(0,0){\psline{->}(4,0)}
\end{pspicture}}

\newcommand{\blosuintmp}[5]{\blosubase{\midsuintmp{#1}{#2}{#3}{#4}}{#5}}

\newcommand{\midsuintmp}[4]{\begin{pspicture}(0,1)
   \put(0,0.5){\psline{->}(0,1)}
   \put(-1,1.5){\psframe(2,2)\rput(1,1){{\bf $\frac{1}{s}$}}}
   \put(0,3.5){\psline{->}(0,1)}
   \put(0,5){\piu}
   \put(0,5){\makebox(0,0){$#4$}}
   \put(2,5){\psline{->}(-1.5,0)}
   \put(2,5){\punto}
   \put(2.25,5){\makebox(0,0)[l]{$#2$}}
   \put(0,10.5){\makebox(0,0)[b]{$#3$}}
   \put(0,5.5){\psline{->}(0,0.75)}
   \put(0,9.25){\psline{->}(0,0.75)}
   \put(-2,6.25){\psframe(4,3)\rput(2,1.5){$#1$}}
\end{pspicture}}

\newcommand{\blokodueuno}[8]{\begin{pspicture}(10,10)
   \put(0,0){\tratteggio}
   \put(0,2.5){\tratteggio}
    \rput(0,10){\punto}
   \psline[linewidth=0.4pt]{->}(1.5,0)(0,0)
   \psline[linewidth=0.4pt]{->}(1.5,2)(0,2)
   \psframe[linewidth=0.4pt](1.5,-1)(6.5,3)
   \put(3.95,1){\makebox(0,0){$#7$}}
   \put(-0.75,3){\makebox(0,0){$#3$}}
   \put(-0.75,-1){\makebox(0,0){$#4$}}
   \psline[linewidth=0.4pt]{->}(8,1)(6.5,1)
   \put(8.65,1.85){\makebox(0,0){$#6$}}
   \rput(8,1){\punto}
   \psline[linewidth=0.4pt]{->}(0,10)(1.5,10)
   \psline[linewidth=0.4pt]{->}(0,12)(1.5,12)
   \psframe[linewidth=0.4pt](1.5,9)(6.6,13)
   \put(3.9,10.8){\makebox(0,0){$#8$}}
   \put(-0.75,9){\makebox(0,0){$#2$}}
   \rput(0,12){\punto}
   \put(-0.75,12.5){\makebox(0,0){$#1$}}
   \psline[linewidth=0.4pt]{->}(6.5,11)(8,11)
   \put(8.65,11.95){\makebox(0,0){$#5$}}
\end{pspicture}}

\newcommand{\POGgiularge}[4]{\begin{pspicture}(8,10)
   \put(0, 0){\blogiularge{#1}{#3}{#4}}
   \rput[lb](3.25,8.25){$#2$}
\end{pspicture}}

\newcommand{\POGvigiuaor}[4]{\begin{pspicture}(6,10)
   \put(0, 0){\blovigiuaor{#1}{#3}{#4}}
   \rput[lb](3.25,8.25){$#2$}
\end{pspicture}}

\newcommand{\POGvigiuor}[4]{\begin{pspicture}(6,10)
   \put(0, 0){\blovigiuor{#1}{#3}{#4}}
   \rput[lb](3.25,8.25){$#2$}
\end{pspicture}}

\newcommand{\POGvisu}[4]{\begin{pspicture}(6,10)
   \put(0, 0){\blovisu{#1}{}{#4}}
   \rput[lb](3.25,8.25){$#2$}
   \rput[lb](3.25,1.25){$#3$}
\end{pspicture}}

\newcommand{\POGvisuaor}[4]{\begin{pspicture}(6,10)
   \put(0, 0){\blovisuaor{#1}{#3}{#4}}
   \rput[lb](3.25,0.75){$#2$}
\end{pspicture}}

\newcommand{\POGvisuor}[4]{\begin{pspicture}(6,10)
   \put(0, 0){\blovisuor{#1}{#3}{#4}}
   \rput[lb](3.25,0.75){$#2$}
\end{pspicture}}

\newcommand{\POGvigiu}[4]{\begin{pspicture}(6,10)
   \put(0, 0){\blovigiu{#1}{}{#4}}
   \rput[lb](3.25,8.00){$#2$}
   \rput[lb](3.25,1.00){$#3$}
\end{pspicture}}

\newcommand{\POGsugiu}[2]{\begin{pspicture}(0,10)
   \rput[b](0,10.95){$#1$}
   \rput[t](0,-0.95){$#2$}
\end{pspicture}}

\newcommand{\POGsuatt}[9]{\begin{pspicture}(8,10)
   \put(0, 0){\blosuatt{#1}{#2}{#4}{#8}{#9}}
   \rput[rt](2.25,3.25){$#3$}
   \rput[l](6.25,2.75){$#5$}
   \rput[l](6.25,7.25){$#6$}
   \rput[lb](2.75,0.75){$#7$}
\end{pspicture}}

\newcommand{\POGgiuattor}[8]{\begin{pspicture}(8,10)
   \put(0, 0){\blogiuattror{#1}{#2}{#4}{#7}{#8}}
   \rput[rb](2.25,6.75){$#3$}
   \rput[l](6.25,7.25){$#5$}
   \rput[l](6.25,2.75){$#6$}
\end{pspicture}}

\newcommand{\POGgiuattaor}[8]{\begin{pspicture}(8,10)
   \put(0, 0){\blogiuattraor{#1}{#2}{#4}{#7}{#8}}
   \rput[rb](2.25,6.75){$#3$}
   \rput[l](6.25,7.25){$#5$}
   \rput[l](6.25,2.75){$#6$}
\end{pspicture}}

\newcommand{\POGgiuatt}[9]{\begin{pspicture}(8,10)
   \put(0, 0){\blogiuatt{#1}{#2}{#4}{#8}{#9}}
   \rput[rb](2.25,6.75){$#3$}
   \rput[l](6.25,7.25){$#5$}
   \rput[l](6.25,2.75){$#6$}
   \rput[lt](2.75,9.25){$#7$}
\end{pspicture}}

\newcommand{\POGgiupar}[9]{\blogiupar{#1}{#2}{#3}{#4}{#5}{#6}{#7}{#8}{#9}}
\newcommand{\POGgiuorpar}[9]{\blogiuorpar{#1}{#2}{#3}{#4}{#5}{#6}{#7}{#8}{#9}}
\newcommand{\POGgiuaorpar}[9]{\blogiuaorpar{#1}{#2}{#3}{#4}{#5}{#6}{#7}{#8}{#9}}
\newcommand{\POGgiufdk}[9]{\blogiufdk{#1}{#2}{#3}{#4}{#5}{#6}{#7}{#8}{#9}}
\newcommand{\POGgiuorfdk}[9]{\blogiuorfdk{#1}{#2}{#3}{#4}{#5}{#6}{#7}{#8}{#9}}
\newcommand{\POGgiuaorfdk}[9]{\blogiuaorfdk{#1}{#2}{#3}{#4}{#5}{#6}{#7}{#8}{#9}}
\newcommand{\POGsupar}[9]{\blosupar{#1}{#2}{#3}{#4}{#5}{#6}{#7}{#8}{#9}}
\newcommand{\POGsuorpar}[9]{\blosuorpar{#1}{#2}{#3}{#4}{#5}{#6}{#7}{#8}{#9}}
\newcommand{\POGsuaorpar}[9]{\blosuaorpar{#1}{#2}{#3}{#4}{#5}{#6}{#7}{#8}{#9}}
\newcommand{\POGsufdk}[9]{\blosufdk{#1}{#2}{#3}{#4}{#5}{#6}{#7}{#8}{#9}}
\newcommand{\POGsuorfdk}[9]{\blosuorfdk{#1}{#2}{#3}{#4}{#5}{#6}{#7}{#8}{#9}}
\newcommand{\POGsuaorfdk}[9]{\blosuaorfdk{#1}{#2}{#3}{#4}{#5}{#6}{#7}{#8}{#9}}

\newcommand{\bloivparingiubase}[9]{\begin{pspicture}(4,10)
   \put(0,0){\tratteggio}
   \put(0,10){\psline{#1}(1.5,0)}
   \put(1.5,10){\punto}
   \put(1.5,10){\psline{#2}(2,8)}
   \put(1.5,10){\psline{#3}(2,-8)}
   \psline(3.5,-8)(4,-8)
   \psline(3.5,8)(4,8)
   \psline(3.5,2)(4,2)
   \psline(3.5,18)(4,18)
   \put(1.5,0){
   \put(0,0){\pscircle[linewidth=0.4pt]{0.5}}
   \psline{#5}(0.5;58)(2,8)
   \psline{#6}(0.5;-58)(2,-8)
   \psline(0.5; 0)(0;0)
   \psline(0.5; 120)(0;0)
   \psline(0.5; 240)(0;0)
   \psline{#4}(-1.5,0)(-0.5,0)
    #9
    }
    \put(4,8){$#7$}
    \put(4,-8){$#8$}
\end{pspicture}}

\newcommand{\bloivparingiulab}[4]{\begin{pspicture}(0,10)
   \put(1.25,10.25){\makebox(0,0)[rb]{$#1$}}
   \put(0.75,0.25){\makebox(0,0)[rb]{$#2$}}
   \put(2.25,3){\makebox(0,0)[rb]{$#3$}}
   \put(2.25,-3){\makebox(0,0)[rt]{$#4$}}
\end{pspicture}}

\newcommand{\bloivparingiu}[7]{\bloivparingiulab{#4}{#5}{#6}{#7}\bloivparingiubase{->}{->}{->}{<-}{<-}{<-}{#1}{#2}{#3}}
\newcommand{\bloivparingiuuno}[7]{\bloivparingiulab{#4}{#5}{#6}{#7}\bloivparingiubase{<-}{<-}{->}{->}{->}{<-}{#1}{#2}{#3}}
\newcommand{\bloivparingiudue}[7]{\bloivparingiulab{#4}{#5}{#6}{#7}\bloivparingiubase{<-}{->}{<-}{->}{<-}{->}{#1}{#2}{#3}}

\newcommand{\bloivparinsubase}[9]{\begin{pspicture}(4,10)
   \put(0,0){\tratteggio}
   \put(0,0){\psline{#1}(1.5,0)}
   \put(1.5,0){\punto}
   \put(1.5,0){\psline{#2}(2,8)}
   \put(1.5,0){\psline{#3}(2,-8)}
   \psline(3.5,-8)(4,-8)
   \psline(3.5,8)(4,8)
   \psline(3.5,2)(4,2)
   \psline(3.5,18)(4,18)
   \put(1.5,10){
   \put(0,0){\pscircle[linewidth=0.4pt]{0.5}}
   \psline{#5}(0.5;58)(2,8)
   \psline{#6}(0.5;-58)(2,-8)
   \psline(0.5; 0)(0;0)
   \psline(0.5; 120)(0;0)
   \psline(0.5; 240)(0;0)
   \psline{#4}(-1.5,0)(-0.5,0)
    #9
    }
    \put(4,8){$#7$}
    \put(4,-8){$#8$}
\end{pspicture}}

\newcommand{\bloivparinsulab}[4]{\begin{pspicture}(0,10)
   \put(0.75,10.25){\makebox(0,0)[rb]{$#1$}}
   \put(1.25,0.25){\makebox(0,0)[rb]{$#2$}}
   \put(2.25,13){\makebox(0,0)[rb]{$#3$}}
   \put(2.25,7){\makebox(0,0)[rt]{$#4$}}
\end{pspicture}}

\newcommand{\bloivparinsu}[7]{\bloivparinsulab{#4}{#5}{#6}{#7}\bloivparinsubase{->}{->}{->}{<-}{<-}{<-}{#1}{#2}{#3}}
\newcommand{\bloivparinsuuno}[7]{\bloivparinsulab{#4}{#5}{#6}{#7}\bloivparinsubase{<-}{<-}{->}{->}{->}{<-}{#1}{#2}{#3}}
\newcommand{\bloivparinsudue}[7]{\bloivparinsulab{#4}{#5}{#6}{#7}\bloivparinsubase{<-}{->}{<-}{->}{<-}{->}{#1}{#2}{#3}}

\newcommand{\bloivparoutgiubase}[7]{\begin{pspicture}(4,10)
   \put(0,8){\tratteggio}
   \put(0,-8){\tratteggio}
   \put(2.5,10){\psline{#2}(1.5,0)}
   \put(2.5,10){\punto}
   \put(2.5,10){\psline{#1}(-2,8)}
   \put(2.5,10){\psline{#3}(-2,-8)}
   \put(2.5,0){\psline{#4}(-2,8)(0.5;120)}
   \put(2.5,0){\psline{#6}(-2,-8)(0.5;-120)}
   \psline(0.5,-8)(0,-8)
   \psline(0.5,8)(0,8)
   \psline(0.5,2)(0,2)
   \psline(0.5,18)(0,18)
   \put(2.5,0){
   \put(0,0){\pscircle[linewidth=0.4pt]{0.5}}
   \psline(0.5;180)(0;0)
   \psline(0.5; 60)(0;0)
   \psline(0.5;-60)(0;0)
   \psline{#5}(0.5,0)(1.5,0)
    #7
    }
\end{pspicture}}

\newcommand{\bloivparoutgiulab}[4]{\begin{pspicture}(0,10)
   \put(2.75,10.25){\makebox(0,0)[lb]{$#1$}}
   \put(3.25,0.25){\makebox(0,0)[lb]{$#2$}}
   \put(1.75,3){\makebox(0,0)[lb]{$#3$}}
   \put(1.75,-3){\makebox(0,0)[lt]{$#4$}}
\end{pspicture}}

\newcommand{\bloivparoutgiu}[5]{\bloivparoutgiulab{#2}{#3}{#4}{#5}\bloivparoutgiubase{->}{<-}{->}{->}{->}{->}{#1}}
\newcommand{\bloivparoutgiuuno}[5]{\bloivparoutgiulab{#2}{#3}{#4}{#5}\bloivparoutgiubase{<-}{->}{->}{<-}{<-}{->}{#1}}
\newcommand{\bloivparoutgiudue}[5]{\bloivparoutgiulab{#2}{#3}{#4}{#5}\bloivparoutgiubase{->}{->}{<-}{->}{<-}{<-}{#1}}

\newcommand{\bloivparoutsubase}[7]{\begin{pspicture}(4,10)
   \put(0,8){\tratteggio}
   \put(0,-8){\tratteggio}
   \put(2.5,0){\psline{#2}(1.5,0)}
   \put(2.5,0){\punto}
   \put(2.5,0){\psline{#1}(-2,8)}
   \put(2.5,0){\psline{#3}(-2,-8)}
   \put(2.5,10){\psline{#4}(-2,8)(0.5;120)}
   \put(2.5,10){\psline{#6}(-2,-8)(0.5;-120)}
   \psline(0.5,-8)(0,-8)
   \psline(0.5,8)(0,8)
   \psline(0.5,2)(0,2)
   \psline(0.5,18)(0,18)
   \put(2.5,10){
   \put(0,0){\pscircle[linewidth=0.4pt]{0.5}}
   \psline(0.5;180)(0;0)
   \psline(0.5; 60)(0;0)
   \psline(0.5;-60)(0;0)
   \psline{#5}(0.5,0)(1.5,0)
    #7
    }
\end{pspicture}}

\newcommand{\bloivparoutsulab}[4]{\begin{pspicture}(0,10)
   \put(3.25,10.25){\makebox(0,0)[lb]{$#1$}}
   \put(2.75,0.25){\makebox(0,0)[lb]{$#2$}}
   \put(1.75,13){\makebox(0,0)[lb]{$#3$}}
   \put(1.75,7){\makebox(0,0)[lt]{$#4$}}
\end{pspicture}}

\newcommand{\bloivparoutsu}[5]{\bloivparoutsulab{#2}{#3}{#4}{#5}\bloivparoutsubase{->}{<-}{->}{->}{->}{->}{#1}}
\newcommand{\bloivparoutsuuno}[5]{\bloivparoutsulab{#2}{#3}{#4}{#5}\bloivparoutsubase{<-}{->}{->}{<-}{<-}{->}{#1}}
\newcommand{\bloivparoutsudue}[5]{\bloivparoutsulab{#2}{#3}{#4}{#5}\bloivparoutsubase{->}{->}{<-}{->}{<-}{<-}{#1}}

\newcommand{\thdot}[1]{\begin{pspicture}(0,0)
 \pscircle*(0.25;#1){0.25}
\end{pspicture}}

\newcommand{\linex}[1]{\begin{pspicture}(#1,10)
 \psline(#1,0)
 \psline(0,10)(#1,10)
\end{pspicture}}

\newcommand{\bloivquaingiubase}[9]{\begin{pspicture}(4,10)
   \put(0,0){\tratteggio}
   \put(0,10){\psline{#1}(3.0,0)}
   \put(3.0,10){\punto}
   \psline{#2}(3,10)(4,10)
   \put(3,0){\psline{#3}(0,10)(0,0.3)(0.3;80)(0.3;60)(0.3;40)(0.3;20)(0.3;0)(0.3;-20)(0.3;-40)(0.3;-60)(0.3;-80)(0,-0.3)(0,-6)(1,-6)}
   \put(1.5,0){\piu}
   \put(1.5,0){#9}
   \psline{#5}(2,0)(4,0)
   \psline{#6}(1.5,-0.5)(1.5,-16)(4,-16)
   \psline{#4}(0,0)(1,0)
   \put(4,0){$#7$}
   \put(4,-16){$#8$}
\end{pspicture}}

\newcommand{\bloivquaingiulab}[4]{\begin{pspicture}(0,10)
   \put(2.0,10.25){\makebox(0,0)[b]{$#1$}}
   \put(0.75,0.25){\makebox(0,0)[rb]{$#2$}}
   \put(2.5,5){\makebox(0,0)[r]{$#3$}}
   \put(2.0,-11){\makebox(0,0)[l]{$#4$}}
\end{pspicture}}

\newcommand{\bloivquaingiu}[7]{\bloivquaingiulab{#4}{#5}{#6}{#7}\bloivquaingiubase{->}{->}{->}{<-}{<-}{<-}{#1}{#2}{#3}}
\newcommand{\bloivquaingiuuno}[7]{\bloivquaingiulab{#4}{#5}{#6}{#7}\bloivquaingiubase{<-}{<-}{->}{->}{->}{<-}{#1}{#2}{#3}}
\newcommand{\bloivquaingiudue}[7]{\bloivquaingiulab{#4}{#5}{#6}{#7}\bloivquaingiubase{<-}{->}{<-}{->}{<-}{->}{#1}{#2}{#3}}

\newcommand{\bloivquainsubase}[9]{\begin{pspicture}(4,10)
   \put(0,0){\tratteggio}
   \put(0,0){\psline{#1}(1,0)}
   \put(1,0){\punto}
   \psline{#2}(1,0)(4,0)
   \psline{#3}(1,0)(1,-16)(4,-16)
   \psline{#4}(0,10)(2,10)
   \put(2.5,10){\piu}
   \put(2.5,10){#9}
   \psline{#5}(3,10)(4,10)
   \put(2.5,0){\psline{#6}(0,9.5)(0,0.3)(0.3;80)(0.3;60)(0.3;40)(0.3;20)(0.3;0)(0.3;-20)(0.3;-40)(0.3;-60)(0.3;-80)(0,-0.3)(0,-6)(1.5,-6)}
   \put(4,0){$#7$}
   \put(4,-16){$#8$}
\end{pspicture}}

\newcommand{\bloivquainsulab}[4]{\begin{pspicture}(0,10)
   \put(1.0,10.25){\makebox(0,0)[b]{$#1$}}
   \put(1,0.5){\makebox(0,0)[b]{$#2$}}
   \put(2,5){\makebox(0,0)[r]{$#3$}}
   \put(1.5,-11){\makebox(0,0)[l]{$#4$}}
\end{pspicture}}

\newcommand{\bloivquainsu}[7]{\bloivquainsulab{#4}{#5}{#6}{#7}\bloivquainsubase{->}{->}{->}{<-}{<-}{<-}{#1}{#2}{#3}}
\newcommand{\bloivquainsuuno}[7]{\bloivquainsulab{#4}{#5}{#6}{#7}\bloivquainsubase{<-}{<-}{->}{->}{->}{<-}{#1}{#2}{#3}}
\newcommand{\bloivquainsudue}[7]{\bloivquainsulab{#4}{#5}{#6}{#7}\bloivquainsubase{<-}{->}{<-}{->}{<-}{->}{#1}{#2}{#3}}

\newcommand{\bloivquaoutgiubase}[7]{\begin{pspicture}(4,10)
   \put(0,0){\tratteggio}
   \put(0,10){\psline{#1}(1,0)}
   \put(1,10){\punto}
   \psline{#2}(1,10)(4,10)
   \put(1,0){\psline{#3}(0,10)(0,0.3)(0.3;80)(0.3;60)(0.3;40)(0.3;20)(0.3;0)(0.3;-20)(0.3;-40)(0.3;-60)(0.3;-80)(0,-0.3)(0,-6)(-1,-6)}
   \put(2.5,0){\piu}
   \put(2.5,0){#7}
   \psline{#5}(3,0)(4,0)
   \psline{#6}(2.5,-0.5)(2.5,-16)(0,-16)
   \psline{#4}(0,0)(2,0)
\end{pspicture}}

\newcommand{\bloivquaoutgiulab}[4]{\begin{pspicture}(0,10)
   \put(1,10.25){\makebox(0,0)[b]{$#1$}}
   \put(3.25,0.25){\makebox(0,0)[lb]{$#2$}}
   \put(1.5,5){\makebox(0,0)[l]{$#3$}}
   \put(2,-11){\makebox(0,0)[r]{$#4$}}
\end{pspicture}}

\newcommand{\bloivquaoutgiu}[5]{\bloivquaoutgiulab{#2}{#3}{#4}{#5}\bloivquaoutgiubase{<-}{<-}{->}{->}{->}{<-}{#1}}
\newcommand{\bloivquaoutgiuuno}[5]{\bloivquaoutgiulab{#2}{#3}{#4}{#5}\bloivquaoutgiubase{->}{->}{->}{<-}{<-}{<-}{#1}}
\newcommand{\bloivquaoutgiudue}[5]{\bloivquaoutgiulab{#2}{#3}{#4}{#5}\bloivquaoutgiubase{<-}{->}{<-}{->}{<-}{->}{#1}}

\newcommand{\bloivquaoutsubase}[7]{\begin{pspicture}(4,10)
   \put(0,0){\tratteggio}
   \put(0,0){\psline{#1}(3,0)}
   \put(3,0){\punto}
   \psline{#2}(3,0)(4,0)
   \psline{#3}(3,0)(3,-16)(0,-16)
   \put(1.5,10){\piu}
   \put(1.5,10){#7}
   \psline{#5}(2,10)(4,10)
   \put(1.5,0){\psline{#6}(0,9.5)(0,0.3)(0.3;80)(0.3;60)(0.3;40)(0.3;20)(0.3;0)(0.3;-20)(0.3;-40)(0.3;-60)(0.3;-80)(0,-0.3)(0,-6)(-1.5,-6)}
   \psline{#4}(0,10)(1,10)
\end{pspicture}}

\newcommand{\bloivquaoutsulab}[4]{\begin{pspicture}(0,10)
   \put(3,10.25){\makebox(0,0)[b]{$#1$}}
   \put(3,0.25){\makebox(0,0)[b]{$#2$}}
   \put(2,5){\makebox(0,0)[l]{$#3$}}
   \put(2.5,-11){\makebox(0,0)[r]{$#4$}}
\end{pspicture}}

\newcommand{\bloivquaoutsu}[5]{\bloivquaoutsulab{#2}{#3}{#4}{#5}\bloivquaoutsubase{<-}{<-}{->}{->}{->}{<-}{#1}}
\newcommand{\bloivquaoutsuuno}[5]{\bloivquaoutsulab{#2}{#3}{#4}{#5}\bloivquaoutsubase{->}{->}{->}{<-}{<-}{<-}{#1}}
\newcommand{\bloivquaoutsudue}[5]{\bloivquaoutsulab{#2}{#3}{#4}{#5}\bloivquaoutsubase{<-}{->}{<-}{->}{<-}{->}{#1}}

  \newcommand{\indxver}[4]{
  \rput(0,#2){
  \rput(#1,0){\psline(-0.25,0)(0.25,0)}
  \rput(-0.5,0){\rput[r](#1,0){\tiny #3}}
  \rput(0.5,0){\rput[l](#1,0){\tiny #4}}  }
  }
  \newcommand{\indxorzsu}[3]{
  \psline(#1,9.75)(#1,10.25)
  \rput[b](#1,10.5){\tiny #2}
  \rput[t](#1,9.5){\tiny #3}
  }
  \newcommand{\indxorzgiu}[3]{
  \psline(#1,-0.25)(#1,0.25)
  \rput[b](#1,0.5){\tiny #2}
  \rput[t](#1,-0.5){\tiny #3}
  }
  \newcommand{\loopxysnt}[5]{%
  \rput(1,4.5){\psline[#5]{#3}(-1,#2)(-#1,#2)}
  \rput(1,4.0){\psarc[#5](-#1,#2){0.5}{90}{180}}
  \rput(0.5,4){\psline[#5](-#1,#2)(-#1,1)}
  \rput(0.5,6){\psline[#5](-#1,-1)(-#1,-#2)}
  \rput(1,6.0){\psarc[#5](-#1,-#2){0.5}{180}{270}}
  \rput(1,5.5){\psline[#5]{#4}(-1,-#2)(-#1,-#2)}
  }
  \newcommand{\loopxydst}[5]{
  \rput(-1,4.5){\psline[#5]{#3}(1,#2)(#1,#2)}
  \rput(-1,4.0){\psarc[#5](#1,#2){0.5}{0}{90}}
  \rput(-0.5,4){\psline[#5](#1,#2)(#1,1)}
  \rput(-0.5,6){\psline[#5](#1,-1)(#1,-#2)}
  \rput(-1,6.0){\psarc[#5](#1,-#2){0.5}{270}{0}}
  \rput(-1,5.5){\psline[#5]{#4}(1,-#2)(#1,-#2)}
  }
  \newcommand{\loopxsnt}[4]{\loopxysnt{#1}{5}{#2}{#3}{#4}}

  \newcommand{\loopxdst}[4]{\loopxydst{#1}{5}{#2}{#3}{#4}}
  \newcommand{\loopxsntor}[2]{\loopxsnt{#1}{<-}{-}{#2}}
  \newcommand{\loopxsntaor}[2]{\loopxsnt{#1}{-}{<-}{#2}}
  \newcommand{\loopsntor}[1]{\loopxsnt{4}{<-}{-}{#1}}
  \newcommand{\loopsntaor}[1]{\loopxsnt{4}{-}{<-}{#1}}
  \newcommand{\loopvisntor}[1]{\loopxsnt{3}{<-}{-}{#1}}
  \newcommand{\loopvisntaor}[1]{\loopxsnt{3}{-}{<-}{#1}}
  \newcommand{\loopivsntor}[1]{\loopxsnt{2}{<-}{-}{#1}}
  \newcommand{\loopivsntaor}[1]{\loopxsnt{2}{-}{<-}{#1}}
  \newcommand{\loopxysntor}[3]{\loopxysnt{#1}{#2}{<-}{-}{#3}}
  \newcommand{\loopxysntaor}[3]{\loopxysnt{#1}{#2}{-}{<-}{#3}}
  \newcommand{\loopxdstor}[2]{\loopxdst{#1}{-}{<-}{#2}}
  \newcommand{\loopxdstaor}[2]{\loopxdst{#1}{<-}{-}{#2}}
  \newcommand{\loopdstor}[1]{\loopxdst{4}{-}{<-}{#1}}
  \newcommand{\loopdstaor}[1]{\loopxdst{4}{<-}{-}{#1}}
  \newcommand{\loopvidstor}[1]{\loopxdst{3}{-}{<-}{#1}}
  \newcommand{\loopvidstaor}[1]{\loopxdst{3}{<-}{-}{#1}}
  \newcommand{\loopivdstor}[1]{\loopxdst{2}{-}{<-}{#1}}
  \newcommand{\loopivdstaor}[1]{\loopxdst{2}{<-}{-}{#1}}
  \newcommand{\loopxydstor}[3]{\loopxydst{#1}{#2}{-}{<-}{#3}}
  \newcommand{\loopxydstaor}[3]{\loopxydst{#1}{#2}{<-}{-}{#3}}
  \newcommand{\loopor}[1]{\loopxdstor{4}{#1}\loopxsntor{4}{#1}  }
  \newcommand{\loopaor}[1]{\loopxdstaor{4}{#1} \loopxsntaor{4}{#1}  }
  \newcommand{\loopvior}[1]{\loopxdstor{3}{#1}\loopxsntor{3}{#1}  }
  \newcommand{\loopviaor}[1]{\loopxdstaor{3}{#1} \loopxsntaor{3}{#1}  }
  \newcommand{\loopivor}[1]{\loopxdstor{2}{#1}\loopxsntor{2}{#1}  }
  \newcommand{\loopivaor}[1]{\loopxdstaor{2}{#1} \loopxsntaor{2}{#1}  }
  \newcommand{\loopxor}[2]{\loopxdstor{#1}{#2}\loopxsntor{#1}{#2}  }
  \newcommand{\loopxaor}[2]{\loopxdstaor{#1}{#2} \loopxsntaor{#1}{#2}  }
  \newcommand{\loopxyor}[3]{\loopxydstor{#1}{#2}{#3}\loopxysntor{#1}{#2}{#3}  }
  \newcommand{\loopxyaor}[3]{\loopxydstaor{#1}{#2}{#3}\loopxysntaor{#1}{#2}{#3}  }

\newcommand{\addsusolo}[2]{\addsuvar{4}{#1}{#2}}
\newcommand{\tras}{^{\mbox{\tiny T}}}
\newcommand{\mtras}{^{\mbox{\tiny -T}}}
\newcommand{\dtras}{^{\mbox{\tiny T}\dag}}
\newcommand{\muno}{^{\mbox{\tiny -1}}}
\newcommand{\tracon}{^{*}}
\newcommand{\mtracon}{^{-*}}
\newcommand{\ds}{\displaystyle}
\newcommand{\ts}{\textstyle}
\newcommand{\scr}{\scriptstyle}
\newcommand{\sscr}{\scriptscriptstyle}
\newcommand{\bbar}[1]{\bar{\bar{#1}}}

 \newcommand{\evidenzia}[1]{\psframebox[framesep=5pt]{#1}}
 \newcommand{\BoxedEPSF}{\epsfbox} %% Primo capitolo
 \newcommand{\bo}{\mbox{\bf o}}
 \newcommand{\und}{\underline}
 \newcommand{\Span}{\mbox{\rm span}}
 \newcommand{\spann}{\mbox{\rm span}}
 \newcommand{\Agg}{\mbox{\rm agg}}
 \newcommand{\adj}{\mbox{\rm adj}}
 \newcommand{\Imm}{\mbox{\rm Im}}
 \newcommand{\Ker}{\mbox{\rm ker}}
 \newcommand{\rango}{\mbox{\rm rango}}
 \newcommand{\bM}{{\bf M}}
 \newcommand{\mat}[2]{\left[\begin{array}{#1} #2 \end{array}\right]}
 \newcommand{\mdet}[2]{\left|\begin{array}{#1} #2 \end{array}\right|}
 \newcommand{\cofbin}[2]{\left(\!\begin{array}{c}#1\\ #2 \end{array}\!\right)}

\def\a{{\bf a}} \def\b{{\bf b}} \def\c{{\bf c}} \def\d{{\bf d}}
\def\e{{\bf e}} \def\f{{\bf f}} \def\g{{\bf g}} \def\h{{\bf h}}
\def\ib{{\bf i}} \def\j{{\bf j}} \def\k{{\bf k}} \def\l{{\bf l}}
\def\m{{\bf m}} \def\n{{\bf n}} \def\o{{\bf o}} \def\p{{\bf p}}
\def\q{{\bf q}} \def\r{{\bf r}} \def\s{{\bf s}} \def\t{{\bf t}}
\def\u{{\bf u}} \def\v{{\bf v}} \def\w{{\bf w}} \def\x{{\bf x}}
\def\y{{\bf y}} \def\z{{\bf z}}

\def\A{{\bf A}} \def\B{{\bf B}} \def\C{{\bf C}} \def\D{{\bf D}}
\def\E{{\bf E}} \def\F{{\bf F}} \def\G{{\bf G}} \def\H{{\bf H}}
\def\I{{\bf I}} \def\J{{\bf J}} \def\K{{\bf K}} \def\L{{\bf L}}
\def\M{{\bf M}} \def\N{{\bf N}} \def\O{{\bf O}} \def\P{{\bf P}}
\def\Q{{\bf Q}} \def\R{{\bf R}} \def\S{{\bf S}} \def\T{{\bf T}}
\def\U{{\bf U}} \def\V{{\bf V}} \def\W{{\bf W}} \def\X{{\bf X}}
\def\Y{{\bf Y}} \def\Z{{\bf Z}}

\def\cA{{\cal A}} \def\cB{{\cal B}} \def\cC{{\cal C}} \def\cD{{\cal D}}
\def\cE{{\cal E}} \def\cF{{\cal F}} \def\cG{{\cal G}} \def\cH{{\cal H}}
\def\cI{{\cal I}} \def\cJ{{\cal J}} \def\cK{{\cal K}} \def\cL{{\cal L}}
\def\cM{{\cal M}} \def\cN{{\cal N}} \def\cO{{\cal O}} \def\cP{{\cal P}}
\def\cQ{{\cal Q}} \def\cR{{\cal R}} \def\cS{{\cal S}} \def\cT{{\cal T}}
\def\cU{{\cal U}} \def\cV{{\cal V}} \def\cW{{\cal W}} \def\cX{{\cal X}}
\def\cY{{\cal Y}} \def\cZ{{\cal Z}}

\def\ta{\tilde{\a}} \def\tb{\tilde{\b}} \def\tc{\tilde{\c}}
\def\td{\tilde{\d}} \def\te{\tilde{\e}} \def\tf{\tilde{\f}}
\def\tg{\tilde{\g}} \def\th{\tilde{\h}} \def\tib{\tilde{\i}}
\def\tj{\tilde{\j}} \def\tk{\tilde{\k}} \def\tl{\tilde{\l}}
\def\tm{\tilde{\m}} \def\tn{\tilde{\n}} \def\to{\tilde{\o}}
\def\tp{\tilde{\p}} \def\tq{\tilde{\q}} \def\tr{\tilde{\r}}
\def\tts{\tilde{\s}} \def\ttt{\tilde{\t}} \def\tu{\tilde{\u}}
\def\tv{\tilde{\v}} \def\tw{\tilde{\w}} \def\tx{\tilde{\x}}
\def\ty{\tilde{\y}} \def\tz{\tilde{\z}}

\def\tA{\tilde{\A}} \def\tB{\tilde{\B}} \def\tC{\tilde{\C}}
\def\tD{\tilde{\D}} \def\tE{\tilde{\E}} \def\tF{\tilde{\F}}
\def\tG{\tilde{\G}} \def\tH{\tilde{\H}} \def\tI{\tilde{\I}}
\def\tJ{\tilde{\J}} \def\tK{\tilde{\K}} \def\tL{\tilde{\L}}
\def\tM{\tilde{\M}} \def\tN{\tilde{\N}} \def\tO{\tilde{\O}}
\def\tP{\tilde{\P}} \def\tQ{\tilde{\Q}} \def\tR{\tilde{\R}}
\def\tS{\tilde{\S}} \def\tT{\tilde{\T}} \def\tU{\tilde{\U}}
\def\tV{\tilde{\V}} \def\tW{\tilde{\W}} \def\tX{\tilde{\X}}
\def\tY{\tilde{\Y}} \def\tZ{\tilde{\Z}}

\def\ooa{\overline{\a}} \def\oob{\overline{\b}}
\def\ooc{\overline{\c}} \def\ood{\overline{\d}}
\def\ooe{\overline{\e}} \def\oof{\overline{\f}}
\def\oog{\overline{\g}} \def\ooh{\overline{\h}}
\def\ooib{\overline{\ib}} \def\ooj{\overline{\j}}
\def\ook{\overline{\k}} \def\ool{\overline{\l}}
\def\oom{\overline{\m}} \def\oon{\overline{\n}}
\def\ooo{\overline{\o}} \def\oop{\overline{\p}}
\def\ooq{\overline{\q}} \def\oor{\overline{\r}}
\def\oos{\overline{\s}} \def\oot{\overline{\t}}
\def\oou{\overline{\u}} \def\oov{\overline{\v}}
\def\oow{\overline{\w}} \def\oox{\overline{\x}}
\def\ooy{\overline{\y}} \def\ooz{\overline{\z}}

\def\ooA{\overline{\A}} \def\ooB{\overline{\B}}
\def\ooC{\overline{\C}} \def\ooD{\overline{\D}}
\def\ooE{\overline{\E}} \def\ooF{\overline{\F}}
\def\ooG{\overline{\G}} \def\ooH{\overline{\H}}
\def\ooI{\overline{\I}} \def\ooJ{\overline{\J}}
\def\ooK{\overline{\K}} \def\ooL{\overline{\L}}
\def\ooM{\overline{\M}} \def\ooN{\overline{\N}}
\def\ooO{\overline{\O}} \def\ooP{\overline{\P}}
\def\ooQ{\overline{\Q}} \def\ooR{\overline{\R}}
\def\ooS{\overline{\S}} \def\ooT{\overline{\T}}
\def\ooU{\overline{\U}} \def\ooV{\overline{\V}}
\def\ooW{\overline{\W}} \def\ooX{\overline{\X}}
\def\ooY{\overline{\Y}} \def\ooZ{\overline{\Z}}

\def\va{\vec{\a}} \def\vb{\vec{\b}} \def\vc{\vec{\c}}
\def\vd{\vec{\d}} \def\ve{\vec{\e}} \def\vf{\vec{\f}}
\def\vg{\vec{\g}} \def\vh{\vec{\h}} \def\vib{\vec{\i}}
\def\vj{\vec{\j}} \def\vk{\vec{\k}} \def\vl{\vec{\l}}
\def\vm{\vec{\m}} \def\vn{\vec{\n}} \def\vo{\vec{\o}}
\def\vp{\vec{\p}} \def\vq{\vec{\q}} \def\vr{\vec{\r}}
\def\vts{\vec{\s}} \def\vtt{\vec{\t}} \def\vu{\vec{\u}}
\def\vv{\vec{\v}} \def\vw{\vec{\w}} \def\vx{\vec{\x}}
\def\vy{\vec{\y}} \def\vz{\vec{\z}}

\def\vA{\vec{\A}} \def\vB{\vec{\B}} \def\vC{\vec{\C}}
\def\vD{\vec{\D}} \def\vE{\vec{\E}} \def\vF{\vec{\F}}
\def\vG{\vec{\G}} \def\vH{\vec{\H}} \def\vI{\vec{\I}}
\def\vJ{\vec{\J}} \def\vK{\vec{\K}} \def\vL{\vec{\L}}
\def\vM{\vec{\M}} \def\vN{\vec{\N}} \def\vO{\vec{\O}}
\def\vP{\vec{\P}} \def\vQ{\vec{\Q}} \def\vR{\vec{\R}}
\def\vS{\vec{\S}} \def\vT{\vec{\T}} \def\vU{\vec{\U}}
\def\vV{\vec{\V}} \def\vW{\vec{\W}} \def\vX{\vec{\X}}
\def\vY{\vec{\Y}} \def\vZ{\vec{\Z}}

\def\ha{\hat{\a}} \def\hb{\hat{\b}} \def\hc{\hat{\c}}
\def\hd{\hat{\d}} \def\he{\hat{\e}} \def\hf{\hat{\f}}
\def\hg{\hat{\g}} \def\hh{\hat{\h}} \def\hib{\hat{\i}}
\def\hj{\hat{\j}} \def\hk{\hat{\k}} \def\hl{\hat{\l}}
\def\hm{\hat{\m}} \def\hn{\hat{\n}} \def\ho{\hat{\o}}
\def\hp{\hat{\p}} \def\hq{\hat{\q}} \def\hr{\hat{\r}}
\def\hts{\hat{\s}} \def\htt{\hat{\t}} \def\hu{\hat{\u}}
\def\hv{\hat{\v}} \def\hw{\hat{\w}} \def\hx{\hat{\x}}
\def\hy{\hat{\y}} \def\hz{\hat{\z}}

\def\hA{\hat{\A}} \def\hB{\hat{\B}} \def\hC{\hat{\C}}
\def\hD{\hat{\D}} \def\hE{\hat{\E}} \def\hF{\hat{\F}}
\def\hG{\hat{\G}} \def\hH{\hat{\H}} \def\hI{\hat{\I}}
\def\hJ{\hat{\J}} \def\hK{\hat{\K}} \def\hL{\hat{\L}}
\def\hM{\hat{\M}} \def\hN{\hat{\N}} \def\hO{\hat{\O}}
\def\hP{\hat{\P}} \def\hQ{\hat{\Q}} \def\hR{\hat{\R}}
\def\hS{\hat{\S}} \def\hT{\hat{\T}} \def\hU{\hat{\U}}
\def\hV{\hat{\V}} \def\hW{\hat{\W}} \def\hX{\hat{\X}}
\def\hY{\hat{\Y}} \def\hZ{\hat{\Z}}

 \newcommand{\btheta}{\boldsymbol{\theta}}
 \newcommand{\btau}{\boldsymbol{\tau}}

 \newcommand{\Evidenzia}[1]{{\magenta #1}}
 \newcommand{\ShortFig}{}

\definecolor{mygreen}{RGB}{28,172,0}

  \newcommand{\setsmall}{
 \lstset{language=Matlab,
        deletekeywords={zeros,length,max,min,interp2},
        basicstyle=\tiny,
        numberstyle=\tiny,
        commentstyle=\color{mygreen},
        keywordstyle=\color{blue},
        stepnumber=0.1,
        firstnumber = 1,
        numberfirstline=false,
        stepnumber=1,
        frame=single,
        rulecolor=\color{blue},
 }
 }

 \newtheorem{Theo}{\bf Theorem}

  \newtheorem{Remark}{\bf Remark}

\newcommand{\ENG}{1=1}                              %

\newcommand{\ItaEng}[2]{\ifnum\ENG\relax#2\else#1\fi}

\newcommand{\alza}[1]{\rule{0mm}{#1}}

  \newcommand{\ii}{\`i }
 \newcommand{\ea}{\'e }
 \newcommand{\ee}{\`e }
 \newcommand{\oo}{\'o }
 \newcommand{\uu}{\'u }
 \newcommand{\ac}{\'a }
 \newcommand{\ad}{\'a}

 \newcommand{\eps}{.}

\definecolor{mio_red}{rgb}{0.0, 0.0, 0.0}
\definecolor{mioo_red}{rgb}{0.0, 0.0, 0.0}

\newcommand{\bloivretaor}[2]{\begin{picture}(4,10)
   \put(0,0){\tratteggio}
   \put(2,10){\psline(-2,0)}
   \put(2, 0){
   \put(-1.5,3){
  \psframe(3,3)
  \rput(1.5,1.5){$#1$}
   }
   \put(0,6){\psline{->}(0,4.0)}
   \put(0,  0){\psline{->}(0,3)}
   \put(0,10.5){\makebox(0,0)[b]{$#2$}}}
   \put(0, 0){\psline(2,0)}
\end{picture}}

\journal{European Journal of Control}

\begin{document}

\begin{frontmatter}

\title{Efficient and Robust Modeling of Nonlinear Mechanical Systems}

\author{Davide Tebaldi\orcidlink{0000-0003-1432-0489}\corref{mycorrespondingauthor}}
\cortext[mycorrespondingauthor]{Corresponding author}
\ead{davide.tebaldi@unimore.it}

\author{Roberto Zanasi\orcidlink{0000-0001-5507-825X}}
\ead{roberto.zanasi@unimore.it}

\affiliation{organization={University of Modena and Reggio Emilia},
            addressline={Via Pietro Vivarelli 10 - int. 1},
            city={Modena},
            postcode={41125},
            country={Italy}}

\begin{abstract}
 The development of efficient and robust dynamic models is fundamental in the field of systems and control engineering. In this paper, a new formulation for the dynamic model of nonlinear mechanical systems, that can be applied to different automotive and robotic case studies, is proposed, together with a modeling procedure allowing to automatically obtain the model formulation.
Compared with the Euler-Lagrange formulation,
the proposed model is shown to give superior performances in terms of robustness against measurement noise for systems exhibiting dependence on some external variables, as well as in terms of execution time when computing the inverse dynamics of the system.
\end{abstract}

\begin{keyword}
Dynamic Modeling, Model Formulation, Physical Systems, Mechanical Systems.
\end{keyword}

\end{frontmatter}

%%%%%%%%%%%%%%%%%%%%%%%%%%%%%%%%%%%%%%%%%%%%%%%%%%%%%%%%%%%%%%%%%%%%%%%%
%%%%%%%%%%%%%%%%%%%%%%%%%%%%%%%%%%%%%%%%%%%%%%%%%%%%%%%%%%%%%%%%%%%%%%%%

\section{Introduction}
 The development of approaches to model physical systems in different energetic domains has been the subject of studies
 by many renowned scientists throughout history \cite{maxwell1861}. Indeed, the dynamic modeling of physical systems is fundamental
 to develop effective control strategies \cite{MatinLCSS2024}.

The choice of the adopted modeling approach historically depends on the systems energetic domains. When dealing with mechanical systems, the Lagrange equations \cite{lagrange1811mecanique} represent one of the most widespread approaches to derive the dynamic model of the system. Alternatives to the Lagrangian approach involve the initial definition of an augmented dynamic model of the system~\cite{TEBALDI2023105420}, followed by a model-order reduction assuming rigid connections. As far as electromechanical systems such as electric machines are concerned, they are typically modeled by directly writing the model equations in the desired reference frame \cite{Zhang2020stfr,C2024100931}.

In the context of automotive systems, the detailed modeling of the physical systems involved in a large variety of vehicles, such as planetary gear sets \cite{LhommeTVT2017}, permanent magnet synchronous motors \cite{TILLI201916} or Full Toroidal Variators (FTVs) \cite{FuchsSAE2002}, represents an important aspect for developing effective
 energy management strategies \cite{ZhangEnergies2020}.
This aspect is strictly related to the calculation of the system inverse dynamics, which allows to compute the optimal actuator torques for energy recovery or trajectory tracking in articulated vehicles. The inverse dynamics problem is also relevant in other areas of engineering, such as robotics for example,
 where the inverse dynamics control~\cite{SicilianoBook} represents the basis for
more articulated control approaches such as impedance control \cite{tsumugiwa2007stability} or admittance control \cite{SIRINTUNA2024104725}.

The modeling problem becomes more complex when multi-physics systems are considered \cite{CSL_New_last}, that are systems composed of physical elements belonging to different energetic domains.
In this context, approaches such as Bond Graph (BG)~\cite{6144754,GONZALEZ201825}, Power-Oriented Graphs (POG)~\cite{tebaldi2025pog,TEBALDI2023105420}, and Energetic Macroscopic Representation (EMR)~\cite{Li24EMR,Bouscayrol_2000} have been developed, based on the concept of having a unified framework for modeling physical systems independently of their energetic domains~\cite{kron1939tensor}. 

In this paper, we focus on the dynamic modeling of nonlinear mechanical systems whose energy $E_n$ can be expressed in the quadratic form
$E_n(\x,\dot\x)=\frac{1}{2}\dot\x\tras\M(\x)\dot\x$,
where $\M(\x)$ is a full rank symmetric matrix, while the terms $\x=[x_1,\,x_2,\,\ldots,x_N]$ and $\dot{\x}=[\dot{x}_1,\,\dot{x}_2,\,\ldots,\dot{x}_N]$ are the vector of the generalized coordinates and its time derivative, respectively.
In this context, the typically used description of the system dynamic model is the so-called Euler-Lagrange form~\cite{SicilianoBook}: 

 \begin{equation} \label{generic_8}
\M(\x)\,\ddot{\x}+\N(\x,\dot{\x})\,\dot\x = \boldsymbol{\tau}
\hspace{6mm}
\leftrightarrow
\hspace{6mm}
\M\,\ddot{\x}+\N\,\dot\x = \boldsymbol{\tau},
\end{equation}
where $\boldsymbol{\tau}$ is the generalized wrench vector. Hereinafter, the explicit dependencies ``$(\x)$'' and ``$(\x,\dot\x)$'' for matrices $\M$ and $\N$, respectively, will omitted for the sake of brevity.

In this paper, a new alternative formulation of the model \eqref{generic_8} is proposed, which is called {\it factorized} form and is based on the calculation of the characterizing factorization matrix $\T$. Together with this new formulation, we also provide the steps of a modeling procedure allowing to systematically obtained the model equations, as well as the factorization matrix $\T$. The derivative-causality and integral-causality implementations of the factorized model are derived through rigorous theorems supported by analytical proofs. 

Furthermore, the advantages of the new proposed factorized model are shown
in terms of:

\begin{enumerate}[a)]
\item better simulation performances of the factorized model for time-varying physical systems, 
such as the FTV in Sec.~\ref{FTV_section}, compared to the Euler-Lagrange formulation. For the FTV case study, it is shown that the simulation results given by the factorized model in the noisy scenario remain the same as the noiseless case.
\item better simulation performances of the factorized model in terms of execution time when computing the inverse dynamics of a robotic manipulator
compared to the Euler-Lagrange formulation.
\end{enumerate}

The remainder of this paper is structured as follows. The proposed factorized model is given in Sec.~\ref{POG_models_sect}, together with the proposed modeling procedure. The different case studies, showing the good performances of the proposed factorized model,
are addressed in Sec.~\ref{CCR_section}, Sec.~\ref{FTV_section} and Sec.~\ref{Chap_2_DOF}, respectively.
Finally, the appendix part is dedicated to analytical proof, and the conclusions are given in Sec.~\ref{conclusions_sect}.

 \section{Proposed Model Formulation and Modeling Procedure}\label{POG_models_sect}

This section deals with the introduction of the proposed factorized model in Sec.~\ref{fact_model_section}, and with the introduction of the proposed modeling procedure allowing to compute the factorization matrix in Sec.~\ref{prop_appr_section}.

\subsection{Factorized Model Formulation}\label{fact_model_section}

\vspace{1mm}
 \begin{Theo}\label{Prop_1}
 The Euler-Lagrange model \eqref{generic_8} can be   expressed in the following alternative {\it factorized} form:
 \begin{equation} \label{Factorized:_form}
   \T\tras\frac{d}{dt}\!\left(\T\,\dot\x\right)  =\boldsymbol{\tau},
\end{equation}
where the factorization matrix $\T$, which can also be rectangular with more rows than columns, is such that the matrices $\M$ and $\N$ in \eqref{generic_8} can be expressed as follows:
\begin{equation} \label{basic_relations}
\begin{array}{c}
 \M= \T\tras\T,
 \hspace{16.8mm}
 \N= \T\tras\dot\T,
\end{array}
\end{equation}
where $\dot\T$ is the time derivative of matrix $\T$.
 \hfill $\square$\vspace{1mm}
 \end{Theo}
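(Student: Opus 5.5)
The plan is to verify \eqref{Factorized:_form} by expanding the time derivative with the product rule, and then to show that a factorization matrix $\T$ satisfying \eqref{basic_relations} exists. This requires a precise meaning for $\N$, since \eqref{generic_8} alone does not determine $\N$ uniquely: only the product $\N\dot\x$ is fixed.

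\textbf{Step 1 (expansion).} Given any $\T(\x)$, I would expand the left-hand side as
\[
\T\tras\frac{d}{dt}\!\left(\T\dot\x\right)=\T\tras\T\,\ddot\x+\T\tras\dot\T\,\dot\x .
\]
Substituting \eqref{basic_relations} yields $\M\ddot\x+\N\dot\x$, so \eqref{Factorized:_form} coincides with \eqref{generic_8}. This direction is purely algebraic.

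\textbf{Step 2 (existence of $\T$ with $\M=\T\tras\T$).} The matrix $\M(\x)$ is symmetric and full rank, and as a kinetic-energy matrix it is positive definite. It therefore admits factorizations $\T\tras\T$, for example:
\begin{itemize}
\item the Cholesky factor;
\item $\M^{1/2}$;
\item or, more naturally for mechanical systems, the stacked Jacobians $\T=\mathrm{col}(\sqrt{m_i}\,\J_{v_i},\ \mathbf{I}_i^{1/2}\J_{\omega_i})$ that map $\dot\x$ to the body velocities.
\end{itemize}
The third choice produces the rectangular $\T$ with more rows than columns mentioned in the statement, because $E_n=\frac12\|\T\dot\x\|^2$.

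\textbf{Step 3 (matching $\N$; main obstacle).} One must check that $\T\tras\dot\T\,\dot\x$ equals the Coriolis/centrifugal term produced by the Lagrange equations, namely
\[
\dot\M\dot\x-\tfrac12\,\partial_\x(\dot\x\tras\M\dot\x).
\]
Here $\dot\M=\dot\T\tras\T+\T\tras\dot\T$. For a general factor this identity fails, so the argument has to be organized around the specific factorization.

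First I would compute the Lagrange term directly. Writing $\v=\T(\x)\dot\x$, the Lagrange equation is
\[
\frac{d}{dt}\bigl(\T\tras\v\bigr)-\Bigl(\frac{\partial\v}{\partial\x}\Bigr)\tras\v=\boldsymbol\tau .
\]
This reduces to \eqref{Factorized:_form} exactly when
\[
\dot\T\tras\v=\Bigl(\frac{\partial(\T\dot\x)}{\partial\x}\Bigr)\tras\v .
\]
That condition holds when the rows of $\T$ are Jacobians (velocity is a gradient map) combined with a kinematic/rigid-body compatibility. For rotational parts it also uses the skew-symmetry of $\boldsymbol\omega\times\mathbf{I}\boldsymbol\omega$.

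Second, I would state the claim as: there exists $\T$, namely the Jacobian-based one, for which $\N:=\T\tras\dot\T$ is a valid Euler-Lagrange $\N$. The remaining freedom in $\N$, namely any term vanishing when multiplied by $\dot\x$, absorbs the difference.

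I expect this last identification to be the delicate point. I would present it as a supporting lemma, proved in the appendix, rather than attempting it for an arbitrary factor $\T$.
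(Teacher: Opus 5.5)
Your Step~1 is exactly the paper's proof, and it is all the paper proves. The statement takes \eqref{basic_relations} as the defining property of $\T$, so the product-rule expansion $\T\tras\frac{d}{dt}(\T\dot\x)=\T\tras\T\ddot\x+\T\tras\dot\T\dot\x$ closes the argument. Existence of a suitable $\T$ is delegated, without proof, to the modeling procedure of Sec.~\ref{prop_appr_section}.

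Your Steps~2--3 therefore address a stronger claim that the paper never establishes. Your diagnosis there is correct and worth keeping: for $n\geq 2$ a generic factor of $\M$ (Cholesky, $\M^{1/2}$) does not make $\T\tras\dot\T\dot\x$ equal to the Lagrange term $\dot\M\dot\x-\frac{1}{2}\partial_{\x}(\dot\x\tras\M\dot\x)$. Your criterion $\bigl(\dot\T-\partial_{\x}(\T\dot\x)\bigr)\tras\T\dot\x=\mathbf{0}$ is the right one.

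The lemma you need is short rather than delicate. If $\T=\sqrt{\N_0}\,\partial\P_0/\partial\x$ with $\N_0$ constant, symmetry of second derivatives gives $\partial_{\x}(\T\dot\x)=\dot\T$ identically. Next, expand $\Gamma_{kjl}=\frac{1}{2}(\partial_l M_{kj}+\partial_j M_{kl}-\partial_k M_{jl})$ using $M_{kj}=\sum_i T_{ik}T_{ij}$ and the closedness $\partial_l T_{ij}=\partial_j T_{il}$. This gives $\sum_l\Gamma_{kjl}\dot x_l=(\T\tras\dot\T)_{kj}$, so $\T\tras\dot\T$ is exactly the Christoffel matrix; the paper's $\N$ for the 2-DoF robot is an instance. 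No appeal to the non-uniqueness of $\N$ is needed. It could not help anyway: that non-uniqueness consists only of matrices annihilating $\dot\x$, so it cannot absorb a nonzero discrepancy in the vector $\N\dot\x$.

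Your remark on rotational parts is wrong for genuinely spatial rotations. Take body-frame $\boldsymbol\omega=\J_\omega(\x)\dot\x$ and $\T_{\rm rot}=\mathbf{I}^{1/2}\J_\omega$. The factorized form yields $\J_\omega\tras\mathbf{I}\dot{\boldsymbol\omega}$, whereas Lagrange yields $\J_\omega\tras(\mathbf{I}\dot{\boldsymbol\omega}+\boldsymbol\omega\times\mathbf{I}\boldsymbol\omega)$. Skew-symmetry only shows that the missing gyroscopic term is workless, $\dot\x\tras\J_\omega\tras(\boldsymbol\omega\times\mathbf{I}\boldsymbol\omega)=0$, not that it vanishes. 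For a torque-free asymmetric body, for instance, the factorized form predicts constant body rates. What you can actually prove is the case in which every row of $\T$ is a constant multiple of a gradient, i.e.\ translations and rotations about fixed axes. That covers the crank--connecting-rod and 2-DoF robot examples, but not general 3D rigid bodies.
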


\noindent {\em Proof.}  The left side of model \eqref{Factorized:_form} can be written as:
\begin{equation}\label{proof1}
 \begin{array}{@{}r@{}c@{}l@{}}
\T\tras \frac{d}{dt}( \T\dot\x)
=  \underbrace{\T\tras \T}_{\M} \ddot\x \!+\! \underbrace{\T\tras \dot\T}_{\N} \dot\x\!,
 \end{array}
\end{equation}
resulting in the Euler-Largange form \eqref{generic_8}.

The factorization matrix $\T$ can be obtained using the modeling procedure described in the next Sec.~\ref{prop_appr_section}.

Systems~\eqref{generic_8} and \eqref{Factorized:_form} are written using a derivative causality. While, for what concerns systems \eqref{generic_8}, the integral causality implementation can be obtained through direct inversion, the following theorem applies to the factorized system \eqref{Factorized:_form}.

\begin{Theo}\label{Prop_2}
The integral causality implementation of the factorized form \eqref{Factorized:_form} cannot be obtained through direct model inversion because,  when matrix $\T$ is rectangular, the dimension of vector $\T\dot\x$ in \eqref{Factorized:_form} is larger than that of vector $\x$.  Instead, the integral causality implementation of the factorized form \eqref{Factorized:_form} must be computed as follows:
  \begin{equation}\label{fact_integral}
\dot\x=
\ds \T^+ \int_0^t
\left((\T^+)\tras\boldsymbol{\tau}
+ \P \dot\T \dot\x
\right) dt,
  \end{equation}
where $\T^+ = (\T\tras\T)\muno\T\tras$ is the pseudo-inverse of matrix $\T$, and  $\P=\I - \T\T^+$  is an orthogonal  projection matrix on subspace {\em Ker}$(\T\tras)$.
   \hfill  $\square$\vspace{1mm}
\end{Theo}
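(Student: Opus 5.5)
The plan is to introduce the auxiliary ``momentum-like'' vector $\p_T=\T\dot\x$, which has as many components as $\T$ has rows, and to show that $\dot\x$ can be recovered from $\p_T$ by $\dot\x=\T^+\p_T$. The derivative $\dot\p_T$ is then expressed in terms of $\boldsymbol{\tau}$ and $\dot\x$ only. Integrating $\dot\p_T$ and applying $\T^+$ gives \eqref{fact_integral}.

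First I will handle the negative part of the claim. Model \eqref{Factorized:_form} reads $\T\tras\dot\p_T=\boldsymbol{\tau}$, where $\T\tras$ is $N\times m$ with $m\geq N$. This fixes only the component of $\dot\p_T$ lying in $\mathrm{Im}(\T)=\mathrm{Ker}(\T\tras)^\perp$. The component in $\mathrm{Ker}(\T\tras)$ is left undetermined, so $\dot\p_T$ cannot be obtained by inverting $\T\tras$. Next, $\M=\T\tras\T$ is full rank, so $\T$ has full column rank and $\T^+=(\T\tras\T)\muno\T\tras$ is well defined, with $\T^+\T=\I$. The matrix $\P=\I-\T\T^+$ is symmetric and idempotent and satisfies $\T\tras\P=0$. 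Hence $\P$ is the orthogonal projector onto $\mathrm{Ker}(\T\tras)$, and $\I-\P=\T\T^+=(\T^+)\tras\T\tras$ is the projector onto $\mathrm{Im}(\T)$.

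The key step is the decomposition $\dot\p_T=(\I-\P)\dot\p_T+\P\dot\p_T$. For the first term, $(\I-\P)\dot\p_T=(\T^+)\tras\T\tras\dot\p_T=(\T^+)\tras\boldsymbol{\tau}$ by \eqref{Factorized:_form}. For the second term, expand $\dot\p_T=\dot\T\dot\x+\T\ddot\x$ and use $\P\T=0$ to get $\P\dot\p_T=\P\dot\T\dot\x$. Summing the two gives $\dot\p_T=(\T^+)\tras\boldsymbol{\tau}+\P\dot\T\dot\x$. Integrating from $0$ to $t$ (taking zero initial conditions, or absorbing $\p_T(0)$) and then applying $\T^+$, with $\T^+\p_T=\T^+\T\dot\x=\dot\x$, yields \eqref{fact_integral}.

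The main obstacle is justifying the unobservable component $\P\dot\p_T$. The point I must make clear is that, although $\T\tras\dot\p_T=\boldsymbol{\tau}$ does not constrain it, the structure $\p_T=\T\dot\x$ does determine it through $\P\T=0$, and it is what keeps $\p_T$ inside $\mathrm{Im}(\T)$. Finally, I will check consistency: applying $\T\tras$ to the expression for $\dot\p_T$ returns $\boldsymbol{\tau}$, so the integral form is equivalent to \eqref{generic_8}.
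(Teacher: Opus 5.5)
Your proposal is correct and follows essentially the same route as the paper. Both arguments use $\T\T^+=(\T^+)\tras\T\tras$ to obtain $(\I-\P)\frac{d}{dt}(\T\dot\x)=(\T^+)\tras\boldsymbol{\tau}$, eliminate the $\T\ddot\x$ term through $\P\T=0$, and then integrate and apply $\T^+$; your explicit treatment of the initial value $\T\dot\x(0)$, which the paper's formula silently sets to zero, and your justification of the non-invertibility claim are small but welcome additions.
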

  The proof of Theorem~\ref{Prop_2} is reported in App.~A. 
\vspace{1mm}

The Power-Oriented Graphs (POG)~\cite{tebaldi2025pog} representation of the integral implementations of the two models~\eqref{generic_8} and~\eqref{Factorized:_form}, which are suitable for direct implementation in the Simulink environment, are shown in Fig.~\ref{time_varying_dynamic_iii}. Specifically, Fig.~\ref{time_varying_dynamic_iii}(a) shows the power-oriented graph representation of the Euler-Lagrange model \eqref{generic_8} in integral form, while Fig.~\ref{time_varying_dynamic_iii}(b) shows the power-oriented graph representation of the factorized model \eqref{Factorized:_form} in integral form.

\begin{figure}[tb]
\centering
 \includegraphics[clip,width=0.68\columnwidth]{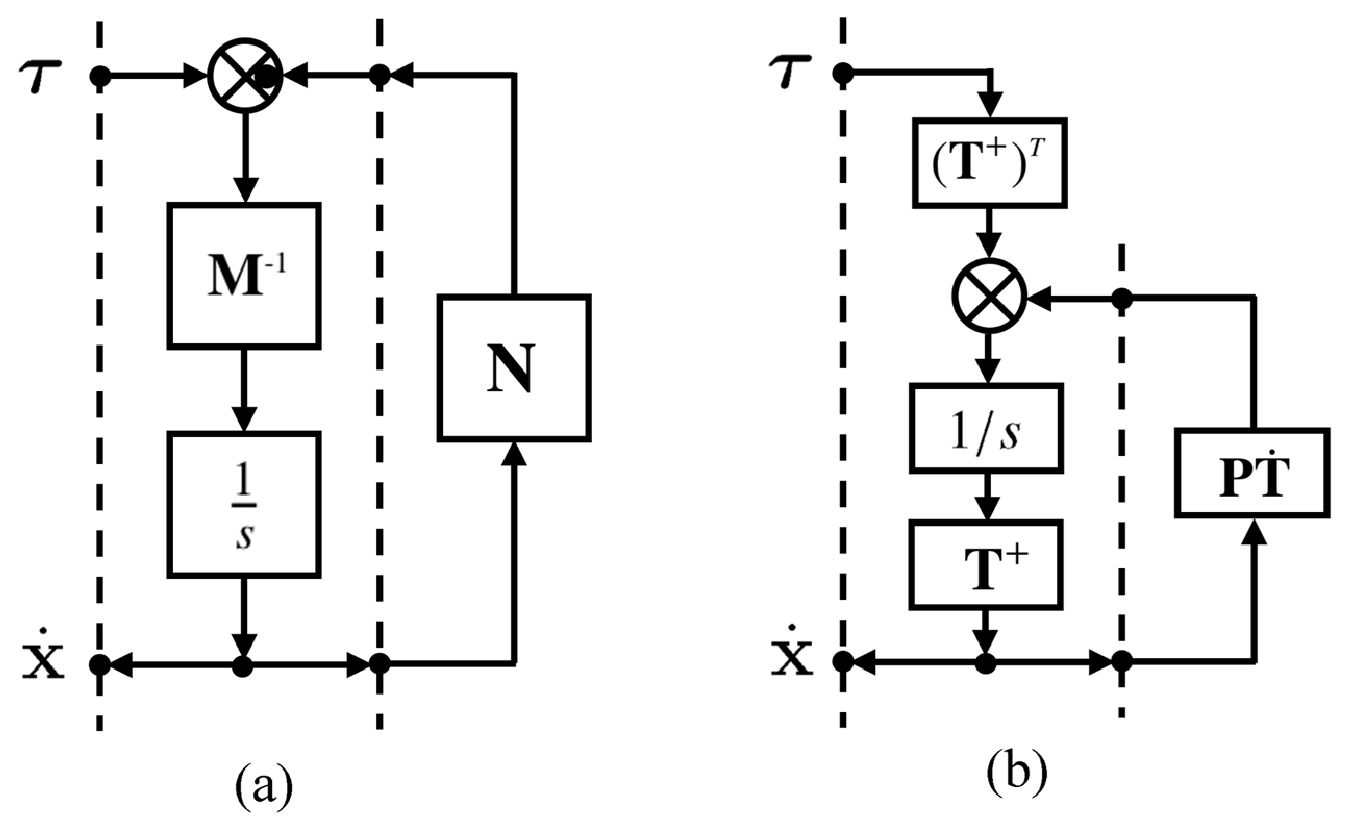}
    \vspace{-3.2mm}
  \caption{Power-oriented graph representation using integral causality of: (a) the Euler-Lagrange model \eqref{generic_8}, and (b) the factorized model \eqref{Factorized:_form}.}\label{time_varying_dynamic_iii}
    \vspace{-2mm}
\end{figure}

\vspace{2mm}

\subsection{Modeling Procedure}\label{prop_appr_section}
The dynamic equations of the considered class of mechanical systems can be systematically computed as follows:

\noindent 1) Let $n$ denote the number of coordinates variables $x_i$ present in the system, for $i \in [1,\,2,\,\ldots,\,n]$, and let $\x=[x_1,\; x_2,\; \ldots,\; x_n]\tras$  denote the  coordinates vector.

\noindent 2) Let $m$ denote the number of masses  $m_j$ in the system, for $j\in[1,\,2,\,\ldots,\,m]$, and let $r_j$ denote the degrees of freedom  in the space 
(translational and rotational) of mass $m_j$.

\noindent 3) Write the position $P_{j,k}(\x)$ of each degree of freedom $k$ of each  mass $m_j$, for  $k\in[1,\,2,\,\ldots,\,r_j]$ and $j\in[1,\,2,\,\ldots,\,m]$. Let $\P_j(\x)=[P_{j,1},\; P_{j,2},\; \ldots,\; P_{j,r_j}]\tras$ denote the position vector of each mass $m_j$, and let $\P_0(\x)=[\P_{j},\; \P_{j},\; \ldots,\; \P_{j}]\tras$ denote the overall mass position vector of the system.
At this point, other approaches in the literature would proceed in the direct writing of the differential equations characterizing the considered system \cite{karnopp2012modeling}. In this paper, we introduce the factorized model formulation \eqref{Factorized:_form} instead, and we propose an automatic way of computing the factorization matrix $\T$ characterizing it, through the subsequent steps.

\noindent 4) Let $N_{j,k}$ denote the inertial  coefficient (mass $m_{j,k}$ or inertial coefficient $J_{j,k}$) of each degree of freedom $k$  of each mass $m_j$. Let $\N_j=[N_{j,1},\; N_{j,2},\; \ldots,\; N_{j,r_j}]$ denote the inertial vector of mass $m_j$, and let $\N_0=\mbox{diag}([\N_{j},\; \N_{j},\; \ldots,\; \N_{j}])$ denote a diagonal matrix of all the inertial  coefficients of the system.

\noindent 5) The matrix $\T$ of the factorized model \eqref{Factorized:_form} can be computed as follows:
\begin{equation}
\label{T_def}
\T= \sqrt{\N_0}\, \H_0,
\hspace{5mm}
\mbox{where}
\hspace{5mm}
\H_0 = \frac{\partial \P_0(\x)}{\partial \x}.
\end{equation}
The system matrices $\M$ and $\N$ of the Euler-Lagrange model \eqref{generic_8}
can be obtained using~\eqref{basic_relations}.

%%%%%%%%%%%%%%%%%%%%%%%%%%%%%%%%%%%%%%%%%%%%%%%%%%%%%%%%%%%%%%%%%%%%%%%%%
\section{Dynamic Model of a crank-connecting rod system}\label{CCR_section}
%%%%%%%%%%%%%%%%%%%%%%%%%%%%%%%%%%%%%%%%%%%%%%%%%%%%%%%%%%%%%%%%%%%%%%%%%

Reference is made to the crank-connecting rod system shown in
Fig.~\ref{crank_connecting_rod} as a first case study, composed of two masses.
\begin{figure}[tbp]
  \centering\footnotesize
 %%%%%%%%%%%%%%%%
 \setlength{\unitlength}{3.68mm}
 \psset{unit=\unitlength}
 \SpecialCoor
 \begin{pspicture}(-2,-1.5)(21.0,8.0)
 \psline[linewidth=0.3pt]{->}(0,0)(20,0)
 \psline[linestyle=dashed,linewidth=0.3pt](4,3)(19.5,3)
 \psline[linestyle=dashed,linewidth=0.3pt](8;60)(4,0)
 \pscircle[fillstyle=solid,fillcolor=gray](0,0){0.55}
 \psline[linewidth=1.0pt]{o-o}(0,0)(8;60)
 \rput(16,3){\psframe[fillstyle=solid,fillcolor=gray](-1,-0.5)(1,0.5)}
 \psline[linewidth=1.0pt]{o-o}(8;60)(15,3)
 \psarc[linewidth=0.6pt]{->}{2}{0}{60}\
 \rput[lb](2.1;30){$\theta,\,\tau_m$}
 \pcline[offset=10pt,linewidth=0.3pt]{|-|}(0,0)(8;60)
 \lput*{U}{$R$}
 \pcline[offset=10pt,linewidth=0.3pt]{|-|}(8;60)(15,3)
 \lput*{U}{$L$}
 \psline[linewidth=0.3pt,linestyle=dashed](15,-1.25)(15,3)
 \pcline[offset=-1,linewidth=0.3pt]{|->}(0,0)(15,0)
 \lput*{U}{$x$}
 \pcline[linewidth=0.3pt]{<->}(18,0)(18,3)
 \lput*{U}{$d$}
 \psline[linewidth=1.0pt]{->}(17,3)(20,3)
 \rput[b](18.5,3.25){$F$}
 \rput[rt](-0.35,-0.2){$J_m$}
 \rput[lt]( 0.5,-0.2){$d_m$}
 \rput[b](16.2,3.6){$m_p$}
 \rput[t](16.2,2.4){$b_p$}
\end{pspicture}
 \caption{Structure of the crank-connecting rod system.}\label{crank_connecting_rod}
\end{figure}
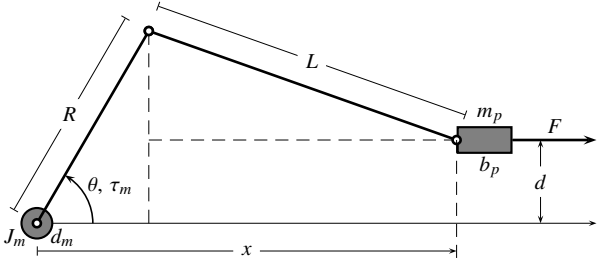
 Applying the procedure of Sec.~\ref{prop_appr_section}, 
in this case the system coordinate variable $x$ is the angular position $x=\theta$ shown in
Fig.~\ref{crank_connecting_rod}, and the generalized torque  $\tau$ is $\tau=-d_m-H^2(\theta)b_p+\tau_{m}+H(\theta)F$, where $H(\theta)$ is defined in \eqref{FTV_new_eq}.  Each mass is characterized by one degree of freedom, see step 2) of the modeling procedure.
The position vectors of the two masses at step 3) are the following:
\[
\P_1=\theta,
\hspace{10mm}
\P_2=R\cos(\theta)+\sqrt{L^2-(R \sin(\theta)-d)^2}.
\]
The vector $\P_0$ and the matrix $\N_0$
have the following structure:
\[
\P_0=\mat{@{\,}c@{\,}}{\P_1\\ \P_2},
\hspace{14mm}
\N_0=\mat{@{\,}c@{\;}c@{\,}}{J_m & 0 \\ 0 &  m_p},
\]
while the vectors $\H_0$ and the factorization matrix $\T$ are given by:
\begin{equation} \label{H0_T}
\H_0
\!=\!
\frac{\partial \P_0}{\partial \theta}
\!=\!
\mat{@{}c@{}}{1\\ H(\theta)}\!,
\hspace{10mm}
\T
\!=\!
\sqrt{\N_0}\, \H_0
\!=\!
\mat{@{}c@{}}{\sqrt{J_m}\\ \sqrt{m_p}H(\theta)}\!,\!
\end{equation}
where
\begin{equation}\label{FTV_new_eq}
H(\theta)
\!=\!
\frac{\partial \P_2}{\partial \theta}
\!=\! %\ts
-R\sin\theta +\frac{R\cos\theta(d-R\sin\theta)}{\sqrt{L^2-(R\sin\theta-d)^2}}.
\end{equation}
The matrix $\T$ in \eqref{H0_T} fully defines the proposed factorized model \eqref{Factorized:_form}. Additionally, using \eqref{basic_relations} and \eqref{H0_T}, the matrices $\M$ and $\N$ of the Euler-Lagrange model \eqref{generic_8} can be obtained  as follows:
 \[
 \begin{array}{l}
 \M(\theta)
 = \T\tras \T
 = M(\theta)
 = J_{m}+m_p\,H^2(\theta),
 \\[1mm]
   \N(\theta,\dot\theta)
   =\T\tras\dot\T
   = \frac{\dot{M}(\theta)}{2}
   =m_p\, H(\theta)\dot{H}(\theta).
 \end{array}
 \]
% %%%%%%%%%%%%%%%%%%%%%%%%%%%%%%%%%%%%%%%%%%%%

%%%%%%%%%%%%%%%%%%%%%%%%%%%%%%%%%%%%%%%%%%%%%%%%%%%%%%%%%%%%%%%%%%%%%%%%%%%%%%%%
\section{Dynamic Model of a Full Toroidal Variator}\label{FTV_section}

The Full Toroidal Variator (FTV) is a key element in automotive, that finds
applications such as Kinetic Energy Recovery System (KERS) and Infinitely Variable Transmission (IVT).
 A schematic of the considered FTV is shown in
Fig.~\ref{full_tor_variat_Grafico} \cite{TEBALDI2023105420},
where the speed-ratio
is a function of the rollers tilt angle $\theta_t=\theta_t(t)$.
%%%%%%%%%%%%%%%%%%%%%%%%%%%%%%%%%%%%%
\begin{figure}[t]
\centering \small \setlength{\unitlength}{5.96mm}
\psset{unit=\unitlength} \SpecialCoor
\begin{pspicture}(4,-1)(19,13)
\rput(-0.26,0){
\newrgbcolor{color_0}{0 0 0}
\newrgbcolor{color_1}{0.98        0.81        0.69}
\psline[fillstyle=solid,fillcolor=color_1,linewidth=0.9pt](6.25,4.5)(6.25,9.5)(7.75,9.5)(7.75,8.5)(7.6376,8.3916)(7.5353,8.2736)(7.4439,8.1469)(7.3642,8.0126)(7.2968,7.8717)(7.2422,7.7254)(7.2009,7.5748)(7.1731,7.4211)(7.1592,7.2655)(7.1592,7.1094)(7.1731,6.9538)(7.2009,6.8001)(7.2422,6.6495)(7.2968,6.5032)(7.3642,6.3623)(7.4439,6.228)(7.5353,6.1014)(7.6376,5.9833)(7.75,5.8749)(7.75,4.8749)(15.25,4.8749)(15.25,5.8749)(15.357,5.9777)(15.455,6.0891)(15.5431,6.2085)(15.6208,6.3349)(15.6875,6.4674)(15.7428,6.6051)(15.7862,6.747)(15.8174,6.892)(15.8362,7.0392)(15.8425,7.1875)(15.8425,7.1875)(15.8327,7.3726)(15.8033,7.5557)(15.7548,7.7347)(15.6875,7.9075)(15.6024,8.0722)(15.5003,8.227)(15.3824,8.3702)(15.25,8.5)(15.25,9.5)(16.75,9.5)(16.75,4.5)
\psline[fillstyle=solid,fillcolor=color_1,linewidth=0.9pt](6.25,4.5)(6.25,-0.5)(7.75,-0.5)(7.75,0.5)(7.6376,0.6084)(7.5353,0.7264)(7.4439,0.8531)(7.3642,0.9874)(7.2968,1.1283)(7.2422,1.2746)(7.2009,1.4252)(7.1731,1.5789)(7.1592,1.7345)(7.1592,1.8906)(7.1731,2.0462)(7.2009,2.1999)(7.2422,2.3505)(7.2968,2.4968)(7.3642,2.6377)(7.4439,2.772)(7.5353,2.8986)(7.6376,3.0167)(7.75,3.1251)(7.75,4.1251)(15.25,4.1251)(15.25,3.1251)(15.357,3.0223)(15.455,2.9109)(15.5431,2.7915)(15.6208,2.6651)(15.6875,2.5326)(15.7428,2.3949)(15.7862,2.253)(15.8174,2.108)(15.8362,1.9608)(15.8425,1.8125)(15.8425,1.8125)(15.8327,1.6274)(15.8033,1.4443)(15.7548,1.2653)(15.6875,1.0925)(15.6024,0.9278)(15.5003,0.773)(15.3824,0.6298)(15.25,0.5)(15.25,-0.5)(16.75,-0.5)(16.75,4.5)
\newrgbcolor{color_2}{0.55        0.55        0.55}
\psline[fillstyle=solid,fillcolor=color_2,linewidth=0.9pt](10,5)(10,5.9)(10.1184,6.0147)(10.2254,6.1401)(10.3202,6.275)(10.4018,6.4182)(10.4696,6.5684)(10.523,6.7244)(10.5615,6.8846)(10.5847,7.0478)(10.5925,7.2125)(10.5925,7.2125)(10.5847,7.3772)(10.5615,7.5404)(10.523,7.7006)(10.4696,7.8566)(10.4018,8.0068)(10.3202,8.15)(10.2254,8.2849)(10.1184,8.4103)(10,8.525)(10,9.525)(10.75,9.525)(10.75,10.275)(12.25,10.275)(12.25,9.525)(13,9.525)(13,8.525)(12.8876,8.4166)(12.7853,8.2986)(12.6939,8.1719)(12.6142,8.0376)(12.5468,7.8967)(12.4922,7.7504)(12.4509,7.5998)(12.4232,7.4461)(12.4093,7.2906)(12.4093,7.1344)(12.4232,6.9789)(12.4509,6.8252)(12.4922,6.6746)(12.5468,6.5283)(12.6142,6.3874)(12.6939,6.2531)(12.7853,6.1264)(12.8876,6.0084)(13,5.9)(13,5)(10,5)
\psline[fillstyle=solid,fillcolor=color_2,linewidth=0.9pt](10,4)(10,3.1)(10.1184,2.9853)(10.2254,2.8599)(10.3202,2.725)(10.4018,2.5818)(10.4696,2.4316)(10.523,2.2756)(10.5615,2.1154)(10.5847,1.9522)(10.5925,1.7875)(10.5925,1.7875)(10.5847,1.6228)(10.5615,1.4596)(10.523,1.2994)(10.4696,1.1434)(10.4018,0.9932)(10.3202,0.85)(10.2254,0.7151)(10.1184,0.5897)(10,0.475)(10,-0.525)(10.75,-0.525)(10.75,-1.275)(12.25,-1.275)(12.25,-0.525)(13,-0.525)(13,0.475)(12.8876,0.5834)(12.7853,0.7014)(12.6939,0.8281)(12.6142,0.9624)(12.5468,1.1033)(12.4922,1.2496)(12.4509,1.4002)(12.4232,1.5539)(12.4093,1.7094)(12.4093,1.8656)(12.4232,2.0211)(12.4509,2.1748)(12.4922,2.3254)(12.5468,2.4717)(12.6142,2.6126)(12.6939,2.7469)(12.7853,2.8736)(12.8876,2.9916)(13,3.1)(13,4)(10,4)
\newrgbcolor{color_3}{0.01     0.28           1}
\psline[fillstyle=solid,fillcolor=color_3,linewidth=0.9pt](7.408,6.9897)(7.3705,6.9724)(7.3364,6.9492)(7.3065,6.9207)(7.2817,6.8876)(7.2628,6.851)(7.2501,6.8117)(7.244,6.7708)(7.2448,6.7296)(7.2523,6.689)(7.2664,6.6502)(7.2868,6.6142)(7.3127,6.5821)(7.3436,6.5547)(7.3786,6.5328)(7.4167,6.5169)(7.4569,6.5075)(7.4981,6.5048)(7.5392,6.509)(7.579,6.5199)(10.279,7.5199)(10.3185,7.5383)(10.3542,7.5633)(10.385,7.5941)(10.41,7.6298)(10.4284,7.6693)(10.4397,7.7114)(10.4435,7.7548)(10.4435,7.7548)(10.4397,7.7982)(10.4284,7.8403)(10.41,7.8798)(10.385,7.9155)(10.3542,7.9463)(10.3185,7.9713)(10.279,7.9897)(10.2369,8.001)(10.1935,8.0048)(10.1501,8.001)(10.108,7.9897)(7.408,6.9897)
\psline[fillstyle=solid,fillcolor=color_3,linewidth=0.9pt](7.408,2.0103)(7.3705,2.0276)(7.3364,2.0508)(7.3065,2.0793)(7.2817,2.1124)(7.2628,2.149)(7.2501,2.1883)(7.244,2.2292)(7.2448,2.2704)(7.2523,2.311)(7.2664,2.3498)(7.2868,2.3858)(7.3127,2.4179)(7.3436,2.4453)(7.3786,2.4672)(7.4167,2.4831)(7.4569,2.4925)(7.4981,2.4952)(7.5392,2.491)(7.579,2.4801)(10.279,1.4801)(10.3185,1.4617)(10.3542,1.4367)(10.385,1.4059)(10.41,1.3702)(10.4284,1.3307)(10.4397,1.2886)(10.4435,1.2452)(10.4435,1.2452)(10.4397,1.2018)(10.4284,1.1597)(10.41,1.1202)(10.385,1.0845)(10.3542,1.0537)(10.3185,1.0287)(10.279,1.0103)(10.2369,0.999)(10.1935,0.9952)(10.1501,0.999)(10.108,1.0103)(7.408,2.0103)
\newrgbcolor{color_4}{1        0.44        0.37}
\psline[fillstyle=solid,fillcolor=color_4,linewidth=0.9pt](8,11.4)(8,11.7)(10.75,11.7)(10.75,12.4)(12.25,12.4)(12.25,11.4)
\psline[fillstyle=solid,fillcolor=color_4,linewidth=0.9pt](8,11.4)(8,11.1)(10.75,11.1)(10.75,10.4)(12.25,10.4)(12.25,11.4)
\newrgbcolor{color_5}{0.01        0.28           1}
\psline[fillstyle=solid,fillcolor=color_5,linewidth=0.9pt](15.392,6.4897)(15.4305,6.4791)(15.4702,6.4747)(15.5101,6.4768)(15.5492,6.4851)(15.5864,6.4996)(15.6209,6.5198)(15.6517,6.5453)(15.678,6.5753)(15.6993,6.6092)(15.7148,6.646)(15.7243,6.6848)(15.7275,6.7246)(15.7275,6.7246)(15.7223,6.7752)(15.7071,6.8236)(15.6823,6.868)(15.6491,6.9065)(15.6087,6.9374)(15.563,6.9595)(12.863,7.9595)(12.8232,7.9704)(12.7821,7.9746)(12.7409,7.9719)(12.7007,7.9625)(12.6626,7.9466)(12.6276,7.9247)(12.5967,7.8973)(12.5708,7.8652)(12.5504,7.8292)(12.5363,7.7904)(12.5288,7.7498)(12.528,7.7086)(12.5341,7.6677)(12.5468,7.6284)(12.5657,7.5918)(12.5905,7.5587)(12.6204,7.5302)(12.6545,7.507)(12.692,7.4897)(15.392,6.4897)
\newrgbcolor{color_6}{0.01        0.28           1}
\psline[fillstyle=solid,fillcolor=color_6,linewidth=0.9pt](15.517,2.0103)(15.5591,2.0216)(15.5986,2.04)(15.6343,2.065)(15.6651,2.0958)(15.6901,2.1315)(15.7085,2.171)(15.7198,2.2131)(15.7236,2.2565)(15.7236,2.2565)(15.719,2.3042)(15.7054,2.3502)(15.6833,2.3927)(15.6534,2.4302)(15.617,2.4613)(15.5753,2.4849)(15.5298,2.5001)(15.4823,2.5063)(15.4345,2.5034)(15.3881,2.4914)(12.6881,1.4914)(12.6546,1.4764)(12.6237,1.4566)(12.596,1.4325)(12.5721,1.4046)(12.5526,1.3735)(12.5379,1.3398)(12.5282,1.3044)(12.5239,1.268)(12.5249,1.2313)(12.5312,1.1951)(12.5428,1.1603)(12.5594,1.1275)(12.5806,1.0976)(12.606,1.071)(12.6349,1.0485)(12.6669,1.0304)(12.7012,1.0172)(12.737,1.0092)(12.7736,1.0065)(15.5286,2.0228)
\pcline[linecolor=color_1,linewidth=1.25pt]{->}(5.6,4.5)(4.4,4.5)
 \aput[1pt]{0}(0.5){$\vec\omega_d$} \bput[1pt]{0}(0.5){$\tau_d$}
\pcline[linecolor=color_2,linewidth=1.25pt]{->}(17.5,4.5)(18.7,4.5)
 \aput[1pt]{0}(0.5){$\vec\omega_b$} \bput[1pt]{0}(0.5){} % $\tau_b$
\pcline[linecolor=color_3,linewidth=1.25pt]{->}(8.1916,8.8626)(8.6084,7.7374)
 \aput[1pt]{0}(0.5){$\vec\omega_c$} \bput[1pt]{0}(0.5){} %$\tau_c$
\pcline[linecolor=color_4,linewidth=1.25pt]{->}(14.1,11.4)(12.9,11.4)
 \aput[1pt]{0}(0.5){$\vec\omega_a$} \bput[1pt]{0}(0.5){$\tau_a$}
\rput{0}(8.775,7.225){\psline[linewidth=0.4pt,linestyle=dashed](0,0)(1.2,0)}
\rput{0}(8.775,7.225){\psline[linewidth=0.4pt,linestyle=dashed](0,0)(1.2,0.4)}
\rput{0}(8.775,7.275){\psarc(0,0){1.25}{-2}{17}}
\rput{0}(10.16,6.896){\small$\theta_t$}
\pcline[linewidth=0.4pt]{|->}(8.775,4.5)(8.775,7.275)
\aput[1pt]{0}(0.5){$r_{b1}$}
\psline[linewidth=0.4pt,linestyle=dashed]{-}(8.775,7.275)(8.775,7.25)
\pcline[linewidth=0.4pt]{|->}(11,4.5)(11,7.81)
\aput[1pt]{0}(0.305){$\scr r_{b}(\!\theta_t\!)$}
\psline[linewidth=0.4pt,linestyle=dashed]{-}(11,7.81)(10.4,7.81)
\pcline[linewidth=0.4pt]{|->}(6.5,4.5)(6.5,6.65)
\bput[1pt]{0}(0.5){$\scr r_d(\!\theta_t\!)$}
\psline[linewidth=0.4pt,linestyle=dashed]{-}(6.5,6.65)(7.25,6.65)
\pcline[linewidth=0.4pt]{|->}(11.5,4.5)(11.5,10.3)
\bput[1pt]{0}(0.305){$r_{b2}$}
\psline[linewidth=0.4pt,linestyle=dashed]{-}(11.5,10.3)(11.5,10.3)
\pcline[linewidth=0.4pt]{|->}(11.5,11.4)(11.5,10.4)
\aput[1pt]{0}(0.5){$r_a$}
\psline[linewidth=0.4pt,linestyle=dashed]{-}(11.5,10.4)(11.5,10.4)
\pcline[linewidth=0.4pt]{|->}(12.775,8.425)(14.325,7.8)
\aput[1pt]{0}(0.5){$r_c$}
\psline[linewidth=0.4pt,linestyle=dashed]{-}(14.325,7.8)(14.1,7.25)
\pcline[linecolor=red,linewidth=1.25pt]{->}(10.5524,8.4782)(9.8476,8.2218)
\bput[1pt]{0}(0.5){$F_{bc}$}
\pcline[linecolor=red,linewidth=1.25pt]{-}(10.3816,8.0379)(10.5184,7.6621)
\pcline[linecolor=red,linewidth=1.25pt]{->}(6.6596,7.0418)(7.3644,7.2982)
\aput[1pt]{0}(0.5){$F_{ac}$}
\pcline[linecolor=red,linewidth=1.25pt]{-}(7.1936,6.8579)(7.3304,6.4821)
\pcline[linecolor=red,linewidth=1.25pt]{->}(10.45,10.715)(10.45,9.965)
\bput[1pt]{0}(0.5){$F_{ab}$}
\pcline[linecolor=red,linewidth=1.25pt]{-}(12.275,10.34)(10.725,10.34)
}
 \rput[l](14.1,11.4){$\theta$}
 \rput[l](11,11.8){$\scr J_a$}
 \rput[t](11.25,3.7){$\scr J_b$}
 \rput[lt](8.6,6.8){$\scr J_c$}
 \rput[l](6.5,3.7){$\scr J_d$}
%\psgrid[gridwidth=0.15pt,subgridwidth=0.1pt,subgriddiv=2,gridlabels=5pt](4,-1)(19,13)
\end{pspicture}
\caption{Schematic representation of the Full Toroidal Variator.}
\vspace{-2.2mm}\label{full_tor_variat_Grafico}
\end{figure}
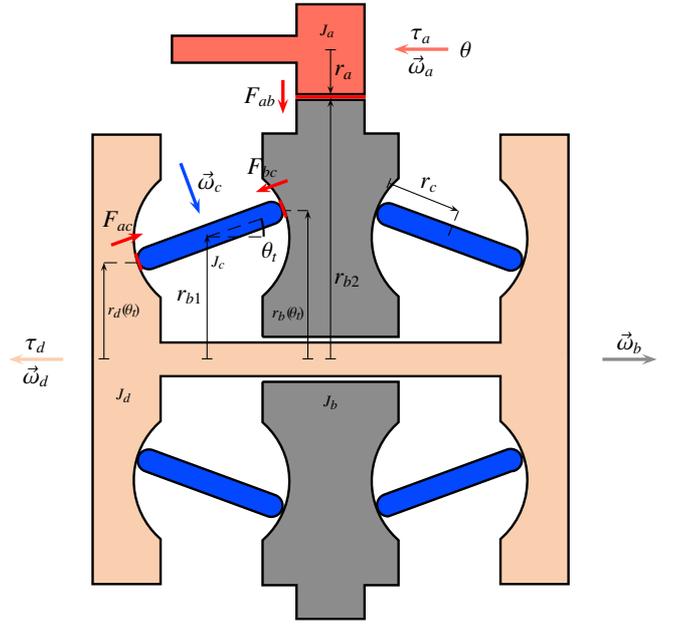
%%%%%%%%%%%%%%%%%%%%%%%%%%%%%%%%%%%%%
In this case, the system is composed by four rotating  masses, and  the generalized torque $\tau$ is $\tau=
-(b_a+R^2_1 b_b +  R^2_2(\theta_t) b_c + R^2_3(\theta_t) b_d )\omega_a+\tau_a
$, where $R_1$, $R_2(\theta_t)$ and $R_3(\theta_t)$ are defined in \eqref{FTV_new_eq_1}. The configuration variable $x$ at step 1) of the procedure proposed in Sec.~\ref{prop_appr_section} is the angular position $x=\theta_a=\theta$ of 
inertia $J_a$. Each mass is characterized by one degree of freedom, see step 2) of the modeling procedure.
  The position vector $\P_0$ of the system at step 3)  is given by:
\[
\P_0
\!=\!
\mat{@{}c@{}}{\P_a\\[1mm] \P_b\\[1mm] \P_c\\[1mm] \P_d\\ }
\!=\!
\mat{@{}c@{}}{\theta_a\\[1mm] \theta_b\\[1mm] \theta_c\\[1mm] \theta_d\\ }
\!=\!
\mat{@{}c@{}}{
\theta
\\[1mm]
\frac{r_a}{r_{b2}}\theta_a
\\[1mm]
\frac{r_b(\!\theta_t\!)}{r_{c}}\theta_b
\\[1mm]
\frac{r_{c}}{r_d(\!\theta_t\!)}\theta_c
}
\!=\!
\mat{@{}c@{}}{
1\\[1mm]
R_1\\[1mm]
R_2(\theta_t)\\[1mm]
R_3(\theta_t)
} \theta
=\R(\theta_t) \theta,
\]
 where the gear ratios $R_{1}$, $R_{2}(\theta_t)$ and $R_{3}(\theta_t)$ are defined as:
\begin{equation}\label{FTV_new_eq_1}
 R_1 \!=\!\frac{r_a}{r_{b2}},
  \hspace{4mm}
 R_2(\theta_t)
 \!=\!\frac{r_a\,r_{b}(\theta_t)}{r_{b2}\,r_c},
  \hspace{4mm}
 R_3(\theta_t)
 \!=\!\frac{r_a\,r_{b}(\theta_t)}{r_{b2}\,r_d(\theta_t)},
\end{equation}
 and the radii $r_{b}(\theta_t)$ and $r_d(\theta_t)$ are given by:
 \[
 r_{b}(\theta_t)\!=\!(r_{b1}\!+\!r_c\sin\theta_t),
 \hspace{16mm}
 r_d(\theta_t)\!=\!(r_{b1}\!-\!r_c\sin\theta_t).
 \]
The matrix $\N_0$ is 
% given by 
$\N_0=\mbox{diag}([J_a,\,J_b,\,J_c,\,J_d])$, while the
vectors $\H_0$ and the factorization matrix $\T$ have the following structure:
\begin{equation}\label{FTV_new_eq_2}
\H_0
\!=\!
\frac{\partial \P_0}{\partial \theta}
\!=\!
\R(\theta_t),
\hspace{15mm}
\T
\!=\!
\sqrt{\N_0}\, \H_0
\!=\!
\sqrt{\N_0} \R(\theta_t).
\end{equation}
 The matrix $\T$ in \eqref{FTV_new_eq_2} 
 fully defines the proposed factorized model \eqref{Factorized:_form}. Using \eqref{basic_relations} and \eqref{FTV_new_eq_2}, the matrices $\M$ and $\N$ of the Euler-Lagrange model \eqref{generic_8} can also be obtained as follows:
\begin{equation}\label{FTV_new_eq_2_new}
 \begin{array}{@{}l}
 \M(\theta_t)
 \!=\! \T\tras \T
 \!=\! M(\theta_t)
 \!=\! J_a
  \!+\! J_b R_1^2
  \!+\! J_c R_2^2(\theta_t)
  \!+\! J_d R_3^2(\theta_t),
 \\[2mm]
   \N(\theta_t,\dot{\theta}_t)
   \!=\!\T\tras\dot\T
   \!=\!   \frac{\dot{M}(\theta_t)}{2}
   \!=\!
   r_{b}(\theta_t)r_c\dot\theta_t\cos\theta_t R_1^2\left(
  \frac{J_c}{r_c^2}
  \!+\!
  \frac{2 r_{b1}J_d}{r_d^3(\theta_t)}\right)\!.
 \end{array}
\end{equation}

 \begin{Remark}
Matrix $\N(\theta_t,\dot{\theta}_t)$ in \eqref{FTV_new_eq_2_new}, present in the Euler-Lagrange model, 
is function of both $\theta_t$ and $\dot{\theta}_t$, where $\dot{\theta}_t$ is not an internal system variable and therefore must be computed numerically from the available measurements of the external variable $\theta_t$ representing the rollers tilt angle. This makes the Euler-Lagrange model sensitive to measurement noise, as further discussed in the results presented in Sec.~\ref{KERS_simul}.
The aforementioned drawback is not present in the proposed factorized model  \eqref{Factorized:_form},
because it does not depend on variable $\dot\theta_t$.
 \end{Remark}

\subsection{Simulation of the Full Toroidal Variator system}\label{KERS_simul}
\black

\begin{table}[t]
    \centering
    \caption{Parameters of the Full Toroidal Variator.}
    \label{tab:param_KERS}
\vspace{1.1mm}
    \begin{tabular}{c}
        \hline\\[-2mm]
        $J_a\!=\! 0.026,\; J_b\!=\! 7.56\cdot10^{-4},\; J_c\!=\!0.0018,\;J_d\!=\!0.179\;$ \mbox{ kg m$^2$} \\[1mm]
        \hline \\[-2mm]
        $r_a= 17.2,\; r_b= 51.5,\; r_c=27.5,\;r_d=34.3\;$ \mbox{mm} \\[1mm]
        \hline
    \end{tabular}
    \vspace{2mm}
\end{table}

The results given by the simulation of the FTV using the Euler-Lagrange model  \eqref{generic_8} and the factorized model  \eqref{Factorized:_form} have been compared in the following two conditions: 1) noiseless scenario; 2) noisy scenario.
The frictionless case and the parameters in Table \ref{tab:param_KERS} have been considered. The input torques have been assumed equal to zero, the initial condition is $\omega_{a0}=2000$~rpm, and the tilt angle is $\theta_t=1.22 \sin{(3\,t)}$. In the noisy case, a disturbance $d=12.2 \cdot 10^{-3} \sin{(3 \cdot 10^3\,t)}$ has been superimposed to the tilt angle $\theta_t$. The results are shown in Fig.~\ref{fig:KERS_results_figure}.
%%%%%%%%%%%%%%%%%%%%%%%%%%%%%%%%%%%%%%%%%%%%
\begin{figure}[t]
\centering
\begin{minipage}{0.48\columnwidth}
    \centering
    \includegraphics[width=1\linewidth]{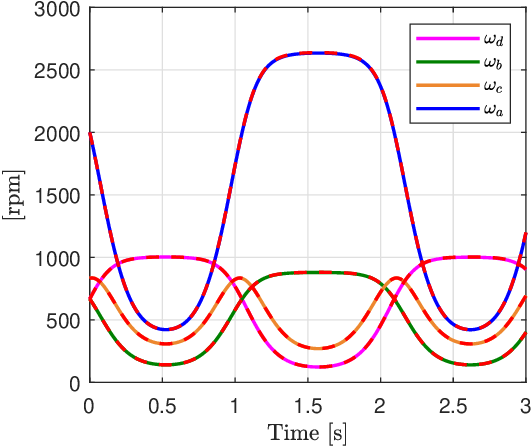}
\end{minipage}
%%%%%%%%%%%%%%%%%%%%
%%%%%%%%%%%%%%%%%%%%
\begin{minipage}{0.48\columnwidth}
    \centering
    \includegraphics[clip,width=1\linewidth]{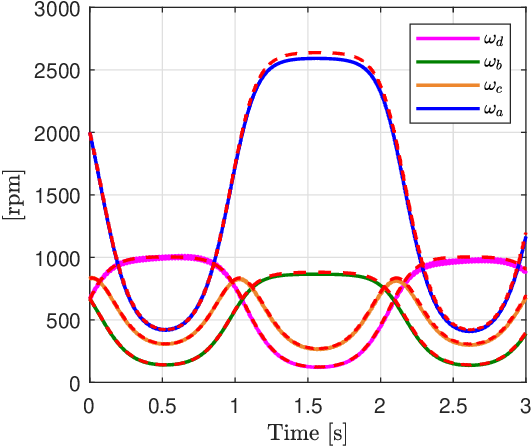}
\end{minipage}
%%%%%%%%%%%%%%%%%%%%
\psset{unit=\unitlength} \SpecialCoor
       \rput(-62.25,16.2){\footnotesize (a)}
   \rput(-18.88,16.2){\footnotesize (b)}
%%%%%%%%%%%%%%%%%%%%
\vspace{-0.4mm}
\caption{Full Toroidal Variator: Simulation results in the noiseless scenario (a) and noisy scenario (b). The colored results are given by the Euler-Lagrange model, while the red dashed results are given by the proposed factorized model.}\label{fig:KERS_results_figure}
\vspace{-2mm}
 \end{figure}
%%%%%%%%%%%%%%%%%%%%%%%%%%%%%%%%%%%%%%%%%%%%
It can be noticed that both the Euler-Lagrange model  and the factorized model  give the same results in the noiseless scenario. 
Conversely, in the noisy scenario, the proposed factorized model yields the same results as in the noiseless scenario, thanks to its independence from $\dot{\theta}_t$, whereas the performance of the Euler–Lagrange model degrades.
In fact, the maximum and mean percentage differences $\mbox{max}(|\Delta_{\omega}|)$ and $\mbox{mean}(|\Delta_{\omega}|)$ of the angular velocities given by the Euler-Lagrange model  with respect to the correct ones (given by the proposed factorized model) are shown in Table~\ref{tab:metrics_comparison_KERS}, showing a
non-negligible degradation of the Euler-Lagrange model results even with the considered small disturbance $d$.

\begin{table}[t]
    \centering
    \caption{Mean and maximum percentage differences
    of the angular velocities given by the Euler-Lagrange model, with respect to those given by the proposed factorized model,
    in the noisy scenario.}
    \vspace{1.1mm}
    \label{tab:metrics_comparison_KERS}
    \begin{tabular}{c@{\;\;\;\;\;\;\,}c@{\;\;\;\;\;\;\,}c@{\;\;\;\;\;\;\,}c@{\;\;\;\;\;\;\,}c}
        \hline\\[-4mm]
                              &    $\omega_a$ & $\omega_b$ & $\omega_c$ & $\omega_d$ \\[1mm]
                              %\hline \\[-2.8mm]
$\mbox{mean}(|\Delta_{\omega}|)$ [\%] &    1.73      &  1.73      &    1.74    &  1.81     \\[1mm]
$\mbox{max}(|\Delta_{\omega}|)$ [\%] &   2.35       &   2.35   &    3.60   &   5.02     \\[1mm]
        \hline
    \end{tabular}
    \vspace{2mm}
\end{table}

\section{Dynamic Model of a Planar robot with 2 DoF}\label{Chap_2_DOF}

Reference is made to the planar robot with 2 DoF shown in Fig.~\ref{planar_robot_with_2_DOF}, expressed
in the fixed frame
 $\Sigma_0$. Applying the modeling approach proposed in Sec.~\ref{prop_appr_section},
 %
%%%%%%%%%%%%%%%%
\begin{figure}[t]
 \begin{center}
 \setlength{\unitlength}{4.68mm}
 \psset{unit=\unitlength}
 \SpecialCoor
 \newrgbcolor{darkgreen}{0 0.5 0}
 \begin{pspicture}(-2,-1)(15.0,12.0)
 %
 %%%%%%%%%%%%%%%%%%%%%%%%%
 %%%%%%   Link 0
 %%%%%%%%%%%%%%%%%%%%%%%%%
 \rput[rb](0,0){
 \psline[linewidth=0.3pt]{|->}(0,0)(15,0)
 \psline[linewidth=0.3pt]{|->}(0,0)(0,12)
 \rput[r](-0.25,0.25){\normalsize $\Sigma_0$}
 \rput[r](-0.25,11.0){\normalsize $y$}
 \rput[t](14.0,-0.25){\normalsize $x$}
 }
 %%%%%%%%%%%%%%%%%%%%%%%%%
 %%%%%%   Link 1
 %%%%%%%%%%%%%%%%%%%%%%%%%
 \rput{30}(0,0){\psset{linecolor=blue}
 \psarc[linewidth=0.8pt,,linestyle=dashed]{->}{3}{-30}{0}     %  Joint 1
 \rput(3.1;-20){\rput[lb]{-30}(0,0){\small$\theta_1$}}
 \psline[linewidth=1.0pt]{o-}(0,0)(8;0)                    % point b1
 \pcline[linewidth=0.2pt,offset=20pt]{|-|}(0,0)(8;0)                    % point b1
 \lput*{:U}{$L_1$}
 \psline[linewidth=0.3pt]{|->}(0,0)(12,0)                   % Freme  1
 \pcline[linewidth=0.2pt,offset=10pt]{|-|}(0,0)(4.5;0)
 \lput*{:U}{$r_1$}
 \pscircle[fillstyle=solid,fillcolor=gray](4.5;0){0.42}    % mass N1
 %
 %%%%%%%%%%%%%%%%%%%%%%%%%
 %%%%%%   Link 2
 %%%%%%%%%%%%%%%%%%%%%%%%%
 \rput{30}(8,0){\psset{linecolor=red}
% \psellipse(4,0.125)(5.5,2)
 \psarc[linewidth=0.8pt,linestyle=dashed]{->}{3}{-30}{0}                 % Joint 2
 \rput(3.1;-20){\rput[lb]{-30}(0,0){\small$\theta_2$}}
 \psline[linewidth=1.0pt]{o->}(0,0)(8;0)                % point b2
 \psline[linewidth=1.0pt]{o}(8;0)(8;0)                % point b2
 \pcline[linewidth=0.2pt,offset=20pt]{|-|}(0,0)(8;0)                    % point b1
 \lput*{:U}{$L_2$}
 \psline[linewidth=1.0pt]{->}(5.5;0)
 \pcline[linewidth=0.2pt,offset=10pt]{|-|}(0,0)(5;0)
 \lput*{:U}{$r_2$}
 \pscircle[fillstyle=solid,fillcolor=gray](5.0;0){0.45}    % mass N2
 }  %%%% End Link 2
 }  %%%% End Link 1
 \rput[t](4.3,1.85){\footnotesize $m_1,I_{z1}$}
 \rput[t](10.3,7.85){\footnotesize $m_2,I_{z2}$}
 \rput[b](4.0,2.75){\footnotesize $\P_1$}
 \rput[b](9.35,8.9){\footnotesize $\P_2$}
 \rput[b](11,11.15){\footnotesize $P_e$}
\end{pspicture}
 \end{center}
 \vspace{-5.68mm}
  \caption{Planar robot with $2$-DoF expressed in the fixed frame $\Sigma_0$.}\label{planar_robot_with_2_DOF}
\end{figure}
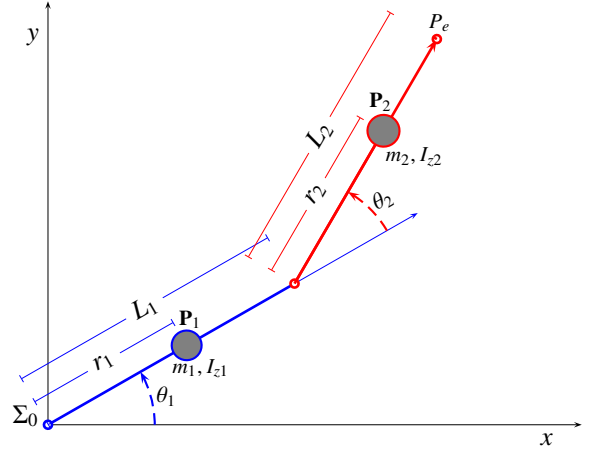
the planar positions $\P_1$ and $\P_2$ of the two masses $m_1$ and $m_2$ are the following:
 %%%%
 \begin{equation} \label{P1_P2}
 \P_1 = \theta_1,
\hspace{10mm}
 \P_2\!=\!\mat{@{}c@{}}{L_1\! \cos(\theta_1)\!+\!r_2\!  \cos(\theta_1\!+\!\theta_2)\\
 L_1\! \sin(\theta_1)\!+\!r_2\!  \sin(\theta_1\!+\!\theta_2)\\ \theta_1\!+\!\theta_2}.\!
 \end{equation}
% } 
The vectors $\x$, $\P_0$ and the matrix $\H_0(\x)$ are given by:
\[
\x
\!=\!
\mat{@{}c@{}}{x_1\\ x_2}
\!=\!
\mat{@{}c@{}}{\theta_1\\ \theta_2},
\hspace{4mm}
\P_0
\!=\!
\mat{@{}c@{}}{\P_1\\ \P_2},
\hspace{4mm}
\H_0(\x)
\!=\!
\frac{\partial \P_0}{\partial \x}
\!=\!
\mat{@{}c@{\,}c@{}}{\ds
\frac{\partial \P_0}{\partial \theta_1}
&\ds
\frac{\partial \P_0}{\partial \theta_2}
}.
\]
The diagonal matrix $\N_0$ has the following structure:
\[
\N_0=\mbox{diag}\{[m_1 r_1^2 + I_{z1},\;\;m_2,\;\;m_2,\;\;I_{z2}]\},
\]
while the matrix $\T$ characterizing the proposed factorized model  \eqref{Factorized:_form} can be computed as
$
\T
\!=\!
\sqrt{\N_0}\, \H_0.
$
The matrices of the Euler-Lagrange model  \eqref{generic_8} can be computed using \eqref{basic_relations}. Specifically, matrix $\M$ has the following structure:
\begin{equation} \label{ooL_theta}
\M(\x)
=
\T\tras \T
=
\left\lbrack \begin{array}{cc}
\alpha +2\beta \cos \theta_2
& \delta +\beta \cos \theta_2 \\\delta +\beta \cos \theta_2  & \delta \end{array}\right\rbrack,
\end{equation}
where the parameters $\alpha$, $\beta$ and $\delta$ are defined as follows:
\[
\begin{array}{c}
\alpha =I_{\mathrm{z1}}  +m_1 r_1^2 +m_2 L_1^2 +I_{\mathrm{z2}} +m_2 r_2^2,
\\[2mm]
\beta =m_2 L_1 r_2,
\hspace{10mm}
\delta =I_{\mathrm{z2}} +m_2 r_2^2.
\end{array}
\]
Finally, matrix 
$\N$ has the following structure:
\[
\N(\x,\dot\x)
\!=\!
\T\tras\dot\T
=
\mat{@{}c@{\;\;}c@{}}{
-\beta\, \sin\theta_2 \,\dot\theta_2 & -\beta\, \sin\theta_2 (\dot\theta_1 \!+\! \dot\theta_2)\\
 \beta\,  \sin\theta_2 \,\dot\theta_1 &         0}.
\]

\subsection{Simulation of the Planar Robot With 2 DOF}\label{Planar_simul}
Many applications require robots to be controlled using inverse dynamics control~\cite{SicilianoBook}.
The latter approach necessarily requires to compute the inverse dynamics of the manipulator. Assuming friction and gravity terms to be already compensated for, the inverse dynamics of robotic systems is typically expressed using the Euler-Lagrange model  \eqref{generic_8}~\cite{SicilianoBook}:
\begin{equation}\label{robot_eq}
\btau=\B(\x)\ddot{\x}+\C(\x,\dot{\x})\dot{\x}
=
\M\ddot\x+\N\dot\x,
\end{equation}
where
$\dot{\x}$ is the state vector, $\M=\B(\x)$ is the inertia matrix, and  $\N=\C(\x,\dot{\x})$ is the Coriolis forces matrix. In this section, we compare the Euler-Lagrange model  \eqref{robot_eq}
and the proposed factorized model   \eqref{Factorized:_form} in computing the inverse dynamics vector $\btau=\mat{@{}c@{\;\;}c@{}}{\tau_1 & \tau_2}\tras$ of the considered 2-DoF planar robot in~\cite{PlanarSimul2025} when the desired joint space  trajectory $\x=\mat{@{}c@{\;\;}c@{}}{\theta_1 & \theta_2}\tras$ is the one shown in Fig.~\ref{fig:planar_robot_traj_figure}.
The two models have been implemented in the Simulink environment, using MATLAB R2023b, and simulated
 in automatic solver selection mode in order to let MATLAB choose the best solver for each simulation, by using the same solver details. The results in terms of torque vector $\btau=\mat{@{}c@{\;\;}c@{}}{\tau_1 & \tau_2}\tras$ are very close,
 as shown in Table~\ref{tab:metrics_comparison_tau12} by the maximum and mean percentage differences $\mbox{max}(|\Delta{\tau}_1|)$, $\mbox{mean}(|\Delta{\tau}_1|)$, $\mbox{max}(|\Delta{\tau}_2|)$ and $\mbox{mean}(|\Delta{\tau}_2|)$ between the torque profiles $\tau_1$ and $\tau_2$ generated by the two models. At the same time, Fig.~\ref{Figure_2_planar_robot} shows the lower execution time (computed for $100$ executions)  given by the factorized model  compared to the Euler-Lagrange model.
 The average reduction in the execution time
 given by the factorized model  is $17.08$~\% compared to the Euler-Lagrange model, showing the better performances of the factorized model  over the
Euler-Lagrange one.

 \begin{figure}[t]
\centering
\begin{minipage}{0.48\columnwidth}
    \centering
    \includegraphics[width=1\linewidth]{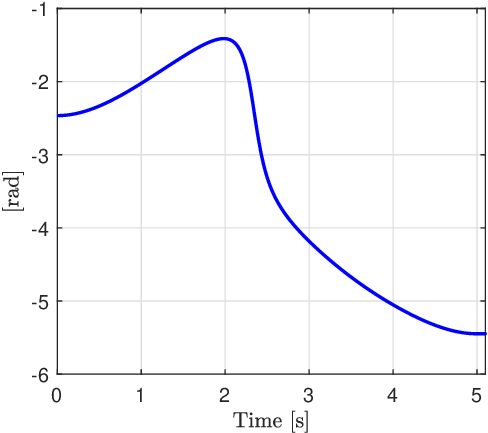}

\end{minipage}
%%%%%%%%%%%%%%%%%%%%
%%%%%%%%%%%%%%%%%%%%
\begin{minipage}{0.48\columnwidth}
    \centering
    \includegraphics[clip,width=1\linewidth]{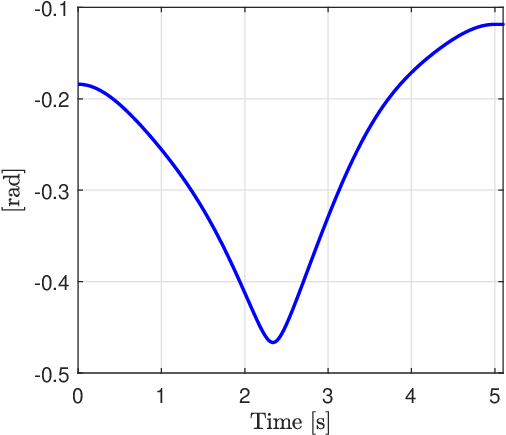}
\end{minipage}
%%%%%%%%%%%%%%%%%%%%
\psset{unit=\unitlength} \SpecialCoor
       \rput(-62.25,16.2){\footnotesize (a)}
   \rput(-18.88,16.2){\footnotesize (b)}
%%%%%%%%%%%%%%%%%%%%
\vspace{-0.8mm}
\caption{Considered trajectory for the 2-DoF planar robot: (a) $\theta_1$ and (b) $\theta_2$.}\label{fig:planar_robot_traj_figure}
\vspace{-2.08mm}
 \end{figure}

\begin{table}[t]
    \centering
    \caption{
    Mean and maximum percentage differences between the torque profiles $\tau_1$ and $\tau_2$ generated by the two model formulations: Euler-Lagrange model  and the proposed factorized model.
    }
\label{tab:metrics_comparison_tau12}
    \vspace{1.1mm}
    {\small
    \begin{tabular}{c@{\;\;\;}c@{\;\;\;}c@{\;\;\;}c}
        \hline\\[-2mm]
$\mbox{mean}(|\Delta{\tau}_1|)\!$ [\%] & $\mbox{max}(|\Delta{\tau}_1)|\!$ [\%] & $\mbox{mean}(|\Delta{\tau}_2|)\!$ [\%] & $\mbox{max}(|\Delta{\tau}_2|)\!$ [\%] \\[1mm]
 $0.071$      &  $0.016 $     &    $0.32$    &  $0.32$   \\[1mm]
        \hline
    \end{tabular}
    }
    \vspace{-2mm}
\end{table}

\begin{figure}[t]
\centering   
\includegraphics[clip,width=0.88\linewidth]{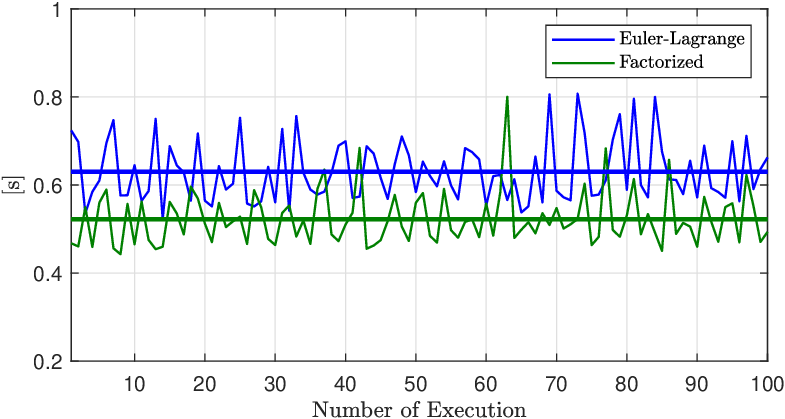} 
%%%%%%%%%%%%%%%%%%%%
%%%%%%%%%%%%%%%%%%%%
\vspace{-2mm}
\caption{Execution time of the Euler-Lagrange model  and of the factorized model  in computing the inverse dynamics over $100$ executions. The horizontal lines represent the average execution times. The average reduction in the execution time
 given by the factorized model  is $17.08$~\% compared to the Euler-Lagrange model.}\label{Figure_2_planar_robot}
\vspace{-3.6mm}
 \end{figure}
 %

%%%%%%%%%%%%%%%%%%%%%%%%%%%%%%%%%%%%%%%%%%%%%%%%%%%%%%%%%%%%%%%%%%%%%%%%%%%%%%%%%%%%%%%%%%%%%%%%%%%%%

\section{Conclusions}\label{conclusions_sect}
This paper has addressed the proposal of a new formulation for the dynamic model of nonlinear mechanical systems - called factorized model formulation - as well as the proposal of modeling approach enabling the user to automatically derive the system model.
The proposed factorized model  can be applied to different physical system case studies, and has been shown to give superior performance when compared with the Euler-Lagrange model  in terms of: robustness against measurement noise for physical systems where the system matrices exhibit a dependence on some external variables, and in terms of execution time when computing the inverse dynamics of the system.

\setcounter{secnumdepth}{0}
\section{Funding}
The work was partly supported by the University of Modena and Reggio Emilia
through the action FARD (Finanziamento Ateneo Ricerca Dipartimentale) 2023-2024, and funded under the National Recovery and Resilience Plan (NRRP), Mission 04 Component 2 Investment 1.5 - NextGenerationEU, Call for tender n. 3277 dated 30/12/2021 Award Number:  0001052 dated 23/06/2022.

\appendix
\renewcommand{\theequation}{A.\arabic{equation}}
\setcounter{equation}{0} \setcounter{secnumdepth}{0}
\section{Appendix A: Proof of Theorem~\ref{Prop_2}.}
\label{app2}

\noindent

Let $\T^+$ denote the Moore-Penrose pseudo-inverse of matrix $\T$. If matrix $\T$ in \eqref{basic_relations} is square, then $\T^+=\T\muno$. If $\T$ is rectangular, then it surely has more rows than columns since $\M$ is full rank, and thus
$\T^+=(\T\tras\T)\muno\T\tras$. 
Matrix $\T\T^+=\T(\T\tras\T)\muno\T\tras$ is an orthogonal projection matrix on subspace $\mbox{Im}(\T)$, and as such it is symmetric:   
\begin{equation}\label{symmetric}
\T\T^+
=\T(\T\tras\T)\muno\T\tras
=(\T(\T\tras\T)\muno\T\tras)\tras
=
(\T\T^+)\tras.
\end{equation}
Using \eqref{symmetric}, Eq.~\eqref{Factorized:_form} can be rewritten as follows:
\begin{equation}\label{fact_3}
\begin{array}{r@{\;}c@{\;}l}
(\T^+)\tras\T\tras \frac{d}{dt}(\T\dot\x)
&=&
 (\T^+)\tras \boldsymbol{\tau}
 \\[2mm]
(\T\T^+)\tras \frac{d}{dt}(\T\dot\x)
&=&
 (\T^+)\tras \boldsymbol{\tau}
 \\[2mm]
(\T\T^+)\frac{d}{dt}(\T\dot\x)
&=&
 (\T^+)\tras \boldsymbol{\tau}.
 \end{array}
\end{equation}
Let $\P\!=\!(\I-\T\T^+)$ denote the orthogonal projection matrix on subspace Ker$(\T\tras)$. Using $\P$, the following equalities can be obtained from \eqref{fact_3}:
\begin{equation}\label{fact_4}
\begin{array}{r@{\;}c@{\;}l}
(\T\T^+) \frac{d}{dt}(\T\dot\x)
&=&
 (\T^+)\tras \boldsymbol{\tau} \\[2mm]
(\I-\P) \frac{d}{dt}(\T\dot\x)
&=&
 (\T^+)\tras \boldsymbol{\tau}\\[2mm]
 \frac{d}{dt}(\T\dot\x)
&=& \P\frac{d}{dt}(\T\dot\x)+
 (\T^+)\tras \boldsymbol{\tau}\\[2mm]
 \frac{d}{dt}(\T\dot\x)
&=& \P(\dot{\T}\dot{\x}+\T\ddot{\x} )+
 (\T^+)\tras \boldsymbol{\tau}
 \\[2mm]
 \frac{d}{dt}(\T\dot\x)
&=& \P\dot{\T}\dot{\x}+
 (\T^+)\tras \boldsymbol{\tau},
 \end{array}
\end{equation}
where the latter equality holds because $\T\ddot{\x}\in \mbox{Im}(\T)$. Integrating both sides of the last equality in \eqref{fact_4} and left-multiplying by matrix $\T^+$ yield Eq.~\eqref{fact_integral}.

\bibliographystyle{elsarticle-num}
\bibliography{references}

\end{document}